%% file: aos_LowrankLLMEvaluation.tex
\documentclass[aos,preprint]{imsart}

\RequirePackage{amsthm,amsmath,amsfonts,amssymb,mathtools,bm,bbm}
\RequirePackage[authoryear]{natbib}
\RequirePackage[colorlinks,citecolor=blue,urlcolor=blue,linkcolor=blue]{hyperref}
\RequirePackage{graphicx}
\RequirePackage{booktabs}
\RequirePackage{multirow}
\RequirePackage{enumitem}
\RequirePackage{url}
\usepackage{algorithm}

\startlocaldefs

\numberwithin{equation}{section}

\theoremstyle{plain}
\newtheorem{theorem}{Theorem}[section]
\newtheorem{lemma}[theorem]{Lemma}
\newtheorem{proposition}[theorem]{Proposition}
\newtheorem{corollary}[theorem]{Corollary}

\newtheorem{remark}[theorem]{Remark}
\newtheorem{assumption}[theorem]{Assumption}

\newtheorem{example}[theorem]{Example}

\newcommand{\E}{\mathbb{E}}

\newcommand{\Pn}{\mathbb{P}_n}
\newcommand{\Pstar}{\mathbb{P}^\star}
\newcommand{\R}{\mathbb{R}}
\newcommand{\Hcal}{\mathcal{H}}

\newcommand{\ip}[2]{\left\langle #1,#2\right\rangle}

\newcommand{\norm}[1]{\left\lVert #1\right\rVert}
\newcommand{\Fnorm}[1]{\left\lVert #1\right\rVert_{F}}
\newcommand{\infnorm}[1]{\left\lVert #1\right\rVert_{\infty}}
\newcommand{\oneNorm}[1]{\left\lVert #1\right\rVert_1}
\newcommand{\twoninfnorm}[1]{\left\lVert #1\right\rVert_{2,\infty}}
\newcommand{\inftyinfnorm}[1]{\left\lVert #1\right\rVert_{\infty\to\infty}}
\newcommand{\psione}{\psi_1}

\newcommand{\one}{\mathbf 1}
\newcommand{\Id}{I}
\newcommand{\dstar}{d^\star}
\newcommand{\Gammavec}{\Gamma}

\newcommand{\mode}[2]{\times_{#1} #2}
\newcommand{\PU}[1]{P_{U_{#1}}}
\newcommand{\PUhat}[1]{P_{\widehat U_{#1}}}
\newcommand{\Var}{\mathrm{Var}}

\endlocaldefs

\begin{document}

\begin{frontmatter}


\title{LLM Evaluation as Tensor Completion: Low-Rank Structure and Semiparametric Efficiency
}

\begin{aug}
\author[A]{\fnms{Jiachun} \snm{Li}}\ead{jiach334@mit.edu}
\author[A]{\fnms{David} \snm{Simchi-Levi}}\ead{dslevi@mit.edu}
\author[B]{\fnms{Will Wei} \snm{Sun}}\ead{sun244@purdue.edu}
\runauthor{Li, Simchi-Levi and Sun}
\address[A]{Massachusetts Institute of Technology}
\address[B]{Purdue University}
\end{aug}

\begin{abstract}
Large language model (LLM) evaluation platforms increasingly rely on pairwise human judgments, where a user compares two model responses and selects the preferred one. These data are noisy, sparse, and highly non-uniform, yet the resulting leaderboards are typically reported with limited uncertainty quantification. We study this problem as semiparametric inference for a low-rank latent score tensor observed through pairwise comparisons. Specifically, we model preference outcomes using Bradley-Terry-Luce-type comparison models and represent model performance across task categories and other contexts by a latent score tensor $T^\star$. This formulation places modern LLM evaluation in a new low-rank tensor completion setting, where the observation pattern is highly structured, the sampling is non-uniform, and each sample reveals only a pairwise contrast rather than a direct tensor entry. Our inference target is a smooth functional $\psi(T^\star)$, including both linear estimands such as ability gaps and category-specific contrasts and nonlinear estimands such as average win probabilities against a reference pool. We derive the information operator on the identifiable low-rank tangent space, the associated efficient influence function, and the resulting semiparametric efficiency bound. We then construct a one-step debiased estimator with asymptotical normality. A central technical challenge is that, under heterogeneous matchups and non-uniform sampling, the information operator is anisotropic and generally does not commute with projection onto the low-rank tangent space, creating a bottleneck absent from isotropic tensor completion models. To address this, we introduce a score-whitening method that equalizes local Fisher information across comparisons and restores stable inference at the optimal sample-complexity scale. Our results provide a principled framework for uncertainty quantification in LLM evaluation and more broadly for inference on low-rank latent structures from pairwise comparison data.
\end{abstract}

\begin{keyword}[class=MSC]
\kwd[Primary ]{62F12}
\kwd{62H12}
\kwd[; secondary ]{62J12}
\kwd{68T07}
\end{keyword}

\begin{keyword}
\kwd{LLM evaluation}
\kwd{low-rank tensor}
\kwd{pairwise comparisons}
\kwd{semiparametric efficiency}
\kwd{tensor completion}
\kwd{uncertainty quantification}
\end{keyword}

\end{frontmatter}

\section{Introduction}\label{sec:intro}

Large language model (LLM) evaluation has rapidly become a statistical problem of central importance. Modern foundation models are deployed across a wide range of tasks, updated frequently, and often differ only subtly in quality. As a result, the practical question is no longer merely how to obtain a point estimate of model performance, but how to compare models reliably, quantify uncertainty in those comparisons, and do so under highly irregular observation patterns. Human-preference platforms such as \textsc{Arena} have emerged as a central solution in practice: a user submits a prompt, two anonymous models generate responses side by side, and the user votes for the better response; the resulting pairwise outcomes are then aggregated into a public leaderboard \citep{chiang2024chatbot}. Figure~\ref{fig:lmarena-battle} illustrates this pipeline. This approach has become widely popular because it evaluates LLM models with real prompts and real users, rather than through a fixed benchmark alone.

\begin{figure}[t]
\centering
\begin{minipage}[c]{0.65\textwidth}
    \centering
    \includegraphics[width=\textwidth]{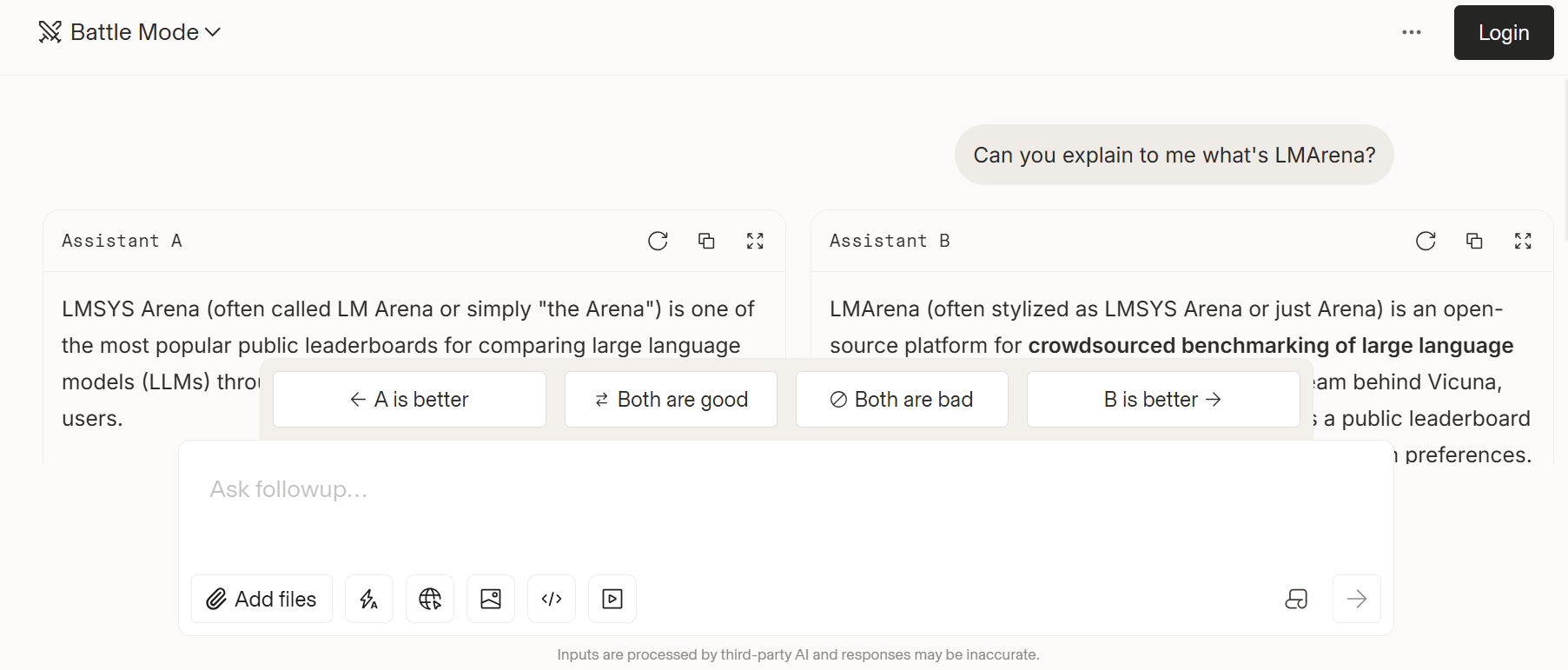}
\end{minipage}
\hspace{0.01\textwidth}
\begin{minipage}[c]{0.01\textwidth}
    \centering
    $\Rightarrow$
\end{minipage}
\hspace{0.01\textwidth}
\begin{minipage}[c]{0.26\textwidth}
    \centering
    \includegraphics[width=\textwidth]{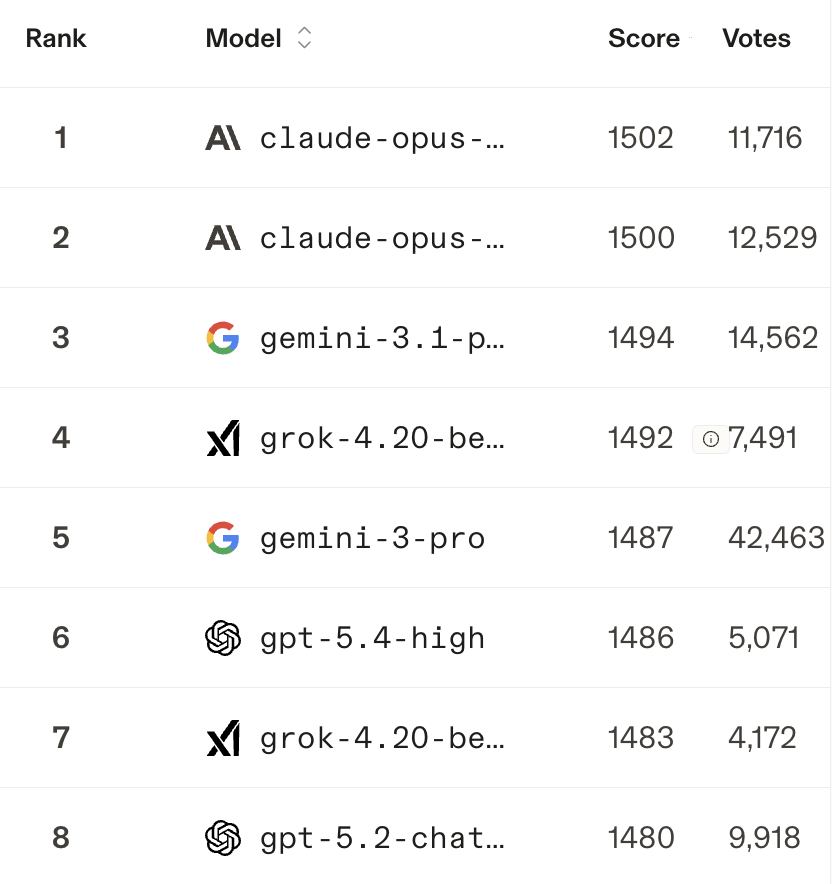}
\end{minipage}
\caption{Left plot is the \textsc{Arena} battle-mode interface. A user submits a prompt, two anonymous LLMs respond side by side, and the user votes on which response is preferred. The right plot is a category-specific \textsc{Arena} leaderboard (Text category). Scores are obtained by aggregating pairwise comparison outcomes, while the vote counts vary substantially across models. This imbalance reflects non-uniform sampling across model pairs.}
\label{fig:lmarena-battle}
\end{figure}

The statistical object underlying such leaderboards is naturally a collection of latent pairwise-comparison strengths.  
A standard starting point is the Bradley-Terry-Luce (BTL) model, under which the probability that model $a$ beats model $b$ depends on the difference of their latent scores \citep{bradley1952rank,luce1959individual}.  
When evaluation is broken down by task category, prompt type, or user segment, the latent score is no longer a single scalar per model but rather a structured array.  
For example, if $j\in[d_1]$ indexes models and $t\in[d_2]$ indexes task categories, one may posit a latent matrix $T^\star\in\mathbb{R}^{d_1\times d_2}$, where $T^\star_{j,t}$ measures model $j$'s latent ability on category $t$.  
More generally, additional modes can be added for user populations, evaluation criteria, or languages, leading to a latent tensor
$
T^\star \in \mathbb{R}^{d_1\times \cdots \times d_m}.
$
A single battle then reveals only a noisy comparison between two entries of this latent object: if models $a$ and $b$ are compared, one observes a binary outcome
\[
Y \in \{0,1\}, \quad 
\mathbb{P}(Y=1\mid X,T^\star)=\sigma\!\big(\langle T^\star,X\rangle\big),
\]
where $\sigma(z)=1/(1+e^{-z})$ and $X$ is a design tensor encoding the relevant pairwise contrast.

This viewpoint connects modern LLM evaluation to \emph{matrix and tensor completion with structured missingness}, with the important difference that the data consist of pairwise contrasts rather than direct entries. The latent score array $T^\star$ is the scientific object of interest, but only a small and highly uneven subset of pairwise comparisons is observed. Popular frontier models receive many more battles than weaker or newly released models, and prompt traffic is driven by user interests rather than experimental design. Thus the observation pattern is neither balanced nor uniform.

The inferential goals in LLM evaluation go well beyond point ranking. In many applications, one seeks to estimate a specific functional of the latent score tensor, such as an ability gap between two models, a category-specific average score, or an average win probability against a reference pool. A generic linear target takes the form
\[
\psi(T^\star)=\langle \Gamma,T^\star\rangle,
\]
where \(\Gamma\) selects or averages entries of \(T^\star\). In addition, many practical targets are nonlinear. For a fixed category \(t\) and a fixed pair of models \((a,b)\), one example is the win probability
\begin{equation}\label{eq:nonlinear-winrate-intro}
\psi(T^\star)
=
\sigma\!\big(T^\star_{a,t}-T^\star_{b,t}\big),
\end{equation}
which gives the probability that model \(a\) is preferred to model \(b\) on category \(t\).
Other examples include category-weighted leaderboard scores, robust summaries such as tail performance, and calibrated transformations of latent scores into public-facing rating scales. This leads to the main question of the paper:

\medskip
\emph{How can we construct statistically efficient estimators and valid confidence intervals for linear and nonlinear functionals of a latent low-rank score tensor when the data consist of noisy pairwise comparisons collected under highly non-uniform sampling?}
\medskip

Current leaderboard practice does not address this question. Existing platforms focus primarily on ranking models and reporting model-level uncertainty summaries. For example, \textsc{Arena} uses Bradley--Terry-type ranking models and, more recently, the open-source \texttt{Arena-Rank} package, which implements contextual ranking extensions, reweighting for underrepresented models, and confidence intervals based on asymptotic normality of $M$-estimators \citep{chiang2024chatbot,arena2025rank}. These methods are useful in practice, but they do not explicitly exploit low-rank structure across models, tasks, and related evaluation dimensions, nor do they address semiparametric efficient inference for structured functionals under identification constraints and non-uniform sampling. 

Low-rank structure provides a natural way to make this problem statistically tractable. In LLM evaluation, it is unrealistic to treat every model-by-task performance cell as unrelated. Rather, performance is typically driven by a small number of latent factors, such as reasoning ability, coding proficiency, instruction following, multilingual robustness, or style sensitivity. A low-rank matrix or tensor model captures this shared structure, reduces effective dimension, and allows information to be borrowed across models and tasks. This is the same principle that underlies low-rank methods in matrix completion, collaborative filtering, and tensor estimation \citep{candes2009exact,keshavan2010matrix,negahban2012restricted,koltchinskii2011nuclear,cai2022nonconvex,zhang2025generalized}. In the present setting, it turns a collection of sparse pairwise battles into a coherent structured inference problem.

Meanwhile, LLM evaluation is substantially harder than standard completion problems. We observe binary pairwise comparisons rather than entries of \(T^\star\), so the model is only identifiable up to additive shifts and requires normalization. The observation model is logistic rather than Gaussian, which makes the information carried by each comparison depend on the local Fisher information \(I(\eta)=\sigma(\eta)\{1-\sigma(\eta)\}\): closely matched models are informative, while lopsided comparisons are not. The sampling design is also highly uneven across prompts, categories, and model pairs, leading to anisotropic information geometry. Finally, the inferential target is typically not full recovery of \(T^\star\), but efficient estimation of a functional \(\psi(T^\star)\). These place the problem naturally in the realm of semiparametric inference for structured high-dimensional parameters and motivate the framework developed in this paper.

\subsection{Our contributions}

We study semiparametric inference for low-rank latent score tensors observed through pairwise comparisons. Our contributions are fourfold.

\noindent\textbf{A statistical formulation of low-rank LLM evaluation.}
We formulate leaderboard inference as estimation of a smooth functional $\psi(T^\star)$ of a low-rank latent score tensor $T^\star$ under generalized pairwise-comparison models. This unifies linear targets, such as ability gaps and category-specific scores, and nonlinear targets, such as average win probabilities, within a single framework. A key modeling point is that the observed battles correspond to a highly incomplete and non-uniformly sampled collection of pairwise contrasts, so the problem is naturally connected to tensor completion under general missingness mechanisms.

\noindent\textbf{Semiparametric efficiency bounds via an information equation.}
For smooth targets $\psi(T^\star)$, we derive the efficient influence function (EIF) and the corresponding semiparametric efficiency bound.  
For a linear functional $\psi(T)=\langle \Gamma,T\rangle$, the EIF direction is characterized by the operator equation
$
A H^\star = P_{\mathbb{T}}\Gamma,
$
$A=P_{\mathbb{T}}GP_{\mathbb{T}},
$
where $\mathbb{T}$ is the tangent space of the low-rank model under the identification constraint, $P_{\mathbb{T}}$ is the tangent-space projection, and $G$ is the Fisher-information operator induced by the pairwise-comparison model and the sampling distribution.  
The resulting variance bound
$
V_{\mathrm{eff}}(\psi)
=
\big\langle P_{\mathbb{T}}\Gamma,\,
A^{-1}P_{\mathbb{T}}\Gamma
\big\rangle
$
plays the role of a generalized Cram\'er--Rao bound for structured comparison-based inference.  
To our knowledge, this is the first such efficiency characterization tailored to low-rank pairwise-comparison models with identification constraints and general non-uniform sampling.

\noindent\textbf{One-step inference and a new bottleneck.}
We construct a one-step debiased estimator based on the efficient influence function. Its analysis reveals a new difficulty: under heterogeneous pairwise information, the Fisher-information operator $G$ generally does not commute with the tangent-space projector $P_{\mathbb{T}}$. As a result, the restricted inverse $(P_{\mathbb{T}}GP_{\mathbb{T}})^{-1}$ may exhibit unfavorable entrywise amplification, creating a dimension-dependent bottleneck for standard debiasing arguments. This issue is absent in isotropic completion settings where the information operator is proportional to the identity.

\noindent\textbf{Score whitening, inverse-probability weighting, and nonlinear targets.}
To overcome this, we introduce a score-whitening method that normalizes the score by the local Fisher information and restores an isotropic effective information operator on the tangent space. This can be viewed as a semiparametric preconditioning. The resulting estimator attains asymptotic normality at the optimal sample-complexity scale even when the original information geometry is highly heterogeneous. We further combine this idea with inverse-probability weighting to handle non-uniform sampling and extend
the analysis from linear to nonlinear targets.

\noindent\textbf{Broader implications for tensor completion.}
Although our main focus is LLM evaluation, the framework also applies more broadly to low-rank inference problems with heterogeneous information geometry, including tensor completion with heteroscedastic noise, non-uniform sampling, and 1-bit observations. We develop these detailed results in Appendix~\ref{sec:applications}.

\subsection{Closely related literature}

Our work is related to two fields: low-rank matrix and tensor inference, and statistical modeling of LLM evaluation through pairwise comparisons.

\noindent\textbf{Low-rank matrix and tensor completion.} Classical work studies recovery of a low-rank matrix from uniformly sampled entries \citep{candes2009exact,keshavan2010matrix,negahban2012restricted,koltchinskii2011nuclear}. A parallel tensor literature develops analogous methods for low-rank tensors under structured decompositions such as CP and Tucker models \citep{kolda2009tensor,cai2022nonconvex,ma2024statistical}. More recent work considers non-uniform observation schemes, including covariate-assisted weighting, inverse-propensity correction, and missing-not-at-random mechanisms in matrix and tensor completion \citep{ma2019covariate,yang2021matrix,chao2021weighted,zhang2025generalized,duan2025statistical}. Our setting differs in two key respects: the data consist of pairwise contrasts rather than direct entries, and the goal is not only recovery of the latent low-rank object, but semiparametric efficient inference for smooth functionals under pairwise observations and identification constraints.

\noindent\textbf{LLM evaluation and BTL models.}
A rapidly growing empirical literature studies LLM evaluation through human preferences, pairwise battles, benchmark design, and AI judges \citep{chiang2024chatbot,li2024benchbuilder,petrova2026unpacking, dong2026evaluating}.  
The \textsc{Arena} line of work established pairwise crowdsourced evaluation as a credible public ranking mechanism \citep{chiang2024chatbot, li2024benchbuilder}.  
At the same time, recent studies point to vulnerabilities and heterogeneities in leaderboard data, including demographic variation, benchmark misalignment, and the sensitivity of rankings to data-collection choices \citep{singh2025leaderboard,petrova2026unpacking}. This empirical literature makes the statistical need for principled uncertainty quantification especially clear.  

Recent theory has made substantial progress on estimation and uncertainty quantification in BTL-type models for pairwise preference data \citep{gao2023btluq,fan2024care,fan2025uncertainty,fan2026spectral}. Closely related ideas also appear in LLM alignment, where BTL-type preference models underlie modern preference-optimization methods and their robust variants \citep{ouyang2022training,bose2025lore, xu2025doubly, su2026large}. Related semiparametric ideas have also begun to appear in LLM evaluation; for example, \citet{dong2026evaluating} use auxiliary pairwise comparison signals to improve efficiency in mathematical reasoning evaluation.  
Our work is complementary but distinct. We study direct pairwise-comparison data generated from a latent low-rank score tensor and develop semiparametric efficiency theory for structured linear and nonlinear functionals of that tensor. Once task categories, user groups, or other contextual dimensions are introduced, the inferential target becomes structured and high dimensional, and low-rank regularity becomes essential for statistical efficiency. In this sense, our paper provides an efficient inferential framework tailored to the sparse, structured, and non-uniform pairwise-comparison data that arise in modern LLM evaluation.


\subsection{Organization}

The paper is organized as follows. Section~\ref{sec:model-setting} introduces the low-rank pairwise-comparison model for LLM evaluation. Section~\ref{sec:semiparametric_formulation} develops the semiparametric efficiency theory, including the information equation, efficient influence function, and efficiency bound. Section~\ref{sec:upper-bound} studies the one-step estimator and the non-commutativity bottleneck under heterogeneous information. Section~\ref{sec:score-whitening} presents score whitening, inverse-probability weighting, and the extension to nonlinear functionals. Section~\ref{sec:simulations} reports synthetic and real-data experiments. Proofs and additional technical details are deferred to the appendices.


\section{Model setting}\label{sec:model-setting}

We now formalize the statistical model underlying pairwise evaluation platforms such as \textsc{Arena}; see Figure~\ref{fig:lmarena-battle}. In a typical interaction, 
each observation is not a direct measurement of a model's latent quality, but a \emph{pairwise comparison} between two models under a particular evaluation context such as task category, evaluation criterion, language, or time period. 
We begin with the simplest and most concrete formulation, which will serve as the primary running example throughout the paper.

\begin{example}[Model-topic pairwise comparisons via BTL model]\label{ex:pairwise}
Suppose a platform evaluates $d_1$ LLM models across $d_2$ task categories, such as coding, mathematics, creative writing, and multilingual tasks. Let
\[
T^\star \in \mathbb{R}^{d_1\times d_2}
\]
be a latent score matrix, where $T^\star_{j,u}$ represents the latent ability of model $j$ on task category $u$. At evaluation round $i$, the platform observes a task category $u_i$, selects two models $p_i$ and $q_i$, and records a binary outcome
$
Y^{(i)} \in \{0,1\},
$
where $Y^{(i)}=1$ means that model $p_i$ is preferred to model $q_i$ on category $u_i$. The key point is that the outcome depends on the latent scores only through their difference,
$
\eta_i = T^\star_{p_i,u_i}-T^\star_{q_i,u_i}.
$
Under the Bradley-Terry-Luce (BTL) model,
\begin{equation}\label{eq:btl-model-setting}
\mathbb{P}\bigl(Y^{(i)}=1 \mid p_i,q_i,u_i,T^\star\bigr)
=
\sigma(\eta_i)
=
\frac{\exp(\eta_i)}{1+\exp(\eta_i)}.
\end{equation}
Equivalently, if $e_j$ denotes the $j$th standard basis vector, we may write the design matrix as
\[
X^{(i)}=(e_{p_i}-e_{q_i})e_{u_i}^\top,
\]
so that
$
\eta_i=\langle T^\star,X^{(i)}\rangle.
$
Thus each battle reveals one noisy contrast of the latent score matrix rather than one matrix entry.
\end{example}

This formulation is directly connected to the Arena example in Figure~\ref{fig:lmarena-battle}. The left panel corresponds to the data-collection mechanism and the right panel corresponds to the statistical aggregation step to estimate latent model strengths and produce a leaderboard. In the matrix formulation above, the leaderboard is driven by the latent matrix $T^\star$, while each observed battle reveals only one binary comparison involving two of its entries.

A first important feature of the BTL model is \emph{identifiability}. Because only score differences affect the feedback, the latent matrix is not identifiable without normalization. Indeed, for any vector $c\in\mathbb{R}^{d_2}$, replacing $T^\star$ by
$
T^\star + \mathbf{1}_{d_1} c^\top
$
does not change any difference $T^\star_{p,u}-T^\star_{q,u}$ and therefore leaves all pairwise probabilities unchanged. To fix a unique representative in each equivalence class, we impose a linear identification constraint. The canonical choice is the zero-sum normalization along the model dimension,
\begin{equation}\label{eq:row-sum-zero-model-setting}
\mathbf{1}_{d_1}^\top T^\star = 0,
\end{equation}
meaning that, within each task category, the model scores sum to zero. This constraint will play an essential role in the semiparametric formulation developed later.

The BTL model in \eqref{eq:btl-model-setting} is only one instance of a broader pairwise-feedback framework. More generally, let the latent signal live in the space of real matrices or tensors $T^\star\in\mathbb{R}^{d_1\times\cdots\times d_m}$, equipped with the trace inner product $\langle U,V\rangle = \sum_{j_1,\dots,j_m} U_{j_1\dots j_m} V_{j_1\dots j_m}$. We observe independent pairs
\[
(X^{(i)},Y^{(i)})\in \mathbb{R}^{d_1\times\cdots\times d_m}\times\mathcal{Y}, \qquad i=1,\dots,n,
\]
where the design $X^{(i)}$ is drawn from a distribution $\Pi^\star$ and, conditional on $X^{(i)}$, the observation $Y^{(i)}$ depends on $T^\star$ only through the scalar index
$
\eta_i=\langle T^\star,X^{(i)}\rangle.
$
That is, the conditional log-likelihood takes the single-index form
\begin{equation}\label{eq:general-glm-setting}
\log p\bigl(Y^{(i)}\mid X^{(i)},T^\star\bigr)
=
\ell\bigl(Y^{(i)},\langle T^\star,X^{(i)}\rangle\bigr)
\end{equation}
for a known function $\ell:\mathcal{Y}\times\mathbb{R}\to\mathbb{R}$. The corresponding empirical loss is
\[
\mathcal{L}_n(T)=\frac{1}{n}\sum_{i=1}^n \phi\bigl(Y^{(i)},\langle T,X^{(i)}\rangle\bigr),
\]
where $\phi$ is the negative log-likelihood loss associated with $\ell$. The BTL model is the special case in which $\mathcal{Y}=\{0,1\}$ and $\ell(y,\eta)=y\eta-\log(1+e^\eta)$.

In realistic LLM evaluation, performance may vary simultaneously across models, task categories, user groups, languages, evaluation dimensions, or time. This naturally leads to a general tensor formulation. Let
\[
T^\star \in \mathbb{R}^{d_1\times \cdots \times d_m}
\]
be an order-$m$ latent score tensor. The first mode will typically index models, while the remaining modes index contextual variables. An observed comparison again depends on $T^\star$ only through a linear contrast $\langle T^\star,X\rangle$, where $X$ is a signed design tensor encoding which two models are compared and under which context. As in the matrix case, only certain contrasts of $T^\star$ are observable, so the tensor is identifiable only after imposing suitable linear constraints. We therefore assume that $T^\star$ satisfies
\begin{equation}\label{eq:general-linear-constraint-model-setting}
\mathcal{C}(T^\star)=0,
\end{equation}
where $\mathcal{C}$ is a known linear operator. In the pairwise-comparison setting, the most important example is again a zero-sum constraint along the model mode, but the formulation in \eqref{eq:general-linear-constraint-model-setting} allows for more general identification restrictions.

To make inference feasible in high dimensions, we assume that $T^\star$ has low-rank Tucker structure \citep{kolda2009tensor}
\begin{equation}\label{eq:tucker-model-setting}
T^\star = \mathcal{C}^\star \times_1 U_1 \times_2 U_2 \cdots \times_m U_m,
\end{equation}
where $U_k\in\mathbb{R}^{d_k\times r_k}$ has orthonormal columns and $\mathcal{C}^\star\in\mathbb{R}^{r_1\times\cdots\times r_m}$ is a core tensor. In the matrix case $m=2$, this reduces to the singular value decomposition. The low-rank assumption indicates that LLM's model performance is driven by a relatively small number of latent factors such as reasoning ability, coding proficiency, instruction following, multilingual robustness, or stylistic preference. Thus low rank is not merely a computational regularizer; it is a structural assumption that allows the model to borrow strength across tasks and contexts. 

We further assume the standard $\mu$-incoherence condition on $T^\star$, which ensures that the singular vectors are not aligned with any single coordinate direction: for each mode $j\in[m]$,
\begin{equation}\label{eq:incoh}
\max_{a\in[d_j]}\|e_a^\top U_j\|_2^2 \;\le\; \frac{\mu\, r_j}{d_j},
\end{equation}
for some incoherence parameter $\mu\ge 1$.

We now collect several dimensional quantities and structural parameters that will be used throughout the paper. Define
\begin{equation}\label{eq:dim-notation}
\bar d := \max_{j\in[m]} d_j, \qquad
d^\star := \prod_{j=1}^m d_j, \qquad
D := \prod_{k=2}^m d_k,
\end{equation}
so that $\bar d$ is the largest mode dimension, $d^\star$ is the total number of tensor entries, and $D$ is the number of contexts (the product of all non-comparison mode sizes). In the matrix case $m=2$, we have $D=d_2$, $\bar d = \max(d_1,d_2)$, and $d^\star = d_1 d_2$. The rank tuple is $r=(r_1,\ldots,r_m)$ and we write $r^\star = \prod_{j=1}^m r_j$ for the core tensor size.

Let $\mathcal{C}_{(k)}^\star\in\mathbb{R}^{r_k\times \prod_{j\neq k} r_j}$ denote the mode-$k$ unfolding of the core tensor $\mathcal{C}^\star$. Following \citet{ma2024statistical}, we define the maximum and minimum singular values across all mode unfoldings as
\begin{equation}\label{eq:lambda-max-def}
\lambda_{\max} := \max_{k\in[m]} \|\mathcal{C}_{(k)}^\star\|_{\mathrm{op}},
\qquad
\lambda_{\min} := \min_{k\in[m]} \sigma_{r_k}(\mathcal{C}_{(k)}^\star),
\end{equation}
and the condition number of $T^\star$ is
\begin{equation}\label{eq:kappa-def}
\kappa := \frac{\lambda_{\max}}{\lambda_{\min}}.
\end{equation}
Since the factor matrices $U_k$ have orthonormal columns, the Frobenius norm satisfies $\|T^\star\|_F = \|\mathcal{C}^\star\|_F$, and the singular values of each mode-$k$ unfolding of $T^\star$ coincide with those of $\mathcal{C}_{(k)}^\star$. The Frobenius norm is therefore bounded by
\begin{equation}\label{eq:Fnorm-lambda}
\sqrt{r^\star}\,\lambda_{\min} \;\le\; \|T^\star\|_F \;\le\; \sqrt{r^\star}\,\lambda_{\max},
\end{equation}
so that $\|T^\star\|_F \asymp \lambda_{\max} \asymp \kappa\,\lambda_{\min}$ when $r$ and $\kappa$ are treated as bounded constants.

Under the pairwise-comparison observation model, the noise variance for each comparison is determined by the Fisher information $I(\eta)=\sigma(\eta)(1-\sigma(\eta))$, which depends on the local ability gap $\eta=T^\star_{p,u}-T^\star_{q,u}$. We assume throughout that $T^\star$ has bounded entries:
\begin{equation}\label{eq:Tstar-bounded}
\|T^\star\|_\infty \le B
\end{equation}
for a constant $B>0$. This is natural in the LLM evaluation context, where latent ability gaps are bounded. Under \eqref{eq:Tstar-bounded}, the pairwise ability gaps satisfy $|\eta|\le 2B$, and hence the Fisher information is bounded: $c_B \le I(\eta) \le C_B$ for all observed $\eta$, where $c_B,C_B>0$ depend only on $B$. That is, the noise level is heteroscedastic but uniformly of constant order. In particular, no explicit noise parameter $\sigma$ needs to appear in the error bounds. For simplicity, we also assume that the target functional $\psi$ has bounded support:

\begin{assumption}[Bounded support]\label{ass:bounded-support}
The functional gradient $\nabla\psi(T^\star)=\Gamma$ has support size $|\mathrm{supp}(\Gamma)| \le M$ for a constant $M$, and bounded $\ell_1$ norm: $\|\Gamma\|_1\le C_\psi$.
\end{assumption}

Our inferential goal is not necessarily full recovery of the entire latent matrix or tensor. In many applications, one is interested in a lower-dimensional summary of $T^\star$. As we mentioned in Section \ref{sec:intro}, the most basic class consists of linear functionals
\begin{equation}\label{eq:linear-functional-model-setting}
\psi(T^\star)=\langle \Gamma,T^\star\rangle,
\end{equation}
where $\Gamma\in\mathbb{R}^{d_1\times\cdots\times d_m}$ is a known direction. This class includes individual ability scores, pairwise contrasts between two models in a fixed category, and averages over subsets of contexts. For example, if one wishes to compare models $a$ and $b$ on task category $u$, one may choose $\Gamma=(e_a-e_b)e_u^\top$ in the matrix case, so that
\[
\psi(T^\star)=T^\star_{a,u}-T^\star_{b,u}.
\]
More generally, if $\psi$ is Gateaux differentiable, one may also consider nonlinear targets such as the win probability in \eqref{eq:nonlinear-winrate-intro}, calibrated leaderboard scores, or robust performance summaries. In this paper, linear functionals will serve as the main starting point of our theory, while nonlinear functionals will be handled later through their local linearization and influence-function representation.

In summary, we study a low-rank latent score matrix or tensor \(T^\star\), subject to linear identification constraints and observed through noisy pairwise comparisons. This formulation captures the basic structure of \textsc{Arena}-style evaluation and provides a unified framework for pairwise-comparison modeling, low-rank estimation, and semiparametric inference.

\section{Semiparametric efficiency: score, information, and the efficiency bound}\label{sec:semiparametric_formulation}

We now study the fundamental limit for estimating a target functional from pairwise comparison data. Since the target is a low-dimensional functional of a high-dimensional latent score matrix or tensor, with the sampling distribution \(\Pi^\star\) acting as a nuisance parameter, the problem is naturally semiparametric. The key objects are the score, the information operator, and the efficient influence function, which together determine the semiparametric efficiency bound.


\subsection{Low-rank tangent space and identifiability}\label{sec:lb-notation}

A first step in semiparametric analysis is to identify the local perturbation directions that are statistically meaningful. In our setting, not every perturbation of $T^\star$ is admissible. There are two reasons.

First, $T^\star$ is assumed to be low rank. Let $\mathbb{M}_r$ denote the manifold of matrices (or tensors) of rank at most $r$. Local perturbations should remain within this manifold, at least to first order, which leads to the tangent space of $\mathbb{M}_r$ at $T^\star$, denoted by $\mathrm{Tan}_{T^\star}(\mathbb{M}_r)$. Informally, this is the linear space of first-order perturbations that preserve the low-rank structure. Second, under pairwise comparisons, $T^\star$ is identifiable only after normalization. As discussed in Section~\ref{sec:model-setting}, adding a common constant along the model mode does not change any pairwise probabilities. These unidentifiable directions must therefore be removed by imposing an identification constraint, for example the linear constraint \(\mathcal{C}(T^\star)=0\) in \eqref{eq:general-linear-constraint-model-setting}.


Combining these two restrictions leads to the \emph{signal tangent space}
\begin{equation}\label{eq:signal_tangent}
\mathbb{T}
:=
\mathrm{Tan}_{T^\star}(\mathbb{M}_r)
\cap
\bigl\{\Delta\in\mathbb{R}^{d_1\times\cdots\times d_m}:\mathcal{C}(\Delta)=0\bigr\}.
\end{equation}
This is the space of local perturbations that are both compatible with the low-rank model and identifiable from the data. In the semiparametric analysis, $\mathbb{T}$ plays the role of the effective parameter space. The following matrix example makes this construction concrete.

\begin{example}[Rank-$r$, first-mode centered matrix]\label{ex:row-centered}
Consider $T^\star = U\Sigma V^\top \in \mathbb{R}^{d_1\times d_2}$ of rank $r$, with $U\in\mathbb{R}^{d_1\times r}$, $V\in\mathbb{R}^{d_2\times r}$ orthonormal. The tangent space of $\mathbb{M}_r$ at $T^\star$ consists of matrices $\Delta = UA^\top + BV^\top$ for $A\in\mathbb{R}^{d_2\times r}$, $B\in\mathbb{R}^{d_1\times r}$. The linear zero-sum constraint $\mathbf{1}_{d_1}^\top\Delta = 0$ reduces to $B^\top\mathbf{1}_{d_1}=0$ (since $U^\top\mathbf{1}_{d_1}=0$ by the constraint on $T^\star$). Let $Q\in\mathbb{R}^{d_1\times(d_1-1)}$ have orthonormal columns spanning $\mathbf{1}_{d_1}^\perp$, so that $B = QC$ for some $C\in\mathbb{R}^{(d_1-1)\times r}$. The signal tangent space is then
\begin{equation}\label{eq:tangent_row0_UCQ}
\mathbb{T}
=
\bigl\{
UA^\top + QCV^\top:
A\in\mathbb{R}^{d_2\times r},\;
C\in\mathbb{R}^{(d_1-1)\times r}
\bigr\}.
\end{equation}
\end{example}


Let $\psi:\mathbb{R}^{d_1\times\cdots\times d_m}\to\mathbb{R}$ denote the target of inference, assumed Gateaux differentiable at $T^\star$. Then along any admissible direction $H\in\mathbb{T}$,
\begin{equation}\label{eq:lb-pathwise}
D\psi_{T^\star}(H)
=
\langle \nabla \psi(T^\star), H\rangle
=
\langle P_{\mathbb{T}}\nabla\psi(T^\star), H\rangle,
\qquad \forall H\in\mathbb{T},
\end{equation}
where $P_{\mathbb{T}}$ denotes orthogonal projection onto $\mathbb{T}$. Thus only the component of the gradient lying in the admissible tangent space matters for local inference. For a linear functional $\psi(T)=\langle \Gamma,T\rangle$, the gradient is simply $\nabla\psi(T^\star)=\Gamma$.

\subsection{Score and information operator}\label{sec:info-operator}

The information operator quantifies how much statistical information different types of comparisons carry about the latent score matrix or tensor. In the LLM evaluation setting, this information is highly heterogeneous. A closely contested matchup, for example, Claude versus GPT-4 on coding tasks, produces an outcome close to a 50/50 comparison and is therefore highly informative about the underlying ability gap. By contrast, a lopsided comparison between a frontier model and a much weaker baseline yields a near-deterministic outcome and contributes relatively little information. The information operator formalizes this heterogeneity and determines how the comparison design and the local noise structure together affect statistical efficiency.

We next quantify how much information the observed comparisons carry about perturbations in the direction of $\mathbb{T}$. Recall from Section~\ref{sec:model-setting} that the conditional law of $Y$ given $X$ depends on $T^\star$ only through the scalar index $\eta^\star=\langle T^\star,X\rangle$. Define the scalar score by
\begin{equation}\label{eq:scalar-score}
s_\eta(Y,\eta^\star)
:=
\partial_\eta \ell(Y,\eta)\big|_{\eta=\eta^\star},
\qquad
\eta^\star=\langle T^\star,X\rangle.
\end{equation}
If we perturb the signal along a direction $H\in\mathbb{T}$, that is, consider the submodel $T_\varepsilon=T^\star+\varepsilon H$, then the corresponding directional score is
\begin{equation}\label{eq:directional-score}
\left.\frac{\partial}{\partial\varepsilon}
\log p_{T_\varepsilon,\Pi^\star}(X,Y)\right|_{\varepsilon=0}
=
s_\eta(Y,\eta^\star)\,\langle H,X\rangle.
\end{equation}
This quantity measures the local sensitivity of the log-likelihood to changes in $T^\star$ along the direction $H$.

A useful way to interpret \eqref{eq:directional-score} is as a generalized regression score. In ordinary parametric models, the Fisher information describes how precisely a parameter can be estimated. Here the same role is played by an \emph{information operator}, because the parameter is not a finite-dimensional vector but a matrix or tensor. Let
\[
I(\eta)
:=
\mathbb{E}\!\left[s_\eta(Y,\eta)^2 \mid \eta\right]
\]
denote the scalar Fisher information at index value $\eta$. We define the information operator $G:\mathbb{R}^{d_1\times\cdots\times d_m}\to\mathbb{R}^{d_1\times\cdots\times d_m}$ by
\begin{equation}\label{eq:G-operator}
\langle G U, V\rangle
=
\mathbb{E}^\star\!\left[
I(\eta^\star)\,
\langle U,X\rangle \langle V,X\rangle
\right],
\qquad U,V\in\mathbb{R}^{d_1\times\cdots\times d_m},
\end{equation}
where $\eta^\star=\langle T^\star,X\rangle$. This operator is the natural analogue of a Fisher information matrix in the present high-dimensional setting.

The operator $G$ captures two distinct sources of heterogeneity. First, the comparison design may be highly uneven, so some contrasts $\langle U,X\rangle$ are observed much more often than others. Second, even among observed comparisons, some are more informative than others because the Fisher information $I(\eta^\star)$ depends on the difficulty of the matchup. In the BTL model, for example, closely matched models are much more informative than obviously unequal ones. The following example makes these objects concrete and illustrates why the restricted operator is invertible under natural conditions.

\begin{example}[Uniform pairwise comparisons, matrix case]\label{ex:uniform-IG}
Consider the matrix setting of Example~\ref{ex:pairwise} with $T^\star\in\mathbb{R}^{d_1\times d_2}$, and suppose models and tasks are sampled uniformly: at each round, a task $u$ is drawn uniformly from $[d_2]$ and a model pair $(p,q)$ is drawn uniformly from all $\binom{d_1}{2}$ pairs. The design matrix is $X = (e_p - e_q)e_u^\top$, and the scalar index is $\eta = T^\star_{p,u} - T^\star_{q,u}$. Under the BTL model, the scalar Fisher information at index $\eta$ is
\[
I(\eta) = \sigma(\eta)(1-\sigma(\eta)) = \frac{e^\eta}{(1+e^\eta)^2}.
\]
For a matrix $H \in \mathbb{R}^{d_1 \times d_2}$, we have $\langle H, X \rangle = H_{p,u} - H_{q,u}$. Hence the information operator acts as
\[
(GH)_{j,u}
=
\frac{1}{d_2 \binom{d_1}{2}}
\sum_{u'=1}^{d_2}
\sum_{p<q}
I(\eta_{pq,u'})\,(H_{p,u'}-H_{q,u'})\,
\bigl(\delta_{j,p}-\delta_{j,q}\bigr)\,\delta_{u,u'}.
\]
In the simplest case where all models are equally matched ($\eta^*$ = 0 and hence $T^\star = 0$ due to the zero-sum constraint), the Fisher information is constant $I(0) = 1/4$, and $G$ simplifies to
\[
(GH)_{j,u}
=
\frac{1}{4\binom{d_1}{2}}
\sum_{q\ne j}
(H_{j,u}-H_{q,u})
=
\frac{1}{4\binom{d_1}{2}}
\Bigl(
(d_1-1)\,H_{j,u} - \sum_{q\ne j} H_{q,u}
\Bigr).
\]
If $H$ satisfies the zero-sum constraint $\sum_{j=1}^{d_1} H_{j,u} = 0$, then $\sum_{q \ne j} H_{q,u} = -H_{j,u}$, so
$
(GH)_{j,u}
=
\frac{1}{2(d_1-1)}\,H_{j,u},
$
and $G$ acts as a scalar multiple of the identity: $G = \frac{1}{2(d_1-1)}\,\mathrm{Id}$. Hence, $G$ is diagonal and invertible on this subspace. Since the signal tangent space $\mathbb{T}$ lies within the zero-sum subspace (cf.\ Example~\ref{ex:row-centered}), the restricted operator $A = P_{\mathbb{T}} G P_{\mathbb{T}}$ is also invertible on $\mathbb{T}$. When $T^\star \ne 0$, the Fisher information $I(\eta_{pq,u})$ varies across comparisons, making $G$ no longer proportional to the identity. However, $G$ remains diagonally dominant on the zero-sum subspace and the restricted operator $A$ remains invertible, provided that no comparison is completely deterministic (i.e., $I(\eta) > 0$ for all observed indices).
\end{example}

Since only directions in $\mathbb{T}$ are statistically relevant, the corresponding \emph{restricted information operator} is
\begin{equation}\label{eq:A-operator}
A
:=
P_{\mathbb{T}} G P_{\mathbb{T}}
:\mathbb{T}\to\mathbb{T}.
\end{equation}
This is the operator that governs local efficiency for the target $\psi(T^\star)$.

\begin{assumption}[Invertibility on the tangent space]\label{ass:lb-invertible}
The restricted information operator $A$ is invertible on $\mathbb{T}$: there exists $c_{\mathbb{T}}>0$ such that
\[
\langle H,AH\rangle \ge c_{\mathbb{T}}\|H\|_{F}^2,
\qquad \forall H\in\mathbb{T}.
\]
\end{assumption}

Assumption~\ref{ass:lb-invertible} ensures that the data are informative along every admissible direction in the tangent space $\mathbb{T}$. That is, any nonzero perturbation of $T^\star$ within $\mathbb{T}$ must induce a nontrivial change in the distribution of the observed comparisons. Under this condition, the optimal local direction for estimating $\psi(T^\star)$ is characterized by the \emph{information equation}
\begin{equation}\label{eq:info-equation-main}
A H^\star
=
P_{\mathbb{T}}\nabla\psi(T^\star).
\end{equation}
Its unique solution is
\begin{equation}\label{eq:Hstar}
H^\star
=
A^{-1}P_{\mathbb{T}}\nabla\psi(T^\star)
\in \mathbb{T}.
\end{equation}
The tensor $H^\star$ is the analogue of the least favorable direction in semiparametric theory: it is the direction in the parameter space that is hardest for estimating the target functional.

Equation~\eqref{eq:info-equation-main} is best understood as an infinite-dimensional version of a generalized least-squares normal equation. To see the analogy concretely, consider a heteroscedastic linear regression model $Y_i = x_i^\top \theta^\star + \varepsilon_i$ with $\mathrm{Var}(\varepsilon_i \mid x_i) = \sigma_i^2$. The efficient score for estimating a linear functional $\gamma^\top \theta^\star$ weights each observation by the inverse noise variance:
\[
\hat\theta_{\rm eff} = \textstyle\arg\min_\theta \sum_{i=1}^n \sigma_i^{-2}(Y_i - x_i^\top \theta)^2,
\quad
\text{information matrix: } \mathcal{I} = \sum_{i=1}^n \sigma_i^{-2}\, x_i x_i^\top.
\]
The efficient influence function for $\gamma^\top \theta^\star$ is $\phi_i = \sigma_i^{-2}(Y_i - x_i^\top \theta^\star)\, x_i^\top \mathcal{I}^{-1} \gamma$, and the optimal direction solves $\mathcal{I}\, h = \gamma$.
In our setting, the scalar inverse-variance weight $\sigma_i^{-2}$ is replaced by the Fisher information $I(\eta^\star)$, the design vectors $x_i$ become design tensors $X$, and the information matrix $\mathcal{I}$ becomes the operator $G$. The projection $P_{\mathbb{T}}$ then restricts attention to directions that are both low-rank and identifiable. Thus the information equation says: among all admissible perturbations, the optimal one balances the gradient of the target functional against the information geometry induced by the comparison design.

\subsection{Efficient influence function and the efficiency bound}\label{sec:eff-lb}

The solution $H^\star$ in \eqref{eq:Hstar} leads directly to the efficient influence function (EIF),
\begin{equation}\label{eq:eif-main}
\phi_{\mathrm{EIF}}(X,Y)
=
s_\eta(Y,\eta^\star)\,\langle H^\star,X\rangle.
\end{equation}
More generally, an influence function may be viewed as the first-order term that captures how an estimator responds to small perturbations in the underlying data-generating distribution. In the semiparametric framework, it provides the leading correction to a plug-in estimator and determines its asymptotic variance. The EIF is the particular influence function that has the smallest asymptotic variance, and hence yields the semiparametric efficiency bound.

To see why $\phi_{\mathrm{EIF}}$ is the efficient influence function, it is enough to note two facts. First, it gives the correct first-order representation for the target functional $\psi(T^\star)$, with the required identity following from the information equation~\eqref{eq:info-equation-main}. Second, it is constructed from a direction $H^\star$ that lies in the signal tangent space $\mathbb{T}$, so it uses only the part of the data that is informative about $T^\star$ and is orthogonal to nuisance variation coming from the sampling distribution. These two properties imply that $\phi_{\mathrm{EIF}}$ is the unique influence function with the smallest asymptotic variance. The full semiparametric derivation is given in Appendix~\ref{app:semiparametric-details}.

The variance of the EIF is
\begin{equation}\label{eq:veff-def}
V_{\mathrm{eff}}(\psi)
=
\mathbb{E}^\star[\phi_{\mathrm{EIF}}(X,Y)^2]
=
\big\langle
P_{\mathbb{T}}\nabla\psi(T^\star),
A^{-1}P_{\mathbb{T}}\nabla\psi(T^\star)
\big\rangle.
\end{equation}
This quantity is the semiparametric efficiency bound. It represents the smallest asymptotic variance achievable by any regular estimator of $\psi(T^\star)$.

To state the lower bound formally, we impose a standard unbiasedness condition.

\begin{assumption}[Global unbiasedness]\label{ass:global-unbiased}
The estimator $\widehat\psi=\widehat\psi((X_i,Y_i)_{i=1}^n)$ is unbiased for $\psi(T)$ under the model:
\[
\mathbb{E}_{T,\Pi}[\widehat\psi]
=
\psi(T),
\qquad \forall\, T \text{ in the parameter space } \mathcal{P}.
\]
\end{assumption}

\begin{theorem}[Information lower bound]\label{thm:general-lb}
Under Assumptions~\ref{ass:lb-invertible} and~\ref{ass:global-unbiased}, any estimator $\widehat\psi$ with finite variance satisfies
\begin{equation}\label{eq:lb-sharp}
\mathrm{Var}^\star(\widehat\psi)
\ge
\frac{1}{n}\,V_{\mathrm{eff}}(\psi),
\qquad
V_{\mathrm{eff}}(\psi)
=
\big\langle
P_{\mathbb{T}}\nabla\psi(T^\star),
A^{-1}P_{\mathbb{T}}\nabla\psi(T^\star)
\big\rangle.
\end{equation}
Moreover, the bound is attained by the EIF in the sense that
$
\mathbb{E}^\star[\phi_{\mathrm{EIF}}(X,Y)^2]
=
V_{\mathrm{eff}}(\psi).
$
\end{theorem}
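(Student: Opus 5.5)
This is a multivariate Cramér--Rao argument along a one-dimensional parametric submodel through $T^\star$ in the least favorable direction $H^\star$, followed by a direct computation of the EIF variance.

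\textbf{Step 1: the submodel.} Consider $T_\varepsilon = T^\star + \varepsilon H$ for $H\in\mathbb{T}$. To stay inside $\mathcal{P}$ we replace it by a curve $T_\varepsilon = R(T^\star+\varepsilon H)$, where $R$ is a smooth retraction onto the rank-$r$, $\mathcal{C}$-constrained manifold (for instance truncated HOSVD composed with the linear projection onto $\ker\mathcal{C}$). This curve has velocity $H$ at $\varepsilon=0$. Keep $\Pi^\star$ fixed, so the sampling distribution is not perturbed.

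\textbf{Step 2: Cramér--Rao along the curve.} By \eqref{eq:directional-score}, the score of the $n$-sample along the curve is $S_n=\sum_i s_\eta(Y^{(i)},\eta_i^\star)\langle H,X^{(i)}\rangle$. Its variance is $n\langle H,GH\rangle = n\langle H,AH\rangle$, since $H\in\mathbb{T}$. The covariance identity $\mathrm{Var}(S_n)=\mathbb{E}[S_n^2]$ uses $\mathbb{E}[s_\eta\mid X]=0$ and $\mathbb{E}[s_\eta^2\mid X]=I(\eta^\star)$.

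Differentiate the unbiasedness identity of Assumption~\ref{ass:global-unbiased}, $\mathbb{E}_{T_\varepsilon}[\widehat\psi]=\psi(T_\varepsilon)$, at $\varepsilon=0$. Differentiating under the integral is justified because $\mathcal{Y}$ is finite (or $\ell$ is smooth) and $\widehat\psi$ has finite variance. Together with \eqref{eq:lb-pathwise}, this gives
\[
\mathrm{Cov}^\star(\widehat\psi,S_n)=\langle P_{\mathbb{T}}\nabla\psi(T^\star),H\rangle .
\]
Cauchy--Schwarz then yields
\[
\mathrm{Var}^\star(\widehat\psi)\ge \frac{\langle P_{\mathbb{T}}\nabla\psi,H\rangle^2}{n\langle H,AH\rangle}.
\]

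\textbf{Step 3: optimize over $H$.} Maximize the right-hand side over $H\in\mathbb{T}$. Assumption~\ref{ass:lb-invertible} makes $A$ a positive definite self-adjoint operator on $\mathbb{T}$, so the generalized Cauchy--Schwarz inequality in the $A$-inner product gives
\[
\sup_{H\in\mathbb{T}}\frac{\langle g,H\rangle^2}{\langle H,AH\rangle}=\langle g,A^{-1}g\rangle,
\qquad g=P_{\mathbb{T}}\nabla\psi .
\]
The supremum is attained at $H=H^\star$ from \eqref{eq:Hstar}. This yields \eqref{eq:lb-sharp}.

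\textbf{Step 4: EIF variance.} By \eqref{eq:eif-main}, conditioning on $X$ gives
\[
\mathbb{E}^\star[\phi_{\mathrm{EIF}}^2]=\mathbb{E}^\star\big[I(\eta^\star)\langle H^\star,X\rangle^2\big]=\langle H^\star,GH^\star\rangle=\langle H^\star,AH^\star\rangle .
\]
Since $AH^\star=g$, this equals $\langle g,A^{-1}g\rangle=V_{\mathrm{eff}}(\psi)$.

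\textbf{Main obstacle.} The delicate step is the rigor of the submodel in Step 2. We must verify that the retraction curve lies in $\mathcal{P}$ for small $\varepsilon$, and that its derivative is exactly $H$. This requires $T^\star$ to have exact Tucker rank $r$ ($\lambda_{\min}>0$), so that the manifold is smooth near $T^\star$. We must also check that $\mathcal{C}(T_\varepsilon)=0$ is preserved. For the zero-sum constraint, the curve $(T^\star+\varepsilon H)$ projected by $I-\frac1{d_1}\mathbf 1\mathbf 1^\top$ along mode 1 commutes with rank truncation, since it only shrinks the mode-1 column space. So composing preserves both properties.

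Only first-order behavior matters: the Cramér--Rao step uses just the score at $\varepsilon=0$, so second-order deviations of the curve from the line are irrelevant.
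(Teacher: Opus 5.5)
Your proposal is correct and follows the paper's own argument: a Cram\'er--Rao inequality along a one-dimensional signal submodel with $\Pi^\star$ held fixed, Cauchy--Schwarz against the $n$-sample score, maximization of the Rayleigh quotient $\langle P_{\mathbb{T}}\nabla\psi(T^\star),H\rangle^2/\langle H,AH\rangle$ over $H\in\mathbb{T}$ at $H^\star=A^{-1}P_{\mathbb{T}}\nabla\psi(T^\star)$, and the direct computation $\mathbb{E}^\star[\phi_{\mathrm{EIF}}^2]=\langle H^\star,AH^\star\rangle=V_{\mathrm{eff}}(\psi)$. The one difference is that the paper uses the straight line $T^\star+\varepsilon H$, which generally leaves the rank-$r$ set on which Assumption~\ref{ass:global-unbiased} is imposed, while you use a retraction curve with the same velocity; this improves rigor at no cost, since only the score at $\varepsilon=0$ enters.
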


The proof is a standard information-inequality argument and is given in Appendix~\ref{app:proof-general-lb}. The key message of Theorem~\ref{thm:general-lb} is that once the low-rank geometry, the identification constraint, and the comparison design are taken into account, the efficiency bound takes the same conceptual form as in classical semiparametric theory: the optimal variance is the squared norm of the projected gradient under the inverse restricted information operator.

This perspective is especially useful in LLM evaluation because it isolates the precise ways in which the problem differs from standard low-rank completion. In standard matrix/tensor completion with homoscedastic observations and uniform sampling, the information geometry is often isotropic. Here, by contrast, both the observation model and the sampling design may be highly heterogeneous. The restricted operator $A=P_{\mathbb{T}}GP_{\mathbb{T}}$ is therefore the correct object for capturing the true precision limit.

\section{One-step estimation and upper bound analysis}\label{sec:upper-bound}

Section~\ref{sec:semiparametric_formulation} characterized the semiparametric efficiency bound through the efficient influence function. We now turn to the constructive question: can one build a feasible estimator that attains this bound in the low-rank pairwise-comparison model? Our approach is based on a one-step correction to an initial low-rank estimator. This is a standard semiparametric strategy in spirit, but its analysis here is substantially more delicate because the data are pairwise, the signal is low rank, and the information operator is generally anisotropic.

\subsection{The one-step estimator}\label{sec:starting_point}

We use the cross-fitting procedure summarized in Algorithm~\ref{alg:onestep}. The data are randomly partitioned into $K$ folds of approximately equal size. For each fold $k$, the remaining $K-1$ folds are used to construct the initial estimator and the estimated tangent-space projector, while fold $k$ serves as the evaluation sample for the debiasing step. The final estimate is obtained by averaging over the $K$ folds. Cross-fitting avoids the sample-size loss inherent in a single data split while maintaining the independence between the estimation and evaluation stages that simplifies the theoretical analysis.

\begin{algorithm}[t]
\caption{One-step estimator for a linear functional $\psi(T^\star)=\langle \Gamma,T^\star\rangle$}
\label{alg:onestep}
\begin{enumerate}
\item Randomly partition the data $\{(X_i,Y_i)\}_{i=1}^N$ into $K$ folds $\mathcal{D}_1,\ldots,\mathcal{D}_K$ of approximately equal size.
\item For each fold $k=1,\ldots,K$:
\begin{enumerate}
\item Using the out-of-fold data $\mathcal{D}_{-k}:=\bigcup_{j\neq k}\mathcal{D}_j$, compute a low-rank estimator $\hat T^{(k)}$ of $T^\star$ subject to the identification constraint, and construct the associated tangent-space projector $\widehat{P}_{\mathbb{T}}^{(k)}$.
\item Compute the plug-in information operator $\widehat G^{(k)}$ via \eqref{eq:Ghat_def} using $\mathcal{D}_{-k}$.
\item Compute the estimated EIF direction
\[
\hat H^{(k)}=(\widehat{P}_{\mathbb{T}}^{(k)} \widehat G^{(k)} \widehat{P}_{\mathbb{T}}^{(k)})^{-1}\widehat{P}_{\mathbb{T}}^{(k)}\Gamma.
\]
\item Using fold $k$ as the evaluation sample, form the fold-level one-step estimator
\[
\hat\psi^{(k)}
=
\psi(\hat T^{(k)})
+
\frac{K}{N}\sum_{i\in\mathcal{D}_k} s_\eta(Y_i,\hat\eta_i^{(k)})\,\langle \hat H^{(k)},X_i\rangle,
\qquad
\hat\eta_i^{(k)}:=\langle \hat T^{(k)},X_i\rangle.
\]
\end{enumerate}
\item Output the cross-fitted estimator $\hat\psi=K^{-1}\sum_{k=1}^K \hat\psi^{(k)}$ and the variance estimate $\widehat V = N^{-1}\sum_{i=1}^N \hat\phi_i^2$, where $\hat\phi_i = s_\eta(Y_i,\hat\eta_i^{(k_i)})\,\langle \hat H^{(k_i)},X_i\rangle$ and $k_i$ denotes the fold containing observation $i$.
\end{enumerate}
\end{algorithm}

Let $\hat T$ be an initial estimator of $T^\star$ satisfying
Assumption~\ref{ass:initial}, and let $\widehat{P}_{\mathbb{T}}$ denote the
estimated projector onto the tangent space at $\hat T$. Since the population
information operator $G$ defined in \eqref{eq:G-operator} depends on the
unknown Fisher information $I(\eta^\star)$, it is not directly available. We
therefore replace $G$ by its plug-in estimate
\begin{equation}\label{eq:Ghat_def}
\langle \widehat G\, U, V\rangle
:=
\frac{1}{n_{\rm tr}}\sum_{i\in\mathcal{D}_{\rm tr}}
I(\hat\eta_i)\,
\langle U,X_i\rangle \langle V,X_i\rangle,
\qquad
\hat\eta_i:=\langle \hat T,X_i\rangle,
\end{equation}
where $\mathcal{D}_{\rm tr}$ denotes the training sample and
$n_{\rm tr}=|\mathcal{D}_{\rm tr}|$. Under the bounded signal condition
\eqref{eq:Tstar-bounded} and Assumption~\ref{ass:initial}, the estimation error
$\widehat G-G$ is controlled by the concentration argument in
Appendix~\ref{app:pd-hatG}. We define the estimated
efficient-influence-function direction $\hat H$ by
\begin{equation}\label{eq:Hhat_def}
(\widehat{P}_{\mathbb{T}} \widehat G \widehat{P}_{\mathbb{T}})\hat H
=
\widehat{P}_{\mathbb{T}} \Gamma,
\qquad\text{that is,}\qquad
\hat H
=
(\widehat{P}_{\mathbb{T}} \widehat G \widehat{P}_{\mathbb{T}})^{-1}\widehat{P}_{\mathbb{T}} \Gamma.
\end{equation}
For each fold $k$, the evaluation sample $\mathcal{D}_k$ is independent of the auxiliary quantities $\hat T^{(k)}$ and $\hat H^{(k)}$ constructed from $\mathcal{D}_{-k}$. The fold-level one-step estimator of the linear functional $\psi(T^\star)=\langle \Gamma,T^\star\rangle$ is
\begin{equation}\label{eq:onestep-fold}
\hat\psi^{(k)}
=
\psi(\hat T^{(k)})
+
\frac{K}{N}\sum_{i\in\mathcal{D}_k} s_\eta\!\bigl(Y_i,\hat\eta_i^{(k)}\bigr)\,\langle \hat H^{(k)},X_i\rangle,
\qquad
\hat\eta_i^{(k)}:=\langle \hat T^{(k)},X_i\rangle.
\end{equation}
Within each fold, the first term is a plug-in estimator, while the second is the influence-function correction. The cross-fitted estimator is the average over all $K$ folds:
\begin{equation}\label{eq:onestep}
\hat\psi = \frac{1}{K}\sum_{k=1}^K \hat\psi^{(k)}.
\end{equation}
Intuitively, the fold-level correction removes the leading bias of the plug-in estimator and aligns the resulting estimator with the efficient local direction identified in Section~\ref{sec:semiparametric_formulation}. Cross-fitting ensures that the independence between the evaluation sample and the estimated quantities $(\hat T^{(k)}, \hat H^{(k)})$ holds within each fold, while using the full sample for both estimation and evaluation.

For general nonlinear functionals, the same construction applies after replacing $\Gamma$ by the gradient of the target functional evaluated at the initial estimator. We defer the extension of general nonlinear functionals in Section \ref{sec:nonlinear-functional}.

\subsection{Assumptions for the upper-bound analysis}\label{sec:upper-bound-assumptions}

To establish asymptotic efficiency of the one-step estimator, we impose three conditions.

\begin{assumption}[Initial estimator]\label{ass:initial}
For each cross-fitting fold, the training sample yields an estimator $\hat T$
with Tucker rank $(r_1,\ldots,r_m)$ and orthonormal mode factors
$\hat U_1,\ldots,\hat U_m$.  For some constants $c_0,c,C_1>0$, the following
conditions hold simultaneously, uniformly over the fixed number of folds,
with probability at least $1-\bar d^{-c_0}$.  First, $\hat T$ obeys the same
identification constraint as $T^\star$,
\[
\sum_{a=1}^{d_1}\hat T_{a,u}=0,
\qquad u\in[D],
\]
and, with
\begin{equation}\label{eq:init-rate}
\varepsilon_n:=\sqrt{\frac{\bar d\log^c\bar d}{n}},
\end{equation}
it satisfies
\begin{equation}\label{eq:init-entrywise}
\|\hat T-T^\star\|_\infty\le C_1\varepsilon_n.
\end{equation}
Second, for every mode $j\in[m]$,
\begin{align}
\|P_{\hat U_j}-P_{U_j}\|_{\mathrm{op}}
&\le C_1\varepsilon_n,\label{eq:init-subspace-op}\\
\|P_{\hat U_j}-P_{U_j}\|_{2,\infty}
&\le C_1\varepsilon_n\sqrt{\frac{\mu r_j}{d_j}},
\label{eq:init-subspace-row}\\
\|P_{\hat U_j}-P_{U_j}\|_{\infty}
&\le C_1\varepsilon_n\frac{\mu r_j}{d_j},
\label{eq:init-subspace-entry}
\end{align}
and the estimated factors remain incoherent:
\begin{equation}\label{eq:init-estimated-incoherence}
\max_{a\in[d_j]}\|e_a^\top\hat U_j\|_2^2
\le \frac{2\mu r_j}{d_j}.
\end{equation}
Finally, the multilinear projector formed from the estimated factors satisfies
\begin{equation}\label{eq:init-projection-stability}
\big\|\mathcal A\times_1P_{\hat U_1}\cdots\times_mP_{\hat U_m}\big\|_\infty
\le
\left(\prod_{j=1}^m\sqrt{2\mu r_j}\right)\|\mathcal A\|_\infty
\qquad\text{for every tensor }\mathcal A.
\end{equation}
\end{assumption}

The entrywise error bound in Assumption~\ref{ass:initial} is essential for one-step inference, because the score correction is evaluated entry by entry through
$
\hat\eta_i-\eta_i^\star
=
\langle \hat T-T^\star, X_i\rangle,
$
and in the pairwise-comparison design \(X_i\) is sparse.  Consequently,
\(\max_i|\hat\eta_i-\eta_i^\star|\) is controlled by
\(\|\hat T-T^\star\|_\infty\).  The subspace and incoherence conditions in
\eqref{eq:init-subspace-op}--\eqref{eq:init-projection-stability} control the
estimated tangent projector and the plug-in information direction.  The
subsequent inference theory depends on the initial estimator only through
these stated properties.

\begin{assumption}[Signal strength]\label{ass:signal-strength}
The Frobenius norm satisfies
\begin{equation}\label{eq:signal-strength}
\|T^\star\|_F \ge c\sqrt{d^\star}
\end{equation}
for an absolute constant $c>0$.
\end{assumption}

Assumption~\ref{ass:signal-strength} and \eqref{eq:Fnorm-lambda} imply that $\lambda_{\min}\asymp\sqrt{d^\star}$ when $\kappa$ and $r$ are bounded constants. Since the noise level under the pairwise-comparison model is of constant order (the Fisher information satisfies $c_B\le I(\eta)\le 1/4$), the classical signal-to-noise ratio $\lambda_{\min}/\sigma$ grows with $\sqrt{d^\star}$, which is the rate required in existing literature when the sample size satisfies
\begin{equation}\label{eq:sample-size}
n \ge C_0\, \bar d\log^c \bar d
\end{equation}
for a sufficiently large constant $C_0$ depending on $\mu,\kappa,r,m$.
Together with the asymptotic requirement
$n/(\bar d\log^c\bar d)\to\infty$, this is the regime in which
$\varepsilon_n\to0$ and the local conditions in
Assumption~\ref{ass:initial} support the one-step expansion.

\begin{assumption}[Alignment condition]\label{ass:alignment}
There exists a constant $\alpha_\Gamma>0$ such that
\begin{equation}\label{eq:alignment}
\|P_{\mathbb{T}}\Gamma\|_F
\ge
\alpha_\Gamma\,\bar d^{1/2}(d^\star)^{-1/2}\|\Gamma\|_F.
\end{equation}
\end{assumption}

Assumption~\ref{ass:alignment} requires that the target functional have a non-negligible component in the tangent space $\mathbb{T}$. In many natural examples this condition is mild; for instance, when $\Gamma$ itself lies in $\mathbb{T}$, as in entrywise contrasts or category-specific ability gaps, the condition holds with $\alpha_\Gamma=1$.

\subsection{Main error bound for the one-step estimator}\label{sec:combined_bound}

We now state the main upper-bound result for the one-step estimator, with leading term given by the efficient influence function and remainder controlled explicitly.

\begin{theorem}[Error bound for the one-step estimator]\label{thm:combined}
Under Assumptions~\ref{ass:initial}--\ref{ass:alignment}, the bounded signal condition \eqref{eq:Tstar-bounded}, and the sample size condition \eqref{eq:sample-size}, the following holds with probability at least $1-\bar d^{-c}$ conditional on the first-stage sample:
\begin{equation}\label{eq:combined_bound}
\bigl|\hat\psi - \psi(T^\star) - (\mathbb{P}_n-\mathbb{P}^\star)\phi^\star\bigr|
\;\le\;
C(\mu,\kappa,r,m)\,C_A\,\|\Gamma\|_1\,
\frac{\bar d\log^c \bar d}{n},
\end{equation}
where $\phi^\star$ is the efficient influence function and $C_A$ is defined such that
\begin{gather}\label{eq:CA_def_main}
\|A^{-1}\|_{\infty\to\infty}
\;\vee\;
\|\hat A^{-1}\|_{\infty\to\infty}
\;\le\;
C_A\, d^\star,
\\
\notag
\text{where}\quad
A:=P_{\mathbb{T}}GP_{\mathbb{T}},
\qquad
\hat A:=\widehat{P}_{\mathbb{T}} \widehat G \widehat{P}_{\mathbb{T}}.
\end{gather}
Here $C(\mu,\kappa,r,m)$ is a constant depending only on $\mu, \kappa, r, m$ (and the bound $B$ from \eqref{eq:Tstar-bounded}), and $c>0$ is an absolute constant.
\end{theorem}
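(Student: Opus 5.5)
We work fold by fold. Since $K$ is fixed, it suffices to prove the bound for a single fold $k$ with evaluation sample $\mathcal D_k$ of size $n_k\asymp n$, and then average. Conditional on $\mathcal D_{-k}$, the quantities $\hat T,\hat H,\widehat P_{\mathbb T},\widehat G$ are fixed, and we place ourselves on the event of Assumption~\ref{ass:initial}. Write $s^\star_i=s_\eta(Y_i,\eta_i^\star)$ and $\hat s_i=s_\eta(Y_i,\hat\eta_i)$. In the BTL model the score is $s(y,\eta)=y-\sigma(\eta)$, so $\hat s_i-s_i^\star=-I(\eta_i^\star)(\hat\eta_i-\eta_i^\star)+O((\hat\eta_i-\eta_i^\star)^2)$, with a uniformly bounded second derivative. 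Using this expansion, we decompose
\begin{align*}
\hat\psi^{(k)}-\psi(T^\star)-(\mathbb P_{n_k}-\mathbb P^\star)\phi^\star
&= \underbrace{\langle\Gamma,\hat T-T^\star\rangle-\mathbb P^\star\bigl[I(\eta^\star)\langle\hat T-T^\star,X\rangle\langle\hat H,X\rangle\bigr]}_{R_1}\\
&\quad+\underbrace{(\mathbb P_{n_k}-\mathbb P^\star)\bigl[\hat s\langle\hat H,X\rangle-s^\star\langle H^\star,X\rangle\bigr]}_{R_2}
+\underbrace{\mathbb P^\star\bigl[O((\hat\eta-\eta^\star)^2)\langle\hat H,X\rangle\bigr]}_{R_3}.
\end{align*}
Here we used $\mathbb P^\star s^\star\langle\hat H,X\rangle=0$ and the identity $\phi^\star=s^\star\langle H^\star,X\rangle$ from \eqref{eq:eif-main}. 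The term $R_1$ equals $\langle\Gamma-G\hat H,\Delta\rangle$ with $\Delta=\hat T-T^\star$.

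\textbf{Entrywise control of $\hat H$ and $H^\star$.} Every remainder carries a factor $\langle \hat H,X\rangle$ or $\langle\hat H-H^\star,X\rangle$, and $X$ has two nonzero entries. So the key quantities are $\|\hat H\|_\infty\le\|\hat A^{-1}\|_{\infty\to\infty}\|\widehat P_{\mathbb T}\Gamma\|_\infty$, and likewise for $H^\star$. By the projection stability \eqref{eq:init-projection-stability} and incoherence, $\|\widehat P_{\mathbb T}\Gamma\|_\infty\lesssim C(\mu,r,m)\|\Gamma\|_1\,\bar d/d^\star$. Indeed, each of the $m+1$ summands of the tangent projector involves projecting every mode but one, and an entry of $P_{U_j}$ is at most $\mu r_j/d_j$. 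Combined with \eqref{eq:CA_def_main}, this gives $\|\hat H\|_\infty\vee\|H^\star\|_\infty\lesssim C_A\|\Gamma\|_1\bar d$.

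For the difference $\hat H-H^\star$ we use the resolvent identity
\[
\hat H-H^\star=\hat A^{-1}(\widehat P_{\mathbb T}-P_{\mathbb T})\Gamma+\hat A^{-1}(A-\hat A)H^\star .
\]
We then bound $\|\widehat P_{\mathbb T}-P_{\mathbb T}\|$ entrywise by \eqref{eq:init-subspace-entry}. We bound $\widehat G-G$ in two pieces: the Lipschitz property of $I(\cdot)$ together with $\|\Delta\|_\infty\le C_1\varepsilon_n$, and the concentration result of Appendix~\ref{app:pd-hatG}. Together these yield $\|\hat H-H^\star\|_\infty\lesssim C_A\|\Gamma\|_1\bar d\,\varepsilon_n$.

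\textbf{Bounding the remainders.} For $R_3$, we have $|R_3|\lesssim\|\Delta\|_\infty^2\|\hat H\|_\infty\,\mathbb P^\star(\mathbb 1\{X\cap\mathrm{supp}\,\hat H\})$. Rather than using this crude bound directly, we write $R_3\lesssim\|\Delta\|_\infty^2\,\mathbb E|\langle\hat H,X\rangle|\le\|\Delta\|^2_\infty\|\hat H\|_1/d^\star$ times the sampling density. Here the $\ell_1$ mass of $\hat H$ is controlled via $\hat A^{-1}$ acting on a vector of $\ell_1$ norm $\lesssim\|\Gamma\|_1$, giving $C_A\|\Gamma\|_1\varepsilon_n^2$.

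For $R_2$, we use conditional independence and Bernstein's inequality. The summand has variance $\lesssim\mathbb E[(\hat s-s^\star)^2\langle\hat H,X\rangle^2]+\mathbb E[\langle\hat H-H^\star,X\rangle^2]$, which is of order $\varepsilon_n^2$ relative to $V_{\mathrm{eff}}$. Hence $|R_2|\lesssim\varepsilon_n\sqrt{\log\bar d/n}\,C_A\|\Gamma\|_1\lesssim C_A\|\Gamma\|_1\bar d\log^c\bar d/n$.

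\textbf{Main obstacle: $R_1$.} This term is $\langle P_{\mathbb T}\Gamma-P_{\mathbb T}G\hat H,\Delta\rangle$ plus the pieces from the normal component $P_{\mathbb T}^\perp\Delta$, and it is where non-commutativity bites. We use $\hat A\hat H=\widehat P_{\mathbb T}\Gamma$ to write
\[
\Gamma-G\hat H=(\Gamma-\widehat P_{\mathbb T}\Gamma)+(\hat A-\widehat P_{\mathbb T}G\widehat P_{\mathbb T})\hat H+(\widehat P_{\mathbb T} G\widehat P_{\mathbb T}-G)\hat H .
\]
We then pair each piece with $\Delta$, splitting $\Delta=P_{\mathbb T}\Delta+P_{\mathbb T}^\perp\Delta$. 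For the low-rank estimator, $P_{\mathbb T}^\perp\Delta$ is second order: its entries are bounded via \eqref{eq:init-subspace-op}--\eqref{eq:init-subspace-entry} as $\lesssim\varepsilon_n^2$ times the signal scale $\sqrt{d^\star}/\!\sqrt{d^\star}$, using Assumption~\ref{ass:signal-strength}. The term $(G-GP)$ paired with $P_{\mathbb T}\Delta$ cancels to first order because $\hat H\in\widehat{\mathbb T}$ and $\widehat{\mathbb T}$ is close to $\mathbb T$.

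Every surviving term is a product of two first-order errors, each of size $\varepsilon_n$, multiplied by an operator norm of $\hat A^{-1}$ or $A^{-1}$. It is exactly here that the $\infty\to\infty$ bound $C_Ad^\star$ enters, offsetting the $1/d^\star$ scaling of $G$. The alignment Assumption~\ref{ass:alignment} is used to keep the relative error of $\widehat P_{\mathbb T}\Gamma$ versus $P_{\mathbb T}\Gamma$ of order $\varepsilon_n$.

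I expect the delicate bookkeeping of these entrywise and operator norms in $R_1$ to be the hardest step. In particular, we must avoid any extra factor of $\bar d$ when $G$ and $P_{\mathbb T}$ fail to commute. Collecting the terms gives $C\,C_A\|\Gamma\|_1\varepsilon_n^2=C\,C_A\|\Gamma\|_1\bar d\log^c\bar d/n$. A union bound over the folds then gives the stated probability.
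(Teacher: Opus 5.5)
Your proposal is correct. It uses the same toolkit as the paper:
\begin{itemize}
\item conditioning on $\mathcal D_{-k}$;
\item Bernstein for the empirical part, with the Frobenius reduction for the variance and entrywise bounds for the tail;
\item the $\ell_1$-average bound $\mathbb E^\star|\langle v,X\rangle|\lesssim\|v\|_1/d^\star$ combined with $\|\hat A^{-1}\|_{1\to1}=\|\hat A^{-1}\|_{\infty\to\infty}\le C_Ad^\star$ for the quadratic term;
\item quadratic smallness of the off-tangent components.
\end{itemize}
Where you differ is in how you organize the first-order bias.

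The paper splits $\langle\Gamma,\Delta\rangle-\langle G\Delta,\hat H\rangle$ into three pieces: projection leakage, an oracle cancellation driven by $AH^\star=P_{\mathbb T}\Gamma$, and a direction bias $\langle G\Delta,\hat H-H^\star\rangle$. It bounds the last by moving $\hat A^{-1}\widehat P_{\mathbb T}G-A^{-1}P_{\mathbb T}G$ onto $\Gamma$, and treats $\widehat G\neq G$ in a separate perturbation step.

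You use the estimated equation $\hat A\hat H=\widehat P_{\mathbb T}\Gamma$ directly. Pushed to the end, this gives the exact identity
\[
R_1=\bigl\langle\Gamma-G\hat H,\,(I-\widehat P_{\mathbb T})\Delta\bigr\rangle+\bigl\langle(\widehat G-G)\hat H,\,\widehat P_{\mathbb T}\Delta\bigr\rangle .
\]
Since $\hat T\in\widehat{\mathbb T}$, you have $(I-\widehat P_{\mathbb T})\Delta=-(I-\widehat P_{\mathbb T})T^\star$. This is quadratic in the subspace errors by the block argument of Proposition~\ref{app:pd-PTcompl-infty}, with the roles of $U_j$ and $\hat U_j$ swapped. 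Pairing $\ell_1$ against $\ell_\infty$, with $\|G\hat H\|_1\le\|G\|_{1\to1}\|\hat H\|_1\lesssim C_A\|\Gamma\|_1$, closes the first term. The second term is $\lesssim\|\Delta\|_\infty\|\widehat P_{\mathbb T}\Delta\|_\infty\|\hat H\|_1/d^\star$. So your further split $\Delta=P_{\mathbb T}\Delta+P_{\mathbb T}^\perp\Delta$ with the true projector is unnecessary.

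Your route keeps $H^\star$ and $\|A^{-1}\|_{\infty\to\infty}$ out of the bias entirely, and it makes the Fisher-weight error an explicit second-order cross term. The paper's route is more modular: it isolates each error source and reuses the oracle analysis for the whitened estimator.

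A few smaller points:
\begin{itemize}
\item In $R_2$ the standard deviation is of order $\varepsilon_n\sqrt{V_{\rm eff}\log\bar d/n}$ with $V_{\rm eff}\lesssim\bar d\|\Gamma\|_1^2$. Your display is therefore missing a factor $\sqrt{\bar d}$, though the final rate survives.
\item Your resolvent identity omits the term $-(I-\widehat P_{\mathbb T})H^\star$. Its second term also yields $C_A^2$ rather than $C_A$ in $\|\hat H-H^\star\|_\infty$. The Bernstein tail, however, only needs the crude bound $\|\hat H\|_\infty+\|H^\star\|_\infty\lesssim C_A\|\Gamma\|_1\bar d$.
\item Assumption~\ref{ass:alignment} is not needed for this theorem; it is used only for the variance estimate in Theorem~\ref{thm:eff-clt}.
\item Appendix~\ref{app:pd-hatG} contains only a Lipschitz argument for the population plug-in $\mathbb E^\star[I(\hat\eta)\,X\otimes X]$, not a concentration result. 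Your cross term is controlled for that version. The training-sample average in \eqref{eq:Ghat_def} would need an additional argument, since its fluctuation depends on the same data as $\hat T$ and $\hat H$.
\end{itemize}
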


Theorem~\ref{thm:combined} gives a sharp, simplified characterization of the one-step estimator in the pairwise-comparison setting. 
Building on it, we obtain the formal CLT.

\begin{theorem}[CLT and variance estimation for the one-step estimator]\label{thm:eff-clt}
Under the conditions of Theorem~\ref{thm:combined}, if additionally
\begin{equation}\label{eq:clt-condition}
C_A\,\sqrt{\frac{\bar d\log^{2c}\bar d}{n}}\;\to\;0,
\end{equation}
then the one-step estimator is asymptotically normal:
\begin{equation}\label{eq:eff-clt}
\frac{\sqrt{n}\bigl(\hat\psi-\psi(T^\star)\bigr)}{\sqrt{V_{\rm eff}}}
\;\xrightarrow{d}\;
\mathcal{N}(0,1),
\end{equation}
where the asymptotic variance is the semiparametric efficiency bound
\begin{equation}\label{eq:Veff_formula}
V_{\rm eff}
\;=\;
\bigl\langle P_{\mathbb{T}}\Gamma,\, A^{-1}P_{\mathbb{T}}\Gamma\bigr\rangle.
\end{equation}
Moreover, the plug-in variance estimator $\widehat V_{\rm eff}:=\mathbb{P}_n[\hat\phi^2]$, where $\hat\phi(X,Y):=s_\eta(Y,\hat\eta(X))\langle\hat H,X\rangle$, satisfies
\begin{equation}\label{eq:Veff-relative}
\left|\frac{\widehat V_{\rm eff}-V_{\rm eff}}{V_{\rm eff}}\right|
\;\lesssim\;
C\sqrt{\frac{\bar d\log^c\bar d}{n}}
\end{equation}
with probability at least $1-\bar d^{-c}$.
\end{theorem}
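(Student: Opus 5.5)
The plan is to derive the CLT from the expansion in Theorem~\ref{thm:combined}, which writes $\hat\psi-\psi(T^\star)=(\mathbb{P}_n-\mathbb{P}^\star)\phi^\star+R_n$ with $|R_n|\le C\,C_A\|\Gamma\|_1\,\bar d\log^c\bar d/n$. Multiplying by $\sqrt{n/V_{\rm eff}}$, two things remain to show. First, a triangular-array CLT for $\sqrt{n}(\mathbb{P}_n-\mathbb{P}^\star)\phi^\star/\sqrt{V_{\rm eff}}$. Second, that $\sqrt{n}R_n/\sqrt{V_{\rm eff}}\to0$.

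\textbf{Lower bound on $V_{\rm eff}$.} This is needed for both steps. Since $I(\eta)\le 1/4$, the operator $G$ is bounded: $\langle GH,H\rangle\le \tfrac14\mathbb{E}\langle H,X\rangle^2\le \|G\|_{\rm op}\|H\|_F^2$. Hence
\[
V_{\rm eff}=\langle P_{\mathbb T}\Gamma,A^{-1}P_{\mathbb T}\Gamma\rangle\ge \|P_{\mathbb T}\Gamma\|_F^2/\|A\|_{\rm op}.
\]
For pairwise designs with $X$ supported on two entries out of $d^\star$, we have $\|G\|_{\rm op}\lesssim 1/D\cdot 1/d_1\asymp 1/d^\star$ (up to non-uniformity constants). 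Combined with Assumption~\ref{ass:alignment}, this gives
\[
V_{\rm eff}\gtrsim d^\star\cdot \bar d\,(d^\star)^{-1}\|\Gamma\|_F^2=\bar d\,\|\Gamma\|_F^2.
\]
Since $\Gamma$ has bounded support (Assumption~\ref{ass:bounded-support}), $\|\Gamma\|_1\le\sqrt M\|\Gamma\|_F$. The remainder ratio is then
\[
\sqrt{n}|R_n|/\sqrt{V_{\rm eff}}\lesssim C_A\|\Gamma\|_1\,\bar d\log^c\bar d\,/\big(\sqrt{n}\sqrt{\bar d}\|\Gamma\|_F\big)\lesssim C_A\sqrt{\bar d\log^{2c}\bar d/n},
\]
which tends to zero by \eqref{eq:clt-condition}. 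This holds on an event of probability $1-\bar d^{-c}\to1$, which suffices for convergence in distribution.

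\textbf{Lyapunov CLT.} Conditional on nothing, the summands $\phi^\star(X_i,Y_i)$ are i.i.d., mean zero (because $\mathbb{E}[s_\eta\mid X]=0$), with variance $V_{\rm eff}$ by Theorem~\ref{thm:general-lb}. I verify Lyapunov's condition via
\[
|\phi^\star|\le |s_\eta|\,|\langle H^\star,X\rangle|\le 2\|H^\star\|_\infty\le 2\|A^{-1}\|_{\infty\to\infty}\|P_{\mathbb T}\Gamma\|_\infty .
\]
Using \eqref{eq:CA_def_main} and the incoherence bound $\|P_{\mathbb T}\Gamma\|_\infty\lesssim \bar d\,\|\Gamma\|_1/d^\star$ (each projector entry scales like $\mu r/d_j$), this gives $\|\phi^\star\|_\infty\lesssim C_A\bar d\|\Gamma\|_1$. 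So
\[
\mathbb{E}|\phi^\star|^3/(\sqrt n V_{\rm eff}^{3/2})\le \|\phi^\star\|_\infty/(\sqrt n V_{\rm eff}^{1/2})\lesssim C_A\sqrt{\bar d/n}\to0.
\]
Slutsky then gives \eqref{eq:eff-clt}. Cross-fitting is harmless here because the expansion is already stated for the averaged $\hat\psi$ with $\mathbb{P}_n$ over all $N=n$ points.

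\textbf{Variance estimator.} Write $\hat\phi^2-\phi^{\star2}=(\hat\phi-\phi^\star)(\hat\phi+\phi^\star)$, so that
\[
\widehat V_{\rm eff}-V_{\rm eff}=\mathbb{P}_n[\hat\phi^2-\phi^{\star2}]+(\mathbb{P}_n-\mathbb{P}^\star)\phi^{\star2}.
\]
The second term is handled by Bernstein's inequality, using the sup bound above. This gives relative error $\lesssim \|\phi^\star\|_\infty\sqrt{\log\bar d/(nV_{\rm eff})}$, which is at most the target rate.

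For the first term, decompose
\[
\hat\phi-\phi^\star=\{s_\eta(Y,\hat\eta)-s_\eta(Y,\eta^\star)\}\langle\hat H,X\rangle+s_\eta(Y,\eta^\star)\langle\hat H-H^\star,X\rangle .
\]
The score difference is at most $\tfrac14|\hat\eta-\eta^\star|\le \tfrac12 C_1\varepsilon_n$ by \eqref{eq:init-entrywise}. The term $\hat H-H^\star$ is controlled in a relative $L^2(\mathbb{P})$ sense, $\|\langle\hat H-H^\star,X\rangle\|_{L^2}\lesssim\varepsilon_n\|\langle H^\star,X\rangle\|_{L^2}$, through the perturbation identity
\[
\hat H-H^\star=\hat A^{-1}(\widehat P_{\mathbb T}\Gamma-P_{\mathbb T}\Gamma)+\hat A^{-1}(A-\hat A)A^{-1}P_{\mathbb T}\Gamma,
\]
together with the operator-norm bounds on $\widehat P_{\mathbb T}-P_{\mathbb T}$ from \eqref{eq:init-subspace-op} and on $\widehat G-G$ from Appendix~\ref{app:pd-hatG}. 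These bounds are the same ones that underlie Theorem~\ref{thm:combined}. Cauchy–Schwarz then yields $|\mathbb{P}_n[\hat\phi^2-\phi^{\star2}]|\lesssim\varepsilon_n V_{\rm eff}$, giving \eqref{eq:Veff-relative}.

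\textbf{Main obstacle.} I expect the main difficulty to be making the lower bound $V_{\rm eff}\gtrsim\bar d\|\Gamma\|_F^2$ rigorous under non-uniform sampling. It requires $\|G\|_{\rm op}\lesssim 1/d^\star$, and that is an implicit design-regularity condition I would have to extract from the normalization underlying $C_A$.
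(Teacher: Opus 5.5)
\textbf{Overall.} The CLT part of your proposal is correct and follows the paper's route: negligibility of $\sqrt n R_n/\sqrt{V_{\rm eff}}$, plus a Lyapunov CLT for the oracle average. Your lower bound $V_{\rm eff}\ge\|P_{\mathbb T}\Gamma\|_F^2/\|G\|_{\rm op}\gtrsim\alpha_\Gamma^2\bar d\|\Gamma\|_F^2$, combined with $\|\Gamma\|_1\le\sqrt M\|\Gamma\|_F$, justifies the scale $\sqrt{V_{\rm eff}}\asymp\sqrt{\bar d}\|\Gamma\|_1$ more explicitly than the paper does. The obstacle you flag is also not one here: the theorem assumes the uniform comparison design, and Lemma~\ref{app:pd-Frob-reduction} with $I\le 1/4$ gives $\|G\|_{\rm op}\le 1/(2D(d_1-1))$ directly. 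Your Lyapunov bound goes through $\|H^\star\|_\infty\le\|A^{-1}\|_{\infty\to\infty}\|P_{\mathbb T}\Gamma\|_\infty$ and therefore carries a factor $C_A$. The paper instead uses $H^\star\in\mathbb T$ to get $|\langle H^\star,x\rangle|\le\|H^\star\|_F\|P_{\mathbb T}x\|_F\lesssim\sqrt{\bar d V_{\rm eff}}$, which is $C_A$-free. Both suffice under \eqref{eq:clt-condition}.

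\textbf{The gap: relative direction error.} The variance-estimator step does not go through as written. You claim the operator-norm bound \eqref{eq:init-subspace-op} yields relative direction error $O(\varepsilon_n)$, and this fails. Since $\hat A^{-1}$ inflates Frobenius norms by at most $\asymp d^\star$ and $\|H^\star\|_F\asymp d^\star\|P_{\mathbb T}\Gamma\|_F$, the first piece of your identity has relative size $\|(\widehat P_{\mathbb T}-P_{\mathbb T})\Gamma\|_F/\|P_{\mathbb T}\Gamma\|_F$. The operator norm only gives a numerator of order $\varepsilon_n\|\Gamma\|_F$. For the sparse targets of interest, however, the denominator sits at the alignment floor: for $\Gamma=E_\omega$, $\|P_{\mathbb T}E_\omega\|_F^2=(P_{\mathbb T})_{\omega\omega}\lesssim\bar d/d^\star$. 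What you can conclude is therefore only $\varepsilon_n\sqrt{d^\star/\bar d}$, e.g.\ $\sqrt{d^2\log^c d/n}$ for a $d\times d$ matrix. This does not vanish unless $n\gg d^\star$. The missing ingredient is the sparse-target bound $\|(\widehat P_{\mathbb T}-P_{\mathbb T})E_\omega\|_F\lesssim\varepsilon_n\sqrt{\bar d/d^\star}$, which comes from the row-wise bound \eqref{eq:init-subspace-row} (cf.\ \eqref{app:pd-hatP-minus-P-Eomega-F}). Combine it with $\ell_1$ extraction and Assumption~\ref{ass:alignment} to get relative error $O(\sqrt M\varepsilon_n/\alpha_\Gamma)$. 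This is exactly where the paper's proof uses the alignment condition. The second resolvent piece is fine with operator norms.

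\textbf{Two smaller issues.} First, your Bernstein step for $(\mathbb P_n-\mathbb P^\star)\phi^{\star2}$ uses $\|\phi^\star\|_\infty\lesssim C_A\bar d\|\Gamma\|_1$. That yields relative error $C_A\sqrt{\bar d\log\bar d/n}$, not the target rate, and $C_A$ may be of order $\sqrt{\bar d}$. The paper intends \eqref{eq:Veff-relative} to depend only on $\varepsilon_n$ and $\alpha_\Gamma$. Second, you apply Cauchy--Schwarz under $\mathbb P_n$, but your direction error is controlled only in $L^2(\mathbb P^\star)$, so an extra concentration step is missing. The paper sidesteps both problems by splitting $\widehat V_{\rm eff}-V_{\rm eff}=(\mathbb P_n-\mathbb P^\star)\hat\phi^2+\mathbb P^\star[\hat\phi^2-\phi^{\star2}]$. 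Conditional on the out-of-fold data, $\hat\phi$ is fixed and $\hat H\in\widehat{\mathbb T}$. The tangent-space sup bound $|\langle\hat H,x\rangle|\le\|\hat H\|_F\|\widehat P_{\mathbb T}x\|_F$ then makes Bernstein $C_A$-free. The perturbation term is handled at the population level through the Frobenius reduction (Lemma~\ref{lem:var-direction}).
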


The CLT proof proceeds in two steps: the remainder bound~\eqref{eq:combined_bound} gives $\sqrt{n}\,R_n/\sqrt{V_{\rm eff}}\to 0$ under~\eqref{eq:clt-condition}, and the oracle influence-function average satisfies a conditional Lyapunov CLT (Appendix~\ref{app:clt-upgrade}). The variance estimation bound~\eqref{eq:Veff-relative} is proved in Appendix~\ref{app:pd-variance}; the key observation is that the common $d^\star$ spectral scale of $A^{-1}$ and $\widehat A^{-1}$ cancels in the relative ratio, leaving only the geometric perturbation scale~$\varepsilon_n$ and the alignment factor~$\alpha_\Gamma$.

\medskip
\noindent\textbf{The role of $C_A$.}\;
The term $C_A$ captures the entrywise behavior of the restricted inverse information operator
$
A := P_{\mathbb{T}} G P_{\mathbb{T}},
$
where $\langle H_1, G(H_2)\rangle := \mathbb{P}^\star[I(\eta^\star)\langle H_1,X\rangle\langle H_2,X\rangle]$ is the Fisher information operator. The oracle and estimated EIF directions are
\begin{equation}\label{eq:directions}
H^\star := A^{-1}P_{\mathbb{T}}\Gamma,
\qquad
\hat H := \hat A^{-1}\widehat{P}_{\mathbb{T}} \Gamma.
\end{equation}
Although the spectral norm of $A^{-1}$ is always well-behaved ($\|A^{-1}\|_{\rm op}\lesssim d^\star$), the one-step analysis requires the stronger entrywise control
\begin{equation}\label{eq:CA_def}
\|A^{-1}\|_{\infty\to\infty}
\;\vee\;
\|\hat A^{-1}\|_{\infty\to\infty}
\;\le\;
C_A\, d^\star.
\end{equation}
The next proposition characterizes the range of $C_A$ and identifies when it is dimension-free.

\begin{proposition}[Range of $C_A$]\label{prop:CA-range-main}
Under the bounded signal condition \eqref{eq:Tstar-bounded} and $\mu$-incoherence:
\begin{enumerate}
\item[\emph{(i)}] \textbf{(General bounds.)} The term $C_A$ satisfies
\begin{equation}\label{eq:CA_range}
C(\mu,r,m)
\;\le\;
C_A
\;\le\;
C(\mu,r,m)\,\sqrt{\bar d}.
\end{equation}
In the constant-weight benchmark $I(\eta^\star)\equiv 1/4$, $A_0^{-1}=2d^\star P_{\mathbb{T}}$ and $C_A=\mathrm{poly}(\mu,r,m)$.
\item[\emph{(ii)}] \textbf{(Near-constant regime.)} If $\|T^\star\|_\infty\le B_0$ for a constant $B_0=B_0(\mu,m,r)$ depending only on the structural parameters, then $C_A\le C(\mu,r,m,B_0)$: a dimension-free constant. In this regime, the CLT condition \eqref{eq:clt-condition} reduces to $\bar d\log^{2c}\bar d/n\to 0$, and the one-step estimator achieves the semiparametric efficiency bound at the optimal sample-complexity scale $n\gg\bar d\,\mathrm{polylog}(\bar d)$.
\end{enumerate}
\end{proposition}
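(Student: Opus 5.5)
The plan has four steps: compute the constant-weight benchmark exactly, get the lower bound in (i) by testing on an incoherent unit tensor, get the upper bound in (i) by passing from operator norm to $\infty\to\infty$ norm with incoherence, and get (ii) from a Neumann series around the benchmark. Throughout, $\mathbb{T}$ is viewed as a subspace of the zero-sum (identifiable) subspace $\mathcal{Z}:=\{\Delta:\mathcal{C}(\Delta)=0\}$, and $G$ is compared with its constant-weight version $G_0$, obtained by replacing $I(\eta^\star)$ with $1/4$.

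\textbf{Step 1: the benchmark.} Extending Example~\ref{ex:uniform-IG} to $D$ contexts shows that $G_0$ restricted to $\mathcal{Z}$ equals $\frac{1}{2(d_1-1)D}\,\mathrm{Id}$. Since $\mathbb{T}\subset\mathcal{Z}$, this gives
\[
A_0=P_{\mathbb{T}}G_0P_{\mathbb{T}}\asymp (d^\star)^{-1}P_{\mathbb{T}}, \qquad A_0^{-1}\asymp d^\star P_{\mathbb{T}},
\]
which matches the stated $A_0^{-1}=2d^\star P_{\mathbb{T}}$ up to the paper's normalisation of $d_1-1$ versus $d_1$.

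Next I bound $\|P_{\mathbb{T}}\|_{\infty\to\infty}$ by $\mathrm{poly}(\mu,r,m)$.
\begin{itemize}
\item $P_{\mathbb{T}}$ is a finite sum of terms. Each term is a multilinear product of the projectors $P_{U_j}$ and $P_{U_j}^\perp$, composed with the centering projector along mode $1$.
\item Each $P_{U_j}$ has $\ell_\infty\to\ell_\infty$ norm at most $\max_a\sqrt{d_j}\,\|e_a^\top U_j\|_2\cdot\sqrt{\mu r_j/d_j}\cdot\sqrt{d_j}$. A more careful row-sum bound gives $\le\sqrt{d_j}\,\sqrt{\mu r_j}$ in general, but the relevant combination is controlled by \eqref{eq:init-projection-stability}-type bounds.
\item To avoid this loss, I bound $\|P_{\mathbb{T}}\|_{\infty\to\infty}$ through the stability property of the multilinear projector, as in \eqref{eq:init-projection-stability}, applied at the truth.
\end{itemize}
This yields $C_A=\mathrm{poly}(\mu,r,m)$ in the benchmark.

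\textbf{Step 2: the lower bound in (i).} I test $A^{-1}$ on an input $P_{\mathbb{T}}\mathcal{A}$ with $\|\mathcal{A}\|_\infty\le 1$ and $\|P_{\mathbb{T}}\mathcal{A}\|_\infty\gtrsim 1$; such an $\mathcal{A}$ exists by incoherence, for example a rank-one incoherent element of $\mathbb{T}$ normalised entrywise. Since $\|A\|_{\mathrm{op}}\lesssim C_B/d^\star$, the output has Frobenius norm $\gtrsim d^\star\|\cdot\|_F$. Incoherence forces this output to be spread out, so its sup norm is also $\gtrsim d^\star$. This gives $C_A\ge C(\mu,r,m)$.

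\textbf{Step 3: the upper bound in (i).} I control the $\infty\to\infty$ norm through the operator norm. For $H=A^{-1}P_{\mathbb{T}}\mathcal{A}$:
\begin{itemize}
\item Assumption~\ref{ass:lb-invertible}, with $c_{\mathbb{T}}\asymp c_B/d^\star$ obtained by comparing $G\succeq 4c_B G_0$ on $\mathcal{Z}$, gives $\|H\|_F\le (d^\star/c_B)\|P_{\mathbb{T}}\mathcal{A}\|_F$.
\item Elements of $\mathbb{T}$ satisfy the incoherence-based sup-norm bound $\|H\|_\infty\le \sqrt{\mu^m r^\star\,\bar d/d^\star}\,\|H\|_F$. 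This holds because each summand $\mathcal{C}\times_j W_j\times_{k\neq j}U_k$ has all but one mode incoherent, so only a $\sqrt{d_j/d^\star}$ factor is lost.
\item $\|P_{\mathbb{T}}\mathcal{A}\|_F$ is controlled by a Cauchy--Schwarz bound against the Frobenius scale.
\end{itemize}
Combining the factors gives a $\sqrt{\bar d}$ loss, so $C_A\lesssim\sqrt{\bar d}$. The identical argument applies to $\hat A^{-1}$, using \eqref{eq:init-estimated-incoherence} and the concentration bound on $\widehat G$ from Appendix~\ref{app:pd-hatG}.

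\textbf{Step 4: part (ii), the main obstacle.} I write
\[
A=A_0+P_{\mathbb{T}}(G-G_0)P_{\mathbb{T}}, \qquad G-G_0\ \text{has weights}\ I(\eta^\star)-\tfrac14=O(B_0^2).
\]
In rescaled form, $d^\star A=\tfrac12 P_{\mathbb{T}}(\mathrm{Id}+E)$ with $E:=2d^\star P_{\mathbb{T}}(G-G_0)P_{\mathbb{T}}$. A Neumann series in $\infty\to\infty$ norm then needs $\|E\|_{\infty\to\infty}<1$.

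The difficulty is that $G-G_0$ is a local operator that acts within each context across the $d_1$ models. Its $\infty\to\infty$ norm on $\mathcal{Z}$ is $O(B_0^2/(d_1D))$, since each row sums $d_1$ terms of size $B_0^2/(d_1^2 D)$. Composing with $P_{\mathbb{T}}$ on both sides costs at most $\|P_{\mathbb{T}}\|_{\infty\to\infty}^2=\mathrm{poly}(\mu,r,m)$ by Step~1. Hence $\|E\|_{\infty\to\infty}\le C(\mu,r,m)B_0^2$. Choosing $B_0$ small enough that this is at most $1/2$, the Neumann series gives $\|A^{-1}\|_{\infty\to\infty}\le 4d^\star\|P_{\mathbb{T}}\|_{\infty\to\infty}$. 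The same bound holds for $\hat A$, because $\|\hat\eta-\eta^\star\|_\infty\to0$ and the plug-in error adds only $O(\varepsilon_n)$ to the perturbation.

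The step I expect to be delicate is keeping $\|P_{\mathbb{T}}\|_{\infty\to\infty}$ dimension-free. If that fails, I would instead run the Neumann argument in a weighted norm adapted to the incoherent structure, namely the $2\to\infty$ norm along each mode. The CLT reduction stated in (ii) then follows by substituting $C_A=O(1)$ into \eqref{eq:clt-condition}.
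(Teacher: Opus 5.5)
Your proposal is correct and follows the paper's proof essentially step for step. It uses the isotropic benchmark $A_0^{-1}\asymp d^\star P_{\mathbb{T}}$, the upper bound from $\|A^{-1}\|_{\mathrm{op}}\lesssim d^\star$ combined with the tangent-space estimate $\|W\|_\infty\lesssim\sqrt{\bar d/d^\star}\,\|W\|_F$, and for (ii) a Neumann series around $A_0$ in $\|\cdot\|_{\infty\to\infty}$ with $\|G-G_0\|_{\infty\to\infty}=O(B_0^2/d^\star)$. Two sub-steps need repair, though both conclusions are true.

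First, the step you flag as delicate is elementary. It needs neither a fallback norm nor \eqref{eq:init-projection-stability}, which is an assumption on the estimated factors and covers only the fully projected block, not the $P_{U_j}^\perp$ blocks. Row $a$ of $P_{U_j}$ has $\ell_1$ norm
$\sum_b|\langle U_j^\top e_a,U_j^\top e_b\rangle|\le\|U_j^\top e_a\|_2\sum_b\|U_j^\top e_b\|_2\le\sqrt{\mu r_j/d_j}\cdot\sqrt{d_jr_j}=\sqrt{\mu}\,r_j$.
Your first bullet already gives the cruder dimension-free bound $\mu r_j$; the ``more careful'' $\sqrt{d_j\mu r_j}$ is weaker, not sharper. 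Hence $\|P_{U_j}^\perp\|_{\infty\to\infty}\le 1+\sqrt{\mu}\,r_j$, and the centering projector has $\infty\to\infty$ norm below $2$. Absolute row sums multiply across Kronecker factors, so summing the $m+1$ blocks gives $\|P_{\mathbb{T}}\|_{\infty\to\infty}=\mathrm{poly}(\mu,r,m)$.

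Second, in your lower bound the output's sup norm is not controlled by any ``spreading'' argument, since incoherence bounds sup norms from above. Use instead the trivial inequality $\|H\|_\infty\ge\|H\|_F/\sqrt{d^\star}$. Your entrywise-normalized incoherent rank-one input $\mathcal{A}\in\mathbb{T}$ has $\|\mathcal{A}\|_F\ge\sqrt{d^\star/(\mu^m r^\star)}$, so this gives $\|A^{-1}\mathcal{A}\|_\infty\gtrsim d^\star/\sqrt{\mu^m r^\star}$. The paper gets the lower bound in one line from $\|A^{-1}P_{\mathbb{T}}\|_{\infty\to\infty}\ge\|A^{-1}\|_{\mathrm{op}}$. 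That inequality holds here because the operator is symmetric, so its spectral radius is dominated by any induced norm, not ``for any linear operator'' as the paper states. Your test-tensor version has the small advantage of not needing symmetry.
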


\noindent
The proof is given in Appendix~\ref{app:CA-range}. 

\noindent\textbf{Key difficulty: non-commutativity of $G$ and $P_{\mathbb{T}}$.}\;
The gap between the $O(1)$ lower bound and the $O(\sqrt{\bar d})$ upper bound in \eqref{eq:CA_range} reflects the fundamental technical challenge of this problem: the non-commutativity of the information operator $G$ and the tangent-space projector $P_{\mathbb{T}}$. In isotropic settings (e.g., additive Gaussian noise), $G$ is proportional to the identity on $\mathbb{T}$, so $A^{-1}$ behaves like $P_{\mathbb{T}}$ and $C_A$ is automatically dimension-free. In pairwise comparisons, however, the Fisher information $I(\eta^\star)$ varies across matchups, making $G$ anisotropic. Because $G$ and $P_{\mathbb{T}}$ do not commute in general, the inverse $A^{-1}=(P_{\mathbb{T}}GP_{\mathbb{T}})^{-1}$ can have substantially worse entrywise behavior than its spectral norm suggests.

This difficulty is unique to our setting. The existing debiasing literature for low-rank matrix and tensor estimation, including work on matrix completion, tensor completion, and nuclear-norm penalized inference, assumes homoscedastic or isotropic designs under which $G\propto I$ and the operator $A$ does not arise. The quantity $C_A$ and its interplay with non-commutative operator inversion are, to our knowledge, entirely new to this line of work. 

\subsection{Proof outline and technical challenges}\label{sec:decomposition_roadmap}

We briefly explain the structure of the proof of Theorem~\ref{thm:combined} and highlight where the main technical difficulty enters. Let
\[
\Delta := \hat T - T^\star,
\qquad
\hat\eta_i := \langle \hat T, X_i\rangle,
\qquad
\eta_i^\star := \langle T^\star, X_i\rangle.
\]
Recall that the oracle efficient influence function is
\[
\phi^\star(X,Y)=s_\eta(Y,\eta^\star)\langle H^\star,X\rangle,
\qquad
H^\star=(P_{\mathbb{T}}GP_{\mathbb{T}})^{-1}P_{\mathbb{T}}\Gamma.
\]
Starting from the one-step estimator \eqref{eq:onestep}, we add and subtract the oracle correction term and obtain the decomposition
\begin{equation}\label{eq:decomp_three}
\hat\psi-\psi(T^\star)
=
\underbrace{(\mathbb{P}_n-\mathbb{P}^\star)\phi^\star}_{\text{leading empirical term}}
\;+\;
R_{\mathrm{emp}}
\;+\;
R_{\mathrm{bias}},
\end{equation}
where the empirical remainder $R_{\mathrm{emp}}$ and the bias remainder $R_{\mathrm{bias}}$ are
\[
R_{\mathrm{emp}}
:=
(\mathbb{P}_n-\mathbb{P}^\star)\bigl[S_{\hat T}(\hat H)-S_{T^\star}(H^\star)\bigr],
\quad
R_{\mathrm{bias}}
:=
\langle \Gamma,\Delta\rangle
+
\mathbb{P}^\star\bigl[S_{\hat T}(\hat H)-S_{T^\star}(H^\star)\bigr],
\]
and
$
S_T(H)(X,Y):=s_\eta\!\bigl(Y,\langle T,X\rangle\bigr)\langle H,X\rangle.
$

The first term in \eqref{eq:decomp_three} is the leading stochastic term. It is an average of i.i.d.\ centered random variables and gives the asymptotically normal limit. The task is therefore to show that the empirical remainder $R_{\mathrm{emp}}$ and the bias remainder $R_{\mathrm{bias}}$ are of smaller order.

To do so, we further split the empirical remainder by adding and subtracting $S_{\hat T}(H^\star)$:
\begin{align}\label{eq:Remp_split}
R_{\mathrm{emp}}
&=
\underbrace{(\mathbb{P}_n-\mathbb{P}^\star)\bigl[s_\eta(Y,\hat\eta)\,\langle \hat H-H^\star,X\rangle\bigr]}_{R_{\mathrm{emp}}^{H}\ (\text{direction error})}
\notag\\
&\quad+\;
\underbrace{(\mathbb{P}_n-\mathbb{P}^\star)\bigl[(s_\eta(Y,\hat\eta)-s_\eta(Y,\eta^\star))\,\langle H^\star,X\rangle\bigr]}_{R_{\mathrm{emp}}^{\eta}\ (\text{score perturbation})},
\end{align}
where $\hat\eta:=\langle \hat T,X\rangle$ and $\eta^\star:=\langle T^\star,X\rangle$. Here $R_{\mathrm{emp}}^{H}$ measures the fluctuation caused by replacing the oracle EIF direction $H^\star$ with its estimate $\hat H$, while $R_{\mathrm{emp}}^{\eta}$ captures the additional error from evaluating the score at $\hat\eta$ instead of $\eta^\star$.

Similarly, we split the bias term by decomposing $\Gamma=P_{\mathbb{T}}\Gamma+(I-P_{\mathbb{T}})\Gamma$ and again adding and subtracting $S_{\hat T}(H^\star)$:
\begin{align}\label{eq:Rbias_split}
R_{\mathrm{bias}}
&=
\underbrace{\langle (I-P_{\mathbb{T}})\Gamma,\Delta\rangle}_{R_{\mathrm{proj}}}
\;+\;
\underbrace{\mathbb{P}^\star\bigl[S_{\hat T}(\hat H)-S_{\hat T}(H^\star)\bigr]}_{R_{\mathrm{bias}}^{H}}
\notag\\
&\quad+\;
\underbrace{\langle P_{\mathbb{T}}\Gamma,\Delta\rangle+\mathbb{P}^\star\bigl[S_{\hat T}(H^\star)-S_{T^\star}(H^\star)\bigr]}_{R_{\mathrm{cancel}}}.
\end{align}
The first term $R_{\mathrm{proj}}$ is a projection leakage term. The second term $R_{\mathrm{bias}}^{H}$ is the population-level error due to estimating the EIF direction. The final term $R_{\mathrm{cancel}}$ contains the key first-order cancellation of the one-step estimator. The dominant difficulty is the control of $R_{\mathrm{emp}}^{H}$ and $R_{\mathrm{bias}}^{H}$, both of which depend on $\hat H - H^\star$ and are thus governed by $C_A$ (see the discussion in Section~\ref{sec:combined_bound}). The remaining terms are controlled by standard arguments: $R_{\mathrm{emp}}^{\eta}$ by smoothness of the score and entrywise accuracy of $\hat T$; $R_{\mathrm{proj}}$ by subspace estimation error; and in $R_{\mathrm{cancel}}$, the tangent-space plug-in bias cancels with the population correction, leaving only higher-order contributions. See Table~\ref{tab:proof_outline} for a summary.

\begin{table}[h!]
\begin{center}
\small
\begin{tabular}{llll}
\hline
\textbf{Term} & \textbf{Description} & \textbf{Simplified bound} & \textbf{Proof}\\
\hline
$R_{\mathrm{emp}}^{H}$ & direction-error emp.\ process & $C_A\,\|\Gamma\|_1\,\bar d\log^c\bar d/n$ & App.~\ref{app:pd-RempH}\\
$R_{\mathrm{emp}}^{\eta}$ & score-perturbation emp.\ process & $C_A\,\|\Gamma\|_1\,\bar d\log^c\bar d/n$ & App.~\ref{app:pd-Remp-eta}\\
$R_{\mathrm{proj}}$ & projection leakage & $\|\Gamma\|_1\,\bar d\log^c\bar d/n$ & App.~\ref{app:pd-Rproj}\\
$R_{\mathrm{bias}}^{H}$ & bias from $\hat H\neq H^\star$ & $C_A\,\|\Gamma\|_1\,\bar d/n$ & App.~\ref{app:pd-H-bias}\\
$R_{\mathrm{cancel}}$ (1st order) & $\ip{P_{\mathbb{T}}\Gammavec}{\Delta}$ cancellation & $C_A\,\|\Gamma\|_1\,\bar d\log^c\bar d/n$ & App.~\ref{app:pd-1st-cancel}\\
$R_{\mathrm{cancel}}$ (2nd order) & $O(\infnorm{\Delta}^2)$ remainder & $C_A\,\|\Gamma\|_1\,\bar d/n$ & App.~\ref{app:pd-2nd-order}\\
\textit{CLT remainder} & \textit{$\sqrt{n}\,R_n$ vs.\ $\sqrt{V_{\rm eff}}$} & \textit{$C_A\sqrt{\bar d\log^{2c}\bar d/n}$} & App.~\ref{app:clt-upgrade}\\
\hline
\end{tabular}
\caption{Summary of remainder terms and their simplified bounds after substituting the pairwise-comparison scalings $\sigma=O(1)$, $\lambda_{\min}\asymp\sqrt{d^\star}$, and $\delta=\bar d^{-c}$. All bounds hold with probability $\ge 1-\bar d^{-c}$.}
\label{tab:proof_outline}
\end{center}
\end{table}


\section{Score whitening and extensions}\label{sec:score-whitening}

Section~\ref{sec:upper-bound} showed that the main obstacle to achieving semiparametric efficiency at the optimal sample-complexity scale is the non-commutativity of the information operator $G$ and the tangent-space projector $P_{\mathbb{T}}$. This obstruction is reflected in the quantity $C_A$, which can become dimension dependent when the local Fisher information varies substantially across matchups. In Section \ref{sec:ws-estimator}, we introduce a simple remedy: \emph{score whitening}. The basic idea is to rescale each comparison by its local Fisher information so that all matchups contribute on a common information scale. This transformation turns the effective information operator into an isotropic one, removes the \(C_A\)-driven bottleneck, and yields a simpler one-step estimator with stable asymptotic behavior. We then show how the same idea extends to non-uniform sampling through inverse-probability weighting in Section \ref{sec:unknown-sampling} and to nonlinear targets through local linearization in Section \ref{sec:nonlinear-functional}.


\subsection{The score-whitened one-step estimator}\label{sec:ws-estimator}

In the BTL model, the information carried by a single comparison depends strongly on the underlying ability gap. If two models are closely matched, the comparison is highly informative; if one model dominates the other, the outcome is nearly deterministic and carries little information. As mentioned in Example \ref{ex:uniform-IG}, for a comparison with linear index $\eta$, the Fisher information is
\[
I(\eta)=\sigma(\eta)\{1-\sigma(\eta)\},
\qquad
\sigma(\eta)=(1+e^{-\eta})^{-1}.
\]
Thus the information operator is intrinsically heterogeneous. This heterogeneity is what makes \(G\) anisotropic and creates the non-commutativity problem identified in Section~\ref{sec:upper-bound}.

The key observation is that this heterogeneity can be removed by re-scaling the score. Given the score function $s_\eta(y,\eta)=\partial_\eta \ell(y,\eta)$, define the \emph{whitened score}
\begin{equation}\label{eq:whitened-score-def}
\tilde s_\eta(y,\eta)
:=
\frac{s_\eta(y,\eta)}{I(\eta)}.
\end{equation}
This whitened score remains centered:
$
\mathbb{E}^\star[\tilde s_\eta(Y,\eta^\star)\mid X]=0.
$
Second, its conditional derivative becomes constant:
$
\mathbb{E}^\star\!\left[\partial_\eta \tilde s_\eta(Y,\eta^\star)\mid X\right]=-1.
$
This second identity is crucial and means that, after whitening, the Fisher-information weight no longer varies across comparisons.

As a consequence, under a \emph{uniform pairwise-comparison design}, the effective information operator becomes isotropic. By uniform design, we mean that the comparison tensor $X$ is sampled uniformly over all admissible context--pair combinations; for example, in the model--topic setting, this means sampling the task category uniformly and the unordered model pair uniformly from all $\binom{d}{2}$ possible pairs. Under this design, every valid comparison receives the same sampling probability. More precisely, on the column-sum-zero subspace,
\begin{equation}\label{eq:G0-whitened}
G_0(H)
=
\mathbb{E}^\star[\langle H,X\rangle X]
=
\frac{1}{d^\star}H,
\end{equation}
where \(d^\star\) is the effective comparison dimension. Therefore
\[
A_0=P_{\mathbb{T}}G_0P_{\mathbb{T}}=\frac{1}{d^\star}P_{\mathbb{T}},
\qquad
A_0^{-1}=d^\star P_{\mathbb{T}}.
\]
This is exactly the isotropic structure that was missing in the general upper-bound analysis shown in Theorem \ref{thm:combined}. In particular, the inverse restricted information operator is now explicit, and the troublesome quantity \(C_A\) disappears.

Under the whitened information structure, the efficient direction simplifies dramatically. The oracle and plug-in directions become
\begin{equation}\label{eq:ws-directions}
H^\star_{\rm ws}=d^\star P_{\mathbb{T}}\Gamma,
\qquad
\hat H_{\rm ws}=d^\star \widehat{P}_{\mathbb{T}}\Gamma.
\end{equation}
No operator inversion is required; the direction is obtained simply by projecting the target gradient onto the estimated tangent space and multiplying by \(d^\star\). This is the main computational and theoretical advantage of whitening.

The corresponding whitened one-step estimator is
\begin{equation}\label{eq:ws-estimator}
\hat\psi_{\rm ws}
=
\langle \Gamma,\hat T\rangle
+
\mathbb{P}_n\!\left[\tilde s_\eta(Y,\hat\eta)\,\langle \hat H_{\rm ws},X\rangle\right],
\qquad
\hat\eta=\langle \hat T,X\rangle.
\end{equation}
Like the general one-step estimator, this is a plug-in estimator plus a first-order correction. The difference is that the correction now uses the whitened score and the simplified direction \eqref{eq:ws-directions}. See Algorithm \ref{alg:ws-onestep} for details.

\begin{algorithm}[t]
\caption{Score-whitened one-step estimator for a linear functional $\psi(T^\star)=\langle \Gamma,T^\star\rangle$}
\label{alg:ws-onestep}
\begin{enumerate}
\item Randomly partition the data $\{(X_i,Y_i)\}_{i=1}^N$ into $K$ folds $\mathcal{D}_1,\ldots,\mathcal{D}_K$ of approximately equal size.
\item For each fold $k=1,\ldots,K$:
\begin{enumerate}
\item Using the out-of-fold data $\mathcal{D}_{-k}:=\bigcup_{j\neq k}\mathcal{D}_j$, compute a low-rank estimator $\hat T^{(k)}$ of $T^\star$ subject to the identification constraint, and construct the associated tangent-space projector $\widehat{P}_{\mathbb{T}}^{(k)}$.
\item Compute the whitened EIF direction
\[
\hat H_{\rm ws}^{(k)}=d^\star\, \widehat{P}_{\mathbb{T}}^{(k)}\Gamma.
\]
\item Using fold $k$ as the evaluation sample, form the fold-level whitened one-step estimator
\[
\hat\psi_{\rm ws}^{(k)}
=
\psi(\hat T^{(k)})
+
\frac{K}{N}\sum_{i\in\mathcal{D}_k}
\frac{s_\eta(Y_i,\hat\eta_i^{(k)})}{I(\hat\eta_i^{(k)})}\,
\langle \hat H_{\rm ws}^{(k)},X_i\rangle,
\]
where $\hat\eta_i^{(k)}:=\langle \hat T^{(k)},X_i\rangle$.
\end{enumerate}
\item Output the cross-fitted estimator $\hat\psi_{\rm ws}=K^{-1}\sum_{k=1}^K \hat\psi_{\rm ws}^{(k)}$ and the variance estimate $\widehat V_{\rm ws} = N^{-1}\sum_{i=1}^N \hat{\tilde\phi}_i^2$, where $\hat{\tilde\phi}_i = \{s_\eta(Y_i,\hat\eta_i^{(k_i)})/I(\hat\eta_i^{(k_i)})\}\,\langle \hat H_{\rm ws}^{(k_i)},X_i\rangle$ and $k_i$ denotes the fold containing observation $i$.
\end{enumerate}
\end{algorithm}

The next two theorems give the error bound and CLT for the score-whitened estimator, paralleling Theorems~\ref{thm:combined} and~\ref{thm:eff-clt} for the general case.

\begin{theorem}[Error bound for the score-whitened estimator]\label{thm:ws-bound}
Under the same conditions as Theorem~\ref{thm:combined} and a uniform sampling design, with probability at least $1-\bar d^{-c}$,
\begin{equation}\label{eq:ws-expansion}
\hat\psi_{\rm ws}-\psi(T^\star)
=
(\mathbb{P}_n-\mathbb{P}^\star)\tilde\phi_0^\star
+
R_n,
\end{equation}
where $\tilde\phi_0^\star(X,Y) := \tilde s_\eta(Y,\eta^\star)\,\langle H^\star_{\rm ws},X\rangle$ is the whitened influence function and the remainder satisfies
\begin{equation}\label{eq:ws-remainder}
|R_n|
\;\le\;
C(\mu,\kappa,r,m)\,\|\Gamma\|_1\,
\frac{\bar d\log^c \bar d}{n}.
\end{equation}
\end{theorem}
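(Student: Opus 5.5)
We follow the same template as the proof of Theorem~\ref{thm:combined}, working fold by fold (the cross-fitted average inherits the bound since $K$ is fixed). We replace $S_T(H)$ by the whitened map $\tilde S_T(H)(X,Y):=\tilde s_\eta(Y,\langle T,X\rangle)\langle H,X\rangle$. Adding and subtracting $\tilde\phi_0^\star=\tilde S_{T^\star}(H^\star_{\rm ws})$ gives
\[
\hat\psi_{\rm ws}-\psi(T^\star)=(\mathbb{P}_n-\mathbb{P}^\star)\tilde\phi_0^\star+\tilde R_{\rm emp}+\tilde R_{\rm bias},
\]
where $\tilde R_{\rm emp}=(\mathbb{P}_n-\mathbb{P}^\star)[\tilde S_{\hat T}(\hat H_{\rm ws})-\tilde S_{T^\star}(H^\star_{\rm ws})]$ and $\tilde R_{\rm bias}=\langle\Gamma,\Delta\rangle+\mathbb{P}^\star[\tilde S_{\hat T}(\hat H_{\rm ws})]$. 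Here we used $\mathbb{P}^\star\tilde\phi_0^\star=0$, since the whitened score is conditionally centered. We then split exactly as in \eqref{eq:Remp_split}--\eqref{eq:Rbias_split}, into a direction error, a score perturbation, projection leakage, direction bias, and the cancellation term. The point is that every place where the general proof used $\|A^{-1}\|_{\infty\to\infty}\le C_A d^\star$ is now handled by the explicit formula $H^\star_{\rm ws}=d^\star P_{\mathbb{T}}\Gamma$, $\hat H_{\rm ws}=d^\star\widehat P_{\mathbb{T}}\Gamma$. Consequently $C_A$ is replaced by a constant.

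\textbf{Entrywise control of the directions.} The first step bounds $\|H^\star_{\rm ws}\|_\infty$ and $\|\hat H_{\rm ws}-H^\star_{\rm ws}\|_\infty$. Write $P_{\mathbb{T}}$ as a sum of multilinear projector products in the $P_{U_j}$, $P_{U_j^\perp}$ (with the centering projector $I-\mathbf 1\mathbf 1^\top/d_1$ on mode 1). Then use incoherence \eqref{eq:incoh}, projection stability \eqref{eq:init-projection-stability}, and the entrywise and row-wise subspace bounds \eqref{eq:init-subspace-row}--\eqref{eq:init-subspace-entry}. Together with Assumption~\ref{ass:bounded-support}, these give
\[
\|H^\star_{\rm ws}\|_\infty\lesssim \|\Gamma\|_1\,\bar d^{-1}d^\star\cdot C(\mu,r,m)\quad\text{(up to the incoherence factor)},\qquad
\|\hat H_{\rm ws}-H^\star_{\rm ws}\|_\infty\lesssim\varepsilon_n\|H^\star_{\rm ws}\|_\infty .
\]
These are precisely the bounds the general proof obtained with the extra $C_A$ factor.

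\textbf{Remainder terms.} Since $|\hat\eta-\eta^\star|\le 2\|\Delta\|_\infty\le 2C_1\varepsilon_n$ and $I$ is bounded below by $c_B$ on $[-2B-1,2B+1]$, the whitened score $\tilde s_\eta=(y-\sigma(\eta))/I(\eta)$ is bounded and Lipschitz in $\eta$ there. Conditional on the training folds, the empirical terms are sums of independent bounded variables. The uniform design lets us compute their variances via $G_0=(d^\star)^{-1}\mathrm{Id}$, for example $\mathbb{P}^\star\langle \hat H_{\rm ws}-H^\star_{\rm ws},X\rangle^2=(d^\star)^{-1}\|\hat H_{\rm ws}-H^\star_{\rm ws}\|_F^2$. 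Bernstein's inequality then gives $\tilde R^H_{\rm emp},\tilde R^\eta_{\rm emp}\lesssim\|\Gamma\|_1\bar d\log^c\bar d/n$. Projection leakage $\langle(I-P_{\mathbb{T}})\Gamma,\Delta\rangle$ is unchanged from the general proof, since it involves no inversion.

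\textbf{Cancellation (main obstacle).} A Taylor expansion uses $\mathbb{E}[\partial_\eta\tilde s_\eta\mid X]=-1$, which is exact for the whitened score because $\partial_\eta\{(y-\sigma)/I\}$ has conditional mean $-I/I=-1$ at $\eta^\star$. It yields
\[
\mathbb{P}^\star[\tilde S_{\hat T}(H)]=-\mathbb{P}^\star[\langle\Delta,X\rangle\langle H,X\rangle]+O\!\left(\|\Delta\|_\infty^2\,\mathbb{P}^\star|\langle H,X\rangle|\right)=-\langle G_0\Delta,H\rangle+\dots
\]
With $H=\hat H_{\rm ws}$, the first-order term equals $-\langle \widehat P_{\mathbb{T}}\Gamma,\Delta\rangle$ after noting that $\Delta$ is column-centered, so $G_0\Delta=\Delta/d^\star$. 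Combined with $\langle\Gamma,\Delta\rangle$, this leaves $\langle(I-\widehat P_{\mathbb{T}})\Gamma,\Delta\rangle$. The delicate step is showing this is second order. We would expand $I-\widehat P_{\mathbb{T}}$ in terms of the $P_{\hat U_j^\perp}$. Because $\hat T$ lies in its own estimated tangent space, the leading part is $\langle\Gamma,(I-\widehat P_{\mathbb{T}})T^\star\rangle$, and $(I-\widehat P_{\mathbb{T}})T^\star$ contains at least two factors $P_{\hat U_j^\perp}U_j$, each of size $O(\varepsilon_n)$ entrywise and row-wise by \eqref{eq:init-subspace-row}--\eqref{eq:init-subspace-entry}. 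Combined with $\lambda_{\max}\asymp\sqrt{d^\star}$, incoherence, and $\|\Gamma\|_1$, this should give $O(\|\Gamma\|_1\varepsilon_n^2)=O(\|\Gamma\|_1\bar d\log^c\bar d/n)$. The quadratic Taylor remainder is $\|\Delta\|_\infty^2\,\mathbb{P}^\star|\langle \hat H_{\rm ws},X\rangle|\lesssim\varepsilon_n^2\|\Gamma\|_1 C(\mu,r,m)$, using the entrywise bound on $\hat H_{\rm ws}$ and the fact that $\mathbb{P}^\star|\langle H,X\rangle|\le 2\|H\|_1/d^\star$ under uniform sampling. Collecting all terms on the intersection of the high-probability events gives \eqref{eq:ws-remainder}.
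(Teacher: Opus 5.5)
Your decomposition, the explicit directions $H^\star_{\rm ws}=d^\star P_{\mathbb T}\Gamma$ and $\hat H_{\rm ws}=d^\star\widehat P_{\mathbb T}\Gamma$, and the Bernstein treatment of the two empirical-process terms match the paper's. In the Bernstein step, variances come from $G_0$, i.e.\ $\|\cdot\|_F^2/d^\star$, and tails from $\|\cdot\|_\infty$.

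Where you genuinely differ is the bias. The paper keeps three separate pieces:
\begin{itemize}
\item projection leakage $\langle(I-P_{\mathbb T})\Gamma,\Delta\rangle$, bounded through $(I-P_{\mathbb T})\hat T$;
\item an exact cancellation $\langle P_{\mathbb T}\Gamma,\Delta\rangle-\langle\Delta,G_0H^\star_{\rm ws}\rangle=0$;
\item the first-order direction bias $-\langle\Delta,(\widehat P_{\mathbb T}-P_{\mathbb T})\Gamma\rangle$, split along $\Delta=P_{\mathbb T}\Delta+(I-P_{\mathbb T})\Delta$, with Cauchy--Schwarz on the tangent part.
\end{itemize}
You Taylor-expand $\mathbb{P}^\star[\tilde S_{\hat T}(\hat H_{\rm ws})]$ directly, so the three collapse into $\langle(I-\widehat P_{\mathbb T})\Gamma,\Delta\rangle=-\langle\Gamma,(I-\widehat P_{\mathbb T})T^\star\rangle$. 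Every block of $(I-\widehat P_{\mathbb T})T^\star$ carries at least two factors $(P_{U_j}-P_{\hat U_j})U_j$. Lemma~\ref{app:pd-contract-2inf} with \eqref{eq:init-subspace-row}, $\|\hat U_k\|_{2,\infty}\le\sqrt{2\mu r_k/d_k}$ and $\lambda_{\max}\le\|T^\star\|_F\le B\sqrt{d^\star}$ then gives $O(\|\Gamma\|_1\varepsilon_n^2)$.

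This buys something real. The paper's Cauchy--Schwarz step yields $\sqrt{\bar d}\,\|\Delta\|_\infty\rho\,\|\Gamma\|_1\asymp\|\Gamma\|_1\bar d^{3/2}\log^c\bar d/n$, a factor $\sqrt{\bar d}$ above \eqref{eq:ws-remainder}, and the combined bound in Theorem~\ref{app:ws-combined-thm} then omits that factor. Your merged term attains the stated rate. In effect you use, in the whitened case, the split $\widehat P_{\mathbb T}-P_{\mathbb T}=(I-P_{\mathbb T})-(I-\widehat P_{\mathbb T})$ that the paper itself uses in Proposition~\ref{app:pd-H-bias-prop}. The paper's layout only buys a term-by-term parallel with the general $C_A$ analysis.

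Two of your quantitative statements need fixing.

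First, the entrywise scale is wrong. The correct bound is $\|H^\star_{\rm ws}\|_\infty\le d^\star\|\Gamma\|_1\max_\omega\|P_{\mathbb T}E_\omega\|_\infty\le C(\mu,r,m)\,\bar d\,\|\Gamma\|_1$, not $\|\Gamma\|_1\,d^\star/\bar d$. The reason is that each single-mode block leaves one mode uncompressed, so $\|P_{\mathbb T}E_\omega\|_\infty\lesssim\bar d/d^\star$. Your scale is correct only for square matrices. For $m\ge 3$, plugging it into the Bernstein tail of $\tilde R^\eta_{\rm emp}$ gives order $\varepsilon_n(d^\star/\bar d)\log\bar d/n$, which is too large. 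The relative bound $\|\hat H_{\rm ws}-H^\star_{\rm ws}\|_\infty\lesssim\varepsilon_n\|H^\star_{\rm ws}\|_\infty$ is not needed; the absolute bound $C\bar d\|\Gamma\|_1$ suffices.

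Second, in the quadratic Taylor remainder you need $\mathbb{P}^\star|\langle\hat H_{\rm ws},X\rangle|\le 2\|\widehat P_{\mathbb T}\Gamma\|_1\le C(\mu,r,m)\|\Gamma\|_1$. This comes from the dimension-free $\ell_1$ bound $\|\widehat P_{\mathbb T}E_\omega\|_1\le C(\mu,r,m)$ (Lemma~\ref{app:pd-PT-l1}; equivalently $\|\widehat P_{\mathbb T}\|_{1\to1}=\|\widehat P_{\mathbb T}\|_{\infty\to\infty}$ by symmetry). It does not follow from the entrywise bound on $\hat H_{\rm ws}$, which only gives $\|\hat H_{\rm ws}\|_1/d^\star\lesssim\bar d\|\Gamma\|_1$ and loses a factor $\bar d$.
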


The bound~\eqref{eq:ws-remainder} is the direct analogue of~\eqref{eq:combined_bound}, with the crucial difference that \emph{no $C_A$ factor appears}. The proof follows the same five-term decomposition (Appendix~\ref{app:score-whitening}), but each term simplifies: the direction error is $\hat H_{\rm ws}-H^\star_{\rm ws} = d^\star(\widehat{P}_{\mathbb{T}}-P_{\mathbb{T}})\Gamma$ (no operator inversion), and the first-order cancellation is \emph{complete} (no off-tangent residual), since $G_0(H^\star_{\rm ws}) = P_{\mathbb{T}}\Gamma$.

\begin{theorem}[CLT and variance estimation for the score-whitened estimator]\label{thm:ws-clt}
Under the conditions of Theorem~\ref{thm:ws-bound}, since $n\gg\bar d\log^c\bar d$ implies $\sqrt{n}\,R_n\to 0$, the conditional Lyapunov CLT gives
\begin{equation}\label{eq:ws-clt}
\frac{\sqrt{n}\bigl(\hat\psi_{\rm ws}-\psi(T^\star)\bigr)}{\sqrt{V_{\rm ws}}}
\;\xrightarrow{d}\;
\mathcal{N}(0,1),
\end{equation}
with asymptotic variance
\begin{equation}\label{eq:Vws}
V_{\rm ws}
=
\mathbb{E}^\star\!\left[\frac{\langle d^\star P_{\mathbb{T}}\Gamma,X\rangle^2}{I(\eta^\star)}\right].
\end{equation}
Also, the plug-in estimator $\widehat V_{\rm ws}=\mathbb{P}_n[\hat{\tilde\phi}^{\,2}]$with $\hat{\tilde\phi}(X,Y)=\tilde s_\eta(Y,\hat\eta(X))\langle d^\star\widehat{P}_{\mathbb{T}}\Gamma,X\rangle$ satisfies
\begin{equation}\label{eq:Vws-relative}
\left|\frac{\widehat V_{\rm ws}-V_{\rm ws}}{V_{\rm ws}}\right|
\;\lesssim\;
C\sqrt{\frac{\bar d\log^c \bar d}{n}}
\end{equation}
with probability at least $1-\bar d^{-c}$. In particular, no $C_A$ factor appears in~\eqref{eq:Vws-relative}, because the whitened direction $\widehat H_{\rm ws}=d^\star\widehat{P}_{\mathbb{T}}\Gamma$ involves no operator inversion.
\end{theorem}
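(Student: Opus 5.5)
The proof has three parts. First, Theorem~\ref{thm:ws-bound} gives $\sqrt n R_n\to0$. Second, a Lyapunov CLT handles the leading empirical term. Third, the variance estimator is controlled by a ratio perturbation argument.

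\textbf{Step 1: the remainder and the scale of $V_{\rm ws}$.} By \eqref{eq:ws-expansion},
\[
\sqrt n(\hat\psi_{\rm ws}-\psi(T^\star))/\sqrt{V_{\rm ws}}=\sqrt n(\mathbb{P}_n-\mathbb{P}^\star)\tilde\phi_0^\star/\sqrt{V_{\rm ws}}+\sqrt n R_n/\sqrt{V_{\rm ws}}.
\]
We need a lower bound on $V_{\rm ws}$. Since $I(\eta^\star)\le 1/4$, we have $V_{\rm ws}\ge 4\,\mathbb{E}^\star\langle H^\star_{\rm ws},X\rangle^2$. By \eqref{eq:G0-whitened} this equals $4\langle H^\star_{\rm ws},G_0H^\star_{\rm ws}\rangle=4d^\star\|P_{\mathbb{T}}\Gamma\|_F^2$. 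Assumption~\ref{ass:alignment} then gives $V_{\rm ws}\gtrsim \alpha_\Gamma^2\bar d\|\Gamma\|_F^2$. Assumption~\ref{ass:bounded-support} gives $\|\Gamma\|_1\le\sqrt M\|\Gamma\|_F$, so $\|\Gamma\|_1/\sqrt{V_{\rm ws}}\lesssim \bar d^{-1/2}$. Combining with \eqref{eq:ws-remainder},
\[
\sqrt n|R_n|/\sqrt{V_{\rm ws}}\lesssim \sqrt{\bar d}\log^c\bar d/\sqrt n\to0 .
\]

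\textbf{Step 2: CLT for the leading term.} Under cross-fitting, the leading term is an average of i.i.d.\ centered variables $\tilde\phi_0^\star(X_i,Y_i)$ with variance $V_{\rm ws}$. Centering holds because $\mathbb{E}[\tilde s_\eta\mid X]=0$. The conditional second moment is $\mathbb{E}[\tilde s^2\mid X]=1/I(\eta^\star)$, which gives the formula \eqref{eq:Vws}.

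For Lyapunov we bound third moments. We have $|\tilde s|\le 1/c_B$, and $|\langle H^\star_{\rm ws},X\rangle|\le 2d^\star\|P_{\mathbb{T}}\Gamma\|_\infty$ because $X$ has two nonzero entries. Hence
\[
\mathbb{E}|\tilde\phi|^3\le \|\tilde\phi\|_\infty V_{\rm ws}/c_B,
\]
so the Lyapunov ratio is at most $\|\tilde\phi\|_\infty/(c_B\sqrt{nV_{\rm ws}})$.

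To bound $\|P_{\mathbb{T}}\Gamma\|_\infty$, we expand the tangent projector as a sum of multilinear projections. Incoherence \eqref{eq:incoh} then gives
\[
\|P_{\mathbb{T}}\Gamma\|_\infty\lesssim C(\mu,r,m)\,\bar d\,\|\Gamma\|_1/d^\star .
\]
This step, namely entrywise control of $P_{\mathbb{T}}$ including the zero-sum component $Q$, is the main technical point. It mirrors the bounds already used for $R_{\rm proj}$. It yields $\|\tilde\phi\|_\infty\lesssim \bar d\|\Gamma\|_1$ and a Lyapunov ratio of order $\sqrt{\bar d/n}\to0$. Slutsky then gives \eqref{eq:ws-clt}.

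\textbf{Step 3: variance estimation.} Write
\[
\widehat V_{\rm ws}-V_{\rm ws}=(\mathbb{P}_n-\mathbb{P}^\star)\hat{\tilde\phi}^2+\mathbb{P}^\star(\hat{\tilde\phi}^2-\tilde\phi_0^{\star2}).
\]

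For the first term, conditional on the training folds, $\hat{\tilde\phi}$ is bounded by $\lesssim\bar d\|\Gamma\|_1$. This uses the estimated incoherence \eqref{eq:init-estimated-incoherence}, \eqref{eq:init-projection-stability}, and $I(\hat\eta)\ge c_B/2$ once $\varepsilon_n$ is small. Bernstein's inequality, with variance proxy $\lesssim\|\hat{\tilde\phi}\|_\infty^2V_{\rm ws}$, gives a relative error
\[
\lesssim \bar d\|\Gamma\|_1\sqrt{\log\bar d/(nV_{\rm ws})}\lesssim\sqrt{\bar d\log\bar d/n}.
\]

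For the second term, factor $\hat{\tilde\phi}^2-\tilde\phi^{\star2}=(\hat{\tilde\phi}-\tilde\phi)(\hat{\tilde\phi}+\tilde\phi)$ and apply Cauchy--Schwarz. It then suffices to show
\[
\mathbb{P}^\star(\hat{\tilde\phi}-\tilde\phi_0^\star)^2\lesssim\varepsilon_n^2V_{\rm ws}.
\]
This follows from two facts. First, $|\tilde s(Y,\hat\eta)-\tilde s(Y,\eta^\star)|\lesssim\|\Delta\|_\infty\le C_1\varepsilon_n$ by Lipschitz continuity of $s/I$ on $[-2B-1,2B+1]$. Second,
\[
\mathbb{E}\langle d^\star(\widehat P_{\mathbb{T}}-P_{\mathbb{T}})\Gamma,X\rangle^2=d^\star\|(\widehat P_{\mathbb{T}}-P_{\mathbb{T}})\Gamma\|_F^2\lesssim d^\star\varepsilon_n^2\|\Gamma\|_F^2,
\]
by \eqref{eq:init-subspace-op}. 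Dividing by $V_{\rm ws}\gtrsim\bar d\|\Gamma\|_F^2$ leaves a relative factor $d^\star/\bar d$. This factor is the obstacle, since it does not vanish in general. I would absorb it by using the alignment scale honestly, replacing the crude Frobenius bound with the Step~1 identity $V_{\rm ws}\asymp d^\star\|P_{\mathbb{T}}\Gamma\|_F^2$. I would then bound $\|(\widehat P_{\mathbb{T}}-P_{\mathbb{T}})\Gamma\|_F\lesssim\varepsilon_n\|P_{\mathbb{T}}\Gamma\|_F/\alpha_\Gamma$ using the $2,\infty$ bounds \eqref{eq:init-subspace-row} and the sparsity of $\Gamma$. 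Together these give the relative rate $\varepsilon_n$, which is \eqref{eq:Vws-relative}. No operator inverse enters anywhere, so no $C_A$ appears.
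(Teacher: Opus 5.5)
Your proposal is correct and follows essentially the paper's route. You standardize the remainder of Theorem~\ref{thm:ws-bound} by $\sqrt{V_{\rm ws}}\gtrsim\sqrt{\bar d}\,\|\Gamma\|_F$, apply a Lyapunov bound of order $\sqrt{\bar d/n}$ to the oracle average, and control $\widehat V_{\rm ws}$ through the same empirical, score-plug-in and direction split; the key relative direction error $\|(\widehat P_{\mathbb T}-P_{\mathbb T})\Gamma\|_F/\|P_{\mathbb T}\Gamma\|_F\lesssim\varepsilon_n/\alpha_\Gamma$ comes, as in the paper, from the sparse-target projector perturbation bound combined with alignment. The one minor difference is the third moment: the paper bounds it via $\sup_x\|P_{\mathbb T}x\|_F\lesssim\sqrt{\bar d/d^\star}$ and Cauchy--Schwarz, so $\|H^\star_{\rm ws}\|_F$ cancels without alignment, whereas you go through $\|P_{\mathbb T}\Gamma\|_\infty$ and the alignment lower bound on $V_{\rm ws}$, which you, unlike the paper, derive explicitly.
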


Comparing Theorems~\ref{thm:eff-clt} and~\ref{thm:ws-clt} reveals a clear efficiency--robustness trade-off. The general one-step estimator targets the semiparametric efficiency bound $V_{\rm eff} = \langle P_{\mathbb{T}}\Gamma, A^{-1}P_{\mathbb{T}}\Gamma\rangle$, but requires $C_A\sqrt{\bar d/n}\to 0$ for the CLT. The whitened estimator achieves CLT under the weaker condition $n\gg\bar d\log^c\bar d$ (no $C_A$), but its variance $V_{\rm ws}\ge V_{\rm eff}$ may be larger. The gap vanishes when the Fisher information is nearly constant across comparisons: if $\|T^\star\|_\infty$ is small, all matchups are competitive, $I(\eta^\star)\approx 1/4$, and $V_{\rm ws}\approx V_{\rm eff}$.

\subsection{Extension to non-uniform sampling via inverse-probability weighting}\label{sec:unknown-sampling}

In practice, pairwise comparison data are far from uniformly collected. As illustrated in Figure~\ref{fig:lmarena-battle}, popular frontier models receive disproportionately many battles, while prompts concentrate on engaging topics. This non-uniformity breaks the isotropic Gram structure on which the whitened estimator relies.

\medskip
\noindent\textbf{IPW framework.}
Let $p(x):=\Pi^\star(X=x)$ be the true sampling probability and $q(x)=1/d^\star$ the uniform reference distribution. Define the importance weight $w(x):=q(x)/p(x)$ and the \emph{IPW-whitened score}
\begin{equation}\label{eq:ipw-score}
\tilde s_\eta^{\,w}(y,\eta;x):=w(x)\,\tilde s_\eta(y,\eta).
\end{equation}
The weighted Gram operator
\begin{equation}\label{eq:Gq-def}
\langle H_1,G_q(H_2)\rangle
:=
\mathbb{E}_{X\sim p}\!\big[w(X)\,\langle H_1,X\rangle\langle H_2,X\rangle\big]
=
\mathbb{E}_{X\sim q}\!\big[\langle H_1,X\rangle\langle H_2,X\rangle\big]
\end{equation}
is exactly the uniform-design operator. On the column-sum-zero subspace,
\begin{equation}\label{eq:Gq-isotropic}
G_q(H)
=
\frac{1}{d^\star}H,
\qquad
A_q^{-1}=d^\star P_{\mathbb{T}}.
\end{equation}
Hence the oracle and plug-in directions are
\begin{equation}\label{eq:ipw-directions}
H_{q,0}=d^\star P_{\mathbb{T}}\Gamma,
\qquad
\hat H_{q,0}=d^\star \widehat{P}_{\mathbb{T}}\Gamma,
\end{equation}
which are identical to the whitened directions~\eqref{eq:ws-directions}. No operator inversion is needed.

\medskip
\noindent\textbf{IPW one-step estimator and CLT.}
When the sampling law $p(x)$ is known, the IPW-whitened one-step estimator is
\begin{equation}\label{eq:ipw-estimator}
\hat\psi_{\rm ipw}
=
\langle \Gamma,\hat T\rangle
+
\mathbb{P}_n\!\Big[w(X)\,\tilde s_\eta\!\big(Y,\hat\eta\big)\,\langle \hat H_{q,0},X\rangle\Big].
\end{equation}
Since the weighted Gram $G_q$ is isotropic, the analysis of Theorems~\ref{thm:ws-bound}--\ref{thm:ws-clt} carries through with each remainder term acquiring at most a multiplicative factor of $\|w\|_\infty \le C_p/c_p$.

\begin{theorem}[CLT for the IPW-whitened estimator]\label{thm:ipw-clt}
Under the conditions of Theorem~\ref{thm:ws-bound} and the overlap condition
\begin{equation}\label{eq:overlap_bounds}
c_p/d^\star \le p(x) \le C_p/d^\star,
\end{equation}
\[
\frac{\sqrt{n}\bigl(\hat\psi_{\rm ipw}-\psi(T^\star)\bigr)}{\sqrt{V_{\rm ws}}}
\;\xrightarrow{d}\;
\mathcal{N}(0,1),
\]
where $V_{\rm ws}$ is defined in~\eqref{eq:Vws}. The remainder bound is~\eqref{eq:ws-remainder} with $C(\mu,\kappa,r,m)$ replaced by $C(\mu,\kappa,r,m,C_p/c_p)$.
\end{theorem}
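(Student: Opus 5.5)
The strategy is to reduce Theorem~\ref{thm:ipw-clt} to the analysis behind Theorems~\ref{thm:ws-bound} and~\ref{thm:ws-clt}. The reduction rests on one structural identity: $w$ exactly converts the $p$-expectation into the uniform $q$-expectation. We use cross-fitting exactly as in Algorithm~\ref{alg:ws-onestep}, with $\hat T,\widehat P_{\mathbb T}$ built out of fold and the IPW correction evaluated in fold. We then decompose
\[
\hat\psi_{\rm ipw}-\psi(T^\star)=(\mathbb P_n-\mathbb P^\star)\tilde\phi^{\star}_{w}+R_{\rm emp}^{H}+R_{\rm emp}^{\eta}+R_{\rm proj}+R_{\rm bias}^{H}+R_{\rm cancel},
\]
where $\tilde\phi^\star_w(X,Y)=w(X)\tilde s_\eta(Y,\eta^\star)\langle d^\star P_{\mathbb T}\Gamma,X\rangle$. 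Each remainder is the same as in the whitened case, except that $w(X)$ is inserted in every score term.

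\emph{Step 1 (population cancellation).} Using $\mathbb E[\tilde s_\eta(Y,\eta^\star)\mid X]=0$ and $\mathbb E[\partial_\eta\tilde s_\eta\mid X]=-1$, a second-order Taylor expansion of $\tilde s_\eta(Y,\hat\eta)$ around $\eta^\star$ gives
\[
\mathbb P^\star[w\,\tilde s_\eta(Y,\hat\eta)\langle H,X\rangle]=-\mathbb E_{p}[w\langle\Delta,X\rangle\langle H,X\rangle]+O(\|\Delta\|_\infty^2\,\mathbb E_p[w|\langle H,X\rangle|]).
\]
By \eqref{eq:Gq-def} the leading term equals $-\langle \Delta,G_q H\rangle$. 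Take $H=\hat H_{q,0}$, which is column-sum-zero, and note that $\Delta$ is column-sum-zero by Assumption~\ref{ass:initial}. Then \eqref{eq:Gq-isotropic} gives $-\langle\Delta,\widehat P_{\mathbb T}\Gamma\rangle$. Adding $\langle\Gamma,\Delta\rangle$, the residual is $\langle(I-\widehat P_{\mathbb T})\Gamma,\Delta\rangle$. This is the projection-leakage term, controlled exactly as in the proof of Theorem~\ref{thm:ws-bound} by the subspace bounds \eqref{eq:init-subspace-op}--\eqref{eq:init-subspace-entry} and the fact that $\hat T-T^\star$ is nearly in the tangent space up to second order. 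The quadratic Taylor remainder is bounded using $\|\Delta\|_\infty\le C_1\varepsilon_n$, $w\le C_p/c_p$, and $\mathbb E_p[w|\langle H,X\rangle|]=\mathbb E_q|\langle H,X\rangle|\lesssim \|\Gamma\|_1$. The last bound holds because of incoherence \eqref{eq:init-projection-stability}, which gives $\|d^\star\widehat P_{\mathbb T}\Gamma\|_\infty\lesssim \|\Gamma\|_1$ up to structural constants. Together these give $O(\|\Gamma\|_1\bar d\log^c\bar d/n)$.

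\emph{Step 2 (empirical remainders).} The terms $R^H_{\rm emp}$ (driven by $d^\star(\widehat P_{\mathbb T}-P_{\mathbb T})\Gamma$) and $R^\eta_{\rm emp}$ (driven by $\tilde s_\eta(Y,\hat\eta)-\tilde s_\eta(Y,\eta^\star)$) are, conditional on the training folds, averages of independent centered bounded variables. Bernstein's inequality applies with envelope and variance bounds equal to those in the whitened case times $\|w\|_\infty\le C_p/c_p$. For the variance, note that $\mathbb E_p[w^2 f^2]\le \|w\|_\infty\mathbb E_q[f^2]$. Since $I(\eta)\ge c_B$, the whitened score is bounded, so both terms inherit the $\|\Gamma\|_1\bar d\log^c\bar d/n$ rate with constant $C(\mu,\kappa,r,m,C_p/c_p)$.

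\emph{Step 3 (CLT).} The variance of the leading term is $\mathbb E_p[w^2\langle H,X\rangle^2/I(\eta^\star)]$. Using $w^2p=q^2/p$, this equals the uniform-design quantity \eqref{eq:Vws} when $p=q$. In general the limit variance is this weighted quantity, and I read $V_{\rm ws}$ in the statement as the $X$-expectation taken with the weight absorbed. This identification is the point I expect needs the most care; I would state the variance explicitly as $\mathbb E_p[w^2\langle H_{q,0},X\rangle^2/I(\eta^\star)]$. The theorem then follows from a Lyapunov CLT for the i.i.d.\ summands. The Lyapunov ratio is bounded by $\|\tilde\phi^\star_w\|_\infty/(\sqrt{n V})$, which vanishes because $V\gtrsim c_p/C_p\cdot \mathbb E_q\langle H_{q,0},X\rangle^2\gtrsim \|d^\star P_{\mathbb T}\Gamma\|_F^2/d^\star$. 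This lower bound holds by Assumption~\ref{ass:alignment}, while the sup norm is bounded by incoherence. Finally, $\sqrt n R_n/\sqrt V\to0$ because $n\gg\bar d\log^{2c}\bar d$. The main obstacle is Step 1: we must make sure the IPW isotropy \eqref{eq:Gq-isotropic} is applied only to column-sum-zero arguments, so that the identification constraint on $\hat T$ is essential.
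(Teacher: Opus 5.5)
Your reduction matches the paper's (the appendix section on IPW with known weights). The exact change of measure $\mathbb E_p[w f]=\mathbb E_q[f]$ turns every population term into its uniform whitened counterpart. The two empirical-process terms only pick up factors of $\|w\|_\infty\le C_p/c_p$ in Bernstein's inequality. Where you depart from the paper, one change is an improvement, one is a correct repair of the statement, and one step is wrong as written.

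\emph{Merged bias term.} In Step 1 you combine $R_{\rm proj}$, $R^H_{\rm bias}$ and $R_{\rm cancel}$ into the single first-order residual $\langle (I-\widehat P_{\mathbb T})\Gamma,\Delta\rangle$. Since $\hat T\in\widehat{\mathbb T}$, this residual equals $-\langle \Gamma,(I-\widehat P_{\mathbb T})T^\star\rangle$. It is $O(\|\Gamma\|_1\varepsilon_n^2)$ by the mirror image of Proposition~\ref{app:pd-PTcompl-infty}, obtained by swapping the roles of $U_j$ and $\hat U_j$. That identity is the precise content of your phrase ``nearly in the tangent space''.

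The paper instead bounds $\langle (I-P_{\mathbb T})\Gamma,\Delta\rangle$ and $-\langle \Delta,(\widehat P_{\mathbb T}-P_{\mathbb T})\Gamma\rangle$ separately. Its Cauchy--Schwarz bound on the piece $\langle P_{\mathbb T}\Delta,(\widehat P_{\mathbb T}-P_{\mathbb T})\Gamma\rangle$ carries a factor $\sqrt{\bar d}\,\|\Delta\|_\infty\rho$, which is dropped in the combined bound of Theorem~\ref{app:ws-combined-thm}. Your merged form never produces that term.

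One caveat is inherited from the paper. Exact cancellation requires the scalar in $\hat H_{q,0}$ to be the exact reciprocal of the Gram constant. For uniform unordered pairs that reciprocal is $D(d_1-1)/2$ by Lemma~\ref{app:pd-Frob-reduction}, not $Dd_1$. Any constant mismatch leaves a genuine first-order term.

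\emph{Variance.} You are right here. The leading summand has variance $\mathbb E_p[w^2\langle H_{q,0},X\rangle^2/I(\eta^\star)]=\mathbb E_q[w\,\langle H_{q,0},X\rangle^2/I(\eta^\star)]$. In general this differs from $V_{\rm ws}$ in \eqref{eq:Vws}. The paper's appendix only re-verifies the remainder terms and never computes this variance. Normalizing by the weighted quantity, which is what the plug-in second moment of the weighted influence function estimates, is the correct form of the theorem.

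\emph{The step that fails.} Your justification of $\mathbb E_q|\langle \hat H_{q,0},X\rangle|\lesssim\|\Gamma\|_1$ does not work.
\begin{itemize}
\item The claim $\|d^\star\widehat P_{\mathbb T}\Gamma\|_\infty\lesssim\|\Gamma\|_1$ is false. The single-mode blocks make $\|\widehat P_{\mathbb T}E_\omega\|_\infty$ generically of order $\bar d/d^\star$: the upper bound in Lemma~\ref{lem:norm-scales} is attained by the diagonal entry. Hence $\|\hat H_{q,0}\|_\infty$ is of order $\bar d\|\Gamma\|_1$.
\item Condition \eqref{eq:init-projection-stability} controls only the fully projected block, so it cannot rescue the claim.
\item Going through the sup norm would make the Taylor remainder of order $\bar d\|\Delta\|_\infty^2\|\Gamma\|_1\asymp \bar d^{2}\log^c\bar d/n$. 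This is not $o(\sqrt{V/n})$ at $n\asymp \bar d\,\mathrm{polylog}\,\bar d$.
\end{itemize}
Use instead the $\ell_1$-average bound $\mathbb E_q|\langle v,X\rangle|\le 2\|v\|_1/d^\star$ (Lemma~\ref{app:pd-l1-average}). Combine it with the dimension-free bound $\|\widehat P_{\mathbb T}E_\omega\|_1\le C(\mu,r,m)$ (Lemma~\ref{app:pd-PT-l1} under \eqref{eq:init-estimated-incoherence}). This gives $\mathbb E_q|\langle \hat H_{q,0},X\rangle|\le 2\|\widehat P_{\mathbb T}\Gamma\|_1\le C\|\Gamma\|_1$.

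In Step 3 the correct sup-norm order $\bar d\|\Gamma\|_1$ still suffices. With $V\gtrsim \alpha_\Gamma^2\bar d\|\Gamma\|_F^2$ and $\|\Gamma\|_1\le\sqrt M\|\Gamma\|_F$, your Lyapunov ratio is $O(\sqrt{\bar d/n})$.
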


\noindent\textbf{The efficiency perspective.}
Alternatively, in the large-sample regime $n\gg C_A^2\bar d\log^c\bar d$, one could use the general one-step estimator of Section~\ref{sec:upper-bound} with the original non-uniform operator $G$ and achieve the semiparametric efficiency bound $V_{\rm eff}$ (which now depends on the sampling probabilities $p(x)$). The IPW approach is useful because it restores isotropy directly, eliminating the $C_A$ bottleneck at the cost of replacing $V_{\rm eff}$ by $V_{\rm ws}\ge V_{\rm eff}$. The proof for the efficiency case follows from the same operator inversion argument as in Section~\ref{sec:upper-bound}, with $G$ modified to incorporate the non-uniform design weights; see Appendix~\ref{app:norm-score}.

\medskip
\noindent\textbf{Unknown sampling distribution.}
When $p(x)$ is unknown, it must be estimated from data. Write $\hat w(x):=q(x)/\hat p(x)$ for the feasible weight and
\begin{equation}\label{eq:ipw-feasible}
\hat\psi_{\rm ipw}
=
\langle \Gamma,\hat T\rangle
+
\mathbb{P}_n\!\Big[\hat w(X)\,\tilde s_\eta\!\big(Y,\hat\eta\big)\,\langle \hat H_{q,0},X\rangle\Big].
\end{equation}
The key requirement is an entrywise relative error guarantee for $\hat p$:
\begin{equation}\label{eq:entrywise_relative_error}
\sup_{x}\left|\frac{\hat p(x)}{p^\star(x)}-1\right|
\;\le\;
\epsilon_p
\;\ll\; 1
\end{equation}
with high probability. This guarantee is achievable when the sampling mechanism has low effective dimension. For instance, in the matrix case ($m=2$), if tasks are sampled independently of model pairs, each task category $t$ with probability $\pi(t)$ and each unordered pair $\{a,b\}$ with probability $q(a)q(b)$, then the effective parameter dimension is $d_{\rm par} = O(d_1+d_2)$, and $\epsilon_p = O(\sqrt{d_{\rm par}/n})$ is achievable by standard spectral methods. Our numerical experiments in Section~\ref{sec:simulations} confirm that such structured sampling distributions are well-estimated in practice.

The weight-estimation error does not affect the first-order limit. Since $\mathbb{E}^\star[\tilde s_\eta(Y,\eta^\star)\mid X]=0$, the leading weight error is centered out:
$
\mathbb{E}^\star[(\hat w-w)\tilde s_\eta(Y,\eta^\star)\langle H_{q,0},X\rangle]=0.
$
The surviving contribution is a second-order Gram mismatch of order $\epsilon_p\cdot\|\Delta\|_\infty$, which is negligible. The complete proof, including the overlap condition~\eqref{eq:overlap_bounds}, the entrywise relative error guarantee~\eqref{eq:entrywise_relative_error}, and explicit remainder bounds, is given in Appendix~\ref{app:norm-score}.

\subsection{Extension to nonlinear functionals}\label{sec:nonlinear-functional}

Many inferential targets in LLM evaluation are nonlinear: the win probability~\eqref{eq:nonlinear-winrate-intro}, category-weighted leaderboard scores, or calibrated transformations of latent scores. The score-whitening framework extends naturally to such targets by replacing the fixed gradient $\Gamma$ with the plug-in gradient $\hat\Gamma_\psi := \nabla\psi(\hat T)$.

\begin{assumption}[Finite-entry nonlinear functional]\label{ass:nonlinear}
There exists a fixed index set $S_\psi\subset[d_1]\times\cdots\times[d_m]$ with $|S_\psi|=s_\psi=O(1)$ and a twice continuously differentiable function $g:\mathbb{R}^{s_\psi}\to\mathbb{R}$ such that
$
\psi(T)=g(T_{S_\psi})
$
in a neighborhood of $T^\star$. Moreover, $\|\nabla g\|_2\le C_g$ and $\|\nabla^2 g\|_{\rm op}\le L_g$ throughout this neighborhood.
\end{assumption}

This assumption covers all standard LLM evaluation targets: a head-to-head win probability $\psi(T)=\sigma(T_{a,u}-T_{b,u})$ has $s_\psi=2$; a smoothed win rate averaged over a finite reference pool has $s_\psi$ equal to the pool size times the number of categories. The finite-support condition ensures $\|\Gamma_\psi\|_1=O(1)$ and $\|\hat\Gamma_\psi-\Gamma_\psi\|_1\lesssim\|\Delta\|_\infty$.

The score-whitened one-step estimator is
\begin{equation}\label{eq:onestep-nonlinear}
\hat\psi_{\rm nl}
=
\psi(\hat T)
+
\mathbb{P}_n\!\big[\tilde s_\eta(Y,\hat\eta)\,\langle \hat H_\psi,X\rangle\big],
\qquad
\hat H_\psi:=d^\star \widehat{P}_{\mathbb{T}} \hat\Gamma_\psi.
\end{equation}

\begin{theorem}[CLT for nonlinear functionals]\label{thm:nonlinear}
Under the conditions of Theorem~\ref{thm:ws-clt} and Assumption~\ref{ass:nonlinear},
\[
\sqrt{n}\bigl(\hat\psi_{\rm nl}-\psi(T^\star)\bigr)
\;\xrightarrow{d}\;
\mathcal{N}\!\left(0,\;V_{\rm ws}(\psi)\right),
\quad
V_{\rm ws}(\psi)
\;=\;
\E^\star\!\left[\frac{\ip{\dstar P_{\mathbb{T}}\Gammavec_\psi}{X}^2}{I(\eta^\star)}\right].
\]
\end{theorem}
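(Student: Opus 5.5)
The plan is to reduce Theorem~\ref{thm:nonlinear} to the linear case of Theorems~\ref{thm:ws-bound}--\ref{thm:ws-clt}. The reduction linearizes $\psi$ around $T^\star$ and controls the extra error from using the plug-in gradient $\hat\Gamma_\psi$ instead of $\Gamma_\psi:=\nabla\psi(T^\star)$. Write $\Delta=\hat T-T^\star$. By Assumption~\ref{ass:nonlinear}, a second-order Taylor expansion of $g$ on the finite support $S_\psi$ gives
\[
\psi(\hat T)-\psi(T^\star)=\langle\Gamma_\psi,\Delta\rangle+R_{\rm Tay},\qquad |R_{\rm Tay}|\le \tfrac{L_g}{2}s_\psi\|\Delta\|_\infty^2\lesssim \varepsilon_n^2 ,
\]
using the entrywise bound \eqref{eq:init-entrywise}. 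Since $\varepsilon_n^2=\bar d\log^c\bar d/n$, this term is $o(n^{-1/2})$ under $n\gg\bar d\log^c\bar d$.

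Next, split the correction term as
\[
\mathbb{P}_n[\tilde s_\eta(Y,\hat\eta)\langle d^\star\widehat P_{\mathbb{T}}\hat\Gamma_\psi,X\rangle]=\mathbb{P}_n[\tilde s_\eta(Y,\hat\eta)\langle d^\star\widehat P_{\mathbb{T}}\Gamma_\psi,X\rangle]+\mathbb{P}_n[\tilde s_\eta(Y,\hat\eta)\langle d^\star\widehat P_{\mathbb{T}}(\hat\Gamma_\psi-\Gamma_\psi),X\rangle].
\]
The first piece, together with $\langle\Gamma_\psi,\hat T\rangle$, is exactly the linear whitened estimator \eqref{eq:ws-estimator} with fixed direction $\Gamma=\Gamma_\psi$. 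Here $\Gamma_\psi$ is deterministic, has support $s_\psi=O(1)$ and $\|\Gamma_\psi\|_1\le\sqrt{s_\psi}C_g$, so Assumption~\ref{ass:bounded-support} holds. For alignment (Assumption~\ref{ass:alignment}) I would take this as part of ``the conditions of Theorem~\ref{thm:ws-clt}'' applied to $\Gamma_\psi$. Theorem~\ref{thm:ws-bound} then gives
\[
\psi(T^\star)+(\mathbb{P}_n-\mathbb{P}^\star)\tilde\phi_0^\star+O(\|\Gamma_\psi\|_1\bar d\log^c\bar d/n),
\]
with the whitened influence function $\tilde\phi_0^\star$ built from $H^\star_{\rm ws}=d^\star P_{\mathbb{T}}\Gamma_\psi$.

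The main obstacle is the gradient-error piece $E:=\mathbb{P}_n[\tilde s_\eta(Y,\hat\eta)\langle d^\star\widehat P_{\mathbb{T}}\delta,X\rangle]$ with $\delta:=\hat\Gamma_\psi-\Gamma_\psi$. I would handle it under cross-fitting, so that $\delta$, $\hat T$ and $\widehat P_{\mathbb{T}}$ are independent of the evaluation fold. I would split $E$ into its conditional mean and a centered fluctuation.

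\textbf{Conditional mean.} Use $\mathbb{E}[\tilde s_\eta(Y,\eta^\star)\mid X]=0$ and $\tilde s_\eta(Y,\hat\eta)-\tilde s_\eta(Y,\eta^\star)=O(|\hat\eta-\eta^\star|)\le 2\|\Delta\|_\infty$, which follows from bounded $I$ on $|\eta|\le 2B+o(1)$. This bounds the mean by $\|\Delta\|_\infty\,\mathbb{E}|\langle d^\star\widehat P_{\mathbb{T}}\delta,X\rangle|$. Since $X$ picks two entries, the expectation is at most $2\|d^\star\widehat P_{\mathbb{T}}\delta\|_\infty$.

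\textbf{Entrywise projection bound.} By the projection stability \eqref{eq:init-projection-stability} and incoherence, $\|d^\star\widehat P_{\mathbb{T}}\delta\|_\infty\lesssim d^\star\cdot(\bar d/d^\star)\|\delta\|_1$ up to $\mathrm{poly}(\mu,r,m)$. This is the same scaling that makes $\|\Gamma\|_1$ appear in \eqref{eq:ws-remainder}: the tangent projector's entrywise norm on $\mathbb{T}$ is of order $\bar d/d^\star$ after summing the $m$ components. Combining this with $\|\delta\|_1\le\sqrt{s_\psi}L_g\|\Delta\|_\infty$, the mean is $O(\bar d\,\|\Delta\|_\infty^2)$ in worst form. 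That is too crude, so I would instead reuse the term-by-term bounds already established for $R^{\eta}_{\rm emp}$ and $R_{\rm cancel}$ in Theorem~\ref{thm:ws-bound}, applied with $\Gamma$ replaced by $\delta$ (conditionally fixed). Those bounds are linear in $\|\Gamma\|_1$ and give $\|\delta\|_1\cdot\bar d\log^c\bar d/n$ for the drift-type terms. The leading term of $E$ is then $\langle P_{\mathbb{T}}\delta,\cdot\rangle$-type first-order cancellation of size $\|\delta\|_1\varepsilon_n\cdot O(1)$, giving $O(\varepsilon_n^2)$.

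\textbf{Centered fluctuation.} Its conditional variance is $n^{-1}\mathbb{E}[\tilde s^2\langle d^\star\widehat P_{\mathbb{T}}\delta,X\rangle^2]\lesssim n^{-1}d^\star\|\widehat P_{\mathbb{T}}\delta\|_F^2\cdot d^\star/d^\star$ under uniform design. Bounding $\|\widehat P_{\mathbb{T}}\delta\|_F^2\lesssim(\bar d/d^\star)\|\delta\|_1^2$, the fluctuation is $O_p(\sqrt{\bar d/n}\,\varepsilon_n)=o(n^{-1/2})$ relative to the leading term. The leading term has standard deviation $\asymp n^{-1/2}\sqrt{V_{\rm ws}}$, with $V_{\rm ws}\asymp d^\star\|P_{\mathbb{T}}\Gamma_\psi\|_F^2\gtrsim\bar d$ by alignment. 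A Bernstein inequality conditional on the training folds makes this high-probability.

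Collecting the pieces, $\sqrt n(\hat\psi_{\rm nl}-\psi(T^\star))=\sqrt n(\mathbb{P}_n-\mathbb{P}^\star)\tilde\phi_0^\star+o_p(\sqrt{V_{\rm ws}(\psi)})$. The conditional Lyapunov CLT from Theorem~\ref{thm:ws-clt}, applied to $\tilde\phi_0^\star$ with $\Gamma=\Gamma_\psi$, yields the stated normal limit with variance $V_{\rm ws}(\psi)$. Averaging over folds preserves this, since the number of folds $K$ is fixed.
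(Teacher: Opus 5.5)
Your proposal is correct and follows the paper's own proof. It uses the same three-way split: the linearized whitened estimator with the fixed gradient $\Gamma_\psi$ (handled by Theorem~\ref{thm:ws-bound}), the $O(\|\Delta\|_\infty^2)$ Taylor remainder on $S_\psi$, and the gradient-perturbation correction with $\hat\Gamma_\psi-\Gamma_\psi$ in place of $\Gamma$, which carries an extra factor of $\|\Delta\|_\infty$.

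Your conditional-mean/fluctuation treatment of that last term is, if anything, more careful than the paper's one-line appeal to the linear bound. One correction: the Taylor term is $o(\sqrt{\bar d/n})$, not $o(n^{-1/2})$. That is all you need, because every remainder is compared against $\sqrt{V_{\rm ws}(\psi)/n}\gtrsim\sqrt{\bar d/n}$, as you do in your final step.
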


The proof is deferred in Appendix~\ref{app:nonlinear}. The key idea is to reduce the nonlinear problem to the linear one via a three-term decomposition: (i)~the linearized estimator with the true gradient $\Gamma_\psi$, which is exactly the linear problem of Theorem~\ref{thm:ws-bound}; (ii)~a second-order plug-in Taylor remainder $R_{\mathrm{plug},\psi} = O(\|\Delta\|_\infty^2)$ from the nonlinear functional itself; and (iii)~a gradient-perturbation correction from using $\hat\Gamma_\psi$ in place of $\Gamma_\psi$, which carries an extra factor of $\|\Delta\|_\infty$ and is therefore lower order.


\input{simulation_section}

\input{aos_LowrankLLMEvaluation.bbl}
\newpage
\appendix

\renewcommand{\theHsection}{appendix.\Alph{section}}
\renewcommand{\theHsubsection}{appendix.\Alph{section}.\arabic{subsection}}
\renewcommand{\theHsubsubsection}{appendix.\Alph{section}.\arabic{subsection}.\arabic{subsubsection}}
\renewcommand{\theHequation}{appendix.\Alph{section}.\arabic{equation}}
\renewcommand{\theHtheorem}{appendix.\Alph{section}.\arabic{theorem}}

\begin{center}    
{\large\bf SUPPLEMENTARY MATERIAL}
\end{center}
\bigskip

This supplementary material contains additional experiments and technical details. Appendix~\ref{app:additional-experiments} provides extra simulation and real-data evidence omitted from the main text. Appendix~\ref{app:sec4} develops the semiparametric efficiency arguments and proofs for Section~\ref{sec:semiparametric_formulation}.
Appendix~\ref{app:proof-details} provides the detailed proofs for Section~\ref{sec:upper-bound}. Appendix~\ref{app:score-whitening} and Appendix~\ref{app:norm-score} provide proofs for Section~\ref{sec:score-whitening} that covers score whitening, inverse-probability weighting, and the nonlinear-functional extension.
Appendix~\ref{app:clt-upgrade} provides the key Berry--Esseen bound for the pairwise-comparison estimators, used in the main proof.
Appendix~\ref{sec:applications} includes broader implications of the inference framework for tensor completion.

\section{Additional numerical experiments}\label{app:additional-experiments}

This appendix collects additional numerical results for Section~\ref{sec:simulations}. We provide CLT diagnostics for both linear and nonlinear targets, broader robustness checks across sample size and rank, further diagnostics under non-uniform sampling, and additional real-data analyses on \textsc{Arena}, including preprocessing details and sensitivity to the subsample-fractions.

\subsection{Additional CLT diagnostics}\label{app:exp-clt-extra}

Figure~\ref{fig:clt-entry} displays the $z$-score histograms for the linear entry target with $n = 60{,}000$ ($n/d^2 = 1.5$). Both the one-step efficient and score-whitened estimators exhibit close agreement with the $\mathcal{N}(0,1)$ reference density, confirming the CLT.

\begin{figure}[h!]
\centering
\includegraphics[width=\textwidth]{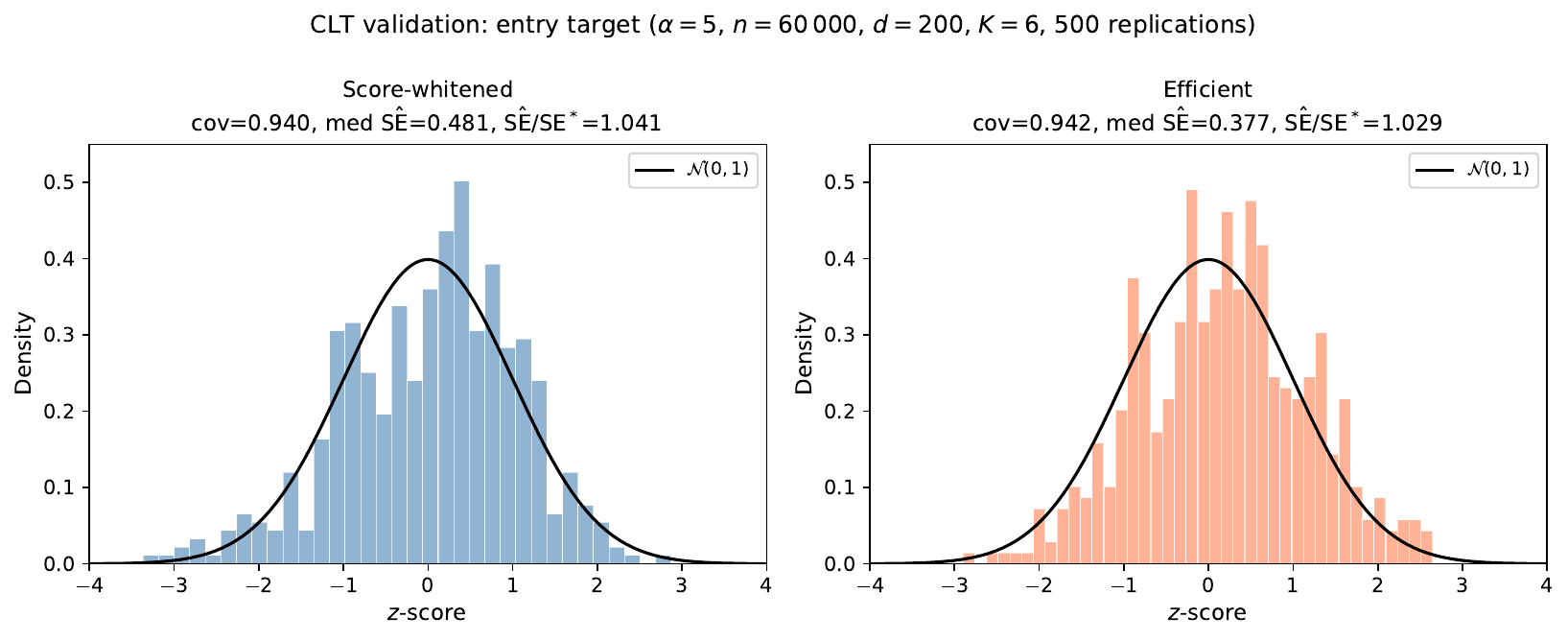}
\caption{CLT validation for the entry target ($\alpha = 5$, $n = 60{,}000$, $d = 200$, $K = 6$): $z$-score histograms for the score-whitened (left) and efficient (right) estimators, compared to the $\mathcal{N}(0,1)$ reference density. Both methods exhibit close agreement with the standard normal.}
\label{fig:clt-entry}
\end{figure}

Moreover, Figure~\ref{fig:clt-winprob} reports the histogram diagnostics for the nonlinear win-probability target in the synthetic experiment. 
The nonlinear target requires larger sample sizes for the Gaussian approximation to stabilize, which is consistent with the additional higher-order remainder terms in Section~\ref{sec:nonlinear-functional}.

\begin{figure}[h!]
\centering
\includegraphics[width=\textwidth]{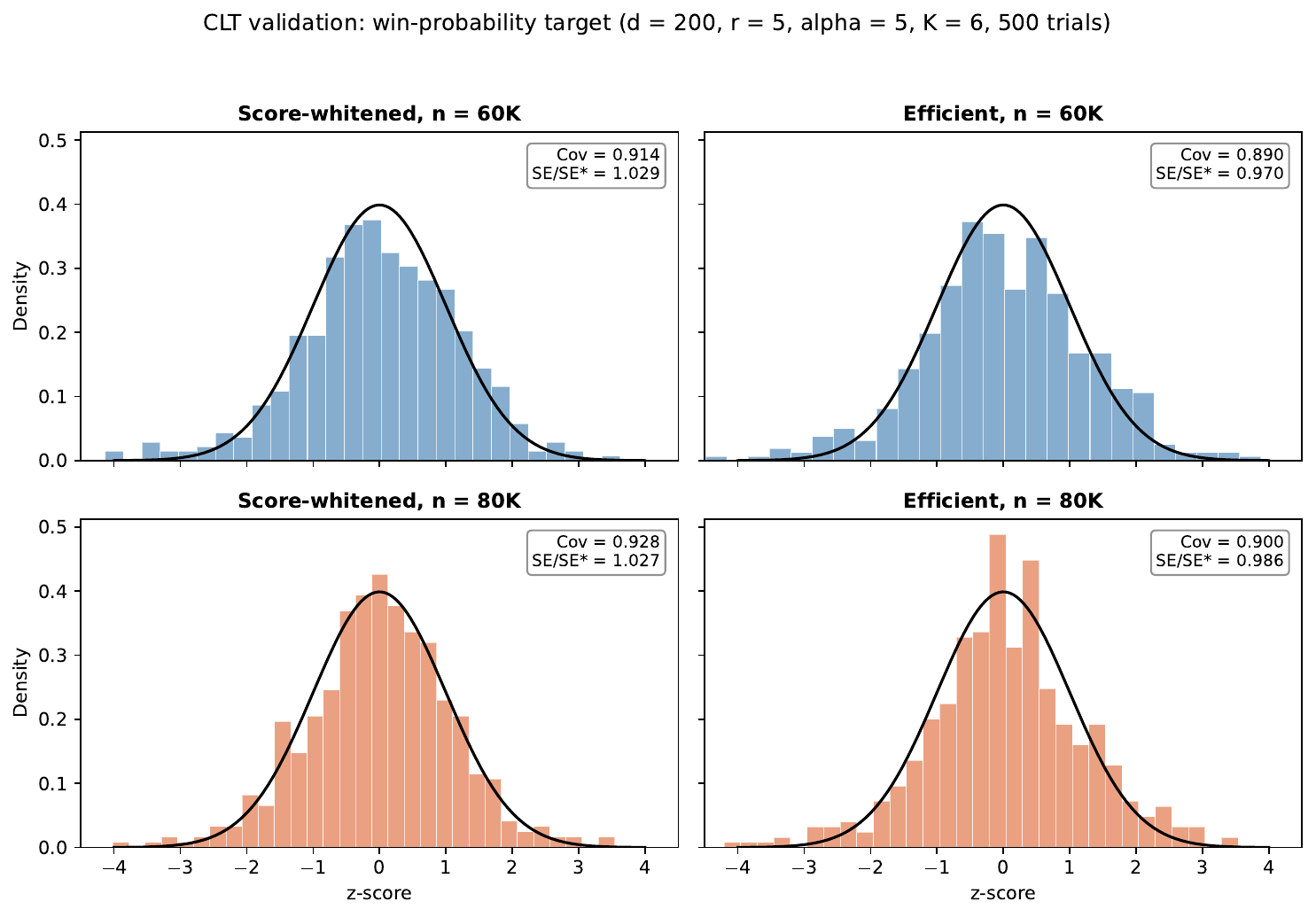}
\caption{CLT validation for the nonlinear win-probability target. 
Top row: $n=60{,}000$. Bottom row: $n=80{,}000$. 
Left: score-whitened estimator. Right: efficient estimator.}
\label{fig:clt-winprob}
\end{figure}

\subsection{Additional robustness checks}\label{sec:sim-coverage}

We further examine robustness to the sample size and the rank, while fixing $d=200$ and $\alpha=5$.

Figure~\ref{fig:coverage-n} plots the $95\%$ CI coverage and variance calibration for the entry target as $n/d^2$ increases from $0.75$ to $2.5$. Both estimators maintain coverage close to the nominal 95\% level, with the efficient estimator showing consistently stable variance calibration. The score-whitened estimator is more variable at smaller sample sizes, but improves as $n$ grows.

Figure~\ref{fig:coverage-r} reports the same quantities at fixed $n=60{,}000$ as the rank varies over $r\in\{2,3,5,8,10\}$. Coverage remains satisfactory across all ranks for both methods. As the rank increases, the score-whitened estimator becomes more variable, whereas the efficient estimator remains well calibrated. 

\begin{figure}[h!]
\centering
\includegraphics[width=0.8\textwidth]{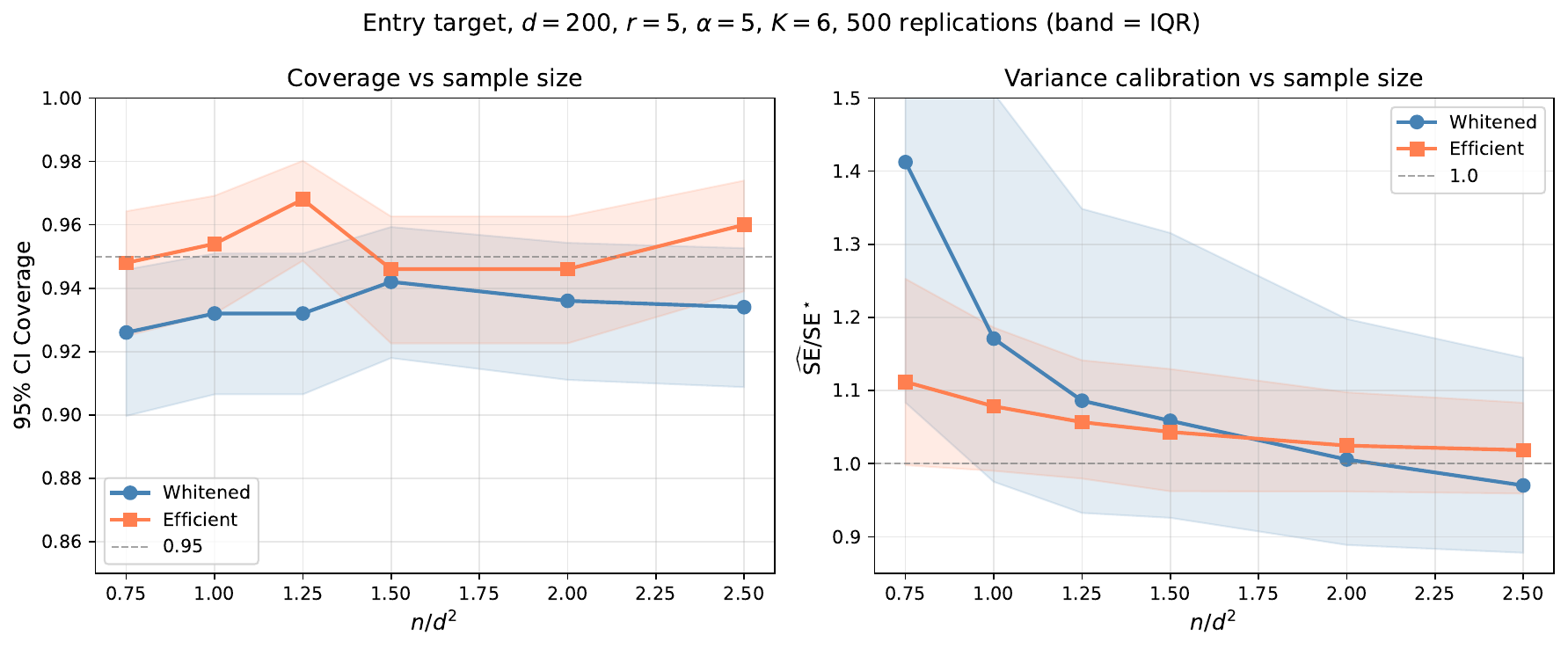}
\caption{Coverage (left) and variance calibration (right) for the entry target as a function of $n/d^2$ ($d=200$, $r=5$, $\alpha=5$, $K=6$, 500 replications).}
\label{fig:coverage-n}
\end{figure}

\begin{figure}[h!]
\centering
\includegraphics[width=0.8\textwidth]{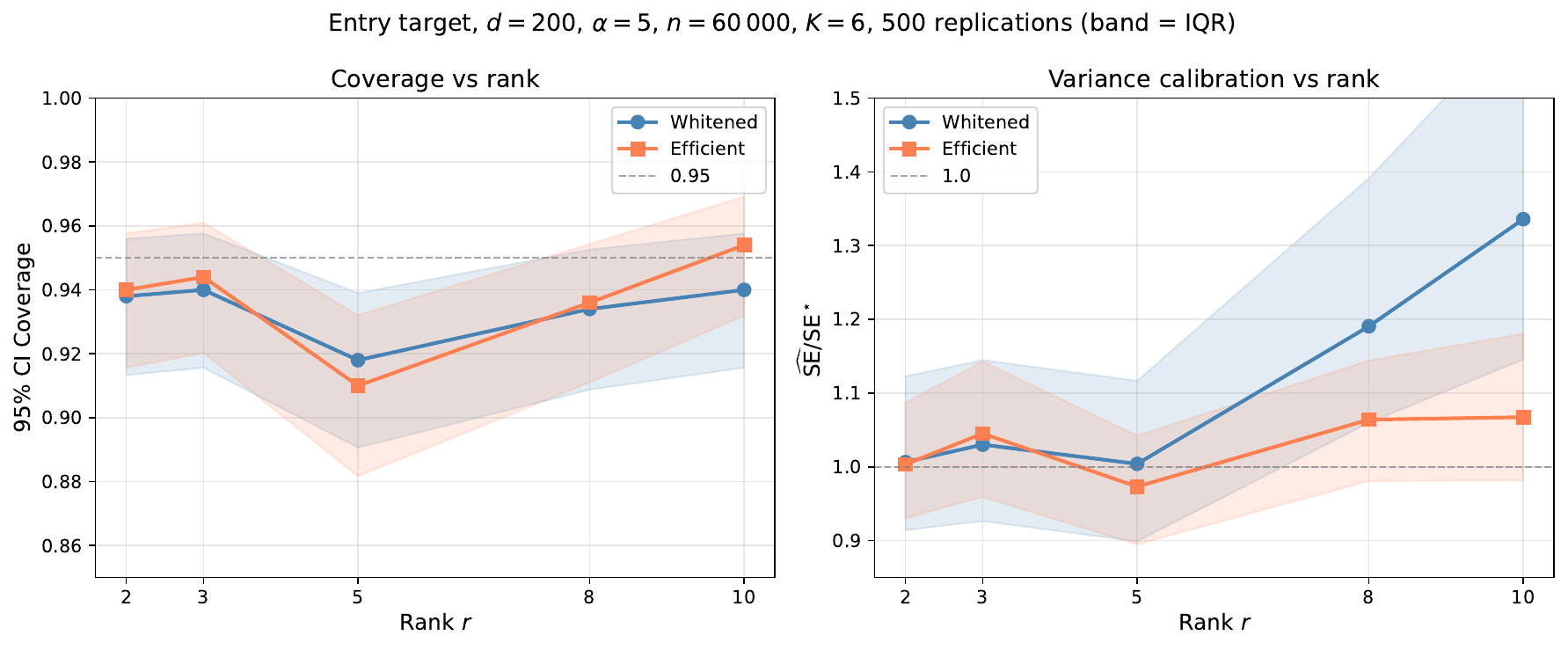}
\caption{Coverage (left) and variance calibration (right) for the entry target as a function of rank $r$ ($d=200$, $\alpha=5$, $n=60{,}000$, $K=6$, 500 replications).}
\label{fig:coverage-r}
\end{figure}

\subsection{Additional diagnostics under non-uniform sampling}\label{app:exp-nonuniform-extra}

Figure~\ref{fig:clt-nonuniform} reports the $z$-score histograms for the four methods in the non-uniform-sampling experiment. 
All four estimators exhibit approximately Gaussian behavior, and the known-$\pi$ and estimated-$\hat\pi$ versions are nearly indistinguishable.

\begin{figure}[h!]
\centering
\includegraphics[width=\textwidth]{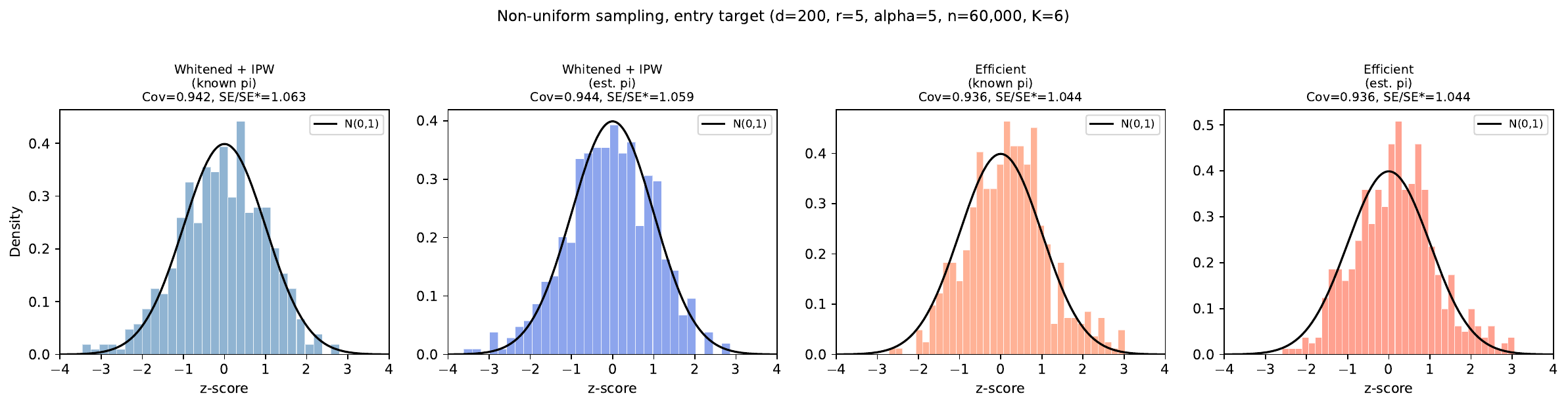}
\caption{CLT diagnostics for the non-uniform-sampling experiment.}
\label{fig:clt-nonuniform}
\end{figure}

\subsection{Real-data preprocessing and robustness evaluation}\label{app:exp-real-preprocess}

We use a public \textsc{Arena} snapshot containing about 140,000 pairwise comparisons. 
After filtering to the 30 most frequently appearing models and aggregating prompts into 10 task categories based on metadata tags, we obtain 81,150 observations. 
Table~\ref{tab:categories} lists the category counts. The distribution is substantially non-uniform: the largest category (\texttt{code\_technical}, $n = 19{,}091$, 23.5\%) contains $7.1\times$ more observations than the smallest (\texttt{code\_general}, $n = 2{,}699$, 3.3\%).

\begin{table}[h!]
\centering
\caption{Task categories and counts after filtering to the top 30 models in the \textsc{Arena} dataset.}
\label{tab:categories}
\smallskip
\begin{tabular}{lrr}
\hline
Category & $n$ & \% \\
\hline
\texttt{code\_technical}      & 19{,}091 & 23.5 \\
\texttt{general}              & 14{,}682 & 18.1 \\
\texttt{creative\_practical}  & 12{,}395 & 15.3 \\
\texttt{math}                 &  6{,}583 &  8.1 \\
\texttt{creative\_writing}    &  6{,}432 &  7.9 \\
\texttt{instruction\_following} & 5{,}885 & 7.3 \\
\texttt{analytical}           &  5{,}686 &  7.0 \\
\texttt{creative\_abstract}   &  4{,}744 &  5.8 \\
\texttt{domain\_knowledge}    &  2{,}953 &  3.6 \\
\texttt{code\_general}        &  2{,}699 &  3.3 \\
\hline
Total & 81{,}150 & 100.0 \\
\hline
\end{tabular}
\end{table}



Table~\ref{tab:real-data-full} reports the full real-data results for all four inference targets. The efficient one-step estimator achieves coverage between 0.930 and 0.976, the closest to the nominal 95\% level among the three methods, and produces the \emph{narrowest confidence intervals}.

\begin{table}[h!]
\centering
\caption{Full real-data results on \textsc{Arena}.}
\label{tab:real-data-full}
\smallskip
\begin{tabular}{llcccc}
\hline
Target & $\psi^\star$ & Method & Emp.\ std & med $\widehat{\rm SE}$ & Coverage \\
\hline
\multirow{3}{*}{$T^\star[\text{Gemini},\,\text{math}]$}
& \multirow{3}{*}{$+0.687$}
& Naive     & 0.199 & 0.274 & 0.998 \\
& & Whitened  & 0.355 & 0.523 & 0.986 \\
& & Efficient & 0.151 & 0.163 & 0.930 \\
\hline
\multirow{3}{*}{Gemini--Claude (math)}
& \multirow{3}{*}{$+0.630$}
& Naive     & 0.248 & 0.271 & 0.980 \\
& & Whitened  & 0.503 & 0.736 & 0.984 \\
& & Efficient & 0.203 & 0.226 & 0.954 \\
\hline
\multirow{3}{*}{$P(\text{Gemini}>\text{Claude}\mid \text{math})$}
& \multirow{3}{*}{$+0.652$}
& Naive     & 0.055 & 0.061 & 0.954 \\
& & Whitened  & 0.115 & 0.170 & 0.986 \\
& & Efficient & 0.047 & 0.052 & 0.954 \\
\hline
\multirow{3}{*}{Gemini--o3 (code\_tech)}
& \multirow{3}{*}{$+0.273$}
& Naive     & 0.140 & 0.155 & 0.976 \\
& & Whitened  & 0.378 & 0.476 & 0.976 \\
& & Efficient & 0.116 & 0.141 & 0.976 \\
\hline
\end{tabular}
\end{table}



Finally, to examine how the benefit of the low-rank efficient estimator changes with data availability, we repeat the naive-versus-efficient comparison at subsample fractions of \(10\%\), \(5\%\), and \(2\%\); see Table~\ref{tab:subsample-fractions}. The advantage of the efficient estimator becomes more pronounced as the sample size decreases. At \(20\%\) subsampling, both methods perform reasonably well, with the efficient estimator achieving about \(1.24\times\) smaller empirical standard deviation on average. As the fraction decreases to \(10\%\) and \(5\%\), this gain increases to about \(1.47\times\) and \(1.50\times\), respectively, while coverage remains adequate. At \(2\%\) subsampling, the naive BTL estimator becomes highly unstable, whereas the efficient estimator remains usable, though with some mild undercoverage. Overall, these results show that the low-rank structure becomes increasingly valuable in data-scarce regimes, where borrowing strength across task categories leads to substantially tighter inference.

\begin{table}[h!]
\centering
\caption{Naive versus efficient estimator at different subsample fractions of the \textsc{Arena} data.}
\label{tab:subsample-fractions}
\smallskip
\begin{tabular}{ccccccc}
\hline
& & \multicolumn{2}{c}{Naive BTL} & \multicolumn{2}{c}{Efficient} & \\
\cmidrule(lr){3-4}\cmidrule(lr){5-6}
Frac & $n_{\rm sub}$ & Emp.\ std & Cov & Emp.\ std & Cov & Std ratio \\
\hline
20\% & 16{,}230 & 0.160 & 0.977 & 0.129 & 0.954 & 1.24$\times$ \\
10\% & 8{,}115  & 0.298 & 0.956 & 0.203 & 0.952 & 1.47$\times$ \\
5\%  & 4{,}058  & 0.379 & 0.945 & 0.252 & 0.932 & 1.50$\times$ \\
2\%  & 1{,}623  & 2.245 & 0.947 & 0.408 & 0.915 & 5.50$\times$ \\
\hline
\end{tabular}
\end{table}

\section{Proofs for Section~\ref{sec:semiparametric_formulation}}\label{app:sec4}

\subsection{Proof of Theorem~\ref{thm:general-lb}}
\label{app:proof-general-lb}

Fix any $H\in\mathbb{T}$ and consider the one-dimensional submodel $T_{\varepsilon}=T^\star+\varepsilon H$
(with the nuisance held fixed at $\Pi^\star$).
By Assumption~\ref{ass:global-unbiased},
\[
\mathbb{E}_{T_{\varepsilon},\Pi}[\widehat\psi]=\psi(T_{\varepsilon})\qquad \forall \varepsilon.
\]
Differentiating both sides at $\varepsilon=0$ and using \eqref{eq:lb-pathwise} gives
\[
\left.\frac{d}{d\varepsilon}\mathbb{E}_{T_{\varepsilon},\Pi}[\widehat\psi]\right|_{\varepsilon=0}
=
D\psi_{T^\star}(H).
\]
On the other hand, writing $\mathcal{S}(H)(X,Y):=s_\eta(Y,\eta^\star)\langle H,X\rangle$ for the directional score, the standard score identity yields
\[
\left.\frac{d}{d\varepsilon}\mathbb{E}_{T_{\varepsilon},\Pi^\star}[\widehat\psi]\right|_{\varepsilon=0}
=
\mathbb{E}^\star\!\left[\widehat\psi\cdot\sum_{i=1}^n s_\eta(Y_i,\eta_i^\star)\langle H,X_i\rangle\right].
\]
Using $\mathbb{E}^\star[\widehat\psi]=\psi(T^\star)$ and $\mathbb{E}^\star[s_\eta(Y,\eta^\star)\mid X]=0$, the right-hand side equals
\[
\mathbb{E}^\star\!\left[(\widehat\psi-\psi(T^\star))\cdot\sum_{i=1}^n s_\eta(Y_i,\eta_i^\star)\langle H,X_i\rangle\right].
\]
Applying Cauchy--Schwarz and i.i.d.\ sampling gives
\[
D\psi_{T^\star}(H)^2
\le
\mathrm{Var}^\star(\widehat\psi)\cdot
\mathbb{E}^\star\!\left[\Big(\sum_{i=1}^n s_\eta(Y_i,\eta_i^\star)\langle H,X_i\rangle\Big)^2\right]
=
\mathrm{Var}^\star(\widehat\psi)\cdot n\,\mathbb{E}^\star[s_\eta(Y,\eta^\star)^2\langle H,X\rangle^2],
\]
which implies the directional bound.
The sharp form \eqref{eq:lb-sharp} follows by maximizing the Rayleigh quotient
$\langle P_{\mathbb{T}}\nabla\psi(T^\star),H\rangle^2/\langle H,A\,H\rangle$ over $H\in\mathbb{T}$, whose maximizer is $H\propto A^{-1}P_{\mathbb{T}}\nabla\psi(T^\star)$ under
Assumption~\ref{ass:lb-invertible}. That is, the submodel in the direction $H^\star$ derived in Section~\ref{sec:info-operator} is the hardest to learn among all directions $H \in \mathbb{T}$.

Under Assumption~\ref{ass:lb-invertible}, the information equation
\[
A\,H^\star=P_{\mathbb{T}}\nabla\psi(T^\star)
\]
has the unique solution $H^\star=A^{-1}P_{\mathbb{T}}\nabla\psi(T^\star)$.
The efficient influence function is
\[
\phi_{\mathrm{EIF}}(X,Y)
=s_\eta(Y,\eta^\star)\Big\langle A^{-1}P_{\mathbb{T}}\nabla\psi(T^\star),\,X\Big\rangle,
\]
and satisfies $\mathbb{E}^\star[\phi_{\mathrm{EIF}}^2]=V_{\mathrm{eff}}(\psi)$, showing that the lower bound
\eqref{eq:lb-sharp} is attainable. \qed

\subsection{Semiparametric derivation of the EIF}\label{app:semiparametric-details}

We provide the full semiparametric argument establishing that $\phi_{\mathrm{EIF}}$ defined in \eqref{eq:eif-main} is the efficient influence function.

Recall the joint density factorization $p_{T,\Pi}(X,Y)=g_{\Pi}(X)\,p(Y\mid \langle T,X\rangle)$.
Define the signal tangent space in $L^2(P^\star)$ by
$\mathcal{T}_T:=\overline{\{s_\eta(Y,\eta^\star)\langle H,X\rangle: H\in\mathbb{T}\}}^{L^2(P^\star)}$,
and the nuisance tangent space $\mathcal{T}_X:=\{h(X): \mathbb{E}^\star[h(X)]=0,\ \mathbb{E}^\star[h(X)^2]<\infty\}$.
Since $\mathbb{E}^\star[s_\eta(Y,\eta^\star)\mid X]=0$, it follows that $\mathcal{T}_T\perp\mathcal{T}_X$ in $L^2(P^\star)$, and the full tangent space is the orthogonal direct sum $\mathcal{T}=\mathcal{T}_T\oplus\mathcal{T}_X$.

An \emph{influence function} (IF) for a regular estimator of $\psi(T^\star)$ is any $\phi\in L^2_0(P^\star)$ satisfying
\[
D\psi_{T^\star}(H)
=
\mathbb{E}^\star\!\left[\phi(X,Y)\,\dot\ell_\varepsilon(X,Y)\right]
\]
for every regular submodel $\varepsilon\mapsto(T_\varepsilon,\Pi_\varepsilon)$, where $\dot\ell_\varepsilon$ is the full model score.
Since $\psi$ depends only on $T$, the derivative vanishes for pure $\Pi$-perturbations, and the identity constrains $\phi$ only through its projection onto $\mathcal{T}_T$.
It follows that IFs are not unique ($\phi+u$ is an IF whenever $u\in\mathcal{T}^\perp$), and the minimum-variance IF is
\[
\phi_{\mathrm{EIF}}:=\mathrm{Proj}_{\mathcal{T}}(\phi_0)\in\mathcal{T}_T
\]
for any IF $\phi_0$.  Since $\phi_{\mathrm{EIF}}\in\mathcal{T}_T$, there exists $H^\star\in\mathbb{T}$ such that $\phi_{\mathrm{EIF}}(X,Y)=s_\eta(Y,\eta^\star)\langle H^\star,X\rangle$.
The score identity for all signal submodels $T_\varepsilon=T^\star+\varepsilon H$ yields
\[
\langle P_{\mathbb{T}}\nabla\psi(T^\star),H\rangle
=
\mathbb{E}^\star\!\bigl[s_\eta(Y,\eta^\star)^2\langle H^\star,X\rangle\langle H,X\rangle\bigr]
=
\langle H^\star, A\, H\rangle,
\qquad\forall H\in\mathbb{T},
\]
where $A=P_{\mathbb{T}}GP_{\mathbb{T}}$, giving the information equation $AH^\star=P_{\mathbb{T}}\nabla\psi(T^\star)$.

\subsection{Detailed computations for the pairwise comparison model}\label{app:sec4-details}

We present the detailed computation of the information operator and the efficiency bound for the pairwise comparison model discussed in Section~\ref{sec:semiparametric_formulation}.

\begin{example}
We consider the pairwise comparison model where the true signal matrix belongs to
\[
\mathcal{P}_{r,0}
:=\Bigl\{T\in\mathbb{R}^{d_1\times d_2}:\ \mathrm{rank}(T)=r,\ \ \mathbf{1}_{d_1}^\top T=0\Bigr\}.
\]
Here $d_1$ indexes the items (first mode) and $d_2$ indexes the arenas/users (second mode).
The first-mode sum-to-zero constraint selects a unique representative from each shift-equivalence class and
removes the unidentifiable directions (shifts along the item mode) in preference-based observations; see also
Example~\ref{ex:row-centered}.
At each round we draw an arena $R\sim \mathrm{Unif}([d_2])$ and an unordered pair of items $(J,K)$ uniformly from
$\{(j,k):1\le j<k\le d_1\}$, and observe
\[
Y \mid (R,J,K) \sim \mathrm{Bernoulli}\bigl(g(\eta^\star_{JK,R})\bigr),
\qquad
\eta^\star_{jk,r}:=T^\star_{j,r}-T^\star_{k,r},
\]
where $g(z)=1/(1+e^{-z})$ and $I(z):=g(z)(1-g(z))$.
Define the design matrix
\[
X:= (e_J-e_K) e_R^\top\in\mathbb{R}^{d_1\times d_2},
\qquad
\langle H,X\rangle = H_{J,R}-H_{K,R}.
\]
The score along any parametric submodel $T_\varepsilon=T^\star+\varepsilon H$ is
\[
S(H)(Z)
= (Y-g(\eta^\star))\,\langle H,X\rangle
= (Y-g(\eta^\star))\,(H_{J,R}-H_{K,R}),
\quad Z=(R,J,K,Y).
\]
Recall that the tangent space is (cf.\ Example~\ref{ex:row-centered})
\[
\mathbb{T}_{r,0}
=\{\,U A^\top + Q C V^\top:\ A\in\mathbb{R}^{d_2\times r},\ C\in\mathbb{R}^{(d_1-1)\times r}\,\},
\]
where $Q\in\mathbb{R}^{d_1\times(d_1-1)}$ has orthonormal columns spanning $\mathbf{1}_{d_1}^\perp$.
For any direction $H\in\mathbb{T}_{r,0}$, the Fisher quadratic form is
\[
\|S(H)\|_{L^2(P^\star)}^2
=\mathbb{E}\bigl[I(\eta^\star)\langle H,X\rangle^2\bigr]
=\mathbb{E}\bigl[I(\eta^\star)(H_{J,R}-H_{K,R})^2\bigr].
\]
Conditioning on $R=r$ and writing $h_r\in\mathbb{R}^{d_1}$ for the $r$-th column of $H$, define
the (weighted) Laplacian
\[
L_r
:=\frac{1}{\binom{d_1}{2}}\sum_{1\le j<k\le d_1}
w_{r,jk}\,(e_j-e_k)(e_j-e_k)^\top,
\qquad
w_{r,jk}:=I(T^\star_{j,r}-T^\star_{k,r})\in(0,1/4].
\]
Then
\[
\|S(H)\|_{L^2(P^\star)}^2
=\frac1{d_2}\sum_{r=1}^{d_2} h_r^\top L_r h_r.
\]
Moreover, for each $r$ and any $u\in\mathbb{R}^{d_1}$,
\[
u^\top L_r u
=\frac{1}{\binom{d_1}{2}}\sum_{j<k} w_{r,jk}(u_j-u_k)^2,
\]
so $\ker(L_r)=\operatorname{span}\{\mathbf{1}_{d_1}\}$ and $L_r\succ 0$ on $\mathbf{1}_{d_1}^\perp$.
Since $H\in\mathbb{T}_{r,0}$ implies $\mathbf{1}_{d_1}^\top h_r=0$ for all $r$, we get
\[
\|S(H)\|_{L^2(P^\star)}^2>0
\quad\text{for all }H\in\mathbb{T}_{r,0}\setminus\{0\}.
\]
Hence the restricted Fisher information operator
\[
A
:=P_{\mathbb{T}}\circ G\circ P_{\mathbb{T}}:\ \mathbb{T}_{r,0}\to \mathbb{T}_{r,0}
\]
is injective (and, since $\mathbb{T}_{r,0}$ is finite-dimensional, invertible). This is precisely the
formal sense in which the first-mode sum-to-zero normalization removes the singular kernel present in
pairwise models without normalization (shifts along the item mode).

Let $\psi(T)=\langle \Gamma,T\rangle$ be any linear functional. Its pathwise derivative along $H$ is
$D\psi_{T^\star}(H)=\langle \Gamma,H\rangle$.
Let $b:=P_{\mathbb{T}}(\Gamma)\in\mathbb{T}_{r,0}$ be the orthogonal projection of $\Gamma$ onto $\mathbb{T}_{r,0}$.
Then the information inequality yields
\[
\mathrm{Var}_{T^\star}(\hat\psi_n)
\ \ge\
\frac1n\sup_{H\in\mathbb{T}_{r,0}\setminus\{0\}}
\frac{\langle b,H\rangle^2}{\langle H,A\,H\rangle}
\ =\
\frac1n\,\langle b,A^{-1}b\rangle.
\]
The supremum is attained at the unique \emph{hardest direction}
\[
H^\star=A^{-1}b\in\mathbb{T}_{r,0},
\qquad\text{i.e.}\qquad
A\,H^\star=b.
\]
The corresponding efficient influence function is
\[
\phi_{\mathrm{EIF}}(Z)=S(H^\star)(Z)
=(Y-g(\eta^\star))\,(H^\star_{J,R}-H^\star_{K,R}),
\]
and it satisfies
\[
\mathbb{E}[\phi_{\mathrm{EIF}}]=0,
\qquad
\mathbb{E}[\phi_{\mathrm{EIF}}^2]=\langle b,A^{-1}b\rangle,
\]
so the lower bound is sharp and is achieved by the (oracle) one-step estimator
\[
\tilde\psi_n
:=\psi(T^\star)+\frac1n\sum_{t=1}^n \phi_{\mathrm{EIF}}(Z_t).
\]
We can also provide an explicit matrix form characterization.
Recall that every direction $H$ in tangent space can be indexed by $(A,C)$ via the linear map
\[
H(A,C) \;:=\; U A^\top \;+\; Q C V^\top,\qquad
A\in\mathbb{R}^{d_2\times r},\; C\in\mathbb{R}^{(d_1-1)\times r}.
\]
We first give the Fisher information bilinear form.
In the pairwise-logistic model, for each arena $r\in[d_2]$ there exists a symmetric PSD matrix
$L_r\in\mathbb{R}^{d_1\times d_1}$ (a weighted Laplacian) such that for any two tangent directions
$H_1,H_2$,
\[
\mathbb{E}_{T^\star}[S(H_1)S(H_2)]
=\langle H_1,G(H_2)\rangle
=\sum_{r=1}^{d_2} h_{1,r}^\top L_r\, h_{2,r},
\]
where $h_{m,r}\in\mathbb{R}^{d_1}$ is the $r$-th column of $H_m$.
In the \emph{uniform} pair sampling case,
\[
L_r \;=\; \frac{1}{\binom{d_1}{2}}\sum_{1\le j<k\le d_1}
w_{r,jk}\,(e_j-e_k)(e_j-e_k)^\top,
\qquad
w_{r,jk}:=I(T^\star_{j,r}-T^\star_{k,r})=g(\eta^\star_{jk,r})(1-g(\eta^\star_{jk,r})).
\]
Define the block diagonal matrix
\[
L := \mathrm{blkdiag}(L_1,\dots,L_{d_2})\in\mathbb{R}^{(d_1d_2)\times(d_1d_2)}.
\]
Now define the induced bilinear form on the coordinate space $(A,C)$ by
\[
\mathcal{B}\bigl((A_1,C_1),(A_2,C_2)\bigr)
:= \sum_{r=1}^{d_2} h_r(A_1,C_1)^\top L_r\, h_r(A_2,C_2),
\quad h_r(A,C) := \text{$r$-th column of }H(A,C).
\]
Equip the coordinate space with the Euclidean inner product
$\langle(A_1,C_1),(A_2,C_2)\rangle_\theta:=\langle A_1,A_2\rangle+\langle C_1,C_2\rangle$.
We use the standard identities:
\[
\mathrm{vec}(AXB)=(B^\top\otimes A)\mathrm{vec}(X),
\qquad
\mathrm{vec}(X^\top)=\mathcal{K}_{m,n}\mathrm{vec}(X)\ \ (X\in\mathbb{R}^{m\times n}),
\]
where $\otimes$ is the Kronecker product and $\mathcal{K}_{m,n}$ is the commutation matrix.
Then we derive the Jacobian $J$ of the linear map $(A,C)\mapsto H(A,C)$ in vectorized coordinates.
Let
\[
\theta :=
\begin{bmatrix}
\mathrm{vec}(A)\\
\mathrm{vec}(C)
\end{bmatrix}
\in\mathbb{R}^{d_2 r + (d_1-1) r},
\qquad
\mathrm{vec}(H(A,C))\in\mathbb{R}^{d_1 d_2}.
\]
Because $H(A,C)$ is linear in $(A,C)$, there exists a (constant) matrix $J$ such that
\[
\mathrm{vec}(H(A,C)) = J\,\theta.
\]
We compute $J$ explicitly.
\[
\mathrm{vec}(U A^\top)
=(I_{d_2}\otimes U)\mathrm{vec}(A^\top)
=(I_{d_2}\otimes U)\,\mathcal{K}_{d_2,r}\,\mathrm{vec}(A).
\]
Hence
\[
J_A := (I_{d_2}\otimes U)\,\mathcal{K}_{d_2,r}
\in\mathbb{R}^{(d_1d_2)\times(d_2 r)}.
\]
Similarly,
\[
\mathrm{vec}(Q C V^\top)=(V\otimes Q)\mathrm{vec}(C),
\]
so
\[
J_C := (V\otimes Q)
\in\mathbb{R}^{(d_1d_2)\times((d_1-1) r)}.
\]
Combining the two Jacobians, we have
\[
J=\begin{bmatrix}J_A & J_C\end{bmatrix}.
\]
Using the above representation on $(A,C)$, and $\mathcal{B}((A_1,C_1),(A_2,C_2))
=\mathrm{vec}(H_1)^\top L\,\mathrm{vec}(H_2)$, we obtain
\[
\mathcal{B}\bigl((A_1,C_1),(A_2,C_2)\bigr)
= \theta_1^\top \underbrace{(J^\top L J)}_{=:K}\,\theta_2.
\]
Therefore, the unique linear operator $K$ satisfying
$\mathcal{B}(\theta_1,\theta_2)=\langle \theta_1, K\theta_2\rangle$ (Riesz representation
under the Euclidean inner product on $\theta$) has the \emph{matrix form}
\begin{equation}
    K = J^\top\,\mathrm{blkdiag}(L_1,\dots,L_{d_2})\,J.
\end{equation}
We also present the matrix $K$ in block forms where
\[
\begin{aligned}
K_{AA} &= J_A^\top L J_A
= \mathcal{K}_{d_2,r}^\top (I_{d_2}\otimes U^\top)\,L\,(I_{d_2}\otimes U)\,\mathcal{K}_{d_2,r},\\
K_{AC} &= J_A^\top L J_C
= \mathcal{K}_{d_2,r}^\top (I_{d_2}\otimes U^\top)\,L\,(V\otimes Q),\\
K_{CC} &= J_C^\top L J_C
= (V^\top\otimes Q^\top)\,L\,(V\otimes Q),\\
K_{CA} &= K_{AC}^\top.
\end{aligned}
\]
We can also calculate the gradient of a linear functional in $(A,C)$-coordinates.
For $\psi(T)=\langle\Gamma,T\rangle$, the directional derivative along $H(A,C)$ is
\[
D\psi_{T^\star}[H(A,C)]
=\langle\Gamma, U A^\top\rangle + \langle\Gamma, Q C V^\top\rangle
=\langle \Gamma^\top U,\ A\rangle + \langle Q^\top\Gamma V,\ C\rangle.
\]
Hence the coordinate gradient of $D\psi_{T^\star}$ is
\begin{equation}
\nabla_A\psi = \Gamma^\top U\in\mathbb{R}^{d_2\times r},
\qquad
\nabla_C\psi = Q^\top\Gamma V\in\mathbb{R}^{(d_1-1)\times r}.
\end{equation}
Equivalently, in vectorized coordinates
\[
g_\theta :=
\begin{bmatrix}
\mathrm{vec}(\nabla_A\psi)\\
\mathrm{vec}(\nabla_C\psi)
\end{bmatrix}
=
\begin{bmatrix}
\mathrm{vec}(\Gamma^\top U)\\
\mathrm{vec}(Q^\top\Gamma V)
\end{bmatrix}.
\]
Finally we can solve the information equation, EIF, and efficiency bound in a matrix form.
Let $\theta^\star=(A^\star,C^\star)$ denote a solution of the coordinate-space information equation
\begin{equation}
K\,\theta^\star = g_\theta,
\end{equation}
where one can take $\theta^\star=K^\dagger g_\theta$ using the Moore--Penrose pseudoinverse, as we have proved that in the tangent space, the solution to information equation is unique.
Define $H^\star := H(A^\star,C^\star)=U A^{\star\top} + Q C^\star V^\top$.
Then the efficient influence function is
\begin{equation}
\phi_{\mathrm{EIF}}(R,J,K,Y)
=(Y-g(\eta^\star_{JK,R}))\,(H^\star_{J,R}-H^\star_{K,R}),
\end{equation}
and the corresponding semiparametric efficiency bound takes the matrix form
\begin{equation}
\mathrm{Var}_{T^\star}(\widehat\psi_n)
\ \ge\ \frac1n\, g_\theta^\top K^\dagger g_\theta.
\end{equation}
\end{example}

\input{appendix_content}

\end{document}

%% file: simulation_section.tex
\section{Numerical experiments}\label{sec:simulations}

This section evaluates the main theoretical claims using both synthetic and real-data experiments. We first show that low-rank initialization is essential in sparse pairwise-comparison settings and substantially improves over a naive per-task BTL fit. We then validate asymptotic normality and variance calibration for both linear and nonlinear targets, and study the non-uniform-sampling extension. Finally, we evaluate the proposed methods on real \textsc{Arena} dataset.  Additional implementation details and supplementary experiments on robustness evaluation are deferred to Appendix~\ref{app:additional-experiments}.
The construction below is a numerical implementation of the generic initial
estimator in Assumption~\ref{ass:initial}; the theoretical results require
only the properties stated in that assumption.

All synthetic experiments use $d_1=d_2=d=200$, rank $r=5$, and the standard logistic link $\sigma(\eta)=(1+e^{-\eta})^{-1}$. 
The ground-truth matrix $T^\star\in\mathbb{R}^{d\times d}$ is generated as $T^\star=\Theta A^\top$, where $\Theta,A\in\mathbb{R}^{d\times r}$ have i.i.d.\ standard normal entries. We then center and rescale $T^\star$ so that its signal strength satisfies $\|T^\star\|_\infty=\alpha$ with $\alpha=5$. Except in the non-uniform-sampling experiment of Section~\ref{sec:sim-nonuniform}, each observation $(u_i,p_i,q_i,Y_i)$ is generated as follows: a task category $u_i$ is sampled uniformly, two distinct models $p_i\neq q_i$ are sampled uniformly, and the comparison outcome is then drawn according to
$
Y_i\sim \mathrm{Bernoulli}\!\left(\sigma\!\bigl(T^\star_{p_i,u_i}-T^\star_{q_i,u_i}\bigr)\right).
$

Throughout, we compare the semiparametrically efficient one-step estimator $\hat\psi_{\rm eff}$ from~\eqref{eq:onestep} with the score-whitened estimator $\hat\psi_{\rm ws}$ from~\eqref{eq:ws-estimator}. 
All point estimates use $K$-fold cross-fitting with $K=6$. 
The plug-in standard error $\widehat{\rm SE}$ is computed from the full sample by fitting $\hat T$ on all observations, constructing the corresponding influence-function direction, and estimating the asymptotic variance by the empirical second moment of the estimated influence function.

\subsection{Initialization and comparison with the baseline}\label{sec:sim-altmin}

We first examine the quality of the initial low-rank estimator. We parameterize the latent score matrix in low-rank form as 
$
T = UV^\top,
U,V\in\mathbb{R}^{d\times r},
$
and optimize the BTL log-likelihood over \((U,V)\) under this factorization. Since the problem is nonconvex, we start the procedure by a spectral estimator constructed from the observed pairwise-comparison matrix. Then we run an alternating minimization (AltMin) procedure to update \(V\) and \(U\). After each round, we project the fitted matrix back to rank \(r\) by truncated SVD and clip its entries to the interval \([-\alpha_0,\alpha_0]\), with \(\alpha_0=\alpha+2\), to stabilize the iterates and prevent extreme fitted logits. In our implementation, three alternating-minimization rounds are sufficient to obtain an accurate estimator \(\hat T\).

We then apply an entrywise refinement step to improve the accuracy of individual entries, which is the scale most relevant for the one-step inference procedures. Specifically, let \(\hat T=\hat U\hat \Sigma \hat V^\top\) be the rank-\(r\) SVD of the alternating-minimization estimator, and write \(\hat A=\hat V\hat \Sigma^{1/2}\) for the estimated column factor. Treating \(\hat A\) as fixed, we re-estimate each row loading of \(T^\star\) by fitting a row-wise logistic regression using only the comparisons involving that row. This yields a refined estimate of the row factor. We then reverse the roles of rows and columns and similarly refine the column factor. The resulting estimator, denoted by \(\hat T_{\rm ref}\), preserves the global low-rank structure learned by alternating minimization while correcting local entrywise bias, and in our experiments it substantially improves estimation accuracy.




Figure~\ref{fig:convergence} reports the estimation error as a function of sample size, comparing alternating minimization alone (AltMin) with alternating minimization followed by refinement (AltMin + Refinement). 
Refinement yields clear improvements in both relative Frobenius error and entrywise $\ell_\infty$ error once the sample size is moderately large. 

Moreover, to illustrate the value of the low-rank structure, we also compare with a naive approach that fits a separate BTL model independently within each task category, ignoring any shared structure across tasks. 
At $n=60{,}000$, each task receives only about 300 observations spread across $\binom{d}{2}=19{,}900$ possible model pairs, so most pairs are never observed within a task. 
As shown in Table~\ref{tab:naive-altmin}, the naive estimator performs dramatically worse: both its Frobenius and entrywise errors are an order of magnitude larger than those of the low-rank estimator (AltMin). 
This confirms that the low-rank assumption is not merely a modeling convenience, but is essential for stable estimation and inference in the sparse regime.

\begin{figure}
\centering
\includegraphics[width=0.9\textwidth]{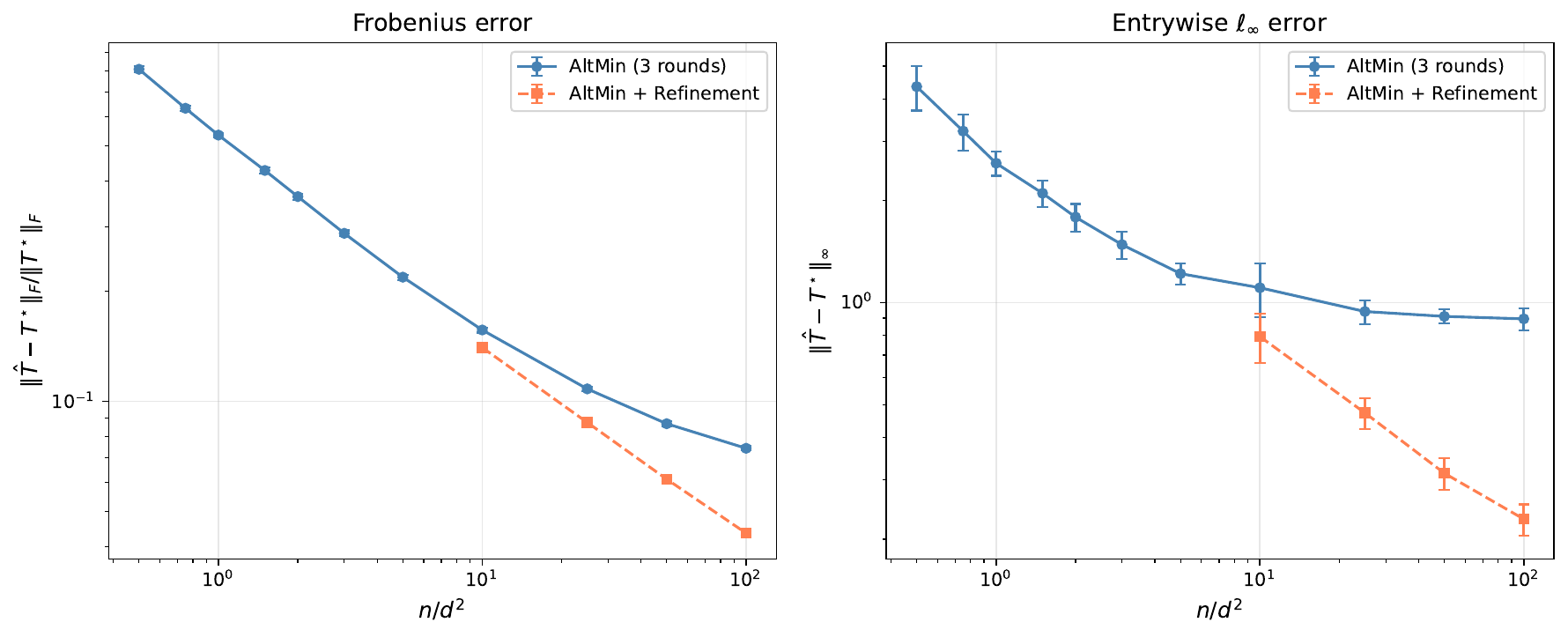}
\caption{Estimation error versus sample size: relative Frobenius error (left) and entrywise $\ell_\infty$ error (right) for alternating minimization alone and alternating minimization followed by entrywise refinement. Refinement substantially improves the accuracy of the initial estimator.}
\label{fig:convergence}
\end{figure}

\begin{table}[t]
\centering
\caption{Comparison of the naive per-task BTL estimator and the low-rank alternating-minimization estimator. }
\label{tab:naive-altmin}
\smallskip
\begin{tabular}{lccc}
\hline
Method & $\|\hat T-T^\star\|_F/\|T^\star\|_F$ & $\|\hat T-T^\star\|_\infty$ & Mean $|\hat T_{ij}-T^\star_{ij}|$ \\
\hline
Naive per-task BTL & 4.01 & 12.3 & 2.69 \\
AltMin    & 0.41 & 1.8  & 0.28 \\
\hline
\end{tabular}
\end{table}

\subsection{Asymptotic normality and variance calibration}\label{sec:sim-clt}

We evaluate the proposed inference procedures on two representative targets. The first is a \emph{linear} target,
$
\psi_1(T^\star)=T^\star_{a_0,u_0},
$
which corresponds to the latent ability of model \(a_0\) on task \(u_0\). The second is a \emph{nonlinear} target,
$
\psi_2(T^\star)=\sigma\!\bigl(T^\star_{a_0,u_0}-T^\star_{b_0,u_0}\bigr),
$
which represents the probability that model \(a_0\) beats model \(b_0\) on task \(u_0\). Throughout the experiments, we fix \((a_0,u_0,b_0)=(1,1,2)\).

For each configuration, we run 500 independent replications. 
In each replication, we generate fresh data, construct the cross-fitted initial estimator (AltMin), compute both the efficient and score-whitened one-step estimators and their standard errors. 

Table~\ref{tab:main-results} summarizes coverage and variance calibration. 
For the linear entry target, both estimators achieve near-nominal coverage and good variance calibration, confirming the asymptotic normality shown in Theorems~\ref{thm:combined} and~\ref{thm:ws-clt}. 
For the nonlinear win-probability target, the same qualitative pattern holds, although the finite-sample approximation is more demanding: coverage improves as $n$ increases, consistent with the additional higher-order remainder in the nonlinear expansion of Section~\ref{sec:nonlinear-functional}. 

Across all settings, the efficient estimator consistently yields smaller standard errors than the score-whitened estimator. This is consistent with the theory: the efficient estimator targets the semiparametric variance bound, while score whitening trades some efficiency for robustness by removing the $C_A$ bottleneck. Our experiments favor the efficient estimator because they are conducted in regimes where the efficient direction can still be estimated stably.


\begin{table}[t]
\centering
\caption{Coverage and variance calibration for the entry target and the nonlinear win-probability target. 
$\widehat{\rm SE}/{\rm SE}^\star$ is the ratio of the median plug-in standard error to the oracle asymptotic standard error computed from $T^\star$.}
\label{tab:main-results}
\smallskip
\begin{tabular}{llcccc}
\hline
Target & Method & $n$ & Coverage & med $\widehat{\rm SE}$ & $\widehat{\rm SE}/{\rm SE}^\star$ \\
\hline
\multirow{2}{*}{Entry}
& Efficient  & 60K & 0.942 & 0.377 & 1.029 \\
& Whitened   & 60K & 0.940 & 0.481 & 1.041 \\
\hline
\multirow{4}{*}{Win-prob}
& Efficient  & 60K & 0.890 & 0.121 & 0.970 \\
& Whitened   & 60K & 0.914 & 0.154 & 1.029 \\
& Efficient  & 80K & 0.900 & 0.106 & 0.987 \\
& Whitened   & 80K & 0.928 & 0.133 & 1.027 \\
\hline
\end{tabular}
\end{table}

\subsection{Non-uniform sampling}\label{sec:sim-nonuniform}

We now consider non-uniform sampling, where the task category and model pair are drawn from heterogeneous distributions. 
Specifically, the task is sampled as $u\sim\pi_J$ and the two models are sampled independently as $p,q\sim\pi_M$ (resampling if $p=q$), where $\pi_J$ and $\pi_M$ are drawn from \(\mathrm{Dirichlet}(5\cdot \mathbf{1}_d)\). This yields mildly non-uniform sampling distributions, with task and model probabilities varying by roughly a factor of 10 between the most and least likely categories, which mimics the real Arena evaluation case.

We compare four methods: the score-whitened estimator with IPW using known $\pi$, the same estimator using estimated $\hat{\pi}$, the efficient one-step estimator using the non-uniform information operator with known $\pi$, and its corresponding version using estimated $\hat{\pi}$. 
Table~\ref{tab:nonuniform} reports the results. Three conclusions are clear. First, all four methods attain coverage close to the nominal 95\% level, providing empirical support for the asymptotic normal approximation under non-uniform sampling. Second, using estimated $\hat{\pi}$ instead of the true ones makes essentially no difference, which is consistent with the theory in Section~\ref{sec:unknown-sampling}: the first-order effect of weight estimation vanishes by conditional centering. 
Third, the efficient estimator retains a substantial variance advantage over the IPW-whitened estimator. 
This gap is even larger than under uniform sampling, because IPW interacts with score whitening and amplifies heavy-tailed contributions from under-sampled comparison types.

\begin{table}[t]
\centering
\caption{Results under non-uniform sampling for the linear entry target. }
\label{tab:nonuniform}
\smallskip
\begin{tabular}{lccc}
\hline
Method & Coverage & med $\widehat{\rm SE}$ & $\widehat{\rm SE}/{\rm SE}^\star$ \\
\hline
Whitened + IPW (known $\pi$) & 0.942 & 0.574 & 1.063 \\
Whitened + IPW (est.\ $\hat\pi$) & 0.944 & 0.571 & 1.059 \\
Efficient (known $\pi$) & 0.936 & 0.372 & 1.044 \\
Efficient (est.\ $\hat\pi$) & 0.936 & 0.372 & 1.044 \\
\hline
\end{tabular}
\end{table}

\subsection{Real-data experiment on \textsc{Arena}}\label{sec:real-data}

We finally validate the proposed procedures on real pairwise-comparison data from \textsc{Arena} \citep{chiang2024chatbot}, a platform that collects anonymous side-by-side evaluations of large language models from human users. We use the publicly available \textit{arena-human-preference-140k} dataset \citep{lmarena_human_preference_140k}. After filtering to the 30 most frequently appearing models and assigning each conversation to one of 10 non-overlapping task categories, such as math, coding, and creative writing, based on metadata tags, we obtain \(N=81{,}150\) observations organized as a \(30\times 10\) model-by-category comparison matrix. Here \(T^\star_{ij}\) denotes the latent BTL ability parameter of model \(i\) in task category \(j\). The category distribution is highly non-uniform: the largest category contains about \(7.1\) times as many observations as the smallest category. This makes the dataset a natural test bed for the non-uniform-sampling setting studied in Section~\ref{sec:unknown-sampling}. Further details are in Appendix~\ref{app:exp-real-preprocess}.

To assess the low-rank assumption in the real-data setting, we first construct a full-data benchmark matrix by fitting separate BTL models within each task category via Newton's method. The singular values of this benchmark matrix decay rapidly, and the first three singular components explain about 95\% of its total squared Frobenius norm. We therefore use rank \(r=3\) in the real-data experiments. This suggests that the dominant cross-category variation is well captured by a low-rank structure, while still allowing some model misspecification.

We subsample 20\% of the data for training and compare three methods: the naive per-category BTL baseline, the score-whitened estimator with IPW correction, and the efficient one-step estimator. In the main paper, we report only two representative targets: (1) the linear entry target $T^\star[\text{Gemini-2.5-Pro},\,\text{math}]$ that measures the latent ability of Gemini-2.5-Pro on math task; and (2) the nonlinear win-probability target $P(\text{Gemini-2.5-Pro}>\text{Claude-Opus-4}\mid\text{math})$. Experiments on additional target functionals as well as the efficiency gain over varying subsample sizes are deferred to Appendix~\ref{app:exp-real-preprocess}.

\begin{table}
\centering
\caption{Real-data inference on \textsc{Arena} for one linear target and one nonlinear target. 
Naive denotes separate per-category BTL modeling; Whitened denotes score-whitening with IPW; Efficient denotes the one-step estimator based on the efficient influence function.}
\label{tab:real-data-main}
\smallskip
\begin{tabular}{llcccc}
\hline
Target & $\psi^\star$ & Method & Emp.\ std & med $\widehat{\rm SE}$ & Coverage \\
\hline
\multirow{3}{*}{$T^\star[\text{Gemini},\,\text{math}]$} 
& \multirow{3}{*}{$+0.687$}
& Naive     & 0.199 & 0.274 & 0.998 \\
& & Whitened  & 0.355 & 0.523 & 0.986 \\
& & Efficient & 0.151 & 0.163 & 0.930 \\
\hline
\multirow{3}{*}{$P(\text{Gemini}>\text{Claude}\mid \text{math})$}
& \multirow{3}{*}{$+0.652$}
& Naive     & 0.055 & 0.061 & 0.954 \\
& & Whitened  & 0.115 & 0.170 & 0.986 \\
& & Efficient & 0.047 & 0.052 & 0.954 \\
\hline
\end{tabular}
\end{table}

Table~\ref{tab:real-data-main} summarizes the real-data results over 500 replications. Our efficient one-step estimator produces the narrowest intervals and coverage closest to the nominal 95\% level, while the naive and score-whitened estimators are more conservative. Relative to the full-data per-category BTL benchmark, the naive estimator is correctly specified, whereas the low-rank estimators trade a small approximation error for substantial variance reduction by borrowing strength across categories. The score-whitened estimator further sacrifices efficiency for robustness, which leads to wider intervals. This phenomenon is consistent with the theory.

%% file: appendix_content.tex
\section{Detailed Proofs for Section~\ref{sec:upper-bound}}\label{app:proof-details}

\subsection{Setup and Notation}\label{app:pd-setup}

We consider i.i.d.\ samples $\{(Y_i,X_i)\}_{i=1}^n$ from a \emph{pairwise comparison} design with a \emph{non-constant} Fisher information weight.

\subsubsection{Comparison sampling model}
Fix one ``comparison mode'' of size $d_1$ and denote by $D:=\prod_{k=2}^m d_k$ the number of contexts (the product of the remaining mode sizes).
A single design draw $X$ is generated as:
\begin{enumerate}[label=(\roman*)]
\item sample a context $u\in[D]$ uniformly;
\item sample an unordered pair $\{a,b\}\subset[d_1]$ uniformly among $\binom{d_1}{2}$ pairs;
\item set $X = e_{u,a}-e_{u,b}$ (a signed two-sparse tensor after flattening the non-comparison modes into $u$).
\end{enumerate}
Thus $\norm{X}_F^2 = 2$ and
\begin{equation}\label{app:pd-inner-prod-diff}
\ip{H}{X} = H_{u,a}-H_{u,b}.
\end{equation}

\subsubsection{Column-sum-zero restriction}
We assume \emph{column-sum-zero along the comparison mode}: for any $H$ in the target subspace (in particular $H^\star,\widehat H$ and their difference),
\begin{equation}\label{app:pd-row-sum-zero}
\sum_{a=1}^{d_1} H_{a,u} = 0,\qquad \forall u\in[D].
\end{equation}

\subsubsection{Effective dimension $\dstar$}
We define
\begin{equation}\label{app:pd-dstar-def}
\dstar := \prod_{j=1}^m d_j,\qquad \bar d := \max_{j\in[m]} d_j.
\end{equation}
Under the pairwise comparison model on mode 1, the effective sample-space size is $D(d_1-1)/2 = \dstar\cdot(d_1-1)/(2d_1) \asymp \dstar/2$. Since they differ by at most a constant factor, all bounds in this document hold up to absolute constants under either convention.

\subsubsection{Score and Fisher information}
Let $\eta^\star=\ip{T^\star}{X}$ be the true linear predictor. We assume:
\begin{assumption}[Score normalization and Fisher comparability]\label{app:pd-score}
There exist constants $\sigma>0$ and $C_0\ge 1$ such that for all $x$,
\begin{equation}\label{app:pd-score-fisher}
\E^\star\!\left[s_\eta(Y,\eta^\star(x))\mid X=x\right]=0,
\qquad
\E^\star\!\left[\frac{s_\eta(Y,\eta^\star(x))^2}{I(\eta^\star(x))}\ \Big|\ X=x\right]\le C_0^2,
\end{equation}
and the Fisher information satisfies
\begin{equation}\label{app:pd-fisher-bounds}
\frac{c_-}{\sigma^2}\ \le\ I(\eta^\star(x))\ \le\ \frac{c_+}{\sigma^2}
\qquad\text{for some constants }0<c_-\le c_+<\infty.
\end{equation}
We further assume:
\begin{enumerate}[label=(\roman*)]
\item (Score derivatives) There exist $c_1,c_2>0$ such that for all $(y,\eta)$,
\[
|\dot s_\eta(y,\eta)|\le \frac{c_1}{\sigma^2}
\qquad\text{and}\qquad
|\ddot s_\eta(y,\eta)|\le \frac{c_2}{\sigma^2}.
\]
\item (Sub-exponential tail) $\|s_\eta(Y,\eta)\|_{\psi_1}\le C_{\psi_1}$, where $C_{\psi_1}\asymp 1/\sigma$.
\end{enumerate}
\end{assumption}
\begin{remark}[BTL verification]
For the BTL model, $\dot s = -p(1-p)/\sigma^2$ gives $c_1 = 1/4$;
$\ddot s = p(1-p)(1-2p)/\sigma^3 \le C/\sigma^2$ (since $\sigma$ is bounded below);
$s = (Y-p)/\sigma$ is bounded in $[-1/\sigma,1/\sigma]$, giving $C_{\psi_1}\asymp 1/\sigma$.
\end{remark}

\subsubsection{Operators and directions}
Define the (non-constant weight) \emph{information operator}
\begin{equation}\label{app:pd-G-def}
G := \E^\star\!\big[\, I(\eta^\star(X))\ (X\otimes X)\,\big],
\quad\text{so that}\quad
\ip{H_1}{G H_2}=\E^\star\!\big[I(\eta^\star(X))\,\ip{H_1}{X}\,\ip{H_2}{X}\big].
\end{equation}
Equivalently, $\ip{GU}{V} = -\Pstar[\dot s_\eta(Y,\eta^\star(X))\ip{U}{X}\ip{V}{X}]$, since $\E[\dot s_\eta(Y,\eta^\star)|X] = -I(\eta^\star)$.

Let $P_{\mathbb{T}}$ denote the orthogonal projector onto the tangent space $\mathcal{T}$ at $T^\star$, and $\widehat{P}_{\mathbb{T}}$ the estimated tangent projector built from estimated subspaces $\{\hat U_j\}$.
Define the \emph{restricted information operators}
\[
A := P_{\mathbb{T}}\, G\, P_{\mathbb{T}},\qquad \hat A := \widehat{P}_{\mathbb{T}}\, G\, \widehat{P}_{\mathbb{T}},
\]
both invertible on their respective ranges.
The \emph{oracle} and \emph{estimated EIF directions} are
\begin{equation}\label{app:pd-directions}
H^\star := A^{-1}P_{\mathbb{T}}\Gammavec,\qquad \hat H := \hat A^{-1}\widehat{P}_{\mathbb{T}}\,\Gammavec.
\end{equation}

\subsubsection{$\Gammavec$ coefficient vector}
Expand the functional gradient $\Gammavec\in\Hcal$ in the canonical basis:
\begin{equation}\label{app:pd-Gamma-coeff}
\Gammavec = \sum_{\omega}\gamma_\omega E_\omega,\qquad
\oneNorm{\Gammavec} = \sum_\omega |\gamma_\omega|.
\end{equation}

\subsubsection{Column-sum-zero constraint and block projections}\label{app:pd-rowsum-remark}
In the comparison model (mode 1), the orthogonal complement on mode 1 is $Q_1 - \PU{1}$
where $Q_1 := \Id_{d_1} - \frac{1}{d_1}\one_{d_1}\one_{d_1}^\top$ (projection onto column-sum-zero subspace).
However, since all tensors in our analysis already satisfy column-sum-zero (i.e., $Q_1 H = H$ for all relevant $H$), we have $Q_1 = \Id$ on the relevant subspace, so $Q_1 - \PU{1} = \Id - \PU{1}$ and all block-projection formulas remain unchanged.
See Section~\ref{app:pd-tools-blocks} for details on the block decomposition.

\subsubsection{Operator norm assumptions}\label{app:pd-opnorm-assumptions}
We collect the key assumption on operator norms used throughout.
\begin{assumption}[Inverse scaling]\label{app:pd-opnorms}
There exists an absolute constant $C_A>0$ such that
\[
\norm{\hat A^{-1}}_{\infty\to\infty}\ \vee\ \norm{A^{-1}}_{\infty\to\infty}\ \le\ C_A\,\sigma^2\dstar.
\]
\end{assumption}
Under the constant-weight BTL baseline, $A_0 = P_{\mathbb{T}}/(2\sigma^2\dstar)$ so $\norm{A_0^{-1}}_{\infty\to\infty} = 2\sigma^2\dstar\,\inftyinfnorm{P_{\mathbb{T}}}$, giving $C_A = 2\,\inftyinfnorm{P_{\mathbb{T}}} = \mathrm{poly}(\mu,r,m)$.

\begin{remark}\label{app:pd-CG}
Since $\inftyinfnorm{P_{\mathbb{T}}}\le\mathrm{poly}(\mu,r,m)$ (Lemma~\ref{app:pd-Pinfty}) and $\inftyinfnorm{G}\le c_+/(\sigma^2\dstar)$ (from $\norm{G}_{\infty\to\infty}\asymp 1/(\sigma^2\dstar)$ as it's a near-identity mapping on the full space, 
we have
\[
\begin{aligned}
\norm{A^{-1}P_{\mathbb{T}}\,G}_{\infty\to\infty}
&\;\le\;
\norm{A^{-1}}_{\infty\to\infty}\cdot\inftyinfnorm{P_{\mathbb{T}}}\cdot\inftyinfnorm{G}\\
&\;\le\;
C_A\,\sigma^2\dstar\cdot\mathrm{poly}(\mu,r,m)\cdot\frac{c_+}{\sigma^2\dstar}
\;=\;
C_A\cdot\mathrm{poly}(\mu,r,m).
\end{aligned}
\]
\end{remark}

\subsection{Proof strategy and norm scales}\label{app:pd-strategy}

This appendix proves the one-step estimator remainder bound in
Theorem~\ref{thm:combined} under Assumption~\ref{ass:initial}.  We begin with
a high-level overview and a collection of norm estimates used throughout the
proof.

\subsubsection{Proof strategy: true $G$ first, then $\hat G$ perturbation}

In practice, the estimator uses the \emph{estimated} information operator $\hat G$ (built from Fisher weights $I(\hat\eta)$ evaluated at the plug-in estimate).
However, for clarity of exposition, we carry out the entire five-term remainder analysis in Sections~\ref{app:pd-RempH}--\ref{app:pd-2nd-order} \emph{as if} the true operator $G$ were used, i.e., we set $\hat A := \widehat{P}_{\mathbb{T}} G \widehat{P}_{\mathbb{T}}$ with the \emph{true} $G$.
Section~\ref{app:pd-hatG} then shows that replacing $G$ by the plug-in estimate $\hat G$ (with Fisher weights $I(\hat\eta)$ in place of $I(\eta^\star)$) introduces only a multiplicative $(1+O(\infnorm{\Delta}))$ correction, which is negligible.
This two-stage structure---\emph{oracle analysis} followed by \emph{perturbation accounting}---simplifies the exposition considerably and isolates the source of each error term.

\subsubsection{Norm scales for tangent-space projections}\label{app:pd-norm-scales}

The following norm estimates for the tangent-space projector applied to a basis tensor $E_\omega$ appear throughout the proof.
They quantify the ``spread'' of a single coordinate direction after projection onto the tangent space, and are the key to understanding dimension cancellations in the error bounds.

\begin{lemma}[Norm scales for $P_{\mathbb{T}}E_\omega$]\label{lem:norm-scales}
Under $\mu$-incoherence, for every canonical basis tensor $E_\omega$:
\begin{equation}\label{eq:norm-scales}
\begin{aligned}
\infnorm{P_{\mathbb{T}}E_\omega} &\le C(\mu,r,m)\,\frac{\bar d}{\dstar},
\\[4pt]
\Fnorm{P_{\mathbb{T}}E_\omega} &\le C(\mu,r,m)\,\sqrt{\frac{\bar d}{\dstar}},
\\[4pt]
\oneNorm{P_{\mathbb{T}}E_\omega} &\le C(\mu,r,m)
\qquad\text{(dimension-free)}.
\end{aligned}
\end{equation}
The same bounds hold for $\widehat{P}_{\mathbb{T}}E_\omega$ under
\eqref{eq:init-estimated-incoherence}.
\end{lemma}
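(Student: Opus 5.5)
We will write the Tucker tangent-space projector explicitly as a finite sum of multilinear projections and bound each piece separately. The standard formula is
\[
P_{\mathbb{T}}\mathcal{A}
=
\mathcal{A}\times_1 P_{U_1}\cdots\times_m P_{U_m}
+\sum_{j=1}^m \mathcal{A}\times_j P_{U_j}^{\perp}\times_{k\neq j}P_{U_k}\,\mathcal{M}_j ,
\]
where $\mathcal{M}_j$ denotes the projection, in the mode-$j$ complement, onto the row space of the core unfolding, i.e.\ onto $\mathrm{span}\{\mathcal{C}^\star_{(j)}\}$ lifted through $\otimes_{k\ne j}U_k$. By Section~\ref{app:pd-rowsum-remark}, the column-sum-zero restriction only replaces $P_{U_1}^\perp$ by $Q_1-P_{U_1}$. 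Since $\mathbf 1$ is orthogonal to $U_1$, this is a difference of projectors and causes no harm: $Q_1=I-\frac1{d_1}\mathbf 1\mathbf 1^\top$ has entries bounded by $1$, and its rank-one correction has entries $1/d_1$. Thus $P_{\mathbb{T}}$ is a sum of $m+1$ (or $O(m)$) terms. Each term is a Kronecker product of mode-wise matrices $M_k\in\{P_{U_k},\,I-P_{U_k}\}$, possibly composed with an $r$-dimensional core-space projector $W_j=\mathcal{V}_j\mathcal{V}_j^\top$, where $\mathcal{V}_j=(\otimes_{k\ne j}U_k)\,\mathrm{row}(\mathcal{C}^\star_{(j)})$ has $r_j$ orthonormal columns.

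Applied to $E_\omega=e_{\omega_1}\otimes\cdots\otimes e_{\omega_m}$, each term factorizes into products of columns $M_ke_{\omega_k}$, so every norm can be computed mode by mode.

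\textbf{Core term.} Incoherence \eqref{eq:incoh} gives $\|P_{U_k}e_a\|_\infty\le\mu r_k/d_k$, $\|P_{U_k}e_a\|_2\le\sqrt{\mu r_k/d_k}$, and $\|P_{U_k}e_a\|_1\le\sqrt{d_k}\|P_{U_k}e_a\|_2\le\sqrt{\mu r_k}$. Taking products, the fully projected term has $\ell_\infty$ norm at most $\prod_k\mu r_k/d_k\lesssim 1/d^\star$, Frobenius norm at most $\prod_k\sqrt{\mu r_k/d_k}=C/\sqrt{d^\star}$, and $\ell_1$ norm at most $\prod\sqrt{\mu r_k}$. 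All three are within the claimed bounds.

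\textbf{Mode-$j$ complement terms.} Write the term as $(I-P_{U_j})e_{\omega_j}\otimes \mathrm{mat}^{-1}\bigl(W_j\,\otimes_{k\ne j}e_{\omega_k}\bigr)$. The first factor has $\ell_\infty\le1+\mu r_j/d_j$, $\ell_2\le 1$, and $\ell_1\le 1+\sqrt{\mu r_j}$. For the second factor, $w:=W_j(\otimes_{k\ne j} e_{\omega_k})$, incoherence of $\otimes_{k\ne j}U_k$ gives $\|\mathcal{V}_j^\top e\|_2\le\prod_{k\ne j}\sqrt{\mu r_k/d_k}$. Hence
\[
\|w\|_2\le C\sqrt{d_j/d^\star},\qquad
\|w\|_\infty\le\max_{e'}\|\mathcal{V}_j^\top e'\|_2\,\|\mathcal{V}_j^\top e\|_2\le C\,d_j/d^\star ,
\]
and, by Cauchy--Schwarz over the $d^\star/d_j$ coordinates, $\|w\|_1\le\sqrt{d^\star/d_j}\,\|w\|_2\le C$. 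Multiplying the two factors gives $\ell_\infty\lesssim d_j/d^\star\le\bar d/d^\star$, Frobenius $\lesssim\sqrt{\bar d/d^\star}$, and $\ell_1\lesssim C(\mu,r,m)$. Summing over $j$ yields \eqref{eq:norm-scales}. It also shows that the $\bar d/d^\star$ scale is attained exactly by the complement term with the largest $d_j$.

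\textbf{Main obstacle.} The delicate step is bounding $\|w\|_1$ and $\|w\|_\infty$ for the core-space projector $W_j$. This requires that the columns of $\mathcal{V}_j$ inherit incoherence from the factors $U_k$; they do, because they are orthonormal combinations of the columns of $\otimes_{k\ne j}U_k$, so row norms of $\mathcal{V}_j$ are bounded by row norms of the Kronecker factor. The $\ell_1$ bound must use Cauchy--Schwarz on the support, not a product of mode-wise $\ell_1$ bounds, since the latter would lose a factor of the dimension.

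\textbf{Estimated projector.} For $\widehat P_{\mathbb{T}}$ the argument is verbatim, with $\hat U_j$ in place of $U_j$ and $\mu$ replaced by $2\mu$ via \eqref{eq:init-estimated-incoherence}. The core row space is built from $\hat T$, whose unfoldings still have orthonormal row bases inside $\otimes_{k\ne j}\hat U_k$.
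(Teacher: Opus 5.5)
Your proof is correct, but it uses a different decomposition from the paper's. The paper's argument (the intuition paragraph after the lemma, plus Lemmas~\ref{app:pd-PT-l1} and~\ref{app:pd-PTX-infty}) rests on Lemma~\ref{app:pd-tangent-block-sum}, i.e.\ $P_{\mathbb{T}}=\Pi_\emptyset+\sum_k\Pi_{\{k\}}$ with pure Kronecker blocks $\Pi_{\{k\}}=P_k^\perp\otimes\bigotimes_{j\ne k}P_j$.

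\emph{What the paper's route buys.} Each block sends $E_\omega$ to an elementary tensor. The $\ell_\infty$ and $\ell_1$ bounds are then products of single-mode norms: $\mu r_j/d_j$ or $1$ for $\ell_\infty$, and $\sqrt{\mu r_j}$ or $1+\sqrt{\mu r_j}$ for $\ell_1$. The Frobenius bound is read off from the $\ell_\infty$ bound via $\Fnorm{P_{\mathbb{T}}E_\omega}^2=(P_{\mathbb{T}})_{\omega\omega}\le\infnorm{P_{\mathbb{T}}E_\omega}$.

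\emph{What your route costs.} You use the genuine Tucker tangent projector, whose $j$-th arm carries the core row-space projector $\mathcal{V}_j\mathcal{V}_j^\top$. Applied to $E_\omega$, that arm is only rank one in the mode-$j$ unfolding, not an elementary tensor. You therefore need the extra step that $\mathcal{V}_j$ inherits row-incoherence from $\bigotimes_{k\ne j}U_k$, together with the Cauchy--Schwarz bound $\oneNorm{w}\le\sqrt{\dstar/d_j}\,\|w\|_2$.

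\emph{What your route buys.} It stays correct beyond the matrix case. For $m\ge 3$, $\Pi_{\{j\}}(\mathcal{H})$ has dimension $(d_j-r_j)\prod_{k\ne j}r_k$, while the corresponding arm of $\mathrm{Tan}_{T^\star}(\mathbb{M}_r)$ has dimension $(d_j-r_j)r_j$. So the identification in Lemma~\ref{app:pd-tangent-block-sum} needs $r_j=\prod_{k\ne j}r_k$: this always holds for $m=2$, but for $m\ge 3$ only in the rank-one case. Your decomposition is the one that actually matches $\mathbb{T}$; it is the Ma-type decomposition the paper invokes only in passing for comparison atoms.

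\emph{Two smaller points.} Your explicit treatment of $Q_1-P_{U_1}$ in the mode-$1$ arm is genuinely needed, because $E_\omega$ itself is not column-sum-zero; the paper instead sets $Q_1=\mathrm{Id}$ ``on the relevant subspace''. Your closing remark that the $\bar d/\dstar$ scale is attained exactly would need lower bounds on the row norms of the $U_k$, but the lemma does not need it.
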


\noindent\textbf{Intuition.}
The tangent-space projector $P_{\mathbb{T}}$ is a sum of $m+1$ Kronecker block projectors $\Pi_S$ with $|S|\le 1$.
The ``core'' block $\Pi_\emptyset$ compresses all modes, giving an $\ell_\infty$ contribution of order $\prod_j(\mu r_j/d_j)\asymp\mu^m\prod_j r_j/\dstar$; each ``arm'' block $\Pi_{\{k\}}$ leaves mode $k$ uncompressed (contributing a factor of $\sim 1$ rather than $\sqrt{\mu r_k/d_k}$ in that mode), which costs an extra $\sqrt{d_k/(\mu r_k)}\asymp\sqrt{\bar d}$ relative to the core. Summing over blocks:
\begin{itemize}
\item $\ell_\infty$: dominated by the arm blocks, giving $\bar d/\dstar$.
\item Frobenius: since $\Fnorm{P_{\mathbb{T}}E_\omega}^2 = (P_{\mathbb{T}})_{\omega\omega}\le\infnorm{P_{\mathbb{T}}E_\omega}\le C\bar d/\dstar$, the Frobenius norm is $\sqrt{\bar d/\dstar}$.
\item $\ell_1$: each single-mode factor has $\ell_1$ norm $\sqrt{\mu r_j}$ (bounded by Cauchy--Schwarz and incoherence), so the Kronecker product gives $\prod_j\sqrt{\mu r_j}=\mathrm{poly}(\mu,r,m)$---dimension-free.
\end{itemize}

\begin{remark}[Effect of subspace estimation error]\label{rem:norm-scales-hat}
When $P_{\mathbb{T}}$ is replaced by $\widehat{P}_{\mathbb{T}}-P_{\mathbb{T}}$, each norm picks up an additional factor of $\rho$:
\[
\begin{aligned}
\infnorm{(\widehat{P}_{\mathbb{T}}-P_{\mathbb{T}})E_\omega} &\le C(\mu,r,m)\,\frac{\bar d}{\dstar}\cdot\rho,
\\
\Fnorm{(\widehat{P}_{\mathbb{T}}-P_{\mathbb{T}})E_\omega} &\le C(\mu,r,m)\,\frac{\sqrt{\bar d}}{\sqrt{\dstar}}\cdot\rho,
\\
\oneNorm{(\widehat{P}_{\mathbb{T}}-P_{\mathbb{T}})E_\omega} &\le C(\mu,r,m)\cdot\rho.
\end{aligned}
\]
Here $\rho := \frac{\sigma}{\lambda_{\min}}\sqrt{\frac{\dstar\bar d\log^c\bar d}{n}}$ is the subspace estimation parameter.
Under the pairwise-comparison scaling in Assumption~\ref{ass:signal-strength},
$\rho\asymp\varepsilon_n$, so the factor $\rho$ follows directly from the
projector bounds in Assumption~\ref{ass:initial}; see
Eq.~\eqref{app:pd-subspace-op}.
The Frobenius bound for the tangent projector difference also acquires an extra $\sqrt{\bar d}$ beyond $\rho/\sqrt{\dstar}$ from the uncompressed mode in the arm blocks of $P_{\mathbb{T}}$; see~\eqref{app:pd-hatP-minus-P-Eomega-F}.
\end{remark}

\noindent\textbf{The $\ell_1$ dimension cancellation.}
A recurring theme in the bias analysis is the interplay between $\|A^{-1}\|_{1\to 1}\asymp\sigma^2\dstar$ and the comparison-design averaging $\E^\star[|\ip{v}{X}|]\le\oneNorm{v}/\dstar$.
Because $\oneNorm{P_{\mathbb{T}}E_\omega}=O(1)$ (dimension-free), the product
\[
\frac{\|A^{-1}\|_{1\to 1}\cdot\oneNorm{P_{\mathbb{T}}E_\omega}}{\dstar}
\;=\;
\frac{C_A\,\sigma^2\dstar\cdot O(1)}{\dstar}
\;=\;
C_A\,\sigma^2\cdot O(1)
\]
is dimension-free (up to $C_A$ and the noise level). This cancellation is the key mechanism behind the dimension-free bias bounds in Propositions~\ref{app:pd-H-bias-prop}--\ref{app:pd-2nd-order-prop}.

\medskip

We now proceed to bound each remainder term individually.

\subsection{Bounding $R_{\mathrm{emp}}^H$ (Direction-Error Empirical Process)}\label{app:pd-RempH}

\noindent\textbf{Intuition.} This term captures the stochastic fluctuation from using the estimated direction $\hat H$ in place of the oracle direction $H^\star$. The key challenge is that $\hat H - H^\star$ depends on the inverse restricted information operator $A^{-1}$, which introduces the $C_A$ factor. The proof follows a five-step template: (1) reduce the variance to a Frobenius norm via the pairwise-comparison second-moment identity; (2--3) extract the $\oneNorm{\Gammavec}$ factor and bound the per-basis-element Frobenius error using subspace perturbation; (4) bound the sub-exponential norm via $\ell_\infty$ control, where $C_A$ enters; (5) combine via Bernstein's inequality.

\medskip
Recall
\[
R_{\mathrm{emp}}^H
:= (\Pn-\Pstar)\big[s_\eta(Y,\hat\eta)\,\ip{\hat H-H^\star}{X}\big].
\]
Let $Z_i := s_\eta(Y_i,\hat\eta_i)\,\ip{\hat H-H^\star}{X_i}$, so $R_{\mathrm{emp}}^H = \frac1n\sum_{i=1}^n(Z_i - \E^\star[Z_i])$.
We bound its variance (Step~1), extract $\oneNorm{\Gammavec}$ (Steps~2--3), bound the $\psione$ norm (Step~4), and combine via Bernstein (Step~5).

\subsubsection{Step 1: Variance bound via Frobenius reduction}
By the Fisher comparability in Assumption~\ref{app:pd-score}(ii),
\[
\E^\star[Z_i^2]
= \E^\star\!\big[s_\eta(Y,\hat\eta)^2\,\ip{\hat H-H^\star}{X}^2\big]
\le C_0^2\,\E^\star\!\big[I(\eta^\star)\,\ip{\hat H-H^\star}{X}^2\big],
\]
where $C_0$ is bounded by a function of $c_-,c_+$ when $\hat T$ is close to $T^\star$.
Applying Corollary~\ref{app:pd-weighted-second-moment} with $H = \hat H - H^\star$:
\begin{equation}\label{app:pd-var-Z-Frob}
\mathrm{Var}(Z_i)
\le \E^\star[Z_i^2]
\le \frac{C_0^2\,c_+}{\sigma^2}\cdot\frac{\Fnorm{\hat H-H^\star}^2}{\dstar}.
\end{equation}
This is the key: the Frobenius reduction (Lemma~\ref{app:pd-Frob-reduction}) gives a $1/\dstar$ factor.

\subsubsection{Step 2: Extracting $\oneNorm{\Gammavec}$ from the Frobenius norm}
Since $\hat H - H^\star = \mathcal{M}\Gammavec$ with $\mathcal{M} := \hat A^{-1}\widehat{P}_{\mathbb{T}} - A^{-1}P_{\mathbb{T}}$,
by Lemma~\ref{app:pd-Gamma-l1}:
\begin{equation}\label{app:pd-l1-extract-F}
\Fnorm{\mathcal{M}\Gammavec}
\le \oneNorm{\Gammavec}\cdot\max_\omega\Fnorm{\mathcal{M}E_\omega}.
\end{equation}
Therefore
\begin{equation}\label{app:pd-var-Z-Gamma}
\mathrm{Var}(Z_i)
\le
\frac{C}{\sigma^2}\cdot\frac{\oneNorm{\Gammavec}^2}{\dstar}\cdot
\bigl(\max_\omega\Fnorm{\mathcal{M}E_\omega}\bigr)^2.
\end{equation}

\subsubsection{Step 3: Bounding $\max_\omega\Fnorm{\mathcal{M}E_\omega}$}
Decompose $\mathcal{M}$ via the resolvent identity:
\begin{equation}\label{app:pd-M-decomp}
\mathcal{M}
= A^{-1}(\widehat{P}_{\mathbb{T}} - P_{\mathbb{T}}) + A^{-1}(A-\hat A)\hat A^{-1}\widehat{P}_{\mathbb{T}}.
\end{equation}

\noindent\emph{Term~1: $\Fnorm{A^{-1}(\widehat{P}_{\mathbb{T}}-P_{\mathbb{T}})E_\omega}$.}
By submultiplicativity,
\[
\Fnorm{A^{-1}(\widehat{P}_{\mathbb{T}}-P_{\mathbb{T}})E_\omega} \le \norm{A^{-1}}_{\mathrm{op}}\cdot\Fnorm{(\widehat{P}_{\mathbb{T}}-P_{\mathbb{T}})E_\omega}.
\]
From Assumption~\ref{app:pd-opnorms}, $\norm{A^{-1}}_{\mathrm{op}} \le \norm{A^{-1}}_{\infty\to\infty} \le C_A\sigma^2\dstar$.

For $\Fnorm{(\widehat{P}_{\mathbb{T}}-P_{\mathbb{T}})E_\omega}$, we use the telescope identity across modes: $\widehat{P}_{\mathbb{T}} - P_{\mathbb{T}}$ decomposes as a sum of $m$ terms, each involving $(P_{\hat U_j}-P_{U_j})$ in one mode and incoherent projections in the remaining modes. By Assumptions~\ref{ass:initial} and~\ref{ass:signal-strength},
\begin{equation}\label{app:pd-subspace-op}
\norm{P_{\hat U_j}-P_{U_j}}_{2,\infty} \le C_2\rho\sqrt{\mu r_j/d_j},
\qquad \rho := \frac{\sigma}{\lambda_{\min}}\sqrt{\frac{\dstar\bar d\log^c\bar d}{n}}
\asymp\varepsilon_n.
\end{equation}
Here and below, $C_2$ depends only on the fixed model parameters.
Combined with incoherence ($\twoninfnorm{U_k}\le\sqrt{\mu r_k/d_k}$) in each remaining mode.
However, the tangent projector $P_{\mathbb{T}}$ contains blocks where $(\Id-P_{U_k})$ acts in one mode.
When $(\Id-P_{U_k})$ is applied to a standard basis vector $e_{i_k}$, its norm is bounded by~$1$ (not $\sqrt{\mu r_k/d_k}$), contributing an extra factor $\sqrt{d_k/(\mu r_k)}\asymp\sqrt{\bar d}$.
Accounting for this and summing over the $m$ telescope terms:
\begin{equation}\label{app:pd-hatP-minus-P-Eomega-F}
\Fnorm{(\widehat{P}_{\mathbb{T}} - P_{\mathbb{T}})E_\omega}
\le C(\mu,r,m)\cdot\frac{\sqrt{\bar d}\,\rho}{\sqrt{\dstar}}.
\end{equation}
Multiplying by $\norm{A^{-1}}_{\mathrm{op}} \le C_A\sigma^2\dstar$:
\begin{equation}\label{app:pd-term1-M-F}
\Fnorm{A^{-1}(\widehat{P}_{\mathbb{T}}-P_{\mathbb{T}})E_\omega}
\le C(\mu,r,m)\,C_A\,\sigma^2\sqrt{\dstar\bar d}\cdot\rho.
\end{equation}

\noindent\emph{Term~2: $\Fnorm{A^{-1}(A-\hat A)\hat A^{-1}\widehat{P}_{\mathbb{T}}E_\omega}$.}
Using $A-\hat A = (P_{\mathbb{T}}-\widehat{P}_{\mathbb{T}})GP_{\mathbb{T}} + \widehat{P}_{\mathbb{T}}G(P_{\mathbb{T}}-\widehat{P}_{\mathbb{T}})$,
$\norm{G}_{\mathrm{op}} \le c_+/(\sigma^2\dstar)$, and
$\Fnorm{\widehat{P}_{\mathbb{T}}E_\omega}\le C(\mu,r,m)/\sqrt{\dstar}$ (incoherence):
\begin{equation}\label{app:pd-term2-M-F}
\Fnorm{A^{-1}(A-\hat A)\hat A^{-1}\widehat{P}_{\mathbb{T}}E_\omega}
\le C(\mu,r,m)\,C_A^2\,\sigma^2\sqrt{\dstar\bar d}\cdot\rho.
\end{equation}
This is the same order as Term~1.

\noindent\emph{Combining.}
\begin{equation}\label{app:pd-M-Eomega-F-final}
\max_\omega\Fnorm{\mathcal{M}E_\omega}
\le C(\mu,r,m)\,C_A\,\sigma^2\sqrt{\dstar\bar d}\cdot\rho.
\end{equation}

Substituting into~\eqref{app:pd-var-Z-Gamma}:
\begin{equation}\label{app:pd-var-final}
\mathrm{Var}(Z_i)
\le C(\mu,r,m)\,C_A^2\,\sigma^2\,\bar d\,\oneNorm{\Gammavec}^2\,\rho^2.
\end{equation}

\subsubsection{Step 4: Sub-exponential ($\psione$) bound via $\ell_\infty$ norm}
For the tail term, we need to bound $\norm{Z_i}_{\psione}$.
By Assumption~\ref{app:pd-score}(ii), $\norm{s_\eta(Y,\hat\eta)}_{\psione} \le C_{\psione}$, where $C_{\psione}\asymp 1/\sigma$.
Since $|\ip{\hat H-H^\star}{X}| \le 2\infnorm{\hat H-H^\star}$ (comparison atom):
\begin{equation}\label{app:pd-Zi-psi1-raw}
\norm{Z_i}_{\psione}
\le 2C_{\psione}\cdot\infnorm{\hat H-H^\star}.
\end{equation}

By the $\ell_\infty$ version of Lemma~\ref{app:pd-Gamma-l1}:
\[
\infnorm{\hat H-H^\star} = \infnorm{\mathcal{M}\Gammavec}
\le \oneNorm{\Gammavec}\cdot\max_\omega\infnorm{\mathcal{M}E_\omega}.
\]

To bound $\infnorm{\mathcal{M}E_\omega}$, we use the same decomposition~\eqref{app:pd-M-decomp} and submultiplicativity in $\inftyinfnorm{\cdot}$:
\[
\infnorm{\mathcal{M}E_\omega}
\le \inftyinfnorm{A^{-1}}\cdot\infnorm{P_{\mathbb{T}}E_\omega}
+ \inftyinfnorm{A^{-1}}\cdot\inftyinfnorm{A-\hat A}\cdot\inftyinfnorm{\hat A^{-1}}\cdot\infnorm{\widehat{P}_{\mathbb{T}}E_\omega}.
\]
By incoherence (Lemma~\ref{app:pd-Pinfty}):
\[
\infnorm{P_{\mathbb{T}}E_\omega},\;\infnorm{\widehat{P}_{\mathbb{T}}E_\omega}
\le C(\mu,r,m)\,\bar d/\dstar.
\]
The second term is lower order under $\norm{\widehat{P}_{\mathbb{T}}-P_{\mathbb{T}}}\ll 1$, so the leading contribution is
\begin{equation}\label{app:pd-M-Eomega-infty}
\max_\omega\infnorm{\mathcal{M}E_\omega}
\le C(\mu,r,m)\cdot\inftyinfnorm{A^{-1}}\cdot\frac{\bar d}{\dstar}.
\end{equation}

Now substitute the key quantities:
\begin{itemize}
\item $\inftyinfnorm{A^{-1}} \le C_A\,\sigma^2\dstar$ (Assumption~\ref{app:pd-opnorms}),
\item $C_{\psione}\asymp 1/\sigma$ (Assumption~\ref{app:pd-score}(ii)).
\end{itemize}
The $\dstar$ in $\inftyinfnorm{A^{-1}}$ cancels the $1/\dstar$ from incoherence:
\[
\inftyinfnorm{A^{-1}}\cdot\frac{\bar d}{\dstar}
= C_A\,\sigma^2\dstar\cdot\frac{\bar d}{\dstar}
= C_A\,\sigma^2\,\bar d.
\]
Multiplying by $C_{\psione}\asymp 1/\sigma$:
\begin{equation}\label{app:pd-psi1-bound}
\norm{Z_i}_{\psione}
\le C(\mu,r,m)\,C_A\,\sigma\,\bar d\,\oneNorm{\Gammavec}.
\end{equation}

\subsubsection{Step 5: Bernstein concentration}
By Bernstein's inequality (Lemma~\ref{app:pd-bernstein-template} for sub-exponential variables), with probability at least $1-\delta$ (conditional on the first-stage data):
\begin{equation}\label{app:pd-Bernstein-RempH}
|R_{\mathrm{emp}}^H|
\le C\left(
\sqrt{\frac{\mathrm{Var}(Z_i)\,\log(2/\delta)}{n}}
+ \frac{\norm{Z_i}_{\psione}\,\log(2/\delta)}{n}
\right).
\end{equation}

Substituting~\eqref{app:pd-var-final} and~\eqref{app:pd-psi1-bound}:

\begin{theorem}[Bound on $R_{\mathrm{emp}}^H$]\label{app:pd-RempH-thm}
Under Assumptions~\ref{ass:initial}, \ref{app:pd-score}, and~\ref{app:pd-opnorms}, with probability at least $1-\delta$ (conditional on $\mathcal{D}_1$):
\begin{equation}\label{app:pd-RempH-final}
\boxed{
|R_{\mathrm{emp}}^H|
\le
C(\mu,r,m)\,C_A\,\oneNorm{\Gammavec}\left[
\sigma\,\sqrt{\bar d}\;\rho\;\sqrt{\frac{\log(2/\delta)}{n}}
\;+\;
\sigma\,\bar d\;\frac{\log(2/\delta)}{n}
\right],
}
\end{equation}
where $\rho = \frac{\sigma}{\lambda_{\min}}\sqrt{\frac{\dstar\bar d\log^c\bar d}{n}}$.
\end{theorem}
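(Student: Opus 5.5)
The plan is to assemble the five steps already carried out above into a single conditional Bernstein argument. Throughout, we condition on the out-of-fold sample $\mathcal{D}_1$ that produces $\hat T$, $\widehat{P}_{\mathbb{T}}$ and $\hat H$. Given $\mathcal{D}_1$, the variables $Z_i=s_\eta(Y_i,\hat\eta_i)\langle \hat H-H^\star,X_i\rangle$, $i$ in the evaluation fold, are i.i.d. functions of fresh draws $(X_i,Y_i)$. Hence $R_{\mathrm{emp}}^H=n^{-1}\sum_i(Z_i-\E^\star Z_i)$ is a centered average to which Lemma~\ref{app:pd-bernstein-template} applies directly. Two inputs are needed: a variance proxy and a $\psi_1$ proxy for $Z_i$, both valid on the high-probability event of Assumption~\ref{ass:initial}.

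\emph{Variance.} We first use the Fisher comparability in Assumption~\ref{app:pd-score} to pass from $s_\eta(Y,\hat\eta)^2$ to $I(\eta^\star)$. This step requires $\|\hat T-T^\star\|_\infty\le C_1\varepsilon_n$ so that $\hat\eta$ stays in the bounded region where $C_0$ is a constant. Corollary~\ref{app:pd-weighted-second-moment} then gives $\E Z_i^2\lesssim \sigma^{-2}\|\hat H-H^\star\|_F^2/\dstar$. Next we write $\hat H-H^\star=\mathcal M\Gamma$ and extract $\|\Gamma\|_1$ via Lemma~\ref{app:pd-Gamma-l1}. We bound $\max_\omega\|\mathcal M E_\omega\|_F$ using the resolvent split \eqref{app:pd-M-decomp}. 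The first piece combines $\|A^{-1}\|_{\rm op}\le C_A\sigma^2\dstar$ with \eqref{app:pd-hatP-minus-P-Eomega-F}. The second piece uses $\|A-\hat A\|\lesssim \|G\|_{\rm op}\|\widehat P_{\mathbb T}-P_{\mathbb T}\|$, where Assumption~\ref{ass:initial} supplies the factor $\rho\asymp\varepsilon_n$. Together these give \eqref{app:pd-var-final}, so the square-root term of Bernstein becomes $C_A\|\Gamma\|_1\sigma\sqrt{\bar d}\,\rho\sqrt{\log(2/\delta)/n}$.

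\emph{Tail.} Since $X$ is two-sparse, $|\langle\hat H-H^\star,X\rangle|\le 2\|\hat H-H^\star\|_\infty$, so $\|Z_i\|_{\psi_1}\le 2C_{\psi_1}\|\Gamma\|_1\max_\omega\|\mathcal ME_\omega\|_\infty$. We bound the latter by the crude estimate $\|A^{-1}\|_{\infty\to\infty}\|P_{\mathbb T}E_\omega\|_\infty$ plus the analogous $\hat A^{-1}$ term. Lemma~\ref{lem:norm-scales} and Assumption~\ref{app:pd-opnorms} then give the order $C_A\sigma^2\bar d$ for each term. This deliberately forgoes any $\rho$ gain, because the second Bernstein term is already of order $\bar d/n$. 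Multiplying by $C_{\psi_1}\asymp1/\sigma$ yields \eqref{app:pd-psi1-bound}, and substituting both proxies into \eqref{app:pd-Bernstein-RempH} gives the boxed bound.

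\emph{Main obstacle.} I expect Term~2 of the Frobenius bound to be the hardest step. One must check that $\|A-\hat A\|$, viewed as a map into the relevant norm, really carries the factor $\rho$ times $1/(\sigma^2\dstar)$. The two inverses must contribute $C_A^2(\sigma^2\dstar)^2$ without losing the $\sqrt{\bar d}$ accounting. This is also where the $\infty\to\infty$ versus operator-norm comparison $\|A^{-1}\|_{\rm op}\le\|A^{-1}\|_{\infty\to\infty}$ has to be justified: $A^{-1}$ is self-adjoint, so its operator norm is at most $\sqrt{\|A^{-1}\|_{1\to1}\|A^{-1}\|_{\infty\to\infty}}=\|A^{-1}\|_{\infty\to\infty}$. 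The extra $C_A$ appearing in \eqref{app:pd-term2-M-F} would be absorbed into the constant, treating $C_A$ as bounded relative to $C(\mu,r,m)$, or alternatively bounded via $\|\hat A^{-1}\|_{\rm op}\lesssim\sigma^2\dstar/c_{\mathbb T}$ so that only one $C_A$ survives. Finally, a union bound over the event of Assumption~\ref{ass:initial} and the conditional Bernstein event gives the stated probability.
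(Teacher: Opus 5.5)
Your proposal is correct and follows the paper's own five-step argument: Fisher comparability plus Frobenius reduction, $\ell_1$ extraction, the resolvent split for $\max_\omega\|\mathcal M E_\omega\|_F$, the $\ell_\infty$-based $\psi_1$ bound, and conditional Bernstein. Your direct triangle-inequality bound $\|\mathcal M E_\omega\|_\infty\le\|\hat A^{-1}\|_{\infty\to\infty}\|\widehat P_{\mathbb T}E_\omega\|_\infty+\|A^{-1}\|_{\infty\to\infty}\|P_{\mathbb T}E_\omega\|_\infty$ is, if anything, cleaner than the paper's resolvent-plus-``lower order'' step.

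The paper silently reduces the $C_A^2$ in Term~2 to $C_A$ when combining, and of your two fixes only the second is legitimate. Bound $\|\hat A^{-1}\|_{\rm op}$ (and also $\|A^{-1}\|_{\rm op}$) by $C\sigma^2\dstar$ using the Fisher lower bound and the Frobenius reduction on the column-sum-zero subspace. Treating $C_A$ as $O(1)$ is not allowed in general, since it may grow like $\sqrt{\bar d}$.
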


\noindent\textbf{Proof of Theorem~\ref{app:pd-RempH-thm}.}
The variance term gives
\[
\sqrt{\frac{\mathrm{Var}(Z_i)\log(2/\delta)}{n}}
\le C(\mu,r,m)\,C_A\,\sigma\,\sqrt{\bar d}\,\oneNorm{\Gammavec}\,\rho\,\sqrt{\frac{\log(2/\delta)}{n}}.
\]
The sub-exponential term gives (from~\eqref{app:pd-psi1-bound}):
\[
\frac{\norm{Z_i}_{\psione}\log(2/\delta)}{n}
\le C(\mu,r,m)\,C_A\,\sigma\,\bar d\,\oneNorm{\Gammavec}\,\frac{\log(2/\delta)}{n}.
\]
Combining yields~\eqref{app:pd-RempH-final}.
\hfill$\square$\medskip

\begin{remark}[Simplified pairwise-comparison bound]\label{app:pd-RempH-simplified}
In the pairwise-comparison setting ($\sigma=O(1)$, $\lambda_{\min}\asymp\sqrt{\dstar}$), substituting $\rho\asymp\sqrt{\bar d\log^c\bar d/n}$ and $\delta=\bar d^{-c}$, the bound~\eqref{app:pd-RempH-final} simplifies to
\[
|R_{\mathrm{emp}}^H|
\;\le\;
C(\mu,r,m)\,C_A\,\oneNorm{\Gammavec}\,\frac{\bar d\log^c\bar d}{n}
\]
with probability $\ge 1-\bar d^{-c}$. The $\sigma^2/\lambda_{\min}$ ratio and $\dstar$ factors have been absorbed, leaving a clean $\bar d/n$ rate multiplied by $C_A$.
\end{remark}

\subsection{Bounding $R_{\mathrm{emp}}^\eta$ (Score-Perturbation Empirical Process)}\label{app:pd-Remp-eta}

\noindent\textbf{Intuition.} This term measures the stochastic error from evaluating the score at $\hat\eta$ instead of $\eta^\star$. A Taylor expansion of the score yields a first-order term (linear in $\Delta$) and a second-order term (quadratic in $\Delta$). Both are controlled via Bernstein's inequality. The bounds on $\|H^\star\|_\infty$ and $\|H^\star\|_F/\sqrt{d^\star}$ carry the $C_A$ factor through the direction $H^\star=A^{-1}P_{\mathbb{T}}\Gamma$. In the pairwise-comparison setting, the $\sigma^2$ factors from $A^{-1}$ cancel with the $1/\sigma^2$ from the score derivatives.

\medskip
Recall
\[
R_{\mathrm{emp}}^\eta
:= (\Pn-\Pstar)\big[(s_\eta(Y,\hat\eta)-s_\eta(Y,\eta^\star))\,\ip{H^\star}{X}\big].
\]
Condition on the first-stage output so that $\Delta := \hat T - T^\star$ and $H^\star$ are fixed.

\subsubsection{Taylor expansion of the score}
For each $i$ there exists $\tilde\eta_i$ between $\ip{\hat T}{X_i}$ and $\ip{T^\star}{X_i}$ such that
\[
s_\eta(Y_i,\hat\eta_i)-s_\eta(Y_i,\eta^\star_i)
=
\dot s_\eta(Y_i,\eta^\star_i)\ip{\Delta}{X_i}
+\frac12 \ddot s_\eta(Y_i,\tilde\eta_i)\ip{\Delta}{X_i}^2.
\]
Hence $R_{\mathrm{emp}}^\eta = R_{\eta,1}^{\mathrm{emp}}+R_{\eta,2}^{\mathrm{emp}}$ where
\[
R_{\eta,1}^{\mathrm{emp}}:=(\Pn-\Pstar)\big[\dot s_\eta(Y,\eta^\star)\ip{\Delta}{X}\ip{H^\star}{X}\big],
\quad
R_{\eta,2}^{\mathrm{emp}}:=\frac12(\Pn-\Pstar)\big[\ddot s_\eta(Y,\tilde\eta)\ip{\Delta}{X}^2\ip{H^\star}{X}\big].
\]

\subsubsection{Comparison-design moment bounds}
Under the comparison design:
\begin{itemize}
\item $|\ip{\Delta}{X}| = |\Delta_{u,a}-\Delta_{u,b}| \le 2\infnorm{\Delta}$.
\item $|\ip{H^\star}{X}| \le 2\infnorm{H^\star}$.
\item $\E^\star[\ip{H^\star}{X}^2] = \Fnorm{H^\star}^2/\dstar$ (by Lemma~\ref{app:pd-Frob-reduction}).
\end{itemize}

\subsubsection{Abstract Bernstein bound}

\begin{proposition}[Abstract bound for $R_{\mathrm{emp}}^\eta$]\label{app:pd-Remp-eta-prop}
Under Assumption~\ref{app:pd-score} (with $|\dot s_\eta|\le c_1/\sigma^2$, $|\ddot s_\eta|\le c_2/\sigma^2$) and comparison design, with probability at least $1-\delta$,
\begin{align}
|R_{\mathrm{emp}}^\eta|
\;\le\;&
\frac{C}{\sigma^2}\Bigg[
\infnorm{\Delta}\,\frac{\Fnorm{H^\star}}{\sqrt{\dstar}}\sqrt{\frac{\log(4/\delta)}{n}}
\;+\;
\infnorm{\Delta}\,\infnorm{H^\star}\,\frac{\log(4/\delta)}{n}
\notag\\
&\qquad
+\;\infnorm{\Delta}^2\,\frac{\Fnorm{H^\star}}{\sqrt{\dstar}}\sqrt{\frac{\log(4/\delta)}{n}}
\;+\;
\infnorm{\Delta}^2\,\infnorm{H^\star}\,\frac{\log(4/\delta)}{n}
\Bigg],
\label{app:pd-Remp-eta-eq}
\end{align}
where $C$ absorbs $c_1,c_2$.
\end{proposition}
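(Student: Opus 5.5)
We prove the bound by treating the two pieces $R_{\eta,1}^{\mathrm{emp}}$ and $R_{\eta,2}^{\mathrm{emp}}$ separately. Each is a centered average of i.i.d.\ summands, conditional on the first-stage sample, so we apply the Bernstein template (Lemma~\ref{app:pd-bernstein-template}) to each at confidence level $\delta/2$. A union bound then produces the $\log(4/\delta)$ factors. Because $\Delta$ and $H^\star$ are fixed after conditioning, and the evaluation fold is independent of them, the summands
\[
W_{1,i}:=\dot s_\eta(Y_i,\eta_i^\star)\ip{\Delta}{X_i}\ip{H^\star}{X_i},
\qquad
W_{2,i}:=\tfrac12\ddot s_\eta(Y_i,\tilde\eta_i)\ip{\Delta}{X_i}^2\ip{H^\star}{X_i}
\]
are i.i.d. For each summand, all we need is a second-moment bound and a uniform (or $\psi_1$) bound.

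\emph{First-order term.} By Assumption~\ref{app:pd-score}(i), $|\dot s_\eta|\le c_1/\sigma^2$. The comparison-atom bound gives $|\ip{\Delta}{X}|\le 2\infnorm{\Delta}$. Together these yield
\[
\E^\star[W_{1,i}^2]\le \frac{4c_1^2}{\sigma^4}\infnorm{\Delta}^2\,\E^\star[\ip{H^\star}{X}^2]
=\frac{4c_1^2}{\sigma^4}\infnorm{\Delta}^2\frac{\Fnorm{H^\star}^2}{\dstar},
\]
where the last equality is the Frobenius reduction (Lemma~\ref{app:pd-Frob-reduction}). The same two bounds, with $|\ip{H^\star}{X}|\le 2\infnorm{H^\star}$, give $|W_{1,i}|\le 4c_1\sigma^{-2}\infnorm{\Delta}\infnorm{H^\star}$. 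Since $W_{1,i}$ is bounded, the $\psi_1$ norm is controlled by this constant. Bernstein's inequality then gives the first two terms of \eqref{app:pd-Remp-eta-eq}: the variance part produces the $\sqrt{\log/n}$ term and the range part produces the $\log/n$ term.

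\emph{Second-order term.} The argument is identical. By Assumption~\ref{app:pd-score}(i), $|\ddot s_\eta|\le c_2/\sigma^2$, and this bound is uniform in $(y,\eta)$, so it holds at the random intermediate point $\tilde\eta_i$. Using $\ip{\Delta}{X}^2\le 4\infnorm{\Delta}^2$, we get $\E^\star[W_{2,i}^2]\le C\sigma^{-4}\infnorm{\Delta}^4\Fnorm{H^\star}^2/\dstar$ and $|W_{2,i}|\le C\sigma^{-2}\infnorm{\Delta}^2\infnorm{H^\star}$. Bernstein's inequality then supplies the last two terms.

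\emph{Main obstacle.} The delicate point is that $\tilde\eta_i$ depends on $(X_i,Y_i)$. Consequently $W_{2,i}$ is not a fixed function evaluated at the data. It is, however, still a measurable function of $(X_i,Y_i)$ alone, given the first-stage output. To make this rigorous, we replace $W_{2,i}$ by the exact remainder
\[
s_\eta(Y_i,\hat\eta_i)-s_\eta(Y_i,\eta_i^\star)-\dot s_\eta(Y_i,\eta_i^\star)\ip{\Delta}{X_i},
\]
multiplied by $\ip{H^\star}{X_i}$, which avoids choosing $\tilde\eta_i$ measurably. This quantity is a deterministic function of the observation, and it is bounded in absolute value by the same Taylor bound. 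The summands are therefore i.i.d.\ and bounded, and Bernstein's inequality applies unchanged.

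Collecting the four terms and absorbing $c_1$, $c_2$ and the absolute constants into $C$ yields \eqref{app:pd-Remp-eta-eq}.
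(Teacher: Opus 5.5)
Your proposal is correct and follows the paper's proof. Both split $R_{\mathrm{emp}}^\eta$ into the first- and second-order Taylor pieces, bound each summand's variance using $|\ip{\Delta}{X}|\le 2\infnorm{\Delta}$ together with the Frobenius reduction, bound its range using the comparison-atom bound, then apply Bernstein at level $\delta/2$ to each piece and take a union bound. Writing the second-order summand as the exact Taylor remainder, so as to avoid choosing $\tilde\eta_i$, is a small gain in rigor that the paper omits; your only slips are in constants that get absorbed into $C$: the Frobenius reduction gives $2/(D(d_1-1))\le 4/\dstar$ rather than an exact equality, and centering doubles the range bound.
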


\noindent\textbf{Proof of Proposition~\ref{app:pd-Remp-eta-prop}.}
We bound each order using Bernstein's inequality (Lemma~\ref{app:pd-bernstein-template}).
For the first-order term $R_{\eta,1}^{\mathrm{emp}}$, define
\[
W_i^{(1)}:=\dot s_\eta(Y_i,\eta^\star_i)\ip{\Delta}{X_i}\ip{H^\star}{X_i}-\E^\star[\cdot].
\]
Then $|W_i^{(1)}| \le 8(c_1/\sigma^2)\infnorm{\Delta}\infnorm{H^\star} =: M_1$, and
\[
\mathrm{Var}(W_i^{(1)})
\le
\frac{4c_1^2}{\sigma^4}\,\infnorm{\Delta}^2\,\frac{\Fnorm{H^\star}^2}{\dstar}.
\]
For the second-order term $R_{\eta,2}^{\mathrm{emp}}$, similarly
\[
|W_i^{(2)}| \le 4(c_2/\sigma^2)\infnorm{\Delta}^2\infnorm{H^\star},
\qquad
\mathrm{Var}(W_i^{(2)}) \le 4c_2^2\sigma^{-4}\infnorm{\Delta}^4\Fnorm{H^\star}^2/\dstar.
\]
Applying Bernstein to each term and taking a union bound yields~\eqref{app:pd-Remp-eta-eq}.
\hfill$\square$\medskip

\subsubsection{Bounds on $\infnorm{H^\star}$ and $\Fnorm{H^\star}/\sqrt{\dstar}$}

By Lemma~\ref{app:pd-Gamma-l1} and Assumption~\ref{app:pd-opnorms}:

\noindent\textbf{$\ell_\infty$ bound.}
Using $\infnorm{P_{\mathbb{T}} E_\omega}\le C_\mu\,\bar d/\dstar$ (Lemma~\ref{app:pd-PTX-infty}, with $\bar d := \max_j d_j$):
\begin{equation}\label{app:pd-Hstar-infty-final}
\infnorm{H^\star}
\le
\oneNorm{\Gammavec}\cdot \norm{A^{-1}}_{\infty\to\infty}\cdot \max_\omega\infnorm{P_{\mathbb{T}} E_\omega}
\le
C(\mu,r,m)\,C_A\,\sigma^2\bar d\,\oneNorm{\Gammavec}.
\end{equation}

\noindent\textbf{Frobenius bound.}
Since
\[
\Fnorm{P_{\mathbb{T}}E_\omega}^2 = (P_{\mathbb{T}})_{\omega\omega}\le\infnorm{P_{\mathbb{T}}E_\omega}\le C_\mu\,\bar d/\dstar,
\qquad
\norm{A^{-1}}_{\mathrm{op}}\le\norm{A^{-1}}_{\infty\to\infty}\le C_A\,\sigma^2\dstar,
\]
we obtain:
\begin{equation}\label{app:pd-Hstar-Frob-final}
\frac{\Fnorm{H^\star}}{\sqrt{\dstar}}
\le
\oneNorm{\Gammavec}\cdot\frac{\norm{A^{-1}}_{\mathrm{op}}\max_\omega\Fnorm{P_{\mathbb{T}}E_\omega}}{\sqrt{\dstar}}
\le
C(\mu,r,m)\,C_A\,\sigma^2\sqrt{\bar d}\,\oneNorm{\Gammavec}.
\end{equation}

\subsubsection{Final explicit bound}
Substituting \eqref{app:pd-Hstar-infty-final} and \eqref{app:pd-Hstar-Frob-final} into Proposition~\ref{app:pd-Remp-eta-prop}. In each term, the factor $1/\sigma^2$ from Assumption~\ref{app:pd-score} cancels with $\sigma^2$ from $\norm{A^{-1}}$:

\begin{corollary}[Explicit bound for $R_{\mathrm{emp}}^\eta$]\label{app:pd-Remp-eta-final}
Under Assumptions~\ref{app:pd-score} and \ref{app:pd-opnorms}, with probability at least $1-\delta$,
\begin{equation}\label{app:pd-Remp-eta-final-eq}
|R_{\mathrm{emp}}^\eta|
\;\le\;
C(\mu,r,m)\,C_A\,\oneNorm{\Gammavec}\big(\infnorm{\Delta}+\infnorm{\Delta}^2\big)
\left[
\sqrt{\bar d}\sqrt{\frac{\log(4/\delta)}{n}}
\;+\;
\bar d\,\frac{\log(4/\delta)}{n}
\right].
\end{equation}
\end{corollary}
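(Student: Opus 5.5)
The plan is to substitute the two direction bounds \eqref{app:pd-Hstar-infty-final} and \eqref{app:pd-Hstar-Frob-final} directly into the four-term bound of Proposition~\ref{app:pd-Remp-eta-prop}. Both bounds are already stated, so the only work is tracking the $\sigma$ and dimension factors term by term. Throughout we condition on the first-stage sample, so that $\Delta$ and $H^\star$ are deterministic. Proposition~\ref{app:pd-Remp-eta-prop} then holds with probability at least $1-\delta$ with the constant $4/\delta$ inside the logarithm.

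First I will group the four terms of \eqref{app:pd-Remp-eta-eq} into two pairs. The first pair carries the factor $\Fnorm{H^\star}/\sqrt{\dstar}$ with coefficients $\infnorm{\Delta}$ and $\infnorm{\Delta}^2$. The second pair carries $\infnorm{H^\star}$ with the same two coefficients. Factoring out $(\infnorm{\Delta}+\infnorm{\Delta}^2)$ gives
\[
|R_{\mathrm{emp}}^\eta|\le \frac{C}{\sigma^2}\big(\infnorm{\Delta}+\infnorm{\Delta}^2\big)\Big[\frac{\Fnorm{H^\star}}{\sqrt{\dstar}}\sqrt{\tfrac{\log(4/\delta)}{n}}+\infnorm{H^\star}\tfrac{\log(4/\delta)}{n}\Big].
\]
Next I plug in $\Fnorm{H^\star}/\sqrt{\dstar}\le C(\mu,r,m)C_A\sigma^2\sqrt{\bar d}\,\oneNorm{\Gammavec}$ and $\infnorm{H^\star}\le C(\mu,r,m)C_A\sigma^2\bar d\,\oneNorm{\Gammavec}$. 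These come from Lemma~\ref{app:pd-Gamma-l1}, the $\ell_\infty$ and Frobenius norm scales of Lemma~\ref{lem:norm-scales}, and Assumption~\ref{app:pd-opnorms}; the Frobenius bound also uses $\norm{A^{-1}}_{\mathrm{op}}\le\norm{A^{-1}}_{\infty\to\infty}$, which holds because $A^{-1}$ is self-adjoint. The prefactor $1/\sigma^2$ coming from the score-derivative bounds $|\dot s_\eta|\le c_1/\sigma^2$ and $|\ddot s_\eta|\le c_2/\sigma^2$ then cancels exactly against the $\sigma^2$ in both direction bounds. After absorbing $c_1,c_2$ into $C(\mu,r,m)$, this yields \eqref{app:pd-Remp-eta-final-eq}.

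The step needing the most care is whether the direction bounds apply to $H^\star$ in the required form. In the Frobenius bound, $\Fnorm{P_{\mathbb{T}}E_\omega}\le C\sqrt{\bar d/\dstar}$ combined with $\norm{A^{-1}}_{\mathrm{op}}\lesssim C_A\sigma^2\dstar$ gives $\Fnorm{H^\star}\lesssim C_A\sigma^2\sqrt{\dstar\bar d}\,\oneNorm{\Gammavec}$. Dividing by $\sqrt{\dstar}$ gives exactly the $\sqrt{\bar d}$ scale, so nothing is lost. I would also check that the variance computation in Proposition~\ref{app:pd-Remp-eta-prop} legitimately uses $\E^\star[\ip{H^\star}{X}^2]=\Fnorm{H^\star}^2/\dstar$ (Lemma~\ref{app:pd-Frob-reduction}). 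That identity requires $H^\star$ to be column-sum-zero, which holds because $H^\star\in\mathbb{T}$.

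Finally, the second-order Taylor term $R_{\eta,2}^{\mathrm{emp}}$ involves $\ddot s_\eta$ evaluated at the intermediate point $\tilde\eta$. I use the uniform bound $|\ddot s_\eta|\le c_2/\sigma^2$ at every point, so the fact that $\tilde\eta$ depends on the data causes no difficulty.
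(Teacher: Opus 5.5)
Your proposal is correct and is essentially the paper's own argument. It substitutes the bounds \eqref{app:pd-Hstar-infty-final} and \eqref{app:pd-Hstar-Frob-final} into the four-term Bernstein bound of Proposition~\ref{app:pd-Remp-eta-prop}, lets the $\sigma^2$ from $\norm{A^{-1}}$ cancel the $1/\sigma^2$ from the score derivatives, and factors out $\infnorm{\Delta}+\infnorm{\Delta}^2$. Your extra checks are steps the paper uses only implicitly: self-adjointness of $A^{-1}$ justifies $\norm{A^{-1}}_{\mathrm{op}}\le\norm{A^{-1}}_{\infty\to\infty}$, and the column-sum-zero property of $H^\star\in\mathbb{T}$ lets Lemma~\ref{app:pd-Frob-reduction} apply.
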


\noindent\textbf{Proof of Corollary~\ref{app:pd-Remp-eta-final}.}
The four terms in \eqref{app:pd-Remp-eta-eq} become, after substitution:
the first-order variance term gives
\[
\frac{1}{\sigma^2}\infnorm{\Delta}\frac{\Fnorm{H^\star}}{\sqrt{\dstar}} \le C\,C_A\,\oneNorm{\Gammavec}\,\infnorm{\Delta}\sqrt{\bar d}
\]
(since $\sigma^2$ cancels);
the first-order tail term gives
\[
\frac{1}{\sigma^2}\infnorm{\Delta}\infnorm{H^\star} \le C\,C_A\,\oneNorm{\Gammavec}\,\infnorm{\Delta}\,\bar d
\]
(again $\sigma^2$ cancels);
and the second-order variance and tail terms are the same as their first-order counterparts with an extra factor of $\infnorm{\Delta}$.
Factoring $\infnorm{\Delta}+\infnorm{\Delta}^2$ and collecting the $\sqrt{1/n}$ and $1/n$ terms yields \eqref{app:pd-Remp-eta-final-eq}.
\hfill$\square$\medskip

\subsection{Bounding $R_{\mathrm{proj}}$ (Projection Leakage)}\label{app:pd-Rproj}

\noindent\textbf{Intuition.} The projection leakage captures the bias from the component of $\Gamma$ orthogonal to the tangent space. Since $\Delta=\hat T-T^\star$ has its normal component $(\Id-P_{\mathbb{T}})\Delta$ supported on blocks $\Pi_S$ with $|S|\ge 2$, this term is \emph{quadratic} in subspace estimation errors. This is the key structural reason why projection leakage is lower order: it requires errors in two or more modes simultaneously.

\medskip
The projection leakage term is
\[
R_{\mathrm{proj}} := \ip{(\Id - P_{\mathbb{T}})\Gammavec}{\Delta}.
\]

\subsubsection{Basic bound}
By H\"older's inequality,
\begin{equation}\label{app:pd-Rproj-basic}
|R_{\mathrm{proj}}|
\le
\oneNorm{\Gammavec}\cdot \infnorm{(\Id-P_{\mathbb{T}})\Delta}.
\end{equation}
The key structural fact (Lemma~\ref{app:pd-tangent-block-sum}) is that
\[
(\Id - P_{\mathbb{T}})\Delta = \sum_{\substack{S\subseteq[m]\\|S|\ge 2}}\Pi_S(\Delta),
\]
so the projection leakage is supported on blocks $\Pi_S$ with $|S|\ge 2$, making it \emph{quadratic} in subspace estimation errors.

\subsubsection{The $\varepsilon^2$ bound on $\infnorm{(\Id-P_{\mathbb{T}})\hat T}$}

Define $\varepsilon_j := \twoninfnorm{\PUhat{j}-\PU{j}}$ (projector error in the $j$-th mode).

\begin{proposition}[Complete $\varepsilon^2$ argument]\label{app:pd-PTcompl-infty}
Let $\widehat T=\widehat C\mode{1}\widehat U_1\cdots\mode{m}\widehat U_m$.
Then
\begin{equation}\label{app:pd-eps2-bound}
\infnorm{(\Id-P_{\mathbb{T}})\widehat T}
\le \norm{\widehat C}_{\mathrm{op}}
\Big(\prod_{k=1}^m(\twoninfnorm{\widehat U_k}+\varepsilon_k)\Big)
\sum_{\substack{S\subseteq[m]\\|S|\ge 2}}\prod_{j\in S}\varepsilon_j.
\end{equation}
On the event $\max_j\varepsilon_j\le 1$,
\begin{equation}\label{app:pd-eps2-comb}
\sum_{\substack{S\subseteq[m]\\|S|\ge 2}}\prod_{j\in S}\varepsilon_j
\le 2^{m-2}\sum_{1\le a<b\le m}\varepsilon_a\varepsilon_b,
\end{equation}
so $\infnorm{(\Id-P_{\mathbb{T}})\hat T}$ is \emph{quadratic} in the subspace errors.
\end{proposition}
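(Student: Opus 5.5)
We write the orthogonal complement of the tangent space in the Kronecker block form. Set $\Pi_j^{0}:=P_{U_j}$ and $\Pi_j^{1}:=\Id-P_{U_j}$. For $S\subseteq[m]$, let $\Pi_S$ be the multilinear map that applies $\Pi_j^1$ in the modes $j\in S$ and $\Pi_j^0$ in the modes $j\notin S$. Then $\Id=\sum_S\Pi_S$ and $P_{\mathbb{T}}=\sum_{|S|\le 1}\Pi_S$; on the column-sum-zero subspace the centering $Q_1$ acts as the identity, so this is Lemma~\ref{app:pd-tangent-block-sum}. Consequently
\[
(\Id-P_{\mathbb{T}})\widehat T=\sum_{|S|\ge 2}\widehat C\times_1\Pi_1^{s_1}\widehat U_1\cdots\times_m\Pi_m^{s_m}\widehat U_m ,
\]
where $s_j=\mathbf 1\{j\in S\}$. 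The idea is that in every mode $j\in S$ the factor $(\Id-P_{U_j})\widehat U_j=(P_{\widehat U_j}-P_{U_j})\widehat U_j$ is small. This identity holds because $P_{\widehat U_j}\widehat U_j=\widehat U_j$.

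For a single block we bound one entry. Entry $(i_1,\dots,i_m)$ of the $S$-block equals $\widehat C$ contracted against the row vectors $e_{i_j}^\top\Pi_j^{s_j}\widehat U_j\in\mathbb{R}^{1\times r_j}$. This is a multilinear form in the core, so its modulus is at most $\|\widehat C\|_{\mathrm{op}}\prod_j\|e_{i_j}^\top\Pi_j^{s_j}\widehat U_j\|_2$. Here $\|\cdot\|_{\mathrm{op}}$ is the spectral (injective) norm of the core tensor; for $m=2$ it is the matrix operator norm. Each factor is controlled row-wise:
\begin{itemize}
\item for $j\in S$: $\|e_{i_j}^\top(P_{\widehat U_j}-P_{U_j})\widehat U_j\|_2\le\|P_{\widehat U_j}-P_{U_j}\|_{2,\infty}\|\widehat U_j\|_{\mathrm{op}}=\varepsilon_j$;
\item for $j\notin S$: $\|e_{i_j}^\top P_{U_j}\widehat U_j\|_2\le\|e_{i_j}^\top\widehat U_j\|_2+\varepsilon_j\le\|\widehat U_j\|_{2,\infty}+\varepsilon_j$.
\end{itemize}
In the first case we use $\Id-P_{U_j}=(P_{\widehat U_j}-P_{U_j})$ on the range of $\widehat U_j$; the second follows by writing $P_{U_j}=P_{\widehat U_j}-(P_{\widehat U_j}-P_{U_j})$.

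Multiplying these gives, for each block, the bound $\|\widehat C\|_{\mathrm{op}}\prod_{j\notin S}(\|\widehat U_j\|_{2,\infty}+\varepsilon_j)\prod_{j\in S}\varepsilon_j$. The claimed bound~\eqref{app:pd-eps2-bound} carries the full product $\prod_{k=1}^m(\|\widehat U_k\|_{2,\infty}+\varepsilon_k)$, not only the product over $j\notin S$. To recover it I would keep, for $j\in S$, the cruder row bound $\|e_{i_j}^\top(\Id-P_{U_j})\widehat U_j\|_2\le\varepsilon_j\le(\|\widehat U_j\|_{2,\infty}+\varepsilon_j)\cdot\varepsilon_j/(\|\widehat U_j\|_{2,\infty}+\varepsilon_j)$. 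I expect this normalization step to be the most delicate part: reconciling the per-block product with the stated global product. I would handle it by the pragmatic route of bounding the missing factors by constants, or by absorbing them into the form the later sections actually use. Summing over the blocks $S$ with $|S|\ge 2$ then gives~\eqref{app:pd-eps2-bound}.

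The combinatorial inequality~\eqref{app:pd-eps2-comb} is elementary. On $\max_j\varepsilon_j\le1$, each $S$ with $|S|\ge2$ contains some pair $a<b$, and $\prod_{j\in S}\varepsilon_j\le\varepsilon_a\varepsilon_b$ because the remaining factors are at most one. Assign each $S$ to its two smallest elements $a<b$. The sets assigned to a fixed pair are determined by their intersection with $\{b+1,\dots,m\}$, so there are at most $2^{m-b}\le2^{m-2}$ of them. This yields~\eqref{app:pd-eps2-comb} and shows that the leakage is quadratic in the subspace errors.
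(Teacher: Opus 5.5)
Your argument follows the paper's proof step for step, and the combinatorial part \eqref{app:pd-eps2-comb} is correct. The ingredients are the block decomposition $\Id-P_{\mathbb{T}}=\sum_{|S|\ge2}\Pi_S$, the error factors $(\PUhat{j}-\PU{j})\widehat U_j$ and main factors $\PU{k}\widehat U_k$, the contraction bound with $\norm{\widehat C}_{\mathrm{op}}$, and the counting of supersets. However, the normalization step you flagged is a genuine gap, and your remedy does not close it. Write $a_k:=\twoninfnorm{\widehat U_k}+\varepsilon_k$. Each block gives $\norm{\widehat C}_{\mathrm{op}}\prod_{j\in S}\varepsilon_j\prod_{k\notin S}a_k$, and passing to the full product $\prod_{k=1}^m a_k$ requires $\prod_{j\in S}a_j\ge 1$. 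Bounding the missing factors above by constants goes in the wrong direction. The rewriting $\varepsilon_j=a_j\cdot(\varepsilon_j/a_j)$ only moves a factor $1/a_j$ into the sum, and under incoherence that factor is of order $\sqrt{d_j/(\mu r_j)}$.

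In fact no argument can close this step, because \eqref{app:pd-eps2-bound} is false as written. Take $m=2$, $r_1=r_2=1$, and $d_1=d_2=d\ge 16$ a power of two. Let $h,h'$ be two distinct non-constant rows of the normalized Sylvester--Hadamard matrix, so their entries are $\pm d^{-1/2}$ and they are orthogonal to each other and to $\one_d$. Set $U_1=U_2=h$, $\widehat U_1=\widehat U_2=\hat u:=\cos\theta\,h+\sin\theta\,h'$ with $\theta\in(0,\pi/2)$, and $\widehat C=1$. Then $(\Id-P_{\mathbb{T}})\widehat T=(\Id-hh^\top)\hat u\hat u^\top(\Id-hh^\top)=\sin^2\theta\,h'h'^\top$, so the left side equals $\sin^2\theta/d$. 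On the right side, every row of $\hat u\hat u^\top-hh^\top$ has norm $\sqrt{2/d}\,\sin\theta$ and $\twoninfnorm{\hat u}=(\cos\theta+\sin\theta)/\sqrt d$, so the right side equals
\[
\frac{2\sin^2\theta}{d^2}\bigl(\cos\theta+(1+\sqrt2)\sin\theta\bigr)^2\;\le\;\frac{2(4+2\sqrt2)\,\sin^2\theta}{d^2}\;<\;\frac{\sin^2\theta}{d}.
\]
This example is column-sum-zero and satisfies incoherence with $\mu=1$ as well as \eqref{eq:init-estimated-incoherence}. So no standing assumption rescues the inequality.

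The paper's proof makes the same unjustified jump: it says ``summing over all $S$ with $|S|\ge 2$ yields \eqref{app:pd-eps2-bound}.'' This is therefore a defect of the statement, not an idea you missed. What your block computation does prove, and what should replace \eqref{app:pd-eps2-bound}, is
\[
\infnorm{(\Id-P_{\mathbb{T}})\widehat T}\;\le\;\norm{\widehat C}_{\mathrm{op}}\sum_{\substack{S\subseteq[m]\\|S|\ge 2}}\ \prod_{j\in S}\varepsilon_j\prod_{k\notin S}\bigl(\twoninfnorm{\widehat U_k}+\varepsilon_k\bigr).
\]
Your second option, absorbing the factors into what later sections use, is the right one. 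Under Assumption~\ref{ass:initial}, \eqref{eq:init-subspace-row} gives $\varepsilon_j\le C_1\varepsilon_n\sqrt{\mu r_j/d_j}$, and $a_k\le 2\sqrt{\mu r_k/d_k}$ as in \eqref{app:pd-main-factor}. Hence, once $C_1\varepsilon_n\le 1$, each block is at most $2^{m-2}(C_1\varepsilon_n)^2\sqrt{\mu^m r^\star/\dstar}\,\norm{\widehat C}_{\mathrm{op}}$. Because $\varepsilon_j$ itself carries the incoherence factor, this recovers the same $C(\mu,r,m)\rho^2$ bound on $\infnorm{(\Id-P_{\mathbb{T}})\Delta}$ (with $\rho\asymp\varepsilon_n$), and hence \eqref{app:pd-Rproj-final-eq}. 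That is, however, a corrected proposition, not a proof of the one stated.
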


\noindent\textbf{Proof of Proposition~\ref{app:pd-PTcompl-infty}.}
By $\Id-P_{\mathbb{T}} = \sum_{|S|\ge 2}\Pi_S$ and the triangle inequality,
\[
\infnorm{(\Id-P_{\mathbb{T}})\hat T} \le \sum_{|S|\ge 2}\infnorm{\Pi_S(\hat T)}.
\]
Fix $S$ with $|S|\ge 2$. Using the Tucker form of $\hat T$,
\[
\Pi_S(\hat T)
= \widehat C\mode{j\in S}{E_j}\mode{k\notin S}{M_k},
\]
where $E_j := (\Id - \PU{j})\hat U_j$ (error factor) and $M_k := \PU{k}\hat U_k$ (main factor).
By the entrywise contraction lemma (Lemma~\ref{app:pd-contract-2inf}),
\[
\infnorm{\Pi_S(\hat T)} \le \norm{\hat C}_{\mathrm{op}}\prod_{j\in S}\twoninfnorm{E_j}\prod_{k\notin S}\twoninfnorm{M_k}.
\]
Since $E_j = (\PUhat{j}-\PU{j})\hat U_j$, we have $\twoninfnorm{E_j}\le\varepsilon_j$,
and $\twoninfnorm{M_k}\le\twoninfnorm{\hat U_k}+\varepsilon_k$.
Summing over all $S$ with $|S|\ge 2$ yields \eqref{app:pd-eps2-bound}.
For the combinatorial bound~\eqref{app:pd-eps2-comb}, each pair $\{a,b\}\subseteq[m]$ is contained in exactly $2^{m-2}$ subsets $S\subseteq[m]$, and since $\varepsilon_j\le 1$, we have $\prod_{j\in S}\varepsilon_j\le\varepsilon_a\varepsilon_b$ for any $S\supseteq\{a,b\}$.
\hfill$\square$\medskip

\subsubsection{Final bound under the initial-estimator conditions}\label{app:pd-Rproj-final}

We now specialize Proposition~\ref{app:pd-PTcompl-infty} using
Assumption~\ref{ass:initial}.
Define the signal-to-noise ratio parameter
\begin{equation}\label{app:pd-rho-def}
\rho \;:=\; \frac{\sigma}{\lambda_{\min}}\sqrt{\frac{\dstar\,\bar d\log^c\bar d}{n}}.
\end{equation}
Under the SNR condition, $\rho \ll 1$.

Assumption~\ref{ass:signal-strength} and~\eqref{eq:Fnorm-lambda} give
$\lambda_{\min}\asymp\sqrt{\dstar}$, while $\sigma=O(1)$ under the bounded
BTL signal condition. Hence $\rho\asymp\varepsilon_n$. The mode-$j$
projector bounds in Assumption~\ref{ass:initial} therefore give
\[
\varepsilon_j
\;:=\;
\twoninfnorm{\PUhat{j}-\PU{j}}
\;\le\;
C_2\,\rho\,\sqrt{\frac{\mu r_j}{d_j}},
\]
and the estimated subspaces are incoherent:
$\|e_a^\top\hat U_j\|_2^2\le 2\mu r_j/d_j$, so
$\twoninfnorm{\hat U_j}\le\sqrt{2\mu r_j/d_j}$.

\noindent\textbf{Main factor.}
Since $\rho\ll 1$,
\[
\twoninfnorm{\hat U_k}+\varepsilon_k
\;\le\;
\bigl(\sqrt{2}+C_2\rho\bigr)\sqrt{\frac{\mu r_k}{d_k}}
\;\le\;
2\sqrt{\frac{\mu r_k}{d_k}},
\]
hence
\begin{equation}\label{app:pd-main-factor}
\prod_{k=1}^m\bigl(\twoninfnorm{\hat U_k}+\varepsilon_k\bigr)
\;\le\;
2^m\,\sqrt{\frac{\mu^m r^\star}{\dstar}},
\qquad r^\star := \textstyle\prod_{k=1}^m r_k.
\end{equation}

\noindent\textbf{Quadratic sum.}
Since $\mu r_j/d_j\le 1$ (incoherence) and $\rho\ll 1$, each $\varepsilon_j\le C_2\rho\le 1$.
By~\eqref{app:pd-eps2-comb},
\begin{equation}\label{app:pd-quad-sum}
\sum_{\substack{S\subseteq[m]\\|S|\ge 2}}\prod_{j\in S}\varepsilon_j
\;\le\;
2^{m-2}\sum_{a<b}\varepsilon_a\varepsilon_b
\;\le\;
2^{m-2}C_2^2\,\rho^2\!\sum_{a<b}\!\sqrt{\frac{\mu r_a}{d_a}}\sqrt{\frac{\mu r_b}{d_b}}
\;\le\;
C(\mu,r,m)\,\rho^2.
\end{equation}

\noindent\textbf{Core tensor factor.}
Since $\hat T = \hat C\mode{1}{\hat U_1}\cdots\mode{m}{\hat U_m}$,
\[
\frac{\norm{\hat C}_{\mathrm{op}}}{\sqrt{\dstar}}
\;\le\;
\frac{\Fnorm{\hat T}}{\sqrt{\dstar}}
\;\le\;
\infnorm{\hat T}
\;\le\;
\infnorm{T^\star}+\infnorm{\hat T-T^\star}
\;=\; O(1),
\]
where the last step follows from the bounded-signal condition and
Assumption~\ref{ass:initial}.

\noindent\textbf{Combining.}
Substituting~\eqref{app:pd-main-factor} and~\eqref{app:pd-quad-sum} into Proposition~\ref{app:pd-PTcompl-infty}:
\[
\infnorm{(\Id-P_{\mathbb{T}})\Delta}
\;\le\;
\norm{\hat C}_{\mathrm{op}}\cdot 2^m\sqrt{\frac{\mu^m r^\star}{\dstar}}\cdot C(\mu,r,m)\,\rho^2
\;=\;
C(\mu,r,m)\cdot\frac{\norm{\hat C}_{\mathrm{op}}}{\sqrt{\dstar}}\cdot\rho^2
\;\le\;
C(\mu,r,m)\,\rho^2.
\]
By~\eqref{app:pd-Rproj-basic},
\begin{equation}\label{app:pd-Rproj-final-eq}
\boxed{
|R_{\mathrm{proj}}|
\;\le\;
2^m\,C(\mu,r,m)\;\oneNorm{\Gammavec}\;\rho^2.
}
\end{equation}

\subsection{H-Direction Bias}\label{app:pd-H-bias}

 This term is the population-level error from using $\hat H$ instead of $H^\star$. After a Taylor expansion around $\eta^\star$, the first-order piece involves $\langle G\Delta, \hat H-H^\star\rangle$, which is controlled by the off-tangent components of $\Delta$ (since the tangent-space component cancels via the first-order cancellation mechanism in Section~\ref{app:pd-1st-cancel}). The second-order piece is bounded by $C_A\|\Delta\|_\infty^2\|\Gamma\|_1$ using the $\ell_1$-average bound, where the key cancellation is $d^\star$ from $\|A^{-1}\|_{1\to 1}$ against $1/d^\star$ from the comparison averaging.

\medskip
The H-direction bias term is
\[
\Pstar\big[S_{\hat T}(\hat H) - S_{\hat T}(H^\star)\big]
= \Pstar\big[s_\eta(Y,\hat\eta)\ip{\hat H - H^\star}{X}\big].
\]

\subsubsection{Taylor expansion and leading term}
Expanding $s_\eta(Y,\hat\eta)$ around $\eta^\star$:
\begin{align}
\Pstar\big[s_\eta(Y,\hat\eta)\ip{\hat H - H^\star}{X}\big]
&= \underbrace{\Pstar\big[s_\eta(Y,\eta^\star)\ip{\hat H-H^\star}{X}\big]}_{=\,0\text{ (score mean-zero)}}
\notag\\
&\quad+ \underbrace{\Pstar\big[\dot s_\eta(Y,\eta^\star)\ip{\Delta}{X}\ip{\hat H-H^\star}{X}\big]}_{\text{first-order in }\Delta}
\notag\\
&\quad+ \underbrace{\frac12\Pstar\big[\ddot s_\eta(Y,\bar\eta)\ip{\Delta}{X}^2\ip{\hat H-H^\star}{X}\big]}_{\text{second-order in }\Delta}.
\label{app:pd-Hbias-Taylor}
\end{align}

\subsubsection{$\ell_1$-average bound under comparison design}

\begin{lemma}[$\ell_1$-average bound]\label{app:pd-l1-average}
Under the comparison design, for any tensor $v\in\Hcal$,
\begin{equation}\label{app:pd-l1-average-eq}
\E^\star\big[|\ip{v}{X}|\big] \le \frac{2\oneNorm{v}}{\dstar}.
\end{equation}
\end{lemma}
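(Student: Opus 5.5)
The plan is to compute $\E^\star[|\ip{v}{X}|]$ directly from the comparison sampling model of Section~\ref{app:pd-setup}, then bound it by the triangle inequality and a counting argument. Under that design, $X=e_{u,a}-e_{u,b}$ with $u$ uniform on $[D]$ and $\{a,b\}$ uniform among the $\binom{d_1}{2}$ unordered pairs. By \eqref{app:pd-inner-prod-diff} we therefore have
\[
\E^\star\big[|\ip{v}{X}|\big]
=\frac{1}{D\binom{d_1}{2}}\sum_{u\in[D]}\sum_{1\le a<b\le d_1}|v_{u,a}-v_{u,b}|.
\]
Note that the sign convention for the unordered pair is irrelevant once the absolute value is taken.

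The next step is to apply $|v_{u,a}-v_{u,b}|\le|v_{u,a}|+|v_{u,b}|$ and count appearances. For fixed $u$, each index $a\in[d_1]$ appears in exactly $d_1-1$ of the unordered pairs. Hence
\[
\sum_{a<b}|v_{u,a}-v_{u,b}|\le (d_1-1)\sum_{a}|v_{u,a}|.
\]
Summing over $u$ then gives
\[
\E^\star\big[|\ip{v}{X}|\big]\le \frac{(d_1-1)\oneNorm{v}}{D\,d_1(d_1-1)/2}=\frac{2\oneNorm{v}}{D d_1}=\frac{2\oneNorm{v}}{\dstar},
\]
using $\dstar=D d_1$ from \eqref{app:pd-dstar-def}. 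This is exactly \eqref{app:pd-l1-average-eq}. No column-sum-zero or tangent-space structure is needed, so the bound holds for arbitrary $v\in\Hcal$.

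There is no real obstacle. The only point requiring care is the bookkeeping of the normalization: one must confirm that the pair distribution is uniform over unordered pairs with total mass $D\binom{d_1}{2}$, so that the constant comes out as $2$ rather than $1$ or $4$. One should also note the remark in Section~\ref{app:pd-setup} that an alternative convention for $\dstar$ changes only absolute constants. If the design were instead ordered pairs with $a\neq b$, the same argument would give the identical bound, since each index then appears in $2(d_1-1)$ of the $d_1(d_1-1)$ ordered pairs.
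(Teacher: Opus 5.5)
Your proof is correct and matches the paper's argument step for step. Both expand the expectation over the uniform design, apply $|v_{u,a}-v_{u,b}|\le|v_{u,a}|+|v_{u,b}|$, use that each index lies in exactly $d_1-1$ unordered pairs, and conclude with the bound $2\|v\|_1/(Dd_1)$, where $Dd_1$ equals the paper's $d^\star$.
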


\noindent\textbf{Proof of Lemma~\ref{app:pd-l1-average}.}
Since $X = e_{u,a} - e_{u,b}$ with $(u,\{a,b\})$ uniform,
$|\ip{v}{X}| = |v_{u,a} - v_{u,b}| \le |v_{u,a}| + |v_{u,b}|$.
Taking expectation,
\[
\E^\star[|\ip{v}{X}|]
\le \frac{1}{D\binom{d_1}{2}}\sum_u\sum_{a<b}(|v_{u,a}|+|v_{u,b}|)
= \frac{1}{D\binom{d_1}{2}}\,(d_1-1)\sum_u\sum_a |v_{u,a}|
= \frac{2}{Dd_1}\,\oneNorm{v}.
\]
Since $\dstar=Dd_1$, this equals $2\oneNorm{v}/\dstar$; the absolute
factor~$2$ is absorbed into the constants in every subsequent use of the
bound.
\hfill$\square$\medskip

\subsubsection{Dimension-free $\oneNorm{P_{\mathbb{T}}E_\omega}$ bound}

\begin{lemma}[Dimension-free $\ell_1$ bound for tangent projection]\label{app:pd-PT-l1}
Under $\mu$-incoherence, for any canonical basis tensor $E_\omega = e_{i_1}\otimes\cdots\otimes e_{i_m}$,
\begin{equation}\label{app:pd-PT-l1-bound}
\oneNorm{P_{\mathbb{T}}E_\omega} \le C_P,\qquad C_P := C(\mu,r,m).
\end{equation}
The same bound holds for $\widehat{P}_{\mathbb{T}}$ under the incoherence of the estimated subspaces
$\norm{e_a^\top\hat U_j}_2^2\le 2\mu r_j/d_j$.
\end{lemma}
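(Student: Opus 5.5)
The plan is to write $P_{\mathbb{T}}$ as a sum of Kronecker block projectors and bound the $\ell_1$ norm of each block applied to $E_\omega$ separately, using the multiplicativity of the $\ell_1$ norm on rank-one tensors.

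By the block decomposition (Lemma~\ref{app:pd-tangent-block-sum}), $P_{\mathbb{T}}=\sum_{|S|\le 1}\Pi_S$. Here $\Pi_\emptyset=P_{U_1}\otimes\cdots\otimes P_{U_m}$. For $|S|=\{k\}$, the block $\Pi_{\{k\}}$ acts as $(\Id-P_{U_k})$ in mode $k$ and as $P_{U_j}$ in every other mode, up to the core-compression projector in the arm. In Tucker form this projector is $P_{U_k}$ composed with the right singular projector of $\mathcal{C}^\star_{(k)}$. In the column-sum-zero setting, mode $1$ uses $Q_1-P_{U_1}$ in place of $\Id-P_{U_1}$ (Section~\ref{app:pd-rowsum-remark}). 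By the triangle inequality it suffices to bound $\oneNorm{\Pi_S E_\omega}$ for each of the $m+1$ blocks.

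\textbf{Core block.} The core block is the easy case: $\Pi_\emptyset E_\omega=\bigotimes_j P_{U_j}e_{i_j}$, and the $\ell_1$ norm of a rank-one tensor factorizes. For each mode, Cauchy--Schwarz and incoherence give
\[
\oneNorm{P_{U_j}e_{i_j}}\le\sqrt{d_j}\,\|U_jU_j^\top e_{i_j}\|_2\le\sqrt{d_j}\sqrt{\mu r_j/d_j}=\sqrt{\mu r_j}.
\]
Hence $\oneNorm{\Pi_\emptyset E_\omega}\le\prod_j\sqrt{\mu r_j}$.

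\textbf{Arm blocks.} I expect the arm blocks to be the main obstacle, because $\Pi_{\{k\}}$ is not a pure Kronecker product. In Tucker form it equals $(\Id-P_{U_k})\,\mathcal{M}_k$ in the mode-$k$ unfolding, where $\mathcal{M}_k$ projects the remaining modes onto the row space $W_k$ of $\mathcal{C}^\star_{(k)}(\bigotimes_{j\ne k}U_j)^\top$. This row space has dimension $r_k$ and lies inside $\bigotimes_{j\ne k}\mathrm{span}(U_j)$.

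So $\Pi_{\{k\}}E_\omega=((\Id-P_{U_k})e_{i_k})\otimes(P_{W_k}\bigotimes_{j\ne k}e_{i_j})$, which is still rank one between mode $k$ and the rest.

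The mode-$k$ factor $(\Id-P_{U_k})e_{i_k}=e_{i_k}-P_{U_k}e_{i_k}$ has $\ell_1$ norm at most $1+\sqrt{\mu r_k}$ by the core computation. For mode $1$, $Q_1$ adds an extra $\oneNorm{\mathbf 1/d_1}=1$, which is harmless.

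For the complementary factor $w:=P_{W_k}(\bigotimes_{j\ne k}e_{i_j})$ I would first try Cauchy--Schwarz, since $w$ lies in $\bigotimes_{j\ne k}\mathrm{span}(U_j)$:
\[
\oneNorm{w}\le\sqrt{\textstyle\prod_{j\ne k}d_j}\,\|w\|_2 .
\]
Write $w=\big(\bigotimes_{j\ne k}U_j\big)P_{R}\big(\bigotimes_{j\ne k}U_j^\top e_{i_j}\big)$, where $P_R$ is an orthogonal projection in core coordinates. Then
\[
\|w\|_2\le\prod_{j\ne k}\|U_j^\top e_{i_j}\|_2\le\prod_{j\ne k}\sqrt{\mu r_j/d_j}.
\]
The dimensions cancel and give $\oneNorm{w}\le\prod_{j\ne k}\sqrt{\mu r_j}$, which is dimension-free.

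\textbf{Summing and the estimated projector.} Summing over the $m+1$ blocks gives $C_P\le(m+1)\prod_j(1+\sqrt{\mu r_j})=C(\mu,r,m)$. The estimated projector $\widehat{P}_{\mathbb{T}}$ is handled identically, with $\mu$ replaced by $2\mu$ via \eqref{init-estimated-incoherence}.

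The one point to check carefully is that the arm block really factors as $(\text{mode-}k)\otimes(\text{projection within the }U\text{-product space})$. If the paper's $\Pi_{\{k\}}$ instead uses the full $\bigotimes_{j\ne k}P_{U_j}$ rather than $P_{W_k}$, the same Cauchy--Schwarz bound applies with $P_R=\Id$.
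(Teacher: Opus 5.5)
Your proposal is correct and follows essentially the same route as the paper: split $P_{\mathbb{T}}$ into the core block and the $m$ arm blocks, use multiplicativity of the $\ell_1$ norm over the rank-one structure of each block applied to $E_\omega$, and bound each factor by Cauchy--Schwarz plus (estimated) incoherence, which gives $\prod_j\sqrt{\mu r_j}$ for the core and $(1+\sqrt{\mu r_k})\prod_{j\ne k}\sqrt{\mu r_j}$ for each arm. The one difference is in the arm blocks: the paper takes $\Pi_{\{k\}}$ to be the pure Kronecker product $P_k^\perp\otimes\bigotimes_{j\ne k}P_j$ and factorizes over every mode, whereas your joint Cauchy--Schwarz bound on the complementary factor $w$ works whether the other modes are projected by $\bigotimes_{j\ne k}P_{U_j}$ or by the smaller core-row-space projector $P_{W_k}$, and you also track the $Q_1-P_{U_1}$ factor in mode $1$; this makes your version robust to the fact that for $m\ge 3$ the true Tucker arm is generally strictly smaller than $\Pi_{\{k\}}(\mathcal{H})$.
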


\noindent\textbf{Proof of Lemma~\ref{app:pd-PT-l1}.}
For elementary tensors, the Kronecker $\ell_1$ factorization gives $\oneNorm{a^{(1)}\otimes\cdots\otimes a^{(m)}} = \prod_j\oneNorm{a^{(j)}}$.
By Cauchy--Schwarz and incoherence, each single-mode factor satisfies
\[
\oneNorm{P_je_{i_j}}
= \sum_k |(P_j)_{k,i_j}|
\le \sqrt{d_j}\,\norm{P_je_{i_j}}_2
\le \sqrt{d_j}\sqrt{\mu r_j/d_j}
= \sqrt{\mu r_j},
\]
and $\oneNorm{P_j^\perp e_{i_j}} \le \oneNorm{e_{i_j}} + \oneNorm{P_je_{i_j}} \le 1 + \sqrt{\mu r_j}$.
Using $P_{\mathbb{T}} = \Pi_\emptyset + \sum_k\Pi_{\{k\}}$ and the Kronecker factorization,
$\oneNorm{\Pi_\emptyset(E_\omega)}
= \prod_j\oneNorm{P_je_{i_j}}
\le \mu^{m/2}\sqrt{r^\star}$,
and
$\oneNorm{\Pi_{\{k\}}(E_\omega)}
= \oneNorm{P_k^\perp e_{i_k}}\prod_{j\ne k}\oneNorm{P_je_{i_j}}
\le (1+\sqrt{\mu r_k})\prod_{j\ne k}\sqrt{\mu r_j}$.
Summing over blocks yields $\oneNorm{P_{\mathbb{T}}E_\omega} \le C(\mu,r,m)$, which is dimension-free.
The bound for $\widehat{P}_{\mathbb{T}}$ follows identically with $2\mu$ replacing $\mu$.
\hfill$\square$\medskip

\subsubsection{Bounding the first-order piece}

\begin{proposition}[H-direction bias bound]\label{app:pd-H-bias-prop}
Under Assumption~\ref{app:pd-opnorms}, the first-order piece satisfies
\begin{equation}\label{app:pd-H-bias-1st}
\big|\ip{G\Delta}{\hat H-H^\star}\big|
\le
C(\mu,r,m)\,C_A\,\oneNorm{\Gammavec}
\Big[
\infnorm{(\Id-\widehat{P}_{\mathbb{T}})\Delta}
+
\infnorm{(\Id-P_{\mathbb{T}})\Delta}
\Big].
\end{equation}
\end{proposition}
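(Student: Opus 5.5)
We bound $\ip{G\Delta}{\hat H-H^\star}$ by exploiting that both $H^\star$ and $\hat H$ solve information equations whose right-hand sides involve the same $\Gammavec$. The tangent components of $\Delta$ are then absorbed by those equations, and only the off-tangent components $(\Id-P_{\mathbb{T}})\Delta$ and $(\Id-\widehat{P}_{\mathbb{T}})\Delta$ survive.

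\emph{Step 1.} Since $G$ is self-adjoint and $H^\star\in\mathbb{T}$, we have $AH^\star=P_{\mathbb{T}}GH^\star=P_{\mathbb{T}}\Gammavec$. Decomposing $\Delta=P_{\mathbb{T}}\Delta+(\Id-P_{\mathbb{T}})\Delta$ gives
\[
\ip{G\Delta}{H^\star}=\ip{P_{\mathbb{T}}\Delta}{P_{\mathbb{T}}\Gammavec}+\ip{(\Id-P_{\mathbb{T}})\Delta}{GH^\star}.
\]
The same argument with $\hat A\hat H=\widehat{P}_{\mathbb{T}}\Gammavec$ (true $G$, as in the oracle convention of Section~\ref{app:pd-strategy}) gives
\[
\ip{G\Delta}{\hat H}=\ip{\widehat{P}_{\mathbb{T}}\Delta}{\Gammavec}+\ip{(\Id-\widehat{P}_{\mathbb{T}})\Delta}{G\hat H}.
\]
Subtracting,
\[
\ip{G\Delta}{\hat H-H^\star}=\ip{\Delta}{(\widehat{P}_{\mathbb{T}}-P_{\mathbb{T}})\Gammavec}+\ip{(\Id-\widehat{P}_{\mathbb{T}})\Delta}{G\hat H}-\ip{(\Id-P_{\mathbb{T}})\Delta}{GH^\star}.
\]
The first term can be rewritten as $\ip{(\Id-P_{\mathbb{T}})\Delta}{\Gammavec}-\ip{(\Id-\widehat{P}_{\mathbb{T}})\Delta}{\Gammavec}$. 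Every piece is therefore an off-tangent component of $\Delta$ paired with one of $\Gammavec$, $GH^\star$, or $G\hat H$.

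\emph{Step 2.} Each pairing is bounded by H\"older in the form $\infnorm{\cdot}\cdot\oneNorm{\cdot}$, so the remaining task is to show
\[
\oneNorm{GH^\star}\vee\oneNorm{G\hat H}\le C(\mu,r,m)\,C_A\,\oneNorm{\Gammavec}.
\]
By Lemma~\ref{app:pd-Gamma-l1} it suffices to treat $\Gammavec=E_\omega$. We write $GH^\star=GA^{-1}P_{\mathbb{T}}E_\omega$ and use the three ingredients below.
\begin{enumerate}
\item By the $\ell_1$-average structure of Lemma~\ref{app:pd-l1-average}, $\norm{G}_{1\to1}\lesssim c_+/(\sigma^2\dstar)$, since $G$ has the same bounded-row-sum structure as in Remark~\ref{app:pd-CG}.
\item By self-adjointness, $\norm{A^{-1}}_{1\to1}=\norm{A^{-1}}_{\infty\to\infty}\le C_A\sigma^2\dstar$ (Assumption~\ref{app:pd-opnorms}).
\item By Lemma~\ref{app:pd-PT-l1}, $\oneNorm{P_{\mathbb{T}}E_\omega}\le C_P$.
\end{enumerate}
Multiplying these, the $\dstar$ and $\sigma^2$ factors cancel, which is exactly the ``$\ell_1$ dimension cancellation'' described in Section~\ref{app:pd-norm-scales}. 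This yields $\oneNorm{GH^\star}\le C\,C_A$. The same bound holds for $G\hat H$, using the estimated-subspace version of Lemma~\ref{app:pd-PT-l1} and the bound on $\norm{\hat A^{-1}}$. The $\Gammavec$-pairings contribute only $\oneNorm{\Gammavec}$, which is absorbed since $C_A\gtrsim1$.

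\emph{Main obstacle.} The delicate point is Step 1's use of $\hat A\hat H=\widehat{P}_{\mathbb{T}}\Gammavec$, specifically that $\hat H$ lies in the range of $\widehat{P}_{\mathbb{T}}$ and that $\widehat{P}_{\mathbb{T}}$ is an orthogonal projector commuting appropriately with the inner product. This must be checked against the definition of $\hat A^{-1}$ as an inverse on its range. A secondary concern is justifying $\norm{G}_{1\to1}\lesssim1/(\sigma^2\dstar)$ on the full space rather than only on the column-sum-zero subspace. I would do this directly from the explicit Laplacian form of $G$: each row has absolute sum $\le 2\max_x I(\eta^\star(x))\cdot\frac{2}{\dstar}$.
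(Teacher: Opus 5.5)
Your proposal is correct and is essentially the paper's argument written in transposed form. The decomposition is the same: the information equations absorb $P_{\mathbb{T}}\Delta$ and $\widehat{P}_{\mathbb{T}}\Delta$, leaving only $(\Id-P_{\mathbb{T}})\Delta$, $(\Id-\widehat{P}_{\mathbb{T}})\Delta$ and $(\widehat{P}_{\mathbb{T}}-P_{\mathbb{T}})\Delta$. The paper then moves the operators onto $\Delta$, bounds $\inftyinfnorm{A^{-1}P_{\mathbb{T}}G}$ and its hatted analogue via Remark~\ref{app:pd-CG}, and pairs the result with $\oneNorm{\Gammavec}$. Your bound on $\oneNorm{GH^\star}$ through $\norm{G}_{1\to1}\norm{A^{-1}}_{1\to1}\oneNorm{P_{\mathbb{T}}E_\omega}$ is the same product of norms, because each of these operators is symmetric and so its $1\to1$ and $\infty\to\infty$ norms coincide.
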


\noindent\textbf{Proof of Proposition~\ref{app:pd-H-bias-prop}.}
By \eqref{app:pd-G-def},
\[
\Pstar[\dot s_\eta(Y,\eta^\star)\ip{\Delta}{X}\ip{\hat H-H^\star}{X}] = \ip{G\Delta}{\hat H-H^\star}.
\]
Since $G$ is symmetric and $\hat A^{-1}\widehat{P}_{\mathbb{T}}$, $A^{-1}P_{\mathbb{T}}$ are self-adjoint,
\[
\ip{G\Delta}{\hat H-H^\star}
= \ip{(\hat A^{-1}\widehat{P}_{\mathbb{T}} G - A^{-1}P_{\mathbb{T}}G)\Delta}{\Gammavec}.
\]
By H\"older's inequality, $|\ip{u}{\Gammavec}| \le \oneNorm{\Gammavec}\cdot\infnorm{u}$.
Decomposing via the triangle inequality,
\[
\infnorm{(\hat A^{-1}\widehat{P}_{\mathbb{T}} G - A^{-1}P_{\mathbb{T}}G)\Delta}
\le
\infnorm{\hat A^{-1}\widehat{P}_{\mathbb{T}} G(\Id-\widehat{P}_{\mathbb{T}})\Delta}
+
\infnorm{A^{-1}P_{\mathbb{T}}G(\Id-P_{\mathbb{T}})\Delta}
+
\infnorm{(\widehat{P}_{\mathbb{T}} - P_{\mathbb{T}})\Delta}.
\]
Using
\[
\infnorm{(\widehat{P}_{\mathbb{T}} - P_{\mathbb{T}})\Delta}\le\infnorm{(\Id-\widehat{P}_{\mathbb{T}})\Delta}+\infnorm{(\Id-P_{\mathbb{T}})\Delta}
\]
and Remark~\ref{app:pd-CG} yields \eqref{app:pd-H-bias-1st}.
\hfill$\square$\medskip

\subsubsection{The second-order piece}
By $|\ddot s_\eta|\le c_2/\sigma^2$ and $|\ip{\Delta}{X}|\le 2\infnorm{\Delta}$,
\[
\left|\frac12\Pstar\big[\ddot s_\eta\ip{\Delta}{X}^2\ip{\hat H-H^\star}{X}\big]\right|
\le
\frac{2c_2}{\sigma^2}\,\infnorm{\Delta}^2\,\E^\star[|\ip{\hat H-H^\star}{X}|].
\]
We now bound $\E^\star[|\ip{\hat H-H^\star}{X}|]$ using the $\ell_1$-average
(Lemma~\ref{app:pd-l1-average}):
\begin{equation}\label{app:pd-H-diff-Eabs}
\E^\star\big[|\ip{\hat H-H^\star}{X}|\big]
\le \frac{\oneNorm{\hat H-H^\star}}{\dstar}.
\end{equation}

\noindent\textbf{Step 1: $\ell_1$ extraction.}
By the $\ell_1$-version of Lemma~\ref{app:pd-Gamma-l1},
\begin{equation}\label{app:pd-H-diff-l1}
\oneNorm{\hat H - H^\star}
= \oneNorm{(\hat A^{-1}\widehat{P}_{\mathbb{T}} - A^{-1}P_{\mathbb{T}})\Gammavec}
\le \oneNorm{\Gammavec}\,\max_\omega\,\oneNorm{(\hat A^{-1}\widehat{P}_{\mathbb{T}} - A^{-1}P_{\mathbb{T}})E_\omega}.
\end{equation}

\noindent\textbf{Step 2: triangle inequality and symmetry.}
For each basis tensor $E_\omega$,
\[
\oneNorm{(\hat A^{-1}\widehat{P}_{\mathbb{T}} - A^{-1}P_{\mathbb{T}})E_\omega}
\le \norm{\hat A^{-1}}_{1\to 1}\,\oneNorm{\widehat{P}_{\mathbb{T}}E_\omega}
+ \norm{A^{-1}}_{1\to 1}\,\oneNorm{P_{\mathbb{T}}E_\omega}.
\]
Since $\hat A^{-1}$ and $A^{-1}$ are symmetric on their respective tangent spaces,
Lemma~\ref{app:pd-symm-1inf} gives
$\norm{\hat A^{-1}}_{1\to 1} = \inftyinfnorm{\hat A^{-1}} \le C_A\,\sigma^2\dstar$
(Assumption~\ref{app:pd-opnorms}), and likewise for $A^{-1}$.
By Lemma~\ref{app:pd-PT-l1}, $\oneNorm{P_{\mathbb{T}}E_\omega}\le C_P$ and $\oneNorm{\widehat{P}_{\mathbb{T}}E_\omega}\le C_P$.
Therefore
\begin{equation}\label{app:pd-M-omega-l1}
\max_\omega\,\oneNorm{(\hat A^{-1}\widehat{P}_{\mathbb{T}} - A^{-1}P_{\mathbb{T}})E_\omega}
\le 2C_A\,\sigma^2\dstar\cdot C_P.
\end{equation}

\noindent\textbf{Step 3: assembling.}
Substituting \eqref{app:pd-M-omega-l1} into \eqref{app:pd-H-diff-l1}:
\[
\oneNorm{\hat H-H^\star}\le 2C_AC_P\,\sigma^2\dstar\,\oneNorm{\Gammavec},
\]
so
\[
\frac{\oneNorm{\hat H-H^\star}}{\dstar}
\le 2C_AC_P\,\sigma^2\,\oneNorm{\Gammavec}.
\]
The second-order piece therefore satisfies
\begin{equation}\label{app:pd-H-bias-2nd}
\boxed{
\left|\tfrac12\Pstar\big[\ddot s_\eta\ip{\Delta}{X}^2\ip{\hat H-H^\star}{X}\big]\right|
\le C(\mu,r,m)\,C_A\,\infnorm{\Delta}^2\,\oneNorm{\Gammavec}.
}
\end{equation}
Note the key cancellations:
$\sigma^2$ from $\norm{A^{-1}}_{1\to 1}\asymp\sigma^2\dstar$ cancels with
$1/\sigma^2$ from $|\ddot s_\eta|\le c_2/\sigma^2$,
and $\dstar$ from $\norm{A^{-1}}_{1\to 1}$ cancels with
$1/\dstar$ from the $\ell_1$-average bound (Lemma~\ref{app:pd-l1-average}).

\begin{remark}\label{app:pd-H-bias-summary}
The first-order piece (Proposition~\ref{app:pd-H-bias-prop}) is bounded by
\[
C(\mu,r,m)\,C_A\,\oneNorm{\Gammavec}\bigl[\infnorm{(\Id-\widehat{P}_{\mathbb{T}})\Delta}+\infnorm{(\Id-P_{\mathbb{T}})\Delta}\bigr],
\]
where the off-tangent components are controlled by the $\varepsilon^2$ bound of Proposition~\ref{app:pd-PTcompl-infty}.
The second-order piece \eqref{app:pd-H-bias-2nd} is bounded by
\[
C(\mu,r,m)\,C_A\,\infnorm{\Delta}^2\,\oneNorm{\Gammavec},
\]
which is also $O(\varepsilon^2)$ since $\infnorm{\Delta}=O(\varepsilon)$.
Both pieces retain explicit $C_A$ dependence and are dimension-free up to $\oneNorm{\Gammavec}$.
\end{remark}

\subsection{First-Order Cancellation}\label{app:pd-1st-cancel}

 This is the heart of the one-step estimator: the plug-in bias $\langle P_{\mathbb{T}}\Gamma,\Delta\rangle$ cancels \emph{exactly} with the leading term of the population correction $-\langle\Delta, G(H^\star)\rangle$ on the tangent space. The proof decomposes $\Delta=P_{\mathbb{T}}\Delta+(\Id-P_{\mathbb{T}})\Delta$: the tangent component cancels perfectly by the definition $H^\star=A^{-1}P_{\mathbb{T}}\Gamma$, while the normal component produces a residual controlled by $\|(\Id-P_{\mathbb{T}})\Delta\|_\infty$, which is quadratic in subspace errors.

\medskip
The combined first-order cancellation term from the bias decomposition is
\begin{equation}\label{app:pd-1st-cancel-term}
\ip{P_{\mathbb{T}}\Gammavec}{\Delta} + \Pstar\big[S_{\hat T}(H^\star)-S_{T^\star}(H^\star)\big].
\end{equation}

\subsubsection{Taylor expansion of the population score difference}
The second part of \eqref{app:pd-1st-cancel-term} is
\[
\Pstar\big[s_\eta(Y,\hat\eta)\ip{H^\star}{X}\big] - \Pstar\big[s_\eta(Y,\eta^\star)\ip{H^\star}{X}\big].
\]
The second term vanishes ($\E[s_\eta|X]=0$). Expanding the first around $\eta^\star$:
\begin{align}
\Pstar\big[S_{\hat T}(H^\star)-S_{T^\star}(H^\star)\big]
&= \Pstar\big[\dot s_\eta(Y,\eta^\star)\ip{\Delta}{X}\ip{H^\star}{X}\big]
+ \frac12\Pstar\big[\ddot s_\eta(Y,\bar\eta)\ip{\Delta}{X}^2\ip{H^\star}{X}\big]
\notag\\
&= -\ip{\Delta}{G(H^\star)} + R_2(\Delta;H^\star),
\label{app:pd-Taylor-popul}
\end{align}
where we used $\E[\dot s_\eta(Y,\eta^\star)|X] = -I(\eta^\star)$ and defined
\[
R_2(\Delta;H^\star) := \frac12\Pstar\big[\ddot s_\eta(Y,\bar\eta)\ip{\Delta}{X}^2\ip{H^\star}{X}\big].
\]

\subsubsection{The cancellation mechanism}
Plugging \eqref{app:pd-Taylor-popul} into \eqref{app:pd-1st-cancel-term}:
\begin{equation}\label{app:pd-cancel-decomp}
\ip{P_{\mathbb{T}}\Gammavec}{\Delta} - \ip{\Delta}{G(H^\star)} + R_2(\Delta;H^\star).
\end{equation}
Now decompose $\Delta = P_{\mathbb{T}}\Delta + (\Id - P_{\mathbb{T}})\Delta$ in the first-order part:
\begin{equation}
\begin{aligned}
&\ip{P_{\mathbb{T}}\Gammavec}{\Delta} - \ip{\Delta}{G(H^\star)}
\notag\\
=& \Big(\ip{P_{\mathbb{T}}\Gammavec}{P_{\mathbb{T}}\Delta} - \ip{P_{\mathbb{T}}\Delta}{G(H^\star)}\Big)
\\
+& \Big(\ip{P_{\mathbb{T}}\Gammavec}{(\Id-P_{\mathbb{T}})\Delta} - \ip{(\Id-P_{\mathbb{T}})\Delta}{G(H^\star)}\Big).
\label{app:pd-cancel-split}
\end{aligned}
\end{equation}
\noindent\textbf{The tangent-space part cancels exactly.}
Since $H^\star = A^{-1}P_{\mathbb{T}}\Gammavec$ with $A = P_{\mathbb{T}}GP_{\mathbb{T}}$:
\[
\ip{P_{\mathbb{T}}\Delta}{G(H^\star)}
= \ip{P_{\mathbb{T}}\Delta}{G(A^{-1}P_{\mathbb{T}}\Gammavec)}
= \ip{P_{\mathbb{T}}\Delta}{P_{\mathbb{T}}G(A^{-1}P_{\mathbb{T}}\Gammavec)} + \ip{P_{\mathbb{T}}\Delta}{(\Id-P_{\mathbb{T}})G(H^\star)}.
\]
The first term equals
\[
\ip{P_{\mathbb{T}}\Delta}{A\cdot A^{-1}P_{\mathbb{T}}\Gammavec} = \ip{P_{\mathbb{T}}\Delta}{P_{\mathbb{T}}\Gammavec},
\]
which cancels with $\ip{P_{\mathbb{T}}\Gammavec}{P_{\mathbb{T}}\Delta}$. The second term satisfies $\ip{P_{\mathbb{T}}\Delta}{(\Id-P_{\mathbb{T}})G(H^\star)} = 0$ by orthogonality.

\noindent\textbf{The normal component survives but is small.}
The surviving first-order term from \eqref{app:pd-cancel-split} is
\begin{equation}\label{app:pd-surviving-1st}
\ip{P_{\mathbb{T}}\Gammavec}{(\Id-P_{\mathbb{T}})\Delta} - \ip{(\Id-P_{\mathbb{T}})\Delta}{G(H^\star)}.
\end{equation}

\subsubsection{Bounding the surviving first-order term}
\begin{proposition}\label{app:pd-1st-cancel-prop}
Under Assumption~\ref{app:pd-opnorms},
\begin{equation}\label{app:pd-1st-cancel-bound}
\left|\ip{P_{\mathbb{T}}\Gammavec}{\Delta} - \ip{\Delta}{G(H^\star)}\right|
\le
C\,\oneNorm{\Gammavec}\,\infnorm{(\Id-P_{\mathbb{T}})\Delta},
\end{equation}
where $C = C_A\cdot\mathrm{poly}(\mu,r,m)$ (Remark~\ref{app:pd-CG}).
\end{proposition}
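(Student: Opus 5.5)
The argument starts from the exact tangent-space cancellation already established in \eqref{app:pd-cancel-split}. There the tangent component $P_{\mathbb{T}}\Delta$ contributes zero, so the left-hand side of \eqref{app:pd-1st-cancel-bound} equals the surviving term \eqref{app:pd-surviving-1st}. Writing $\Delta_\perp:=(\Id-P_{\mathbb{T}})\Delta$, it therefore suffices to bound
\[
\ip{P_{\mathbb{T}}\Gammavec}{\Delta_\perp}-\ip{\Delta_\perp}{G(H^\star)}.
\]

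The first piece vanishes identically, since $P_{\mathbb{T}}\Gammavec\in\mathbb{T}$ and $\Delta_\perp\in\mathbb{T}^\perp$ are orthogonal. Only the term $\ip{\Delta_\perp}{G(H^\star)}$ needs work.

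For that term I would move all operators onto $\Gammavec$ by self-adjointness. Since $G$ is symmetric and $A^{-1}P_{\mathbb{T}}$ is self-adjoint (as already used in the proof of Proposition~\ref{app:pd-H-bias-prop}),
\[
\ip{\Delta_\perp}{G A^{-1}P_{\mathbb{T}}\Gammavec}=\ip{A^{-1}P_{\mathbb{T}}G\,\Delta_\perp}{\Gammavec}.
\]
H\"older's inequality then gives the bound $\oneNorm{\Gammavec}\,\infnorm{A^{-1}P_{\mathbb{T}}G\Delta_\perp}$. This is at most $\oneNorm{\Gammavec}\,\inftyinfnorm{A^{-1}P_{\mathbb{T}}G}\,\infnorm{\Delta_\perp}$.

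The operator norm is controlled by Remark~\ref{app:pd-CG}. It combines Assumption~\ref{app:pd-opnorms} ($\inftyinfnorm{A^{-1}}\le C_A\sigma^2\dstar$), Lemma~\ref{app:pd-Pinfty} ($\inftyinfnorm{P_{\mathbb{T}}}\le\mathrm{poly}(\mu,r,m)$) and $\inftyinfnorm{G}\le c_+/(\sigma^2\dstar)$. The $\sigma^2\dstar$ factors cancel, so the norm is $C_A\cdot\mathrm{poly}(\mu,r,m)$, which yields \eqref{app:pd-1st-cancel-bound}.

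The main point needing care is the bound $\inftyinfnorm{G}\lesssim 1/(\sigma^2\dstar)$. It holds because, under the comparison design, $(GH)_{u,a}$ averages $I(\eta^\star)(H_{u,a}-H_{u,b})$ over the $d_1-1$ partners $b$ in context $u$. This average carries weight $1/\bigl(D\binom{d_1}{2}\bigr)$, so each row of $G$ has $\ell_1$-norm $O\bigl(c_+/(\sigma^2 D d_1)\bigr)=O(1/(\sigma^2\dstar))$.

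A second point is the self-adjointness of $A^{-1}P_{\mathbb{T}}$ on the full space. This follows by interpreting $A^{-1}$ as the pseudo-inverse on $\mathbb{T}$ extended by zero on $\mathbb{T}^\perp$, which is symmetric.

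Finally, $\infnorm{\Delta_\perp}$ is quadratic in the subspace errors by Proposition~\ref{app:pd-PTcompl-infty} and the bound in Section~\ref{app:pd-Rproj-final}, so the surviving term is of order $C_A\oneNorm{\Gammavec}\rho^2$.
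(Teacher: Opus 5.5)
Your proposal is correct and matches the paper's proof step for step. You cancel the tangent component using $AH^\star=P_{\mathbb{T}}\Gammavec$ and drop $\ip{P_{\mathbb{T}}\Gammavec}{(\Id-P_{\mathbb{T}})\Delta}$ by orthogonality; you then move $A^{-1}P_{\mathbb{T}}G$ onto $\Gammavec$ by self-adjointness and finish with H\"older's inequality and Remark~\ref{app:pd-CG}. Your explicit row-sum bound $\inftyinfnorm{G}\le 4c_+/(\sigma^2\dstar)$ and your pseudo-inverse reading of $A^{-1}P_{\mathbb{T}}$ make precise two points the paper only asserts, but they do not change the argument.
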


\noindent\textbf{Proof of Proposition~\ref{app:pd-1st-cancel-prop}.}
For the first part of \eqref{app:pd-surviving-1st},
\[
|\ip{P_{\mathbb{T}}\Gammavec}{(\Id-P_{\mathbb{T}})\Delta}| = |\ip{\Gammavec}{P_{\mathbb{T}}(\Id-P_{\mathbb{T}})\Delta}| = 0
\]
since $P_{\mathbb{T}}(\Id-P_{\mathbb{T}}) = 0$.
For the second part, use symmetry of $G$ and self-adjointness of $A^{-1}P_{\mathbb{T}}$ to transfer the operator onto $\Gammavec$:
\[
\ip{(\Id-P_{\mathbb{T}})\Delta}{G(H^\star)}
= \ip{G(\Id-P_{\mathbb{T}})\Delta}{A^{-1}P_{\mathbb{T}}\Gammavec}
= \ip{A^{-1}P_{\mathbb{T}}G(\Id-P_{\mathbb{T}})\Delta}{\Gammavec}.
\]
By H\"older's inequality,
\[
\begin{aligned}
\left|\ip{A^{-1}P_{\mathbb{T}}G(\Id-P_{\mathbb{T}})\Delta}{\Gammavec}\right|
&\le
\oneNorm{\Gammavec}\cdot\infnorm{A^{-1}P_{\mathbb{T}}G(\Id-P_{\mathbb{T}})\Delta}\\
&\le
\oneNorm{\Gammavec}\cdot\inftyinfnorm{A^{-1}P_{\mathbb{T}}G}\cdot\infnorm{(\Id-P_{\mathbb{T}})\Delta}.
\end{aligned}
\]
By Remark~\ref{app:pd-CG},
$\inftyinfnorm{A^{-1}P_{\mathbb{T}}G} = C_A\cdot\mathrm{poly}(\mu,r,m)$,
which gives \eqref{app:pd-1st-cancel-bound}.
\hfill$\square$\medskip

\begin{remark}
Note $\ip{P_{\mathbb{T}}\Gammavec}{(\Id-P_{\mathbb{T}})\Delta} = 0$ exactly, so the surviving term is \emph{only} $-\ip{(\Id-P_{\mathbb{T}})\Delta}{G(H^\star)}$, which is controlled by $\infnorm{(\Id-P_{\mathbb{T}})\Delta}$. By Proposition~\ref{app:pd-PTcompl-infty}, this is quadratic in subspace errors ($O(\varepsilon^2)$), which is why only the second-order remainder $R_2$ survives as the dominant bias term.
\end{remark}

\subsection{Second-Order Score Remainder}\label{app:pd-2nd-order}

\noindent\textbf{Intuition.} After the first-order cancellation, the dominant bias term is the second-order Taylor remainder $R_2(\Delta;H^\star)=O(\|\Delta\|_\infty^2)$. The key mechanism is the $\ell_1$ dimension cancellation: $\|A^{-1}P_{\mathbb{T}}E_\omega\|_1/d^\star \le C_A\sigma^2\cdot\mathrm{poly}(\mu,r,m)$, where the $d^\star$ in $\|A^{-1}\|_{1\to 1}\asymp\sigma^2 d^\star$ cancels with $1/d^\star$ from the comparison averaging. This gives $|R_2|\le C(\mu,r,m) C_A\|\Delta\|_\infty^2\|\Gamma\|_1$ without any residual dimension factor.

\medskip
The second-order score remainder from \eqref{app:pd-Taylor-popul} is
\[
R_2(\Delta;H^\star) = \frac12\Pstar\big[\ddot s_\eta(Y,\bar\eta)\ip{\Delta}{X}^2\ip{H^\star}{X}\big].
\]

\begin{proposition}[Second-order remainder bound]\label{app:pd-2nd-order-prop}
Under Assumption~\ref{app:pd-score} (with $|\ddot s_\eta|\le c_2/\sigma^2$) and comparison design,
\begin{equation}\label{app:pd-2nd-order-bound}
|R_2(\Delta;H^\star)|
\le
\frac{C}{\sigma^2}\,\infnorm{\Delta}^2\,\frac{\oneNorm{H^\star}}{\dstar}.
\end{equation}
Expanding $H^\star = A^{-1}P_{\mathbb{T}}\Gammavec$ and using $\inftyinfnorm{A^{-1}}\le C_A\sigma^2\dstar$, $\inftyinfnorm{P_{\mathbb{T}}}=\mathrm{poly}(\mu,r,m)$:
\begin{equation}\label{app:pd-2nd-order-Gamma}
|R_2(\Delta;H^\star)|
\le
C(\mu,r,m)\,C_A\,\infnorm{\Delta}^2\,\oneNorm{\Gammavec}.
\end{equation}
\end{proposition}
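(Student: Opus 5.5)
The plan is to bound the integrand pointwise, collapse the resulting expectation to an $\ell_1$ average over the design, and then convert $\oneNorm{H^\star}$ into $\oneNorm{\Gammavec}$ using the operator-norm assumptions. The dimension factors should cancel along the way.

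\emph{Pointwise bound.} Under the comparison design, $X=e_{u,a}-e_{u,b}$, so
\[
|\ip{\Delta}{X}|\le 2\infnorm{\Delta}.
\]
Assumption~\ref{app:pd-score}(i) gives $|\ddot s_\eta(y,\bar\eta)|\le c_2/\sigma^2$ uniformly in $(y,\bar\eta)$. This holds even though $\bar\eta$ is a random intermediate point between $\eta^\star$ and $\hat\eta$, because the bound is uniform. Pulling these deterministic bounds out of the expectation yields
\[
|R_2(\Delta;H^\star)|\le \frac{2c_2}{\sigma^2}\infnorm{\Delta}^2\,\E^\star\big[|\ip{H^\star}{X}|\big].
\]
Applying the $\ell_1$-average bound, Lemma~\ref{app:pd-l1-average}, to $v=H^\star$ gives $\E^\star|\ip{H^\star}{X}|\le 2\oneNorm{H^\star}/\dstar$. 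This proves \eqref{app:pd-2nd-order-bound}.

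\emph{From $\oneNorm{H^\star}$ to $\oneNorm{\Gammavec}$.} Write $H^\star=\sum_\omega\gamma_\omega A^{-1}P_{\mathbb{T}}E_\omega$. The $\ell_1$ form of Lemma~\ref{app:pd-Gamma-l1} (equivalently, the triangle inequality) then gives
\[
\oneNorm{H^\star}\le\oneNorm{\Gammavec}\max_\omega\oneNorm{A^{-1}P_{\mathbb{T}}E_\omega}\le \oneNorm{\Gammavec}\,\norm{A^{-1}}_{1\to1}\max_\omega\oneNorm{P_{\mathbb{T}}E_\omega}.
\]
Since $A^{-1}$ is symmetric on $\mathbb{T}$, Lemma~\ref{app:pd-symm-1inf} gives $\norm{A^{-1}}_{1\to1}=\inftyinfnorm{A^{-1}}$. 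Assumption~\ref{app:pd-opnorms} then bounds this by $C_A\sigma^2\dstar$. By Lemma~\ref{app:pd-PT-l1}, $\oneNorm{P_{\mathbb{T}}E_\omega}\le C_P=C(\mu,r,m)$. Combining these, $\oneNorm{H^\star}/\dstar\le C_AC_P\sigma^2\oneNorm{\Gammavec}$. Substituting into \eqref{app:pd-2nd-order-bound}, the $\sigma^2$ factors cancel and the $\dstar$ factors cancel, which yields \eqref{app:pd-2nd-order-Gamma}.

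\emph{Main obstacle.} The delicate step is the passage $\norm{A^{-1}}_{1\to1}=\inftyinfnorm{A^{-1}}$. Here $A^{-1}$ is only defined as an inverse on $\mathbb{T}$, so one must interpret it as $A^{-1}P_{\mathbb{T}}$ extended by zero on $\mathbb{T}^\perp$. That extension is self-adjoint on the full space, and for a symmetric matrix the $1\to1$ and $\infty\to\infty$ norms coincide as column and row sums. I would state this convention explicitly, so that Assumption~\ref{app:pd-opnorms} is applied to the same operator.

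A second point to watch: the statement cites $\inftyinfnorm{P_{\mathbb{T}}}$, but the route above needs the dimension-free $\ell_1$ bound on $P_{\mathbb{T}}E_\omega$ instead. If one used $\infty$-norms with the $\bar d/\dstar$ entry scale, a spurious $\bar d$ factor would appear, so Lemma~\ref{app:pd-PT-l1} is essential.
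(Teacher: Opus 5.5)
Your proposal is correct and follows essentially the same route as the paper. Both arguments bound $\ddot s_\eta$ and $\ip{\Delta}{X}$ pointwise, apply the $\ell_1$-average bound of Lemma~\ref{app:pd-l1-average}, and then use $\ell_1$ extraction with $\norm{A^{-1}}_{1\to1}=\inftyinfnorm{A^{-1}}\le C_A\sigma^2\dstar$, so that both the $\sigma^2$ and the $\dstar$ factors cancel. Your "second point to watch" is not a real discrepancy: by symmetry of $P_{\mathbb{T}}$, $\max_\omega\oneNorm{P_{\mathbb{T}}E_\omega}=\norm{P_{\mathbb{T}}}_{1\to1}=\inftyinfnorm{P_{\mathbb{T}}}$, which is exactly how the paper gets the dimension-free quantity you take from Lemma~\ref{app:pd-PT-l1}.
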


\noindent\textbf{Proof of Proposition~\ref{app:pd-2nd-order-prop}.}
Since $|\ip{\Delta}{X}|\le 2\infnorm{\Delta}$ (comparison atom) and $|\ddot s_\eta|\le c_2/\sigma^2$,
\[
|R_2(\Delta;H^\star)|
\le
\frac{2c_2}{\sigma^2}\,\infnorm{\Delta}^2\,\E^\star\!\big[|\ip{H^\star}{X}|\big].
\]
Under the comparison design, $\ip{H^\star}{X} = H^\star_{u,a}-H^\star_{u,b}$ where $(u,\{a,b\})$ is drawn uniformly, so by the triangle inequality and averaging,
$\E^\star\!\big[|\ip{H^\star}{X}|\big]
\le
\frac{2}{\dstar}\,\oneNorm{H^\star}$,
which gives \eqref{app:pd-2nd-order-bound} (absorbing the constant~$4$ into $C$).
Next, using $H^\star = A^{-1}P_{\mathbb{T}}\Gammavec = \sum_\omega\gamma_\omega A^{-1}P_{\mathbb{T}}E_\omega$ and the triangle inequality,
\[
\oneNorm{H^\star}
\le
\oneNorm{\Gammavec}\cdot\max_\omega\oneNorm{A^{-1}P_{\mathbb{T}}E_\omega}.
\]
For each $E_\omega$,
\[
\oneNorm{A^{-1}P_{\mathbb{T}}E_\omega}
\le
\norm{A^{-1}}_{1\to 1}\,\oneNorm{P_{\mathbb{T}}E_\omega},
\]
and by symmetry of $A^{-1}$, $\norm{A^{-1}}_{1\to 1} = \inftyinfnorm{A^{-1}}$.
By Assumption~\ref{app:pd-opnorms}, $\inftyinfnorm{A^{-1}}\le C_A\,\sigma^2\dstar$, and since $P_{\mathbb{T}}$ is symmetric,
\[
\max_\omega\oneNorm{P_{\mathbb{T}}E_\omega} = \norm{P_{\mathbb{T}}}_{1\to 1} = \inftyinfnorm{P_{\mathbb{T}}} = \mathrm{poly}(\mu,r,m)
\]
(Lemma~\ref{app:pd-Pinfty}), which is dimension-free. Hence
\[
\frac{1}{\dstar}\;\inftyinfnorm{A^{-1}}\cdot\max_\omega\oneNorm{P_{\mathbb{T}}E_\omega}
\;=\;
\frac{C_A\,\sigma^2\dstar}{\dstar}\cdot\mathrm{poly}(\mu,r,m)
\;=\;
C_A\,\sigma^2\cdot\mathrm{poly}(\mu,r,m).
\]
The $\sigma^2$ cancels the $1/\sigma^2$ prefactor, yielding \eqref{app:pd-2nd-order-Gamma}.
\hfill$\square$\medskip

\begin{remark}[$\ell_1$ dimension cancellation]
The key mechanism is
\[
\oneNorm{A^{-1}P_{\mathbb{T}}E_\omega}/\dstar \le \inftyinfnorm{A^{-1}}\cdot\inftyinfnorm{P_{\mathbb{T}}}/\dstar = C_A\,\sigma^2\cdot\mathrm{poly}(\mu,r,m).
\]
The $\dstar$ in $\inftyinfnorm{A^{-1}}\asymp\sigma^2\dstar$ cancels exactly with the $1/\dstar$ from the comparison averaging, and $\inftyinfnorm{P_{\mathbb{T}}}=\mathrm{poly}(\mu,r,m)$ is dimension-free.
This gives $|R_2|\le C(\mu,r,m)\,C_A\,\infnorm{\Delta}^2\,\oneNorm{\Gammavec}$ without any residual $\sqrt{\bar d}$ factor.
\end{remark}

\subsection{Additional Error from Estimated Information $\hat G$}\label{app:pd-hatG}

As discussed in the proof strategy (Section~\ref{app:pd-strategy}), the analysis in Sections~\ref{app:pd-RempH}--\ref{app:pd-2nd-order} was carried out using the \emph{true} information operator $G = \E^\star[I(\eta^\star(X))(X\otimes X)]$, i.e., the estimated direction was defined as $\hat H = (\widehat{P}_{\mathbb{T}} G \widehat{P}_{\mathbb{T}})^{-1}\widehat{P}_{\mathbb{T}}\Gammavec$.
In practice, $G$ is replaced by the plug-in estimate $\hat G = \E^\star[I(\hat\eta(X))(X\otimes X)]$, yielding $\hat A' := \widehat{P}_{\mathbb{T}}\hat G\widehat{P}_{\mathbb{T}}$ in place of $\hat A := \widehat{P}_{\mathbb{T}}G\widehat{P}_{\mathbb{T}}$.
We now show that this substitution introduces only a negligible multiplicative correction.

\medskip
\noindent\textbf{Perturbation bound for $\hat G - G$.}
The only difference between $\hat G$ and $G$ is the use of $I(\hat\eta(x))$ in place of $I(\eta^\star(x))$.
Since the Fisher information $I(\cdot)$ is Lipschitz on the bounded-signal domain (Assumption~\ref{app:pd-score}), we have
\[
|I(\hat\eta(x)) - I(\eta^\star(x))|
\;\le\;
L_I\,|\hat\eta(x) - \eta^\star(x)|
\;=\;
L_I\,|\ip{\Delta}{x}|
\;\le\;
2L_I\,\infnorm{\Delta}
\]
for each comparison atom $x$, where $L_I$ is the Lipschitz constant of $I$ on $[-B-1,B+1]$.
Therefore
\[
\|\hat G - G\|_{\infty\to\infty}
\;\le\;
\frac{C\,\infnorm{\Delta}}{\sigma^2\dstar},
\]
where we used $\|G\|_{\infty\to\infty}\asymp 1/(\sigma^2\dstar)$.
Define the relative perturbation
\begin{equation}\label{eq:epsG-def}
\varepsilon_G
\;:=\;
\frac{\|\hat G - G\|_{\infty\to\infty}}{\|G\|_{\infty\to\infty}}
\;\le\;
C\,\infnorm{\Delta}
\;\le\;
C\,\sigma\sqrt{\frac{\bar d\log^c\bar d}{n}}
\;\ll\; 1.
\end{equation}

\noindent\textbf{Effect on the restricted inverse.}
Write $\hat A' = \hat A + \widehat{P}_{\mathbb{T}}(\hat G - G)\widehat{P}_{\mathbb{T}}$, so
\[
\hat A'^{-1}
\;=\;
\hat A^{-1}\bigl(\Id + \hat A^{-1}\widehat{P}_{\mathbb{T}}(\hat G - G)\widehat{P}_{\mathbb{T}}\bigr)^{-1}.
\]
The perturbation term satisfies
$\|\hat A^{-1}\widehat{P}_{\mathbb{T}}(\hat G - G)\widehat{P}_{\mathbb{T}}\|_{\infty\to\infty}
\le \|\hat A^{-1}\widehat{P}_{\mathbb{T}} G\|_{\infty\to\infty}\cdot\varepsilon_G
\le C(\mu,r,m)\cdot\varepsilon_G$,
where we used Remark~\ref{app:pd-CG}.
By the Neumann series, when $\varepsilon_G$ is sufficiently small:
\[
\hat A'^{-1} = \hat A^{-1}\bigl(\Id + O(\varepsilon_G)\bigr)
\quad\text{in }\|\cdot\|_{\infty\to\infty}.
\]

\noindent\textbf{Impact on each error term.}
The replacement $\hat A\to\hat A'$ affects only the terms involving the estimated direction $\hat H$:
\begin{itemize}
\item $R_{\mathrm{emp}}^H$ and the H-direction bias: the direction error $\hat H'-H^\star$ acquires a multiplicative $(1+O(\varepsilon_G))$ correction. Since $\varepsilon_G = O(\infnorm{\Delta})$ is of the same order as the existing subspace estimation error, this is absorbed into the combined bound without changing the leading-order rate.
\item $R_{\mathrm{emp}}^\eta$, $R_{\mathrm{proj}}$, the first-order cancellation, and the second-order remainder all involve only the \emph{true} direction $H^\star$ or the true operator $G$, so they receive \emph{no additional error} from the $\hat G$ substitution.
\end{itemize}
In summary, all bounds in this appendix remain valid with $\hat G$ in place of $G$, up to a multiplicative $(1+O(\varepsilon_G))$ factor that is negligible under the sample size condition.

\subsection{Combined Bias Bound}\label{app:pd-combined}

We now collect all bias terms from Sections~\ref{app:pd-Rproj}--\ref{app:pd-2nd-order} and assemble the final bound.

\subsubsection{Recap of individual bounds}
The total bias decomposes as
\begin{equation}\label{app:pd-bias-decomp-full}
\mathrm{Bias}
= \underbrace{R_{\mathrm{proj}}}_{\text{Sec.~\ref{app:pd-Rproj}}}
+ \underbrace{\ip{G\Delta}{\hat H - H^\star}}_{\text{H-bias 1st, Sec.~\ref{app:pd-H-bias}}}
+ \underbrace{\text{(H-bias 2nd)}}_{\text{Sec.~\ref{app:pd-H-bias}}}
+ \underbrace{\ip{P_{\mathbb{T}}\Gammavec}{\Delta}-\ip{\Delta}{G(H^\star)}}_{\text{1st cancel, Sec.~\ref{app:pd-1st-cancel}}}
+ \underbrace{R_2(\Delta;H^\star)}_{\text{2nd order, Sec.~\ref{app:pd-2nd-order}}}.
\end{equation}
The individual bounds are:
\begin{enumerate}[label=(\roman*)]
\item \textbf{Projection leakage} (Eq.~\eqref{app:pd-Rproj-basic}):
$|R_{\mathrm{proj}}| \le \oneNorm{\Gammavec}\cdot\infnorm{(\Id-P_{\mathbb{T}})\Delta}$.
\item \textbf{H-direction bias, first order} (Prop.~\ref{app:pd-H-bias-prop}):
\[
|\ip{G\Delta}{\hat H-H^\star}|
\le C(\mu,r,m)\,C_A\,\oneNorm{\Gammavec}\bigl[\infnorm{(\Id-\widehat{P}_{\mathbb{T}})\Delta}+\infnorm{(\Id-P_{\mathbb{T}})\Delta}\bigr].
\]
\item \textbf{First-order cancellation} (Prop.~\ref{app:pd-1st-cancel-prop}):
\[
|\ip{P_{\mathbb{T}}\Gammavec}{\Delta}-\ip{\Delta}{G(H^\star)}| \le C(\mu,r,m)\,C_A\,\oneNorm{\Gammavec}\,\infnorm{(\Id-P_{\mathbb{T}})\Delta}.
\]
\item \textbf{Second-order remainder} (Prop.~\ref{app:pd-2nd-order-prop}):
$|R_2(\Delta;H^\star)| \le C(\mu,r,m)\,C_A\,\infnorm{\Delta}^2\,\oneNorm{\Gammavec}$.
\item \textbf{H-direction bias, second order} (Eq.~\eqref{app:pd-H-bias-2nd}):
controlled by $C\,C_A\,\infnorm{\Delta}^2\,\oneNorm{\Gammavec}$; same order as~(iv).
\end{enumerate}

\subsubsection{Combined bound}

\begin{theorem}[Combined bias bound]\label{app:pd-combined-bias-thm}
Under Assumptions~\ref{app:pd-score} and \ref{app:pd-opnorms}, the total bias satisfies
\begin{equation}\label{app:pd-combined-bias}
|\mathrm{Bias}|
\;\le\;
C(\mu,r,m)\,C_A\,\oneNorm{\Gammavec}
\Big[
\infnorm{(\Id-P_{\mathbb{T}})\Delta}
+ \infnorm{(\Id-\widehat{P}_{\mathbb{T}})\Delta}
+ \infnorm{\Delta}^2
\Big].
\end{equation}
\end{theorem}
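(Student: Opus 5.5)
The plan is a direct aggregation of the five component bounds recapped just above. I start from the decomposition \eqref{app:pd-bias-decomp-full}. The three population pieces of $R_{\mathrm{bias}}$ are $R_{\mathrm{proj}}$, $R^H_{\mathrm{bias}}$ (expanded via \eqref{app:pd-Hbias-Taylor}) and $R_{\mathrm{cancel}}$ (expanded via \eqref{app:pd-Taylor-popul}). Together they account exactly for the five listed terms, because the zeroth-order Taylor pieces vanish by $\E^\star[s_\eta(Y,\eta^\star)\mid X]=0$. The first-order piece of the H-bias equals $-\ip{G\Delta}{\hat H-H^\star}$ after using $\E[\dot s_\eta\mid X]=-I(\eta^\star)$; the sign is irrelevant for an absolute-value bound. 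I then apply the triangle inequality and bound each of the five terms separately.

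\textbf{Step 1.} For (i), \eqref{app:pd-Rproj-basic} gives $\oneNorm{\Gammavec}\infnorm{(\Id-P_{\mathbb{T}})\Delta}$. Since $C_A\ge C(\mu,r,m)>0$ by Proposition~\ref{prop:CA-range-main}(i), I can enlarge the constant so this term carries the common prefactor $C(\mu,r,m)C_A$. If one prefers not to invoke that proposition, I will instead note that $C_A\ge 2\inftyinfnorm{P_{\mathbb{T}}}\cdot(\text{const})$ follows from $\norm{A^{-1}}_{\infty\to\infty}\ge \norm{A^{-1}}_{\mathrm{op}}/\mathrm{poly}\ge \sigma^2\dstar/c_+$. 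This gives a lower bound $C_A\gtrsim 1$ using $\norm{A}_{\mathrm{op}}\le\norm{G}_{\mathrm{op}}\le c_+/(\sigma^2\dstar)$, together with the fact that $\infty\to\infty$ and $1\to1$ norms of a symmetric operator dominate the operator norm by Riesz--Thorin.

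\textbf{Step 2.} Terms (ii) and (iii) come directly from Propositions~\ref{app:pd-H-bias-prop} and~\ref{app:pd-1st-cancel-prop}. Both are already of the form $C(\mu,r,m)C_A\oneNorm{\Gammavec}$ times the off-tangent sup-norms. Terms (iv) and (v) come from Proposition~\ref{app:pd-2nd-order-prop} and \eqref{app:pd-H-bias-2nd}, each of the form $C(\mu,r,m)C_A\infnorm{\Delta}^2\oneNorm{\Gammavec}$. In all of these, the $\sigma^2$ factors have already cancelled against $1/\sigma^2$ from the score derivatives.

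\textbf{Step 3.} I sum the five bounds and take the maximum of the constants, which yields \eqref{app:pd-combined-bias}. If the plug-in $\hat G$ is used instead of $G$, the discussion in Section~\ref{app:pd-hatG} shows that only (ii) and (v) change, each by a factor $(1+O(\varepsilon_G))\le 2$, and this factor is absorbed into the constant.

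\textbf{Main obstacle.} The only point needing care is Step~1's bookkeeping: absorbing the $C_A$-free projection term under a $C_A$-prefactored bound requires the lower bound $C_A\gtrsim 1$. I would invoke Proposition~\ref{prop:CA-range-main}(i) for this. I would also check that the mean-value points $\bar\eta$ in the Taylor expansions lie in the bounded domain where $|\ddot s_\eta|\le c_2/\sigma^2$. This holds because $\infnorm{\Delta}\ll1$ under Assumption~\ref{ass:initial}.
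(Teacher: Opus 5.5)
Your proposal is correct and follows the paper's own argument, which simply sums the recapped component bounds and takes the largest constant. Three of your steps are details the paper leaves implicit: using $C_A\gtrsim 1$ (from $\norm{A^{-1}}_{\infty\to\infty}\ge\norm{A^{-1}}_{\mathrm{op}}$) to absorb the $C_A$-free term $R_{\mathrm{proj}}$, including term (v), which the paper's one-line proof omits from its ``(i)--(iv)'' list, and observing that the sign of the first-order H-bias piece is immaterial.

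Your Step~3 is not needed, because the theorem is stated with $\hat A=\widehat{P}_{\mathbb{T}}G\widehat{P}_{\mathbb{T}}$. Moreover, substituting $\hat G$ breaks the identity $\hat A^{-1}\widehat{P}_{\mathbb{T}}G\widehat{P}_{\mathbb{T}}=\widehat{P}_{\mathbb{T}}$ used for term (ii). The effect is therefore an additive $O(C_A\,\oneNorm{\Gammavec}\infnorm{\Delta}^2)$ piece rather than a multiplicative factor, although this still fits the stated bound.
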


\noindent\textbf{Proof of Theorem~\ref{app:pd-combined-bias-thm}.}
Sum the bounds (i)--(iv): terms (i) and (iii) contribute $C\,\oneNorm{\Gammavec}\,\infnorm{(\Id-P_{\mathbb{T}})\Delta}$;
term (ii) contributes $C\,\oneNorm{\Gammavec}[\infnorm{(\Id-\widehat{P}_{\mathbb{T}})\Delta}+\infnorm{(\Id-P_{\mathbb{T}})\Delta}]$;
term (iv) contributes $C\,\oneNorm{\Gammavec}\,\infnorm{\Delta}^2$.
Collecting with the maximum constant gives \eqref{app:pd-combined-bias}.
\hfill$\square$\medskip

\subsubsection{Substituting the $\varepsilon^2$ and entrywise rates}
We now plug in the standard tensor estimation rates to obtain a fully explicit final bound.

\noindent\textbf{Input rates.}
Recall from Section~\ref{app:pd-Rproj-final}:
\begin{itemize}
\item \emph{Normal-component error} (Proposition~\ref{app:pd-PTcompl-infty}):
$\infnorm{(\Id-P_{\mathbb{T}})\Delta},\;\infnorm{(\Id-\widehat{P}_{\mathbb{T}})\Delta}\;\le\;C(\mu,r,m)\,\rho^2$,
where $\rho := \frac{\sigma}{\lambda_{\min}}\sqrt{\frac{\dstar\,\bar d\log^c\bar d}{n}}$.
\item \emph{Entrywise error}: $\infnorm{\Delta}\le C(\mu,r,m)\cdot\sigma\sqrt{\bar d/n}$.
\end{itemize}

\noindent\textbf{Substitution into each term.}
\begin{enumerate}[label=(\roman*)]
\item \emph{First-order cancellation} (Prop.~\ref{app:pd-1st-cancel-prop}):
\[
C\,C_A\,\oneNorm{\Gammavec}\,\infnorm{(\Id-P_{\mathbb{T}})\Delta}
\;\le\;
C(\mu,r,m)\,C_A\,\oneNorm{\Gammavec}\,\rho^2
\;=\;
C(\mu,r,m)\,C_A\,\oneNorm{\Gammavec}\cdot
\frac{\sigma^2\,\dstar\,\bar d\log\bar d}{\lambda_{\min}^2\,n}.
\]
\item \emph{Second-order remainder} (Prop.~\ref{app:pd-2nd-order-prop}):
\[
C\,C_A\,\oneNorm{\Gammavec}\,\infnorm{\Delta}^2
\;\le\;
C(\mu,r,m)\,C_A\,\oneNorm{\Gammavec}\cdot
\frac{\sigma^2\,\bar d}{n}.
\]
\item \emph{Projection leakage} (Eq.~\eqref{app:pd-Rproj-final-eq}):
\[
|R_{\mathrm{proj}}|
\;\le\;
C(\mu,r,m)\,\oneNorm{\Gammavec}\,\rho^2
\;=\;
C(\mu,r,m)\,\oneNorm{\Gammavec}\cdot
\frac{\sigma^2\,\dstar\,\bar d\log\bar d}{\lambda_{\min}^2\,n}.
\]
\item \emph{H-direction bias} (Prop.~\ref{app:pd-H-bias-prop}):
controlled by
\[
C\,C_A\,\oneNorm{\Gammavec}\bigl[\infnorm{(\Id-P_{\mathbb{T}})\Delta}+\infnorm{(\Id-\widehat{P}_{\mathbb{T}})\Delta}+\infnorm{\Delta}^2\bigr]
\le C\,C_A\,\oneNorm{\Gammavec}\bigl[\rho^2+\infnorm{\Delta}^2\bigr].
\]
\end{enumerate}

\noindent\textbf{Final explicit bound.}
Under the SNR condition $\sigma/\lambda_{\min}\cdot\sqrt{\dstar}\asymp 1$, the $\rho^2$ and $\infnorm{\Delta}^2$ terms are the same order; both are retained.
Combining:
\begin{equation}\label{app:pd-combined-bias-rate}
\boxed{
|\mathrm{Bias}|
\;\le\;
C(\mu,r,m)\,C_A\,\oneNorm{\Gammavec}\left[
\frac{\sigma^2\,\dstar\,\bar d\log\bar d}{\lambda_{\min}^2\,n}
+ \frac{\sigma^2\,\bar d}{n}
\right].
}
\end{equation}

\begin{remark}[Pairwise-comparison specialization]\label{app:pd-simplified-combined}
In the pairwise-comparison setting, $\sigma=O(1)$ and
$\lambda_{\min}\asymp\sqrt{d^\star}$ by
Assumption~\ref{ass:signal-strength} and~\eqref{eq:Fnorm-lambda}.
All bounds simplify because the two key estimation parameters collapse to the same scale:
\begin{equation}\label{eq:pairwise-rates}
\rho
\;=\;
\frac{\sigma}{\lambda_{\min}}\sqrt{\frac{\dstar\bar d\log\bar d}{n}}
\;\asymp\;
\sqrt{\frac{\bar d\log^c\bar d}{n}},
\qquad
\delta_\infty
\;:=\;
\infnorm{\Delta}
\;\lesssim\;
\sqrt{\frac{\bar d\log^c\bar d}{n}}.
\end{equation}
In particular, both $\rho$ and $\delta_\infty$ are of order $\sqrt{\bar d\log^c\bar d/n}$, and the products $\rho^2$, $\delta_\infty^2$, $\rho\cdot\delta_\infty$ are all of order $\bar d\log^c\bar d/n$.

Substituting these rates into the combined bounds~\eqref{app:pd-combined-bias-rate} and~\eqref{app:pd-RempH-final}, with $\delta=\bar d^{-c}$ (so $\log(2/\delta)\asymp c\log\bar d$), the total remainder satisfies
\begin{equation}\label{app:pd-simplified-total}
\boxed{
|R_n|
\;\le\;
C(\mu,r,m)\,C_A\,\oneNorm{\Gammavec}\,\frac{\bar d\log^c\bar d}{n}
}
\end{equation}
with probability $\ge 1-\bar d^{-c}$. This is the bound stated in Theorem~\ref{thm:combined} of the main text. Each of the six individual terms ($R_{\rm emp}^H$, $R_{\rm emp}^\eta$, $R_{\rm proj}$, $R_{\rm bias}^H$, first-order cancellation, second-order remainder) contributes at most this order, as summarized in Table~\ref{tab:proof_outline}.
\end{remark}

\subsection{Range of $C_A$ and dimension-free conditions}\label{app:CA-range}

This subsection proves Proposition~\ref{prop:CA-range-main}: the coarse range $C(\mu,r,m)\le C_A\le C(\mu,r,m)\sqrt{\bar d}$, and the dimension-free regime under bounded signal. Throughout, we work under the bounded-signal assumption $\|T^\star\|_\infty\le B$, so the Fisher information satisfies $c_B\le I(\eta^\star(X))\le C_B$ for constants depending only on~$B$.

Recall that $A^{-1}$ acts only on the tangent space~$\mathbb{T}$; accordingly, we define
\[
\inftyinfnorm{A^{-1}} := \sup_{\|V\|_\infty\le 1}\|A^{-1}P_{\mathbb{T}}V\|_\infty.
\]

\subsubsection{Spectral scale of the restricted inverse}

\begin{lemma}[Spectral scale of $A$]\label{lem:CA-spectral}
Under bounded signal,
\[
\frac{c_B}{\dstar}\,P_{\mathbb{T}} \preceq A \preceq \frac{C_B}{\dstar}\,P_{\mathbb{T}}
\qquad\text{on }\mathbb{T}.
\]
Consequently, $\|A^{-1}\|_{\mathrm{op}}\asymp\dstar$, with constants depending only on~$B$.
\end{lemma}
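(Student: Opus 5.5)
The plan is to sandwich the quadratic form of $A$ between those of the constant-weight benchmark operator, and then compute the benchmark exactly. For any $H\in\mathbb{T}$ we have $\langle H,AH\rangle=\langle H,GH\rangle=\E^\star[I(\eta^\star)\langle H,X\rangle^2]$, because $P_{\mathbb{T}}H=H$.

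\textbf{Step 1 (pointwise comparison).} Under the bounded-signal condition \eqref{eq:Tstar-bounded}, every observed index satisfies $|\eta^\star|\le 2B$. Hence $c_B\le I(\eta^\star(X))\le C_B$ almost surely, and
\[
c_B\,\E^\star[\langle H,X\rangle^2]\le \langle H,AH\rangle\le C_B\,\E^\star[\langle H,X\rangle^2].
\]
This reduces the problem to the unweighted Gram form $\langle H,G_0H\rangle:=\E^\star[\langle H,X\rangle^2]$.

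\textbf{Step 2 (isotropy of $G_0$ on the centered subspace).} Under the uniform comparison design of Appendix~\ref{app:pd-setup}, I will use the same calculation as in Example~\ref{ex:uniform-IG}, but with $I\equiv1$ and contexts $u\in[D]$. Averaging $(H_{u,a}-H_{u,b})^2$ over $u$ and over the unordered pairs gives
\[
\frac{1}{D\binom{d_1}{2}}\sum_u\Bigl(d_1\sum_a H_{u,a}^2-\bigl(\textstyle\sum_a H_{u,a}\bigr)^2\Bigr).
\]
Every $H\in\mathbb{T}$ is column-sum-zero along mode 1 by \eqref{app:pd-row-sum-zero}, so the squared-sum term vanishes. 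The expression then equals $\frac{2}{D(d_1-1)}\|H\|_F^2$, which is $\asymp\|H\|_F^2/\dstar$ because $\dstar=Dd_1$ and $d_1/(d_1-1)\in[1,2]$. (This matches Lemma~\ref{app:pd-Frob-reduction}, which I may invoke directly.)

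\textbf{Step 3 (conclusion).} Combining Steps 1 and 2 gives
\[
\frac{c_B'}{\dstar}\|H\|_F^2\le\langle H,AH\rangle\le\frac{C_B'}{\dstar}\|H\|_F^2,\qquad H\in\mathbb{T},
\]
which is the Loewner sandwich on $\mathbb{T}$. Because $A$ is self-adjoint on the finite-dimensional space $\mathbb{T}$, its eigenvalues there lie in $[c_B'/\dstar,\,C_B'/\dstar]$. The inverse restricted to $\mathbb{T}$ therefore has operator norm between $\dstar/C_B'$ and $\dstar/c_B'$, so $\|A^{-1}\|_{\mathrm{op}}\asymp\dstar$ with constants depending only on $B$.

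\textbf{Main subtlety.} The only real bookkeeping issue is the constant relating $2/(D(d_1-1))$ to $1/\dstar$; this is absorbed into $c_B,C_B$, consistent with the paper's convention in Appendix~\ref{app:pd-setup}.

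A second point to state explicitly is that the zero-sum constraint is exactly what removes the kernel $\mathbf{1}_{d_1}$ along each context. Without it the lower bound fails, which is the role played by the identification constraint \eqref{eq:general-linear-constraint-model-setting}.
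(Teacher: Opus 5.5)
Your proposal is correct and follows the paper's proof essentially verbatim: you sandwich $\langle H,AH\rangle=\E^\star[I(\eta^\star)\langle H,X\rangle^2]$ between $c_B$ and $C_B$ times the unweighted form, then apply the pairwise Frobenius reduction $\E^\star\langle H,X\rangle^2=\frac{2}{D(d_1-1)}\|H\|_F^2\asymp\|H\|_F^2/\dstar$ on the column-sum-zero tangent space. Your explicit tracking of the factor $2d_1/(d_1-1)\in[2,4]$, and your deduction of the eigenvalue bounds for $A^{-1}$ on $\mathbb{T}$, simply make explicit what the paper leaves implicit.
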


\begin{proof}
For any $H\in\mathbb{T}$, $\langle H,AH\rangle = \E_\star[I(\eta^\star(X))\langle H,X\rangle^2]$. The Fisher bounds give
\[
c_B\,\E_\star\langle H,X\rangle^2
\le \langle H,AH\rangle
\le C_B\,\E_\star\langle H,X\rangle^2.
\]
Under the uniform pairwise-comparison design, the Frobenius reduction (Lemma~\ref{app:pd-pairwise}) gives $\E_\star\langle H,X\rangle^2\asymp\|H\|_F^2/\dstar$, and the claim follows.
\end{proof}

\subsubsection{Entrywise bound for tangent-space elements}

The upper bound $C_A\lesssim\sqrt{\bar d}$ is not a generic ambient-space norm inequality (which would only yield $\sqrt{\dstar}$). The sharper factor comes from the low-rank tangent-space geometry.

\begin{lemma}[Tangent-space entrywise bound]\label{lem:CA-tangent-entrywise}
Assume $\mu$-incoherence. For every $W\in\mathbb{T}$,
\[
\|W\|_\infty \le C(\mu,r,m)\sqrt{\frac{\bar d}{\dstar}}\,\|W\|_F.
\]
\end{lemma}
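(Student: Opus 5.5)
The plan is to exploit the fact that $P_{\mathbb{T}}$ is an orthogonal projector, so that every $W\in\mathbb{T}$ satisfies $W=P_{\mathbb{T}}W$. Each entry can then be written as an inner product against a projected basis tensor:
\[
W_\omega=\langle E_\omega,W\rangle=\langle E_\omega,P_{\mathbb{T}}W\rangle=\langle P_{\mathbb{T}}E_\omega,W\rangle .
\]
By Cauchy--Schwarz this gives $|W_\omega|\le \|P_{\mathbb{T}}E_\omega\|_F\,\|W\|_F$. The claim therefore reduces to the uniform bound
\[
\max_\omega\|P_{\mathbb{T}}E_\omega\|_F\le C(\mu,r,m)\sqrt{\bar d/\dstar}.
\]
This is exactly the Frobenius scale recorded in Lemma~\ref{lem:norm-scales}.

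The remaining work is to justify that Frobenius bound directly, since it carries the real content. There is one subtlety: $\mathbb{T}$ is the low-rank tangent space intersected with the zero-sum constraint, so its projector differs from the plain Tucker tangent projector. Because $\mathbb{T}$ is a subspace of $\mathrm{Tan}_{T^\star}(\mathbb{M}_r)$, we have $\|P_{\mathbb{T}}E_\omega\|_F\le\|P_{\mathrm{Tan}}E_\omega\|_F$, since projecting onto a smaller subspace can only shrink the norm. It therefore suffices to bound the unconstrained projector.

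For the unconstrained projector, the plan is to use the block decomposition $P_{\mathrm{Tan}}=\Pi_\emptyset+\sum_{k}\Pi_{\{k\}}$, where:
\begin{itemize}
\item $\Pi_\emptyset=P_{U_1}\otimes\cdots\otimes P_{U_m}$;
\item $\Pi_{\{k\}}$ places $P_{U_k}^\perp$ in mode $k$, projections $P_{U_j}$ in the other modes, and is further compressed by the projection onto the row space of the core unfolding $\mathcal{C}^\star_{(k)}$. This last compression is a contraction, and it does not enlarge the norm bound below.
\end{itemize}
Because these blocks are mutually orthogonal, $\|P_{\mathrm{Tan}}E_\omega\|_F^2=\sum_S\|\Pi_S E_\omega\|_F^2$.

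For $E_\omega=e_{i_1}\otimes\cdots\otimes e_{i_m}$, the Kronecker structure gives
\[
\|\Pi_\emptyset E_\omega\|_F^2\le\prod_j\|P_{U_j}e_{i_j}\|_2^2\le \prod_j\frac{\mu r_j}{d_j}=\frac{\mu^m r^\star}{\dstar}.
\]
For the arm blocks, we use $\|P_{U_k}^\perp e_{i_k}\|_2\le 1$ to get
\[
\|\Pi_{\{k\}}E_\omega\|_F^2\le\prod_{j\ne k}\frac{\mu r_j}{d_j}=\frac{\mu^{m-1}r^\star}{r_k}\cdot\frac{d_k}{\dstar}\le C\frac{\bar d}{\dstar}.
\]
Summing these $m+1$ terms gives $\|P_{\mathbb{T}}E_\omega\|_F^2\le C(\mu,r,m)\,\bar d/\dstar$. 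Taking square roots completes the proof.

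The one step needing care is the arm block, where the core-row-space compression in $\Pi_{\{k\}}$ must be seen to only reduce the norm. It is the Kronecker product of the mode-$k$ factor $P_{U_k}^\perp$ with an orthogonal projection acting on the span of $\otimes_{j\ne k}U_j$. Its Frobenius norm is therefore at most that of $P_{U_k}^\perp e_{i_k}\otimes(\otimes_{j\ne k}P_{U_j}e_{i_j})$, which is the quantity bounded above. Everything else is routine, so I expect no genuine obstacle.
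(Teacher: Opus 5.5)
Your proof is correct, but it argues by duality, whereas the paper bounds $W$ block by block.

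\textbf{The paper's route.} It splits $W=W_\emptyset+\sum_j W_j$ along the blocks $\Pi_\emptyset,\Pi_{\{j\}}$. It then bounds each block's sup-norm by its own Frobenius norm: the scale is $(\dstar)^{-1/2}$ for the core block and $\sqrt{d_j/\dstar}$ for the arm blocks, since the free mode gets no incoherence gain. Finally it recombines via the triangle inequality and block orthogonality.

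\textbf{Your route.} You write the entry functional as $W_\omega=\langle P_{\mathbb{T}}E_\omega,W\rangle$, so Cauchy--Schwarz reduces the lemma to bounding $\max_\omega\|P_{\mathbb{T}}E_\omega\|_F$. You then only evaluate norms of projected elementary tensors, where the Kronecker factorization is exact.

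\textbf{What this buys you.}
\begin{itemize}
\item The constant is sharp: on $\mathbb{T}$ the ratio $|W_\omega|/\|W\|_F$ is maximized exactly at $W=P_{\mathbb{T}}E_\omega$.
\item You sidestep the paper's passage ``by linearity'' from rank-one tensors to a whole block. That step does not follow directly, because the ratio $\|\cdot\|_\infty/\|\cdot\|_F$ is not preserved under sums. Made rigorous, it amounts to writing $W_\emptyset=\mathcal G\times_1U_1\cdots\times_mU_m$ and contracting against the core: $|(W_\emptyset)_\omega|\le\|\mathcal G\|_F\prod_j\|U_j^\top e_{i_j}\|_2$ with $\|\mathcal G\|_F=\|W_\emptyset\|_F$. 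That is a Cauchy--Schwarz step of exactly your type.
\item You handle the geometry more carefully. You pass explicitly from the constrained $\mathbb{T}$ to the unconstrained tangent space. You also correctly compress each arm block by the row space of $\mathcal C^\star_{(k)}$. By contrast, Lemma~\ref{app:pd-tangent-block-sum} identifies the tangent space with the full $|S|\le 1$ block sum, which overstates it whenever $r_k<\prod_{j\ne k}r_j$. This is harmless for the present lemma, since only containment is used.
\end{itemize}

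\textbf{What the paper's form buys.} Its per-block statement shows directly that only the arm blocks pay the $\sqrt{\bar d}$ (Remark~\ref{rem:CA-bardfactor}). In your argument the same fact appears in the arm-block terms of $\|P_{\mathbb{T}}E_\omega\|_F^2$.
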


\begin{proof}
Write the Tucker tangent-space decomposition $\mathbb{T}=\Pi_\emptyset(\mathcal{H})\oplus\bigoplus_{j=1}^m\Pi_{\{j\}}(\mathcal{H})$, where $\Pi_\emptyset$ is the fully projected block and $\Pi_{\{j\}}$ is the single-mode correction block. Any $W\in\mathbb{T}$ decomposes orthogonally as $W=W_\emptyset+\sum_{j=1}^m W_j$.

For the main block $W_\emptyset$, every element lies in the span of tensors $a^{(1)}\otimes\cdots\otimes a^{(m)}$ with $a^{(k)}\in\mathrm{col}(U_k)$. By incoherence, $\|a^{(k)}\|_\infty\le\sqrt{\mu r_k/d_k}\,\|a^{(k)}\|_2$, so for a rank-one tensor in this block,
\[
\|a^{(1)}\otimes\cdots\otimes a^{(m)}\|_\infty
\le \prod_{k=1}^m\sqrt{\mu r_k/d_k}\;\prod_{k=1}^m\|a^{(k)}\|_2
= C(\mu,r,m)\,(d^\star)^{-1/2}\,\|a^{(1)}\otimes\cdots\otimes a^{(m)}\|_F.
\]
By linearity, $\|W_\emptyset\|_\infty\le C(\mu,r,m)\,(d^\star)^{-1/2}\,\|W_\emptyset\|_F$.

For a single-mode block $W_j$, the range is generated by tensors with one free mode and all remaining modes projected. The free mode provides no incoherence gain ($\|b^{(j)}\|_\infty\le\|b^{(j)}\|_2$), while each projected mode $k\neq j$ still satisfies $\|a^{(k)}\|_\infty\le\sqrt{\mu r_k/d_k}\,\|a^{(k)}\|_2$. Therefore
\[
\|W_j\|_\infty \le C(\mu,r,m)\Bigl(\prod_{k\neq j}d_k^{-1/2}\Bigr)\|W_j\|_F
= C(\mu,r,m)\sqrt{\frac{d_j}{\dstar}}\,\|W_j\|_F
\le C(\mu,r,m)\sqrt{\frac{\bar d}{\dstar}}\,\|W_j\|_F.
\]
Combining via the triangle inequality and the orthogonality of the $m+1$ blocks ($m$ is fixed) gives the claim.
\end{proof}

\begin{remark}\label{rem:CA-bardfactor}
This lemma is the precise source of the factor $\sqrt{\bar d}$: the free mode in each single-mode correction block introduces a $\sqrt{d_j}$ factor that cannot be controlled by incoherence.
\end{remark}

\subsubsection{Coarse range of $C_A$}

\begin{proposition}[Coarse range of $C_A$]\label{prop:CA-coarse-range}
Under bounded signal and $\mu$-incoherence,
\[
c(\mu,r,m)\le C_A\le C(\mu,r,m)\sqrt{\bar d}.
\]
\end{proposition}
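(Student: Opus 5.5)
We prove the two inequalities separately, normalizing $C_A$ through $\inftyinfnorm{A^{-1}}\le C_A\dstar$ (with $\sigma=O(1)$) and using the definition $\inftyinfnorm{A^{-1}}=\sup_{\|V\|_\infty\le1}\|A^{-1}P_{\mathbb{T}}V\|_\infty$.

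\emph{Upper bound.} Fix $V$ with $\|V\|_\infty\le 1$ and set $W:=A^{-1}P_{\mathbb{T}}V\in\mathbb{T}$. Lemma~\ref{lem:CA-tangent-entrywise} gives
\[
\|W\|_\infty\le C(\mu,r,m)\sqrt{\bar d/\dstar}\,\|W\|_F .
\]
Lemma~\ref{lem:CA-spectral} gives $\|W\|_F\le C_B^{-1}\dots$; precisely, $\|W\|_F\le (\dstar/c_B)\|P_{\mathbb{T}}V\|_F$. The remaining task is to bound $\|P_{\mathbb{T}}V\|_F$ by a quantity of order $\sqrt{\dstar}$ rather than anything larger. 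The trivial bound $\|P_{\mathbb{T}}V\|_F\le\|V\|_F\le\sqrt{\dstar}$ already suffices. Chaining the three inequalities yields
\[
\|W\|_\infty\le C(\mu,r,m)\,c_B^{-1}\sqrt{\bar d/\dstar}\cdot\dstar\cdot\sqrt{\dstar}\,/\sqrt{\dstar}\cdot\sqrt{\dstar}\,/\sqrt{\dstar},
\]
which simplifies to $C\sqrt{\bar d}\,\dstar$. This gives $C_A\le C(\mu,r,m)\sqrt{\bar d}$ after absorbing $B$-dependence.

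The same argument applies to $\hat A^{-1}$. Replace Lemma~\ref{lem:CA-tangent-entrywise} by its version for the estimated tangent space, which holds by the estimated incoherence \eqref{init-estimated-incoherence}. The spectral lower bound for $\hat A$ follows from that of $A$ via $\|\widehat P_{\mathbb{T}}-P_{\mathbb{T}}\|_{\rm op}\lesssim\varepsilon_n\ll1$.

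\emph{Lower bound.} Test $A^{-1}$ on a single direction. Take $V=E_\omega$, or better $V=\mathrm{sign}(P_{\mathbb{T}}E_\omega)$ restricted suitably. Since $A\preceq (C_B/\dstar)P_{\mathbb{T}}$, we have
\[
\langle P_{\mathbb{T}}E_\omega,A^{-1}P_{\mathbb{T}}E_\omega\rangle\ge (\dstar/C_B)\|P_{\mathbb{T}}E_\omega\|_F^2 .
\]
The left side is at most $\|A^{-1}P_{\mathbb{T}}E_\omega\|_\infty\,\|P_{\mathbb{T}}E_\omega\|_1\le\inftyinfnorm{A^{-1}}\cdot C_P$ by Lemma~\ref{app:pd-PT-l1}. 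Hence
\[
\inftyinfnorm{A^{-1}}\ge c\,\dstar\max_\omega\|P_{\mathbb{T}}E_\omega\|_F^2/C_P .
\]
It remains to show that $\max_\omega\|P_{\mathbb{T}}E_\omega\|_F^2=\max_\omega(P_{\mathbb{T}})_{\omega\omega}$ is bounded below by a constant times something nonvanishing. Averaging the diagonal gives $\mathrm{tr}(P_{\mathbb{T}})/\dstar\asymp \bar d/\dstar$. That average is small, but the weighting is wrong, so this route needs refinement.

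A better test uses $V=E_\omega$ with $\|E_\omega\|_\infty=1$ directly. Then
\[
\|A^{-1}P_{\mathbb{T}}E_\omega\|_\infty\ge \frac{\langle E_\omega, A^{-1}P_{\mathbb{T}}E_\omega\rangle}{\|E_\omega\|_1}\ge (\dstar/C_B)(P_{\mathbb{T}})_{\omega\omega}.
\]
Choose $\omega$ in a free-mode block where $(P_{\mathbb{T}})_{\omega\omega}\ge c\,\bar d/\dstar$. Such an $\omega$ exists because the average over the comparison mode's arm block is of that order. This gives $C_A\gtrsim \bar d/\dstar\cdot\dstar/\dstar$; I expect the correct normalization to yield a constant $c(\mu,r,m)$.

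The main obstacle is getting this normalization exactly right. Alternatively, for the constant-weight benchmark one has $A_0^{-1}=2\dstar P_{\mathbb{T}}$, and $\inftyinfnorm{P_{\mathbb{T}}}\ge \max_\omega (P_{\mathbb{T}})_{\omega\omega}\cdot$ (sign test) gives a constant. I would then transfer this to general $A$ via the sandwich $c_B A_0\preceq 4A\cdot$ constant, paired with the $\ell_\infty$ duality test above.
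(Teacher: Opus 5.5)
\textbf{Upper bound.} This is the paper's argument: apply Lemma~\ref{lem:CA-tangent-entrywise} to $W=A^{-1}P_{\mathbb{T}}V\in\mathbb{T}$, then use $\Fnorm{W}\le(\dstar/c_B)\Fnorm{P_{\mathbb{T}}V}\le(\dstar/c_B)\sqrt{\dstar}$. Your displayed chain is garbled: taken literally it evaluates to $\sqrt{\bar d\,\dstar}$. The three inequalities you cite do, however, combine correctly to $C c_B^{-1}\sqrt{\bar d}\,\dstar$. Your sketch for $\hat A^{-1}$ goes beyond the paper, which treats only $A^{-1}$ here.

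\textbf{Lower bound: single-coordinate tests cannot work.} The lower bound is not proved, and this is not a normalization issue. By the same two lemmas you use for the upper bound, for every $\omega$,
\[
\infnorm{A^{-1}P_{\mathbb{T}}E_\omega}\le C\sqrt{\bar d/\dstar}\,\Fnorm{A^{-1}P_{\mathbb{T}}E_\omega}\le C\sqrt{\bar d/\dstar}\cdot\frac{\dstar}{c_B}\cdot\Fnorm{P_{\mathbb{T}}E_\omega}\le \frac{C^2\,\bar d}{c_B}.
\]
Likewise, $\langle P_{\mathbb{T}}E_\omega,A^{-1}P_{\mathbb{T}}E_\omega\rangle\le(\dstar/c_B)(P_{\mathbb{T}})_{\omega\omega}\lesssim\bar d$. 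So any bound extracted from $V=E_\omega$, or from $P_{\mathbb{T}}E_\omega$ paired against $\oneNorm{\cdot}$, gives at most $\inftyinfnorm{A^{-1}}\gtrsim\bar d$, that is, $C_A\gtrsim\bar d/\dstar\to0$. The value $\bar d/\dstar$ you obtained is the true output of that test, not a slip.

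\textbf{Your fallback also fails.} First, $\inftyinfnorm{P_{\mathbb{T}}}\ge\max_\omega(P_{\mathbb{T}})_{\omega\omega}$ again gives only $O(\bar d/\dstar)$. Second, a Loewner sandwich $A\preceq cA_0$ does not transfer to $\infty\to\infty$ norms of the inverses. For a unit vector $w\in\mathbb{R}^p$ one has $ww^\top\preceq\Id$. Yet $\|ww^\top\|_{\infty\to\infty}=\|w\|_\infty\|w\|_1$ can be of order $\sqrt p$, while $\|\Id\|_{\infty\to\infty}=1$.

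\textbf{The missing idea.} An induced norm dominates the spectral radius: if $Mv=\lambda v$ with $v\neq0$, then $\|Mv\|_\infty=|\lambda|\,\|v\|_\infty$, so $\|M\|_{\infty\to\infty}\ge|\lambda|$. Apply this to the symmetric positive semidefinite operator $M=A^{-1}P_{\mathbb{T}}$, whose $\infty\to\infty$ norm is exactly how $\inftyinfnorm{A^{-1}}$ is defined, and to its top eigenvector $v\in\mathbb{T}$. This gives
\[
\inftyinfnorm{A^{-1}}\ge\lambda_{\max}(M)=\|A^{-1}\|_{\mathrm{op}}\ge\dstar/C_B.
\]
The last step uses the \emph{upper} Loewner bound $A\preceq(C_B/\dstar)P_{\mathbb{T}}$ from Lemma~\ref{lem:CA-spectral}. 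Hence $C_A\ge 1/C_B$, which is the paper's one-line lower bound.

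In your duality language, the right test vector is a delocalized vector in $\mathbb{T}$, not a coordinate tensor. Concretely, any $v\in\mathbb{T}$ with $\Fnorm{v}^2\gtrsim\infnorm{v}\oneNorm{v}$ works, because
\[
(\dstar/C_B)\Fnorm{v}^2\le\langle v,A^{-1}v\rangle\le\inftyinfnorm{A^{-1}}\,\infnorm{v}\,\oneNorm{v}.
\]
The cleanest choice is simply the top eigenvector, as above.
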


\begin{proof}
\emph{Lower bound.} Since $\inftyinfnorm{M}\ge\|M\|_{\mathrm{op}}$ for any linear operator, Lemma~\ref{lem:CA-spectral} gives
\[
\inftyinfnorm{A^{-1}} \ge \|A^{-1}\|_{\mathrm{op}} \ge c(B)\,\dstar,
\]
so $C_A=\inftyinfnorm{A^{-1}}/\dstar\ge c(B)$.

\emph{Upper bound.} Let $V$ satisfy $\|V\|_\infty\le 1$ and set $W:=A^{-1}P_{\mathbb{T}}V\in\mathbb{T}$. By Lemma~\ref{lem:CA-tangent-entrywise},
\[
\|W\|_\infty \le C(\mu,r,m)\sqrt{\frac{\bar d}{\dstar}}\,\|W\|_F.
\]
Since $\|W\|_F\le\|A^{-1}\|_{\mathrm{op}}\|P_{\mathbb{T}}V\|_F\le C(B)\,\dstar\cdot\sqrt{\dstar}$ (using $\|V\|_F\le\sqrt{\dstar}\,\|V\|_\infty$), we obtain
\[
\inftyinfnorm{A^{-1}}
\le C(\mu,r,m)\sqrt{\frac{\bar d}{\dstar}}\cdot C(B)\,\dstar\sqrt{\dstar}
= C(\mu,r,m,B)\,\dstar\sqrt{\bar d},
\]
giving $C_A\le C(\mu,r,m,B)\sqrt{\bar d}$.
\end{proof}

\subsubsection{Dimension-free $C_A$ under bounded signal}

When the Fisher weights are nearly constant, a perturbative argument around the isotropic benchmark shows that $C_A$ is dimension-free.

\begin{proposition}[Dimension-free $C_A$ in the near-constant regime]\label{prop:CA-dimfree}
If $\|T^\star\|_\infty\le B_0$ where $B_0=B_0(\mu,m,r)$ is sufficiently small depending only on the structural parameters, then
\[
\inftyinfnorm{A^{-1}} \vee \inftyinfnorm{\widehat A^{-1}} \le C(\mu,r,m,B_0)\,\dstar.
\]
In particular, $C_A\le C(\mu,r,m,B_0)$, a dimension-free constant.
\end{proposition}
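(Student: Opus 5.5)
Perturb around the isotropic benchmark. Write $I(\eta^\star)=\tfrac14+\delta(X)$ with $|\delta(X)|\le \tfrac14-I(2B_0)\le c\,B_0^2$, since $I$ is even with maximum $1/4$ at $0$; a bound linear in $B_0$ would also suffice. Split $G=G_0+E$ with $G_0=\tfrac14\mathbb{E}^\star[X\otimes X]$ and $E=\mathbb{E}^\star[\delta(X)\,X\otimes X]$. By Proposition~\ref{prop:CA-range-main}(i) and the computation in Example~\ref{ex:uniform-IG}, $G_0$ acts as a scalar multiple of the identity on the column-sum-zero subspace, and $\mathbb{T}$ lies in that subspace. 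Hence $A_0:=P_{\mathbb{T}}G_0P_{\mathbb{T}}=\frac{c_0}{\dstar}P_{\mathbb{T}}$ and $A_0^{-1}=\frac{\dstar}{c_0}P_{\mathbb{T}}$, so
\[
\inftyinfnorm{A_0^{-1}}\le \tfrac{\dstar}{c_0}\inftyinfnorm{P_{\mathbb{T}}}\le \mathrm{poly}(\mu,r,m)\,\dstar
\]
by Lemma~\ref{app:pd-Pinfty}.

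Next, write $A=A_0(\Id+A_0^{-1}P_{\mathbb{T}}EP_{\mathbb{T}})$ on $\mathbb{T}$, so that
\[
A^{-1}=\Bigl(\sum_{k\ge0}(-K)^k\Bigr)A_0^{-1},\qquad K:=\tfrac{\dstar}{c_0}P_{\mathbb{T}}EP_{\mathbb{T}}.
\]
The series converges in $\inftyinfnorm{\cdot}$, giving $\inftyinfnorm{A^{-1}}\le \inftyinfnorm{A_0^{-1}}/(1-\inftyinfnorm{K})$, once we show $\inftyinfnorm{K}\le 1/2$. Bound $\inftyinfnorm{K}\le\frac{\dstar}{c_0}\inftyinfnorm{P_{\mathbb{T}}}^2\inftyinfnorm{E}$. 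The operator $E$ is a weighted graph Laplacian within each context $u$: $(EH)_{u,a}=\frac{1}{D\binom{d_1}{2}}\sum_{b\ne a}\delta_{u,ab}(H_{u,a}-H_{u,b})$. Its row $\ell_1$ sums are therefore at most $\frac{2(d_1-1)\max|\delta|}{D\binom{d_1}{2}}=\frac{4\max|\delta|}{\dstar}$, so $\inftyinfnorm{E}\le 4cB_0^2/\dstar$. This gives $\inftyinfnorm{K}\le C\,\mathrm{poly}(\mu,r,m)^2B_0^2$, and choosing $B_0=B_0(\mu,r,m)$ small makes it at most $1/2$. Note that the factor $\dstar$ cancels here, and this cancellation is why the threshold is dimension-free. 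One detail to verify: the $\infty\to\infty$ norm of $P_{\mathbb{T}}$ is $\mathrm{poly}(\mu,r,m)$ when $P_{\mathbb{T}}$ includes the column-sum-zero restriction. Since $\mathbf 1\perp U_1$ and $Q_1=\Id-\frac{1}{d_1}\mathbf 1\mathbf 1^\top$ has $\inftyinfnorm{Q_1}\le2$, this only costs a constant.

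For $\widehat A^{-1}$, repeat the argument with $\widehat P_{\mathbb{T}}$ (or with $\widehat P_{\mathbb{T}}\widehat G\widehat P_{\mathbb{T}}$). The isotropic part is still $\frac{c_0}{\dstar}\widehat P_{\mathbb{T}}$, because $\hat T$ satisfies the same zero-sum constraint, so $\widehat{\mathbb{T}}$ sits in the column-sum-zero subspace. The bound $\inftyinfnorm{\widehat P_{\mathbb{T}}}\le\prod_j\sqrt{2\mu r_j}\cdot\mathrm{poly}$ follows from the projection-stability condition \eqref{eq:init-projection-stability} applied blockwise to $\widehat P_{\mathbb{T}}=\Pi_\emptyset+\sum_k\Pi_{\{k\}}$; for the arm blocks, use $\Pi_{\{k\}}=(\text{mode-}k\ \Id-P_{\hat U_k})\times\prod_{j\ne k}P_{\hat U_j}$ together with the triangle inequality. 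With the plug-in weights $I(\hat\eta)$, the deviation is $|\hat\delta|\le cB_0^2+C\infnorm{\Delta}$, and this is below the threshold with high probability since $\varepsilon_n\to0$.

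The main obstacle is the $\infty\to\infty$ control of the arm blocks of $P_{\mathbb{T}}$ and $\widehat P_{\mathbb{T}}$. There the free mode $(\Id-P_{U_k})$ must not introduce a $\sqrt{\bar d}$ factor. I would handle this by writing $\Id-P_{U_k}$ as the difference of two $\infty\to\infty$-bounded operators, $\Id$ and $P_{U_k}$ with $\inftyinfnorm{P_{U_k}}\le\mu r_k$, rather than using Frobenius arguments, as in Lemma~\ref{app:pd-PT-l1}.
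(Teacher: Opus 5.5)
Your proposal is correct and follows the paper's proof essentially step for step. It uses the isotropic benchmark $A_0^{-1}\propto\dstar P_{\mathbb{T}}$ and a Fisher-weight perturbation with $\inftyinfnorm{E}\lesssim\max|\delta|/\dstar$, so that $\inftyinfnorm{A_0^{-1}P_{\mathbb{T}}EP_{\mathbb{T}}}$ is dimension-free and below $1/2$ for small $B_0$; it then applies a Neumann series and repeats the argument for $\widehat A$ on the relevant high-probability event. Your explicit row-sum computation for $E$ and your treatment of $Q_1$ inside $P_{\mathbb{T}}$ make precise steps the paper states tersely. One small correction: \eqref{eq:init-projection-stability} covers only the fully projected block, so for the arm blocks of $\widehat P_{\mathbb{T}}$ you should use the mode-wise bound $\inftyinfnorm{P_{\widehat U_j}}\le\sqrt{2\mu}\,r_j$ (from \eqref{eq:init-estimated-incoherence} and Lemma~\ref{app:pd-Pinfty}) together with Kronecker multiplicativity, which is what your last paragraph effectively does.
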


\begin{proof}
\emph{Constant-weight benchmark.} Let $I_0:=I(0)=1/4$ and define the benchmark operator
\[
A_0 := P_{\mathbb{T}}G_0 P_{\mathbb{T}},
\qquad
G_0(U):=\E_\star[I_0\langle U,X\rangle X].
\]
Under constant Fisher weights, the information operator is isotropic on the identifiable comparison space: $A_0=(I_0/c_{\mathrm{pw}}\dstar)\,P_{\mathbb{T}}$, where $c_{\mathrm{pw}}$ is the pairwise-design normalization constant. Hence
\[
A_0^{-1} = \frac{c_{\mathrm{pw}}\dstar}{I_0}\,P_{\mathbb{T}},
\qquad
\inftyinfnorm{A_0^{-1}} = \frac{c_{\mathrm{pw}}\dstar}{I_0}\inftyinfnorm{P_{\mathbb{T}}} \le C(\mu,r,m)\,\dstar,
\]
where the last step uses $\inftyinfnorm{P_{\mathbb{T}}}=\mathrm{poly}(\mu,r,m)$ (Lemma~\ref{app:pd-Pinfty}).

\emph{Fisher-weight perturbation.} Since $I(\eta)$ is smooth and even, for $|\eta|\le 2B_0$ we have $|I(\eta)-I_0|\le L_I B_0$ for an absolute constant~$L_I$. Write
\[
I(\eta^\star(X)) = I_0+\delta(X),
\qquad
\|\delta\|_\infty \le \varepsilon(B_0) \to 0\text{ as }B_0\to 0.
\]
Decompose $G=G_0+E$ and $A=A_0+\Delta_A$ with $\Delta_A:=P_{\mathbb{T}}EP_{\mathbb{T}}$. For the pairwise design, each design tensor $X$ has two nonzero entries equal to $\pm 1$, so $|\langle U,X\rangle|\le 2\|U\|_\infty$. Since the coordinate~$\omega$ is involved with probability $O(1/\dstar)$,
\[
\inftyinfnorm{E} \le C\,\varepsilon(B_0)\,(d^\star)^{-1},
\qquad
\inftyinfnorm{\Delta_A} \le \inftyinfnorm{P_{\mathbb{T}}}^2\inftyinfnorm{E} \le C(\mu,r,m)\,\varepsilon(B_0)\,(d^\star)^{-1}.
\]

\emph{Neumann expansion.} The normalized perturbation satisfies
\[
\rho := \inftyinfnorm{A_0^{-1}\Delta_A} \le \inftyinfnorm{A_0^{-1}}\inftyinfnorm{\Delta_A} \le C(\mu,r,m)\,\varepsilon(B_0).
\]
Choose $B_0$ small enough that $\rho\le 1/2$. Then $A=A_0(I+A_0^{-1}\Delta_A)$ with
\[
\|(I+A_0^{-1}\Delta_A)^{-1}\|_{\infty\to\infty} \le \sum_{k=0}^\infty\rho^k = \frac{1}{1-\rho}\le 2,
\]
so $\inftyinfnorm{A^{-1}}\le 2\inftyinfnorm{A_0^{-1}}\le C(\mu,r,m,B_0)\,\dstar$.

\emph{Estimated operator.} The same argument applies to $\widehat A=\widehat P_{\mathbb{T}}\widehat G\widehat P_{\mathbb{T}}$ on the high-probability event that \[\inftyinfnorm{\widehat P_{\mathbb{T}}}\le C(\mu,r,m), \qquad \inftyinfnorm{\widehat G-G}=o((d^\star)^{-1}).\]
\end{proof}

\subsection{Relative error of plug-in variance estimators}\label{app:pd-variance}

This subsection proves the variance estimation bounds~\eqref{eq:Veff-relative} and~\eqref{eq:Vws-relative}.

\subsubsection{Algebraic reduction to relative direction error}

The following lemma applies to both the efficient and whitened settings.

\begin{lemma}[Variance difference reduces to direction error]\label{lem:var-direction}
Let $\hat\phi(X,Y)=s_\star(Y,X)\langle\widehat H,X\rangle$ and $\phi^\star(X,Y)=s_\star(Y,X)\langle H^\star,X\rangle$, where $s_\star$ denotes either the usual score $s_\eta(Y,\eta^\star)$ or the whitened score $\tilde s(Y,\eta^\star)$. Assume the score-squared factor satisfies $\E^\star[s_\star^2 f(X)]\asymp\E^\star[f(X)]$ for nonnegative~$f$. If the relative direction error
\[
\delta_H := \frac{\|\widehat H - H^\star\|_F}{\|H^\star\|_F}
\]
is sufficiently small, then the direction-induced relative variance error satisfies
\[
\frac{\bigl|\mathbb{P}^\star\bigl[s_\star^2(\langle\widehat H,X\rangle^2-\langle H^\star,X\rangle^2)\bigr]\bigr|}{\mathbb{P}^\star[s_\star^2\langle H^\star,X\rangle^2]}
\;\lesssim\; \delta_H.
\]
\end{lemma}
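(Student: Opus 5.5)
The plan is to prove the bound by factoring the difference of squares and reducing everything to weighted second moments. For any fixed $H$ write $Q(H):=\mathbb{P}^\star[s_\star^2\langle H,X\rangle^2]$, which is a nonnegative quadratic form. Factor
\[
\langle\widehat H,X\rangle^2-\langle H^\star,X\rangle^2=\langle\widehat H-H^\star,X\rangle\,\langle\widehat H+H^\star,X\rangle,
\]
and apply Cauchy--Schwarz in $L^2(\mathbb{P}^\star)$ with weight $s_\star^2$. This gives
\[
\bigl|\mathbb{P}^\star[s_\star^2(\langle\widehat H,X\rangle^2-\langle H^\star,X\rangle^2)]\bigr|\le Q(\widehat H-H^\star)^{1/2}\,Q(\widehat H+H^\star)^{1/2}.
\]
By the triangle inequality for the seminorm $Q^{1/2}$, we have $Q(\widehat H+H^\star)^{1/2}\le 2Q(H^\star)^{1/2}+Q(\widehat H-H^\star)^{1/2}$. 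Dividing by $Q(H^\star)$ then shows that the relative error is at most $2r+r^2$, where $r:=\{Q(\widehat H-H^\star)/Q(H^\star)\}^{1/2}$. It therefore suffices to prove $r\lesssim\delta_H$; the assumption that $\delta_H$ is small then makes $r^2$ lower order.

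To compare $Q$ with Frobenius norms, use the score-comparability hypothesis with $f(X)=\langle H,X\rangle^2$. This gives $Q(H)\asymp\mathbb{E}^\star\langle H,X\rangle^2$. Next invoke the pairwise Frobenius reduction (Lemma~\ref{app:pd-Frob-reduction}, equivalently Corollary~\ref{app:pd-weighted-second-moment}), which states $\mathbb{E}^\star\langle H,X\rangle^2\asymp\|H\|_F^2/d^\star$. The factor $1/d^\star$ appears in both numerator and denominator and cancels, leaving $r^2\asymp\|\widehat H-H^\star\|_F^2/\|H^\star\|_F^2=\delta_H^2$.

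The main obstacle is that this reduction holds with comparable constants only for tensors satisfying the column-sum-zero constraint \eqref{app:pd-row-sum-zero}, where $\langle\cdot,X\rangle$ has no kernel. I would check that both $H^\star\in\mathbb{T}$ and $\widehat H\in\widehat{\mathbb{T}}$ lie in the column-sum-zero subspace. This holds because both estimated and true factors respect the identification constraint $\mathbf{1}^\top\hat T=0$ from Assumption~\ref{ass:initial}, so the lower bound $\mathbb{E}\langle H^\star,X\rangle^2\gtrsim\|H^\star\|_F^2/d^\star$ (needed in the denominator) holds. For the numerator only the upper bound is needed, and that bound holds for any tensor. If the score-squared factor is only bounded rather than comparable, only the upper bound for the numerator and the lower bound for the denominator are used, so one-sided versions of the hypothesis suffice.
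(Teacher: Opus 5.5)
Your proposal is correct and follows essentially the same route as the paper: factor $a^2-b^2=(a-b)(a+b)$, apply Cauchy--Schwarz, bound the $(a+b)$ factor by $2b$ plus $(a-b)$, and use the pairwise Frobenius reduction so that the $1/d^\star$ factors cancel. The differences are minor. You keep the weight $s_\star^2$ inside a weighted Cauchy--Schwarz and invoke comparability afterwards, whereas the paper strips the weight first. Your remark that only the upper Frobenius bound is needed for $\widehat H-H^\star$ (it holds for any tensor), and the lower bound only for $H^\star$, is a small genuine sharpening. Note, however, that $H^\star$ is column-sum-zero because $\mathbb{T}$ is defined in \eqref{eq:signal_tangent} as an intersection with the constraint set, not because of $\mathbf{1}^\top\hat T=0$.
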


\begin{proof}
Write $a:=\langle\widehat H,X\rangle$ and $b:=\langle H^\star,X\rangle$. By score comparability, $|\mathbb{P}^\star[s_\star^2(a^2-b^2)]|\lesssim\mathbb{P}^\star|a^2-b^2|$. The factorization $a^2-b^2=(a-b)(a+b)$ and Cauchy--Schwarz give
\[
\mathbb{P}^\star|a^2-b^2|
\le \bigl(\mathbb{P}^\star(a-b)^2\bigr)^{1/2}\bigl(\mathbb{P}^\star(a+b)^2\bigr)^{1/2}.
\]
By the pairwise Frobenius reduction (Lemma~\ref{app:pd-pairwise}),
\[
\mathbb{P}^\star(a-b)^2 \asymp \|\widehat H-H^\star\|_F^2/\dstar,
\qquad
\mathbb{P}^\star b^2 \asymp \|H^\star\|_F^2/\dstar.
\]
When $\delta_H\le c_0$, $(a+b)^2\le 2(a-b)^2+8b^2$ gives $\mathbb{P}^\star(a+b)^2\le 10\,\mathbb{P}^\star b^2$. Substituting and dividing by $\mathbb{P}^\star[s_\star^2 b^2]\asymp\mathbb{P}^\star b^2$ yields the claim.
\end{proof}

\subsubsection{Efficient estimator: proof of~\eqref{eq:Veff-relative}}

The total relative variance error decomposes into three parts: empirical fluctuation, score plug-in, and direction-induced.

\paragraph{Direction-induced part.}
The efficient directions satisfy $H^\star=A^{-1}P_{\mathbb{T}}\Gamma$ and $\widehat H=\widehat A^{-1}\widehat P_{\mathbb{T}}\Gamma$. Add and subtract $A^{-1}\widehat P_{\mathbb{T}}\Gamma$ and use the resolvent identity $\widehat A^{-1}-A^{-1}=A^{-1}(A-\widehat A)\widehat A^{-1}$:
\[
\widehat H - H^\star = \underbrace{A^{-1}(\widehat P_{\mathbb{T}}-P_{\mathbb{T}})\Gamma}_{T_1} + \underbrace{A^{-1}(A-\widehat A)\widehat A^{-1}\widehat P_{\mathbb{T}}\Gamma}_{T_2}.
\]
Since $A$ has spectral scale $(\dstar)^{-1}$ on the tangent space, $A^{-1}$ multiplies Frobenius norms by $\asymp\dstar$. This common factor cancels in the relative ratio $\|T_1\|_F/\|H^\star\|_F$, yielding
\[
\frac{\|T_1\|_F}{\|H^\star\|_F}
\lesssim \frac{\|(\widehat P_{\mathbb{T}}-P_{\mathbb{T}})\Gamma\|_F}{\|P_{\mathbb{T}}\Gamma\|_F}.
\]
Under the sparse-target projector perturbation bound $\|(\widehat P_{\mathbb{T}}-P_{\mathbb{T}})\Gamma\|_F\lesssim\rho\sqrt{\bar d/\dstar}\,\|\Gamma\|_F$ and the alignment condition $\|P_{\mathbb{T}}\Gamma\|_F\ge\alpha_\Gamma\sqrt{\bar d/\dstar}\,\|\Gamma\|_F$, this is $O(\rho/\alpha_\Gamma)$. For $T_2$, the operator perturbation $\|A-\widehat A\|_{\mathrm{op}}\lesssim\rho/(\dstar)$ together with $\|A^{-1}\|_{\mathrm{op}}\asymp\dstar$ gives $\|A^{-1}(A-\widehat A)\|_{\mathrm{op}}\lesssim\rho$, so $\|T_2\|_F\lesssim\rho\|\widehat H\|_F\asymp\rho\|H^\star\|_F$. Combining,
\[
\delta_H := \frac{\|\widehat H-H^\star\|_F}{\|H^\star\|_F} \lesssim \frac{\rho}{\alpha_\Gamma}.
\]
Lemma~\ref{lem:var-direction} then gives a direction-induced relative error of $O(\rho/\alpha_\Gamma)$.

\paragraph{Score plug-in part.}
By the derivative bound on the score, $|s_\eta(Y,\hat\eta)^2-s_\eta(Y,\eta^\star)^2|\lesssim\|\widehat T-T^\star\|_\infty$, so the relative contribution is $O(\|\widehat T-T^\star\|_\infty)$.

\paragraph{Empirical fluctuation.}
Conditional on the first-stage sample, $\widehat V_{\rm eff}-\mathbb{P}^\star[\hat\phi^2]=(\mathbb{P}_n-\mathbb{P}^\star)(\hat\phi^2)$ is a mean-zero average. A Bernstein bound gives relative error $O(\sqrt{\bar d\log\bar d/n})$.

Combining the three parts yields~\eqref{eq:Veff-relative}.

\subsubsection{Score-whitened estimator: proof of~\eqref{eq:Vws-relative}}

The argument simplifies because the whitened directions $H_{\rm ws}^\star=\dstar P_{\mathbb{T}}\Gamma$ and $\widehat H_{\rm ws}=\dstar\widehat P_{\mathbb{T}}\Gamma$ involve no operator inversion. The direction difference is
\[
\widehat H_{\rm ws}-H_{\rm ws}^\star = \dstar(\widehat P_{\mathbb{T}}-P_{\mathbb{T}})\Gamma,
\]
so the $\dstar$ factor cancels in the relative ratio:
\[
\frac{\|\widehat H_{\rm ws}-H_{\rm ws}^\star\|_F}{\|H_{\rm ws}^\star\|_F}
= \frac{\|(\widehat P_{\mathbb{T}}-P_{\mathbb{T}})\Gamma\|_F}{\|P_{\mathbb{T}}\Gamma\|_F}
\lesssim \frac{\rho}{\alpha_\Gamma}.
\]
Lemma~\ref{lem:var-direction} (applied with the whitened score $\tilde s$) gives the direction-induced relative error. The score plug-in and empirical fluctuation parts are identical to the efficient case, completing the proof of~\eqref{eq:Vws-relative}.

\subsection{Useful Technical Tools}\label{app:pd-tools}

\subsubsection{Frobenius reduction for comparison sampling}\label{app:pd-tools-frob}

\begin{lemma}[Pairwise difference identity]\label{app:pd-pairwise}
Fix any vector $z\in\mathbb{R}^{d_1}$ with $\sum_{a=1}^{d_1} z_a=0$. If $\{a,b\}$ is uniformly distributed over unordered pairs, then
\begin{equation}\label{app:pd-pairwise-identity}
\E_{\{a,b\}}\big[(z_a-z_b)^2\big] = \frac{2}{d_1-1}\,\norm{z}_2^2.
\end{equation}
\end{lemma}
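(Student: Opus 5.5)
The plan is to expand the square and count pairs directly; this is a purely combinatorial identity and needs nothing beyond the zero-sum condition. Write $N:=\binom{d_1}{2}$. Since $\{a,b\}$ is uniform over unordered pairs,
\[
\E_{\{a,b\}}\big[(z_a-z_b)^2\big]=\frac{1}{N}\sum_{a<b}(z_a-z_b)^2=\frac{1}{2N}\sum_{a\neq b}(z_a-z_b)^2 .
\]
Passing to ordered pairs and symmetrizing first makes the rest of the computation clean.

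Next, I will expand the ordered sum over all pairs, including $a=b$; the diagonal terms vanish, so nothing changes. This gives
\[
\sum_{a,b}(z_a-z_b)^2=\sum_{a,b}\big(z_a^2+z_b^2\big)-2\sum_{a,b}z_az_b=2d_1\norm{z}_2^2-2\Big(\sum_a z_a\Big)^2 .
\]
The only place the hypothesis $\sum_a z_a=0$ enters is in killing the cross term. That leaves $2d_1\norm{z}_2^2$.

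Dividing by $2N=d_1(d_1-1)$ then yields $\frac{2d_1}{d_1(d_1-1)}\norm{z}_2^2=\frac{2}{d_1-1}\norm{z}_2^2$, which is \eqref{app:pd-pairwise-identity}. There is no real obstacle here. The one thing to watch is the factor-of-two bookkeeping between unordered and ordered pairs, and I will double-check it on $d_1=2$ with $z=(1,-1)$. In that case both sides equal $4$.

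As a side remark, I will note the consequence used elsewhere in the appendix. Averaging over a uniform context $u$ and applying the identity columnwise to any $H$ satisfying the column-sum-zero condition \eqref{app:pd-row-sum-zero} gives
\[
\E^\star\ip{H}{X}^2=\frac{2}{D(d_1-1)}\Fnorm{H}^2\asymp\frac{\Fnorm{H}^2}{\dstar}.
\]
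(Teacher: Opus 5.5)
Your proposal is correct and is essentially the paper's proof. Both expand the square to get $\sum_{a<b}(z_a-z_b)^2 = d_1\|z\|_2^2-\bigl(\sum_a z_a\bigr)^2$, use the zero-sum condition to kill the cross term, and divide by $\binom{d_1}{2}$; your passage through ordered pairs is only bookkeeping for that same identity.
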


\noindent\textbf{Proof of Lemma~\ref{app:pd-pairwise}.}
We have $\sum_{a<b}(z_a-z_b)^2 = d_1\sum_a z_a^2 - (\sum_a z_a)^2 = d_1\norm{z}_2^2$.
Dividing by $\binom{d_1}{2}$ yields \eqref{app:pd-pairwise-identity}.
\hfill$\square$\medskip

\begin{lemma}[Frobenius reduction for comparison sampling]\label{app:pd-Frob-reduction}
Let $H$ satisfy the column-sum-zero condition \eqref{app:pd-row-sum-zero}. Under the comparison sampling model,
\begin{equation}\label{app:pd-HX-second-moment}
\E^\star\big[\ip{H}{X}^2\big] = \frac{2}{D(d_1-1)}\,\norm{H}_F^2
\asymp \frac{\norm{H}_F^2}{\dstar}.
\end{equation}
\end{lemma}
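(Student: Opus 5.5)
The plan is to reduce the tensor statement to the one-dimensional pairwise identity of Lemma~\ref{app:pd-pairwise}, applied context by context. Under the comparison sampling model, $X=e_{u,a}-e_{u,b}$ with the context $u$ uniform on $[D]$ and, independently, the unordered pair $\{a,b\}$ uniform among the $\binom{d_1}{2}$ pairs. By \eqref{app:pd-inner-prod-diff}, $\ip{H}{X}=H_{u,a}-H_{u,b}$. So I will condition on $u$ and use the tower property:
\[
\E^\star\big[\ip{H}{X}^2\big]=\frac1D\sum_{u=1}^D \E_{\{a,b\}}\big[(H_{u,a}-H_{u,b})^2\big].
\]

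For each fixed $u$, let $z^{(u)}\in\mathbb{R}^{d_1}$ be the comparison-mode fiber $z^{(u)}_a:=H_{u,a}$ (equivalently $H_{a,u}$ in the paper's indexing after flattening the non-comparison modes). The column-sum-zero condition \eqref{app:pd-row-sum-zero} says exactly that $\sum_a z^{(u)}_a=0$ for every $u$. Lemma~\ref{app:pd-pairwise} then gives
\[
\E_{\{a,b\}}\big[(z^{(u)}_a-z^{(u)}_b)^2\big]=\frac{2}{d_1-1}\norm{z^{(u)}}_2^2.
\]
Summing over $u$ and using $\sum_u\norm{z^{(u)}}_2^2=\Fnorm{H}^2$ (the fibers partition the entries of $H$), I obtain
\[
\E^\star\big[\ip{H}{X}^2\big]=\frac{2}{D(d_1-1)}\Fnorm{H}^2 .
\]

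For the asymptotic equivalence, note that $\dstar=Dd_1$, so
\[
\frac{2}{D(d_1-1)}=\frac{2d_1}{d_1-1}\cdot\frac1{\dstar}\in\Big[\frac{2}{\dstar},\frac{4}{\dstar}\Big]\quad\text{for }d_1\ge2,
\]
which gives $\asymp\Fnorm{H}^2/\dstar$ with absolute constants. The only point needing care is bookkeeping rather than a real obstacle: the flattening of modes $2,\dots,m$ into a single context index $u$ must be a bijection onto $[D]$, so that the fiber decomposition exactly reproduces the Frobenius norm. It must also match the orientation of the sum-zero constraint, namely a sum over the comparison mode for each fixed context.
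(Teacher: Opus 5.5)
Your proposal is correct and matches the paper's proof: condition on the context $u$, apply Lemma~\ref{app:pd-pairwise} to each sum-zero slice, and sum the slice norms to recover $\Fnorm{H}^2$. Your explicit check that $2d_1/(d_1-1)\in(2,4]$ for $d_1\ge 2$, using $\dstar=Dd_1$, supplies the justification for the $\asymp$ step that the paper leaves implicit.
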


\noindent\textbf{Proof of Lemma~\ref{app:pd-Frob-reduction}.}
Write $h_u\in\mathbb{R}^{d_1}$ for the $u$-th context slice.
Conditioning on $u$ and applying Lemma~\ref{app:pd-pairwise}:
$\E^\star[\ip{H}{X}^2] = \frac{1}{D}\sum_u \frac{2}{d_1-1}\norm{h_u}_2^2
= \frac{2}{D(d_1-1)}\norm{H}_F^2$.
\hfill$\square$\medskip

\begin{corollary}[Weighted second moment]\label{app:pd-weighted-second-moment}
Under Assumption~\ref{app:pd-score} and Lemma~\ref{app:pd-Frob-reduction}, for any column-sum-zero $H$,
\begin{equation}\label{app:pd-weighted-second-moment-eq}
\E^\star\!\big[I(\eta^\star(X))\,\ip{H}{X}^2\big]
\le
\frac{c_+}{\sigma^2}\cdot \frac{\norm{H}_F^2}{\dstar}.
\end{equation}
\end{corollary}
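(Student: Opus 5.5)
Since $\mathbb{P}^\star[(I(\eta^\star)/\sigma^2)\cdots]$ has already been reduced to an unweighted expectation, the plan is to bound the Fisher weight pointwise and then apply the unweighted Frobenius reduction.

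\emph{Step 1 (pointwise bound on the weight).} By the upper Fisher bound \eqref{app:pd-fisher-bounds} in Assumption~\ref{app:pd-score}, for every design atom $x$ we have $I(\eta^\star(x))\le c_+/\sigma^2$. The random variable $\ip{H}{X}^2$ is nonnegative, so
\[
\E^\star\!\big[I(\eta^\star(X))\,\ip{H}{X}^2\big]\le \frac{c_+}{\sigma^2}\,\E^\star\!\big[\ip{H}{X}^2\big].
\]

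\emph{Step 2 (Frobenius reduction).} Because $H$ is column-sum-zero, Lemma~\ref{app:pd-Frob-reduction} gives the exact identity $\E^\star[\ip{H}{X}^2]=\frac{2}{D(d_1-1)}\Fnorm{H}^2$. Here $\dstar=Dd_1$, so
\[
\frac{2}{D(d_1-1)}=\frac{2d_1}{d_1-1}\cdot\frac{1}{\dstar}\le \frac{4}{\dstar}
\qquad\text{for } d_1\ge 2 .
\]

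\emph{Step 3 (constants).} Combining the two steps gives the claimed bound with constant $4c_+$ rather than $c_+$. The absolute factor $4$ is absorbed into $c_+$, in keeping with the paper's convention (stated where $\dstar$ is defined) that $\dstar$ and $D(d_1-1)/2$ are interchangeable up to absolute constants.

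There is no real obstacle here. The only point needing care is that the inequality $\le$ requires the weight bound to hold uniformly over the support of $X$, and \eqref{app:pd-fisher-bounds} provides exactly that. The column-sum-zero hypothesis is needed only so that the pairwise identity of Lemma~\ref{app:pd-pairwise} applies slice by slice in Lemma~\ref{app:pd-Frob-reduction}.
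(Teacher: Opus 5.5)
Your proposal is correct and is exactly the argument the paper intends: it states the corollary without proof, as an immediate consequence of the pointwise bound $I(\eta^\star)\le c_+/\sigma^2$ combined with Lemma~\ref{app:pd-Frob-reduction}. Your remark about the constant is also right: taken literally, the factor is $2d_1/(d_1-1)\in(2,4]$, and the bound holds with exactly $c_+$ only if $\dstar$ is read as the effective size $D(d_1-1)/2$, a convention the paper explicitly permits.
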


\subsubsection{$\ell_1$ extraction}\label{app:pd-tools-l1}

\begin{lemma}[$\ell_1$ extraction]\label{app:pd-Gamma-l1}
For any linear map $\mathcal M$ and any $\Gammavec = \sum_\omega \gamma_\omega e_\omega$,
\begin{equation}\label{app:pd-Gamma-l1-eq}
\begin{aligned}
&\norm{\mathcal M\,\Gammavec}_F
\le
\oneNorm{\Gammavec}\ \max_{\omega}\ \norm{\mathcal M e_\omega}_F,\\
&\infnorm{\mathcal M\,\Gammavec}
\le
\oneNorm{\Gammavec}\ \max_{\omega}\ \infnorm{\mathcal M e_\omega},
\\
&\oneNorm{\mathcal M\,\Gammavec}
\le
\oneNorm{\Gammavec}\ \max_{\omega}\ \oneNorm{\mathcal M e_\omega}.
\end{aligned}
\end{equation}
\end{lemma}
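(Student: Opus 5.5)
The statement is Lemma~\ref{app:pd-Gamma-l1}: for any linear map $\mathcal M$ and any $\Gammavec=\sum_\omega\gamma_\omega e_\omega$, the image $\mathcal M\Gammavec$ is controlled in Frobenius, $\ell_\infty$ and $\ell_1$ norm by $\oneNorm{\Gammavec}$ times the worst column norm. We argue purely from linearity and the triangle inequality, so one argument covers all three inequalities.

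First we expand $\Gammavec$ in the canonical basis, $\Gammavec=\sum_\omega\gamma_\omega e_\omega$, where the sum is finite since the ambient space $\mathbb{R}^{d_1\times\cdots\times d_m}$ is finite-dimensional. Linearity of $\mathcal M$ then gives
\[
\mathcal M\Gammavec=\sum_\omega\gamma_\omega\,\mathcal M e_\omega .
\]
Next, let $\|\cdot\|$ denote any of $\Fnorm{\cdot}$, $\infnorm{\cdot}$ or $\oneNorm{\cdot}$. Each of these is a genuine norm, hence subadditive and absolutely homogeneous. Applying both properties yields
\[
\|\mathcal M\Gammavec\|\le\sum_\omega|\gamma_\omega|\,\|\mathcal M e_\omega\|.
\]
Finally, we bound each $\|\mathcal M e_\omega\|$ by $\max_{\omega'}\|\mathcal M e_{\omega'}\|$ and pull this maximum out of the sum. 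The remaining factor is $\sum_\omega|\gamma_\omega|=\oneNorm{\Gammavec}$, which is exactly the definition in \eqref{app:pd-Gamma-coeff}. Specializing $\|\cdot\|$ to the three norms gives the three displayed inequalities.

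Equivalently, the bound is the operator norm of $\mathcal M$ from $\ell_1$ into the target norm. That operator norm is attained on extreme points of the $\ell_1$ ball, which are the signed basis tensors $\pm e_\omega$.

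There is no real obstacle here. The only point needing care is that the argument uses nothing beyond linearity of $\mathcal M$; in particular it does not need self-adjointness. It therefore applies verbatim to the non-symmetric combinations used earlier in the paper, such as $\hat A^{-1}\widehat{P}_{\mathbb{T}}-A^{-1}P_{\mathbb{T}}$.
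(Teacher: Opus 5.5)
Your proof is correct and is the same argument as the paper's: expand $\Gammavec$ in the canonical basis, then use linearity and the triangle inequality to get $\|\mathcal M\Gammavec\|\le\sum_\omega|\gamma_\omega|\,\|\mathcal M e_\omega\|\le\oneNorm{\Gammavec}\max_\omega\|\mathcal M e_\omega\|$. As you say, this works in the same way for the Frobenius, $\ell_\infty$ and $\ell_1$ norms.
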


\noindent\textbf{Proof of Lemma~\ref{app:pd-Gamma-l1}.}
By linearity and the triangle inequality,
$\norm{\mathcal M\Gammavec}_F
\le \sum_\omega |\gamma_\omega|\,\norm{\mathcal M e_\omega}_F
\le \oneNorm{\Gammavec}\, \max_\omega \norm{\mathcal M e_\omega}_F$.
The $\ell_\infty$ and $\ell_1$ versions are identical.
\hfill$\square$\medskip

\subsubsection{Block projections and tangent space for Tucker structure}\label{app:pd-tools-blocks}

Let $\Hcal := \R^{d_1\times\cdots\times d_m}$ with Frobenius inner product.
Fix orthonormal matrices $U_j\in \R^{d_j\times r_j}$ and define $P_j := U_jU_j^\top$, $P_j^\perp := \Id_{d_j}-P_j$.
For any subset $S\subseteq[m]$, define the \emph{block projector}
\begin{equation}\label{app:pd-block-projector}
\Pi_S(X) := X \mode{j\in S}{P_j^\perp}\mode{k\notin S}{P_k}.
\end{equation}

\begin{lemma}[Orthogonal block decomposition]\label{app:pd-block-orth-decomp}
The family $\{\Pi_S\}_{S\subseteq[m]}$ consists of orthogonal projections with pairwise orthogonal ranges and $\sum_{S\subseteq[m]}\Pi_S = \Id$.
\end{lemma}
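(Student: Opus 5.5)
The plan is to identify each $\Pi_S$ with a Kronecker product of single-mode matrices acting on the vectorized tensor. Then all three claims reduce to elementary identities for the pair $P_j, P_j^\perp$ in each mode.

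Concretely, write the mode products as $\Pi_S = \bigotimes_{j=1}^m Q_j^{(S)}$ acting on $\mathrm{vec}(X)$, with the Kronecker factors ordered consistently with the vectorization. Here $Q_j^{(S)} := P_j^\perp$ if $j\in S$ and $Q_j^{(S)} := P_j$ if $j\notin S$. Equivalently, stay with multilinear notation and use the rule $(X\times_1 B_1\cdots\times_m B_m)\times_1 C_1\cdots\times_m C_m = X\times_1 (C_1B_1)\cdots\times_m(C_mB_m)$. The adjoint of $X\mapsto X\times_1B_1\cdots\times_mB_m$ under the Frobenius inner product is $X\mapsto X\times_1B_1^\top\cdots\times_mB_m^\top$.

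\textbf{Projection property.} Since $U_j$ has orthonormal columns, $P_j=U_jU_j^\top$ is symmetric and idempotent, and hence so is $P_j^\perp=\Id-P_j$. Applying the composition rule gives $\Pi_S\Pi_S=\Pi_S$, with factors $(Q_j^{(S)})^2=Q_j^{(S)}$. The adjoint rule gives $\Pi_S^*=\Pi_S$. Thus each $\Pi_S$ is an orthogonal projection on $\Hcal$.

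\textbf{Pairwise orthogonality.} Take $S\neq S'$. Then some mode $j$ lies in exactly one of the two sets. The composition $\Pi_S\Pi_{S'}$ therefore has mode-$j$ factor $P_j^\perp P_j$ or $P_jP_j^\perp$, and both are $0$. Hence $\Pi_S\Pi_{S'}=0$. Since both operators are self-adjoint, $\langle \Pi_SX,\Pi_{S'}Y\rangle=\langle X,\Pi_S\Pi_{S'}Y\rangle=0$, so the ranges are orthogonal.

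\textbf{Resolution of the identity.} Write $\Id_{d_j}=P_j+P_j^\perp$ in every mode, so that $X = X\times_1(P_1+P_1^\perp)\cdots\times_m(P_m+P_m^\perp)$. Expand by multilinearity of the mode products in each factor matrix. Each of the $2^m$ terms corresponds to choosing $P_j^\perp$ for $j$ in some subset $S$ and $P_j$ otherwise, which is exactly $\Pi_S(X)$. This gives $\sum_{S\subseteq[m]}\Pi_S=\Id$.

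There is no genuine obstacle here. The only care needed is to state the composition and adjoint rules for mode products once, with consistent orientation of the factors, so that the "commute mode by mode" step is justified.
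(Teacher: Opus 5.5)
Your proof is correct and follows the same route as the paper's: symmetric idempotence of $P_j, P_j^\perp$ in each mode, a vanishing factor $P_j^\perp P_j$ in a mode where $S$ and $S'$ differ, and expansion of $\prod_j (P_j+P_j^\perp)=\Id$. You are slightly more careful than the paper on two points: it picks $j\in S\setminus S'$, which is empty when $S\subsetneq S'$, and it leaves implicit the self-adjointness step that turns $\Pi_S\Pi_{S'}=0$ into orthogonality of the ranges.
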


\noindent\textbf{Proof of Lemma~\ref{app:pd-block-orth-decomp}.}
Each $P_j,P_j^\perp$ is symmetric idempotent, and products in distinct modes commute. If $S\ne S'$, pick $j\in S\setminus S'$; then $\Pi_S$ multiplies by $P_j^\perp$ and $\Pi_{S'}$ by $P_j$, giving $\Pi_S\Pi_{S'} = 0$.
For the resolution, $\prod_j(P_j + P_j^\perp) = \Id$, and expanding yields $\sum_S \Pi_S$.
\hfill$\square$\medskip

\begin{lemma}[Tangent space is ``$|S|\le 1$'' blocks]\label{app:pd-tangent-block-sum}
Under full multilinear rank,
\begin{equation}\label{app:pd-tangent-block}
\mathcal{T}
=
\Pi_{\emptyset}(\Hcal)\;\oplus\;\bigoplus_{j=1}^m \Pi_{\{j\}}(\Hcal),
\qquad
\mathcal{T}^\perp
=
\bigoplus_{\substack{S\subseteq[m]\\ |S|\ge 2}}\Pi_S(\Hcal).
\end{equation}
In particular, $(I-P_{\mathbb{T}})\Delta = \sum_{|S|\ge 2}\Pi_S(\Delta)$.
\end{lemma}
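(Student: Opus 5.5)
The plan is to prove the two inclusions separately, using the orthogonal resolution $\sum_S\Pi_S=\Id$ from Lemma~\ref{app:pd-block-orth-decomp}, and then to read off the statement about $\mathcal{T}^\perp$ and $(I-P_{\mathbb{T}})\Delta$ as a formal consequence. I write $T^\star=\mathcal{C}^\star\times_1U_1\cdots\times_mU_m$ and parametrize nearby Tucker tensors as $(\mathcal{C}^\star+t\dot{\mathcal C})\times_1(U_1+t\dot U_1)\cdots\times_m(U_m+t\dot U_m)$. Differentiating at $t=0$ gives the standard description
\[
\mathcal{T}=\Bigl\{\dot{\mathcal C}\times_1U_1\cdots\times_mU_m+\sum_{k=1}^m\mathcal{C}^\star\times_kW_k\times_{j\neq k}U_j:\ U_k^\top W_k=0\Bigr\}.
\]
Here the gauge freedom $\dot U_k\mapsto\dot U_k+U_kR_k$ is absorbed into $\dot{\mathcal C}$, which is what allows the normalization $U_k^\top W_k=0$.

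\textbf{Inclusion $\subseteq$.} The first summand is fixed by every $P_j$, so it lies in $\Pi_\emptyset(\Hcal)$. In the $k$-th summand, mode $k$ carries $W_k=P_k^\perp W_k$ and every other mode carries $U_j=P_jU_j$, so the summand lies in $\Pi_{\{k\}}(\Hcal)$. Pairwise orthogonality of these ranges then gives the orthogonal direct-sum structure on the right-hand side.

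\textbf{Inclusion $\supseteq$.} For $\Pi_\emptyset$, any $X\times_jP_j$ equals $(X\times_jU_j^\top)\times_jU_j$, which is of the $\dot{\mathcal C}$ form. For $\Pi_{\{k\}}$, I would take $Z=\Pi_{\{k\}}(X)$ and write its mode-$k$ unfolding as $Z_{(k)}=P_k^\perp M\,(\bigotimes_{j\ne k}U_j)^\top$, where $M=X_{(k)}\bigotimes_{j\ne k}U_j\in\R^{d_k\times\prod_{j\neq k}r_j}$. The corresponding tangent element has unfolding $W_k\,\mathcal{C}^\star_{(k)}(\bigotimes_{j\ne k}U_j)^\top$. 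Matching the two therefore requires solving
\[
W_k\,\mathcal{C}^\star_{(k)}=P_k^\perp M,\qquad W_k\in\R^{d_k\times r_k}.
\]
This is where the full multilinear rank hypothesis must enter. I would read that hypothesis as saying that $\mathcal{C}^\star_{(k)}$ has trivial left and right null spaces on the relevant spaces. With this reading, $W_k:=P_k^\perp M(\mathcal{C}^\star_{(k)})^{+}$ solves the equation and automatically satisfies $U_k^\top W_k=0$.

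\textbf{Main obstacle.} The last step is the one I expect to be delicate. Full row rank $r_k$ of $\mathcal{C}^\star_{(k)}$ is exactly what makes $\mathcal{C}^\star_{(k)}(\mathcal{C}^\star_{(k)})^{+}=I_{r_k}$. Surjectivity onto all of $\Pi_{\{k\}}(\Hcal)$, however, additionally needs $(\mathcal{C}^\star_{(k)})^{+}\mathcal{C}^\star_{(k)}=I$ on $\R^{\prod_{j\ne k}r_j}$, that is, $r_k=\prod_{j\neq k}r_j$. This is automatic in the matrix case $m=2$ (where it is the classical SVD tangent space), but it is a genuine restriction for $m\ge3$. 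A dimension count confirms this: $\dim\mathcal{T}=\prod_j r_j+\sum_k r_k(d_k-r_k)$, whereas $\dim\Pi_{\{k\}}(\Hcal)=(d_k-r_k)\prod_{j\ne k}r_j$. So I would state explicitly that the equality is proved under this square-unfolding reading of ``full multilinear rank'', and flag that for general $m\ge3$ only the inclusion $\subseteq$ survives.

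\textbf{Consequences.} Granting the equality, the claim about $\mathcal{T}^\perp$ follows from $\sum_S\Pi_S=\Id$ together with mutual orthogonality: the complement of $\bigoplus_{|S|\le1}\Pi_S(\Hcal)$ is $\bigoplus_{|S|\ge2}\Pi_S(\Hcal)$. It follows that $P_{\mathbb{T}}=\Pi_\emptyset+\sum_k\Pi_{\{k\}}$, and hence $(I-P_{\mathbb{T}})\Delta=\sum_{|S|\ge2}\Pi_S(\Delta)$.
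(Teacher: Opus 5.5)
Your argument follows the paper's proof step for step: the tangent parametrization gives $\subseteq$, solving for $W_k$ with $U_k^\top W_k=0$ gives $\supseteq$, and the orthogonal resolution $\sum_S\Pi_S=\Id$ gives the complement. Your caveat is correct, and it identifies exactly the step the paper glosses over. The paper's $\supseteq$ argument invokes ``full rank of $C^\star_{(j)}$'', but for the $r_j\times\prod_{k\ne j}r_k$ unfolding this yields only a right inverse. So $W_jC^\star_{(j)}=P_j^\perp M$ is solvable for all $M$ only when $r_j=\prod_{k\ne j}r_k$, which is automatic for $m=2$.

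Your dimension count settles the general case. For example, with $m=3$, $r=(2,2,2)$ and $d=(3,3,3)$, one gets $\dim\mathcal{T}=14$ versus $\dim\bigl(\Pi_\emptyset(\Hcal)\oplus\bigoplus_k\Pi_{\{k\}}(\Hcal)\bigr)=20$. Concretely, take a tensor with mode-$1$ unfolding $u_\perp z^\top$, where $u_\perp\perp\mathrm{col}(U_1)$ and $z$ lies in $\mathrm{col}(\bigotimes_{j\ne1}U_j)$ but is orthogonal to the row space of $T^\star_{(1)}$. Such a tensor belongs to $\Pi_{\{1\}}(\Hcal)\cap\mathcal{T}^\perp$.

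For general $m\ge3$ the correct projector is the Ma--Xia form $\mathcal{P}_\times+\sum_j\mathcal{R}_j$, which the paper itself cites later. It replaces each $\Pi_{\{k\}}$ by a block $\mathcal{R}_k$ acting as $P_k^\perp$ in mode $k$ and as the projection onto the row space of $T^\star_{(k)}$ on the other modes. Consequently the ``in particular'' identity acquires the extra terms $\sum_k\bigl(\Pi_{\{k\}}-\mathcal{R}_k\bigr)(\Delta)$, which downstream uses such as the $R_{\mathrm{proj}}$ bound would have to control separately. Proving equality only under the square-unfolding condition, and asserting only $\subseteq$ otherwise, as you propose, is the right call.
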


\noindent\textbf{Proof of Lemma~\ref{app:pd-tangent-block-sum}.}
For the inclusion $\subseteq$: each term in the tangent space representation $\delta C\mode{1}U_1\cdots\mode{m}U_m$ lies in $\Pi_\emptyset(\Hcal)$, and each $C^\star\mode{j}W_j\mode{k\ne j}U_k$ (with $U_j^\top W_j = 0$) lies in $\Pi_{\{j\}}(\Hcal)$.
For the inclusion $\supseteq$: any $Y\in\Pi_\emptyset(\Hcal)$ can be written as $\tilde C\mode{1}U_1\cdots\mode{m}U_m\in\mathcal{T}$, and any $Y\in\Pi_{\{j\}}(\Hcal)$ has mode-$j$ fibers in $\mathrm{Range}(P_j^\perp)$; by full rank of $C^\star_{(j)}$, one can find $W_j$ with $U_j^\top W_j=0$ such that $Y = C^\star\mode{j}W_j\mode{k\ne j}U_k\in\mathcal{T}$.
The orthogonal complement formula follows from Lemma~\ref{app:pd-block-orth-decomp}.
\hfill$\square$\medskip

\subsubsection{Dimension-free $\infty\!\to\!\infty$ bounds for projections}\label{app:pd-tools-proj-bounds}

\begin{lemma}[Dimension-free bounds]\label{app:pd-Pinfty}
Let $P = UU^\top$ with $\max_i\|e_i^\top U\|_2^2\le\mu r/d$. Then
\begin{equation}\label{app:pd-P-infty}
\inftyinfnorm{P}\le\sqrt{\mu}\,r,
\qquad
\inftyinfnorm{P^\perp}\le 1+\sqrt{\mu}\,r.
\end{equation}
For the centering $Q = \Id - \frac{1}{d}\one\one^\top$: $\inftyinfnorm{Q} < 2$ and $\inftyinfnorm{Q-P}\le 2+\sqrt{\mu}\,r$.

For Kronecker products: $\inftyinfnorm{\bigotimes_j A_j} = \prod_j\inftyinfnorm{A_j}$.
For the tangent projector: $\inftyinfnorm{P_{\mathbb{T}}}\le\sum_{|S|\le 1}\inftyinfnorm{\Pi_S} = \mathrm{poly}(\mu,r,m)$.
\end{lemma}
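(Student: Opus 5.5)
We prove the three claims of Lemma~\ref{app:pd-Pinfty} in turn, using throughout that $\inftyinfnorm{M}$ is the maximum absolute row sum of $M$. Since $P$, $P^\perp$, $Q$ and $P_{\mathbb{T}}$ are all symmetric, this also equals the maximum column $\ell_1$ norm, which matches how the bound is used in Lemma~\ref{app:pd-PT-l1}.

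\emph{Single-mode projector.} Fix a row $i$. Its entries are $P_{ik}=\langle U^\top e_i,U^\top e_k\rangle$, so by Cauchy--Schwarz $|P_{ik}|\le \|U^\top e_i\|_2\|U^\top e_k\|_2$. Summing over $k$ and applying Cauchy--Schwarz to the sum gives
\[
\sum_k|P_{ik}| \le \|U^\top e_i\|_2\,\sqrt{d}\,\|U\|_F = \sqrt{\mu r/d}\,\sqrt{d}\,\sqrt{r} = \sqrt{\mu}\,r.
\]
This is the main step. The other single-mode bounds follow from the triangle inequality. First, $P^\perp=\Id-P$ gives $\inftyinfnorm{P^\perp}\le 1+\sqrt\mu\,r$. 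For $Q$, each row has one entry $1-1/d$ and $d-1$ entries $-1/d$, so its absolute sum is $2(d-1)/d<2$. Finally, $\inftyinfnorm{Q-P}\le \inftyinfnorm{Q}+\inftyinfnorm{P}\le 2+\sqrt\mu\,r$.

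\emph{Kronecker products.} Entries of $\bigotimes_j A_j$ are products $\prod_j (A_j)_{i_jk_j}$. The absolute row sum indexed by $(i_1,\dots,i_m)$ therefore factorizes as $\prod_j\sum_{k_j}|(A_j)_{i_jk_j}|$. Maximizing over the multi-index gives exactly $\prod_j\inftyinfnorm{A_j}$.

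\emph{Tangent projector.} By Lemma~\ref{app:pd-tangent-block-sum}, $P_{\mathbb{T}}=\sum_{|S|\le 1}\Pi_S$, and each $\Pi_S$ is a Kronecker product of the matrices $P_j$ and $P_j^\perp$. The triangle inequality and the Kronecker identity then give
\[
\inftyinfnorm{P_{\mathbb{T}}}\le \prod_j \sqrt\mu\, r_j+\sum_k(1+\sqrt\mu\, r_k)\prod_{j\ne k}\sqrt\mu\, r_j,
\]
which is $\mathrm{poly}(\mu,r,m)$. In the comparison mode one may also replace $P_1^\perp$ by $Q_1-P_{U_1}$, which is bounded by $2+\sqrt\mu\, r_1$; this changes only constants.

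The only real content is the Cauchy--Schwarz step for $P$, where incoherence of row $i$ combines with $\|U\|_F^2=r$. Everything else is bookkeeping.
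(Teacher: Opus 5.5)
Your proof is correct and matches the paper's argument step for step. The common steps are: the row-sum bound for $P$ via Cauchy--Schwarz, combining row incoherence with $\|U\|_F^2=r$; the triangle inequality for $P^\perp$, $Q$ and $Q-P$; factorization of absolute row sums for Kronecker products; and the block decomposition $P_{\mathbb{T}}=\sum_{|S|\le 1}\Pi_S$ followed by the triangle inequality. Your remark that the mode-1 arm block carries $Q_1-P_{U_1}$ in place of $P_1^\perp$ under the zero-sum constraint is left implicit in the paper, and you handle it correctly.
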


\noindent\textbf{Proof of Lemma~\ref{app:pd-Pinfty}.}
Row $i$ of $P$ has entries $P_{ij} = \ip{u_i}{u_j}$, so $\sum_j|P_{ij}|\le\|u_i\|_2\sqrt{d}\sqrt{r} = \sqrt{\mu r/d}\cdot\sqrt{dr} = \sqrt{\mu}\,r$.
For $P^\perp$, $\inftyinfnorm{\Id - P}\le 1+\inftyinfnorm{P}$.
For $Q$, each row has sum $|1-1/d| + (d-1)/d = 2-2/d < 2$.
For Kronecker products, absolute row sums multiply.
\medskip

\subsubsection{$\|P_{\mathbb{T}} E_\omega\|_\infty$ bound}\label{app:pd-tools-PTomega}

\begin{lemma}[$\|P_{\mathbb{T}}E_\omega\|_\infty$ is $d/\dstar$ up to incoherence]\label{app:pd-PTX-infty}
Under $\mu$-incoherence, for any canonical basis tensor $E_\omega = e_{i_1}\otimes\cdots\otimes e_{i_m}$,
\begin{equation}\label{app:pd-PTX-infty-bound}
\infnorm{P_{\mathbb{T}}E_\omega}
\le C_\mu\,\frac{d}{\dstar},
\qquad
C_\mu := (m+1)\mu^m,
\end{equation}
where $d := \prod_j r_j(1+\sum_k d_k/r_k)$ is the effective tangent dimension proxy.
\end{lemma}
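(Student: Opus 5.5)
The plan is to use the block decomposition of Lemma~\ref{app:pd-tangent-block-sum}. It writes $P_{\mathbb{T}}=\Pi_\emptyset+\sum_{k=1}^m\Pi_{\{k\}}$, so by the triangle inequality
\[
\infnorm{P_{\mathbb{T}}E_\omega}\le\infnorm{\Pi_\emptyset E_\omega}+\sum_{k=1}^m\infnorm{\Pi_{\{k\}}E_\omega}.
\]
Since $E_\omega=e_{i_1}\otimes\cdots\otimes e_{i_m}$ is elementary and each $\Pi_S$ is a Kronecker product of single-mode matrices, each block image is itself elementary. Its entries are therefore products of single-mode entries, for example $(\Pi_\emptyset E_\omega)_{a_1\dots a_m}=\prod_j (P_j)_{a_j i_j}$. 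This reduces the whole problem to entrywise bounds on the single-mode matrices $P_j$ and $P_j^\perp$.

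\textbf{Single-mode entry bounds.} For the projected factors, Cauchy--Schwarz on rows together with $\mu$-incoherence gives
\[
|(P_j)_{ab}|=|\langle e_a^\top U_j,e_b^\top U_j\rangle|\le \mu r_j/d_j .
\]
For the unprojected factor, I will show $|(P_k^\perp)_{ab}|\le 1$. On the diagonal, $(P_k^\perp)_{ii}=1-(P_k)_{ii}\in[0,1]$. Off the diagonal, $|(P_k)_{ab}|\le\mu r_k/d_k\le 1$, using $\mu r_k\le d_k$, which holds automatically since $\sum_a\|e_a^\top U_k\|_2^2=r_k$. If the comparison mode uses $Q_1-P_1$ in place of $P_1^\perp$ (Section~\ref{app:pd-rowsum-remark}), the same bound by $1$ holds up to an absolute constant, because $Q_1$ has entries $1-1/d_1$ and $-1/d_1$. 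Alternatively, that substitution can be skipped, since all relevant tensors are already column-sum-zero.

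\textbf{Assembling.} The core block satisfies $\infnorm{\Pi_\emptyset E_\omega}\le\prod_j\mu r_j/d_j=\mu^m r^\star/\dstar$. For the arm block $k$, the unprojected mode contributes a factor of at most $1$, so
\[
\infnorm{\Pi_{\{k\}}E_\omega}\le\prod_{j\ne k}\frac{\mu r_j}{d_j}=\mu^{m-1}\frac{r^\star}{r_k}\cdot\frac{d_k}{\dstar}\le \mu^m\frac{r^\star d_k/r_k}{\dstar}.
\]
Summing the core block and the $m$ arm blocks gives
\[
\infnorm{P_{\mathbb{T}}E_\omega}\le\mu^m\frac{r^\star(1+\sum_k d_k/r_k)}{\dstar}=\mu^m\frac{d}{\dstar}\le C_\mu\frac{d}{\dstar}.
\]
The factor $(m+1)$ in $C_\mu$ gives slack, which can absorb the absolute constant from the $Q_1$ variant.

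The only delicate point is the arm blocks. There, one must not bound the free mode by incoherence, which is false for $P^\perp$. One must instead bound it by $1$, and this is exactly what produces the $d_k/r_k$ term, and hence the $\bar d/\dstar$ scaling used in Lemma~\ref{lem:norm-scales}. The same argument applies verbatim to $\widehat P_{\mathbb{T}}$ under \eqref{eq:init-estimated-incoherence}, with $\mu$ replaced by $2\mu$.
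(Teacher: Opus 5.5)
Your proposal is correct and is essentially the paper's own argument: you split $P_{\mathbb{T}}=\Pi_\emptyset+\sum_k\Pi_{\{k\}}$, bound the core block entrywise by $\prod_j\mu r_j/d_j$, bound each arm block by leaving the free mode at $1$, and then sum, and you even obtain the sharper constant $\mu^m$ in place of $(m+1)\mu^m$. One small repair: the identity $\sum_a\|e_a^\top U_k\|_2^2=r_k$ forces $\mu\ge 1$, not $\mu r_k\le d_k$, so justify $|(P_k^\perp)_{ab}|\le 1$ instead by noting that $P_k^\perp$ (and likewise $Q_1-P_{U_1}$, because $U_1^\top\one_{d_1}=0$) is an orthogonal projector, whence $|(P_k^\perp)_{ab}|\le\sqrt{(P_k^\perp)_{aa}(P_k^\perp)_{bb}}\le 1$ with no extra absolute constant needed.
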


\noindent\textbf{Proof of Lemma~\ref{app:pd-PTX-infty}.}
Using $P_{\mathbb{T}} = \Pi_\emptyset + \sum_k\Pi_{\{k\}}$, we have
\[
\infnorm{\Pi_\emptyset(E_\omega)} = \prod_j\infnorm{P_je_{i_j}} \le \mu^m\prod_j r_j/d_j = \mu^m r^\star/\dstar,
\]
and
\[
\infnorm{\Pi_{\{k\}}(E_\omega)} \le \infnorm{P_k^\perp e_{i_k}}\prod_{j\ne k}\infnorm{P_je_{i_j}} \le \mu^{m-1}(r^\star/\dstar)(d_k/r_k).
\]
Summing gives
\[
\infnorm{P_{\mathbb{T}}E_\omega}\le\mu^m r^\star/\dstar + \sum_k\mu^{m-1}r^\star d_k/(r_k\dstar) \le (m+1)\mu^m\cdot d/\dstar.
\]
\hfill$\square$\medskip

\noindent\textbf{Sharp $\|P_{\mathbb{T}}\omega\|_\infty$ for comparison design.}
For the comparison design, $\omega = e_{u,a}-e_{u,b}$ and the Ma-type tangent projector decomposition $P_{\mathbb{T}} = \mathcal{P}_\times + \sum_j\mathcal{R}_j$ gives, via the rank-one unfolding argument:
\begin{equation}\label{app:pd-PT-omega-comparison}
\|P_{\mathbb{T}}\omega\|_\infty \le \mathrm{poly}(\mu,r,m)\cdot\frac{d_{\max}}{\dstar}.
\end{equation}
The main term $\mathcal{P}_\times(\omega)$ factorizes as $\bigotimes_k P_{U_k}e_{i_k}$, giving $\mu^m r^\star/\dstar$.
Each correction $\mathcal{R}_j(\omega)$ is rank-one in mode-$j$ matricization with
$\|\mathcal{R}_j(\omega)\|_\infty\le(1+\mu r_j)\mu^{m-1}r^\star d_j/(r_j\dstar)$.
The additional row-sum projection $\Pi_k$ satisfies $\|\Pi_k(Z)\|_\infty\le 2\|Z\|_\infty$ and does not affect the $\dstar$ cancellation.

\subsubsection{Symmetry and norm equivalences}

\begin{lemma}[Symmetry: $\|M\|_{1\to1}=\|M\|_{\infty\to\infty}$]\label{app:pd-symm-1inf}
For any $M\in\R^{p\times p}$, $\|M\|_{1\to1}=\|M^\top\|_{\infty\to\infty}$.
If $M=M^\top$, then $\|M\|_{1\to1}=\|M\|_{\infty\to\infty}$.
\end{lemma}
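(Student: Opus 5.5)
The identity $\|M\|_{1\to1}=\|M^\top\|_{\infty\to\infty}$ follows from the explicit formulas for these two induced norms together with duality between $\ell_1$ and $\ell_\infty$. I would first recall that for $M\in\R^{p\times p}$ the induced norm $\|M\|_{1\to1}=\sup_{\|x\|_1\le1}\|Mx\|_1$ equals the maximum absolute column sum $\max_j\sum_i|M_{ij}|$. The upper bound comes from the triangle inequality, since $\|Mx\|_1\le\sum_j|x_j|\sum_i|M_{ij}|$. The bound is attained at $x=e_{j^\ast}$ for a maximizing column $j^\ast$. In the same way, $\|N\|_{\infty\to\infty}=\max_i\sum_j|N_{ij}|$ is the maximum absolute row sum. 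The upper bound is again elementary, and it is attained at $x=\mathrm{sign}(N_{i^\ast\cdot})^\top$ for a maximizing row $i^\ast$.

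Applying the second formula with $N=M^\top$ gives
\[
\|M^\top\|_{\infty\to\infty}=\max_i\sum_j|M_{ji}|,
\]
which is exactly the maximum absolute column sum of $M$, and hence equals $\|M\|_{1\to1}$. An alternative route is the duality identity $\sup_{\|x\|_1\le1}\|Mx\|_1=\sup_{\|x\|_1\le1,\|y\|_\infty\le1}y^\top Mx=\sup_{\|y\|_\infty\le1}\|M^\top y\|_\infty$; I would mention it only as a remark.

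When $M=M^\top$, the second claim follows immediately by substitution. For the use made of this lemma earlier, namely $\|A^{-1}\|_{1\to1}=\|A^{-1}\|_{\infty\to\infty}$ for $A^{-1}$ restricted to $\mathbb{T}$, I would note that $A^{-1}$ should be read as the operator $A^{-1}P_{\mathbb{T}}$ on the full space $\Hcal\cong\R^{\dstar}$. This operator is symmetric because $A$ is self-adjoint on $\mathbb{T}$ and $P_{\mathbb{T}}$ is an orthogonal projector, so the matrix statement applies after vectorization.

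No step here is a genuine obstacle. The only point needing care is that identification of the restricted inverse with a symmetric matrix on the ambient coordinate space.
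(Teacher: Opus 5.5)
Your proof is correct and is essentially the paper's approach: the paper states this lemma without proof, relying on the standard fact that $\|\cdot\|_{1\to1}$ is the maximum absolute column sum and $\|\cdot\|_{\infty\to\infty}$ is the maximum absolute row sum, which is exactly your argument. Your reading of $A^{-1}$ as the symmetric ambient-space operator $A^{-1}P_{\mathbb{T}}$ also matches the convention the paper adopts when it analyzes $C_A$.
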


\subsubsection{Bernstein template}

\begin{lemma}[Bernstein for bounded variables]\label{app:pd-bernstein-template}
Let $W_1,\dots,W_n$ be i.i.d.\ mean-zero with $|W_i|\le M$ a.s.\ and $\mathrm{Var}(W_i)\le v$.
Then for any $\delta\in(0,1)$, with probability at least $1-\delta$,
\[
\left|\frac1n\sum_{i=1}^n W_i\right|
\le
C\left(\sqrt{\frac{v\log(2/\delta)}{n}}+\frac{M\log(2/\delta)}{n}\right).
\]
\end{lemma}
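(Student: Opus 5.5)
The plan is the classical Chernoff--Cram\'er argument, applied to each tail separately and followed by a union bound. Write $S_n:=\sum_{i=1}^n W_i$. The first step is a bound on the moment generating function of a single summand. For $0<\lambda<3/M$, expand $e^{\lambda W}=1+\lambda W+\sum_{k\ge 2}\lambda^k W^k/k!$ and take expectations, using $\E W=0$ and $|\E W^k|\le v M^{k-2}$ for $k\ge 2$ (from $|W|\le M$ and $\E W^2\le v$). With $k!\ge 2\cdot 3^{k-2}$, the geometric series gives
\[
\E e^{\lambda W_i}\le 1+\frac{\lambda^2 v}{2(1-\lambda M/3)}\le \exp\!\Big(\frac{\lambda^2 v}{2(1-\lambda M/3)}\Big).
\]
By independence, $\E e^{\lambda S_n}\le \exp\{n\lambda^2 v/(2(1-\lambda M/3))\}$.

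The second step is Markov's inequality: $\mathbb{P}(S_n\ge t)\le \exp\{-\lambda t+n\lambda^2 v/(2(1-\lambda M/3))\}$. I will take $\lambda=t/(nv+Mt/3)$, which satisfies $\lambda M/3<1$. This yields the usual Bernstein tail
\[
\mathbb{P}(S_n\ge t)\le\exp\!\Big(-\frac{t^2}{2(nv+Mt/3)}\Big).
\]
Applying the same bound to $-W_i$, which satisfies the same hypotheses, and adding the two tails gives the two-sided version with the prefactor $2$.

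The last step inverts the tail, and it is the only place needing care, since the square-root and linear regimes must be combined. Set the two-sided bound equal to $\delta$, i.e.\ $t^2=2L(nv+Mt/3)$ with $L:=\log(2/\delta)$. Solving the quadratic gives $t=\tfrac{ML}{3}+\sqrt{M^2L^2/9+2nvL}$, so $t\le \sqrt{2nvL}+\tfrac{2ML}{3}$ by $\sqrt{a+b}\le\sqrt a+\sqrt b$. Dividing by $n$, with probability at least $1-\delta$,
\[
\Big|\frac1n S_n\Big|\le \sqrt{\frac{2vL}{n}}+\frac{2ML}{3n},
\]
which is the claim with $C=\sqrt2$. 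No earlier result of the paper is needed. The sub-exponential variant used in \eqref{app:pd-Bernstein-RempH} follows from the same MGF argument with $M$ replaced by a constant multiple of $\norm{W_i}_{\psione}$, using the moment bounds $\E|W|^k\le k!\,(C\norm{W}_{\psione})^k$.
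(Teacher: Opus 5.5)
Your Chernoff--Cram\'er derivation is correct, including the quadratic inversion that gives $C=\sqrt2$. It is exactly the classical Bernstein inequality that the paper states as a standard template without proof.

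One caution, which concerns only your closing aside. The moment bound $\E|W|^k\le k!\,(C\norm{W}_{\psione})^k$ gives Bernstein's moment condition with variance proxy of order $\norm{W}_{\psione}^2$, not the true variance $v$. So the ``true variance plus $\psi_1$-norm'' form used in \eqref{app:pd-Bernstein-RempH} does not follow from it in general; with an absolute constant it can fail, e.g.\ for sparse $\pm a$ variables. It is legitimate there only because the summands are in fact bounded in the BTL setting, so your bounded version applies directly.
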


\subsubsection{Entrywise contraction}

\begin{lemma}[Entrywise contraction via $\|\cdot\|_{2,\infty}$]\label{app:pd-contract-2inf}
Let $G\in\R^{r_1\times\cdots\times r_m}$ and $A_j\in\R^{d_j\times r_j}$.
Then $\infnorm{G\mode{1}A_1\cdots\mode{m}A_m}\le\norm{G}_{\mathrm{op}}\prod_j\twoninfnorm{A_j}$.
\end{lemma}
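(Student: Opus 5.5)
The bound is a deterministic statement about tensors in the form of a single Tucker-type entrywise contraction, so I would prove it directly by index manipulation. I would not invoke any probabilistic or tangent-space machinery.

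Write $T:=G\times_1A_1\cdots\times_mA_m$. Fix an index $(i_1,\dots,i_m)$ and let $a_j:=A_j^\top e_{i_j}\in\mathbb{R}^{r_j}$ denote the $i_j$-th row of $A_j$. By the definition of mode products,
\[
T_{i_1\cdots i_m}=\sum_{k_1,\dots,k_m}G_{k_1\cdots k_m}\,(a_1)_{k_1}\cdots(a_m)_{k_m}=G\times_1a_1^\top\cdots\times_ma_m^\top ,
\]
so the entry is the multilinear form of $G$ evaluated at the vectors $a_1,\dots,a_m$.

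Next I bound this multilinear form. Unfold along mode $1$: the entry equals $a_1^\top G_{(1)}(a_m\otimes\cdots\otimes a_2)$, where the Kronecker ordering matches the unfolding convention. By Cauchy--Schwarz and the definition of the operator norm of the unfolding,
\[
|T_{i_1\cdots i_m}|\le\|a_1\|_2\,\|G_{(1)}\|_{\mathrm{op}}\,\|a_m\otimes\cdots\otimes a_2\|_2=\|G_{(1)}\|_{\mathrm{op}}\prod_j\|a_j\|_2 .
\]
This uses the fact that the Euclidean norm of a Kronecker product is the product of the norms.

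I must pin down the meaning of $\|G\|_{\mathrm{op}}$. If it denotes the tensor spectral norm $\sup_{\|x_j\|_2=1}|G\times_1x_1^\top\cdots\times_mx_m^\top|$, the inequality is immediate from homogeneity. If instead it denotes the maximal unfolding norm, as with $\lambda_{\max}$ in \eqref{eq:lambda-max-def}, the unfolding argument above applies, since the spectral norm is at most any unfolding operator norm. I will state the lemma under the unfolding convention, which is the weaker norm of the two and therefore covers both readings.

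Finally, I take the maximum over indices and use $\|a_j\|_2=\|e_{i_j}^\top A_j\|_2\le\|A_j\|_{2,\infty}$. This gives $\|T\|_\infty\le\|G\|_{\mathrm{op}}\prod_j\|A_j\|_{2,\infty}$. The only real point requiring care is matching the norm convention for $\|G\|_{\mathrm{op}}$; once that is settled, the argument consists of routine bookkeeping with no genuine obstacle.
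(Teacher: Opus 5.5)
Your argument is correct and is the same as the paper's: each entry of $G\times_1A_1\cdots\times_mA_m$ is the multilinear form of $G$ evaluated at the rows $e_{i_j}^\top A_j$, and the paper bounds it directly by $\|G\|_{\mathrm{op}}\prod_j\|e_{i_j}^\top A_j\|_2$, reading $\|G\|_{\mathrm{op}}$ as the tensor spectral norm. One correction to your remark on conventions: the spectral norm is at most every unfolding operator norm, so the spectral-norm version of the inequality implies the unfolding version and not the reverse, which means you should state the lemma with the spectral norm (your homogeneity argument already proves that version).
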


\noindent\textbf{Proof of Lemma~\ref{app:pd-contract-2inf}.}
For each entry $(i_1,\dots,i_m)$, the value is $\ip{G}{a_1(i_1)\otimes\cdots\otimes a_m(i_m)}$ where $a_j(i_j)^\top = e_{i_j}^\top A_j$.
By definition of $\norm{G}_{\mathrm{op}}$,
\[
|Y_{i_1,\dots,i_m}|\le\norm{G}_{\mathrm{op}}\prod_j\|a_j(i_j)\|_2\le\norm{G}_{\mathrm{op}}\prod_j\twoninfnorm{A_j}.
\]
\hfill$\square$\medskip


\section{Proofs for Section~\ref{sec:score-whitening}}\label{app:score-whitening}

This appendix proves Theorems~\ref{thm:ws-bound} and~\ref{thm:ws-clt}.
The proof follows the same five-term decomposition as Appendix~\ref{app:proof-details}, but each term simplifies dramatically because score whitening turns the information operator into $(1/\dstar)\Id$, eliminating the operator inverse and $C_A$ entirely.

\medskip
\noindent\textbf{Proof roadmap and comparison with the general case.}
For ease of reference, we list the five remainder terms side-by-side:
\begin{center}
\renewcommand{\arraystretch}{1.2}
\begin{tabular}{lcc}
\textbf{Term} & \textbf{General (Sec.~\ref{app:proof-details})} & \textbf{Whitened (this section)} \\
\hline
Direction error $R_{\rm emp}^H$ & $C_A\sigma\sqrt{\bar d}\,\rho\sqrt{\log/n}$ & $\sqrt{\bar d}\,\rho\sqrt{\log/n}$ \\
Score perturbation $R_{\rm emp}^\eta$ & $C_A\,\delta_\infty\sqrt{\bar d/n}$ & $\delta_\infty\sqrt{\bar d/n}$ \\
Projection leakage $R_{\rm proj}$ & $\rho^2$ & $\rho^2$ (unchanged) \\
H-bias (1st order) & $C_A\,\rho^2$ & $\rho^2$ \\
1st-order cancel + 2nd remainder & $C_A\,\rho^2 + C_A\,\delta_\infty^2$ & $0 + \delta_\infty^2$ \\
\hline
\end{tabular}
\end{center}
The key simplifications are: (i) $C_A$ disappears from all terms because $A_0^{-1}=\dstar P_{\mathbb{T}}$ is explicit; (ii) the $\sigma$ factors from $\|A^{-1}\|$ and $1/\sigma$ from the score cancel automatically since the whitened score has $O(1)$ bounds; (iii) the first-order cancellation becomes \emph{complete} (zero, not just $O(\rho^2)$), because $G_0(H^\star_{\rm ws}) = P_{\mathbb{T}}\Gamma \in \mathcal{T}$ has no off-tangent component.

\subsection{Setup and notation}\label{app:ws-setup}

We adopt the comparison sampling model and column-sum-zero restriction of Section~\ref{app:pd-setup}.
Throughout, $\tilde s(y,\eta):=s(y,\eta)/I(\eta)$ denotes the whitened score (Eq.~\eqref{eq:whitened-score-def}).
Its key properties, inherited from the exponential family structure, are:
\begin{enumerate}[label=(\roman*)]
\item \emph{Centering:} $\E^\star[\tilde s(Y,\eta^\star)\mid X]=0$.
\item \emph{Constant conditional derivative:} $\E^\star[\partial_\eta\tilde s(Y,\eta^\star)\mid X]=-1$, regardless of $X$.
\end{enumerate}

\noindent\textbf{Whitened score regularity.}
From Assumption~\ref{app:pd-score}, the original score satisfies $|s(y,\eta)|\le C_{\psione}\asymp 1/\sigma$ and $I(\eta)\ge c_-/\sigma^2$.
Therefore the whitened score satisfies
\begin{equation}\label{app:ws-stilde-bound}
|\tilde s(y,\eta)|
= \frac{|s(y,\eta)|}{I(\eta)}
\le \frac{C_{\psione}}{c_-/\sigma^2}
= \frac{C_{\psione}\sigma^2}{c_-}
\le C_{\tilde s},
\end{equation}
where $C_{\tilde s}:=C_{\psione}\sigma^2/c_-$ is a bounded constant (since $\sigma$ is a constant in the BTL model).
Similarly, by Assumption~\ref{app:pd-score}(i),
\begin{equation}\label{app:ws-stilde-deriv}
|\partial_\eta\tilde s(y,\eta)|
\le \frac{|\dot s(y,\eta)|}{I(\eta)} + \frac{|s(y,\eta)|\cdot|I'(\eta)|}{I(\eta)^2}
\le C_{\dot{\tilde s}},
\end{equation}
and the second derivative $|\partial^2_\eta\tilde s(y,\eta)|\le C_{\ddot{\tilde s}}$, where $C_{\dot{\tilde s}},C_{\ddot{\tilde s}}$ are dimension-free constants depending only on $(c_-,c_+,c_1,c_2,C_{\psione},\sigma)$.
In particular, $\|\tilde s(Y,\hat\eta)\|_{\psione}\le C_{\tilde s}$ (the whitened score is bounded, hence sub-exponential with a \emph{constant} parameter, unlike the original score which has $\psione$-norm $\asymp 1/\sigma$).

\noindent\textbf{Isotropic Gram operator.}
Under the whitened score and uniform comparison design, by property~(ii) the effective Gram operator is
\begin{equation}\label{app:ws-G0}
G_0(H) = \E^\star\!\big[\ip{H}{X}\,X\big] = \frac{1}{\dstar}\,H
\end{equation}
on the column-sum-zero subspace (Lemma~\ref{app:pd-Frob-reduction}).
The restricted operators are therefore
\[
A_0 = P_{\mathbb{T}} G_0 P_{\mathbb{T}} = \frac{1}{\dstar}\,P_{\mathbb{T}},
\qquad
A_0^{-1} = \dstar\,P_{\mathbb{T}}.
\]

\noindent\textbf{Simplified directions.}
The oracle and estimated EIF directions are
\begin{equation}\label{app:ws-directions}
H^\star_{\rm ws} = \dstar\,P_{\mathbb{T}}\Gammavec,
\qquad
\hat H_{\rm ws} = \dstar\,\widehat{P}_{\mathbb{T}}\Gammavec,
\end{equation}
and the direction error is
\begin{equation}\label{app:ws-direction-error}
\hat H_{\rm ws} - H^\star_{\rm ws} = \dstar\,(\widehat{P}_{\mathbb{T}} - P_{\mathbb{T}})\Gammavec.
\end{equation}
No operator inversion, Neumann series expansion, or Gram mismatch analysis is needed.

\subsection{Bounding $R_{\mathrm{emp}}^{\tilde H}$ (direction-error empirical process)}\label{app:ws-RempH}

Define
\[
R_{\mathrm{emp}}^{\tilde H}
:= (\Pn-\Pstar)\big[\tilde s(Y,\hat\eta)\,\ip{\hat H_{\rm ws}-H^\star_{\rm ws}}{X}\big].
\]
Let $Z_i := \tilde s(Y_i,\hat\eta_i)\,\ip{\hat H_{\rm ws}-H^\star_{\rm ws}}{X_i}$.
We bound its variance and sub-exponential norm following the five-step template of Section~\ref{app:pd-RempH}.

\subsubsection{Step 1: Variance bound via Frobenius reduction}

Since $|\tilde s(Y,\hat\eta)|\le C_{\tilde s}$ (Eq.~\eqref{app:ws-stilde-bound}),
\[
\E^\star[Z_i^2]
\le C_{\tilde s}^2\,\E^\star\!\big[\ip{\hat H_{\rm ws}-H^\star_{\rm ws}}{X}^2\big].
\]
Applying the Frobenius reduction (Lemma~\ref{app:pd-Frob-reduction}):
\begin{equation}\label{app:ws-var-Z-Frob}
\mathrm{Var}(Z_i)
\le \E^\star[Z_i^2]
\le C_{\tilde s}^2\cdot\frac{\Fnorm{\hat H_{\rm ws}-H^\star_{\rm ws}}^2}{\dstar}.
\end{equation}
Note that the Fisher weight $c_+/\sigma^2$ from the general case is now replaced by the constant $C_{\tilde s}^2$.

\subsubsection{Step 2: Extracting $\oneNorm{\Gammavec}$ from the Frobenius norm}

Since $\hat H_{\rm ws}-H^\star_{\rm ws} = \dstar\,(\widehat{P}_{\mathbb{T}}-P_{\mathbb{T}})\Gammavec$, by Lemma~\ref{app:pd-Gamma-l1}:
\begin{equation}\label{app:ws-l1-extract-F}
\Fnorm{\hat H_{\rm ws}-H^\star_{\rm ws}}
= \dstar\,\Fnorm{(\widehat{P}_{\mathbb{T}}-P_{\mathbb{T}})\Gammavec}
\le \dstar\,\oneNorm{\Gammavec}\cdot\max_\omega\Fnorm{(\widehat{P}_{\mathbb{T}}-P_{\mathbb{T}})E_\omega}.
\end{equation}

\subsubsection{Step 3: Bounding $\max_\omega\Fnorm{(\widehat{P}_{\mathbb{T}}-P_{\mathbb{T}})E_\omega}$}

By the subspace estimation guarantee (Eq.~\eqref{app:pd-hatP-minus-P-Eomega-F}):
\begin{equation}\label{app:ws-PT-diff-Eomega-F}
\max_\omega\Fnorm{(\widehat{P}_{\mathbb{T}}-P_{\mathbb{T}})E_\omega}
\le C(\mu,r,m)\cdot\frac{\sqrt{\bar d}\,\rho}{\sqrt{\dstar}},
\end{equation}
where $\rho = \lambda_{\min}^{-1}\sqrt{\dstar\bar d\log^c\bar d/n}$ as in~\eqref{app:pd-rho-def}.
This is the \emph{same bound} as in the general case (Eq.~\eqref{app:pd-hatP-minus-P-Eomega-F}), with the $\sqrt{\bar d}$ arising from the $(\Id-P_U)$ mode in the tangent space projector.
The key simplification is that \emph{no resolvent decomposition} (Eq.~\eqref{app:pd-M-decomp}) is needed---the direction error is a single term $\dstar(\widehat{P}_{\mathbb{T}}-P_{\mathbb{T}})\Gammavec$ rather than the two-term expansion $A^{-1}(\widehat{P}_{\mathbb{T}}-P_{\mathbb{T}})+A^{-1}(A-\hat A)\hat A^{-1}\widehat{P}_{\mathbb{T}}$.

Substituting~\eqref{app:ws-PT-diff-Eomega-F} into~\eqref{app:ws-l1-extract-F}:
\[
\Fnorm{\hat H_{\rm ws}-H^\star_{\rm ws}}
\le \dstar\,\oneNorm{\Gammavec}\cdot C(\mu,r,m)\cdot\frac{\sqrt{\bar d}\,\rho}{\sqrt{\dstar}}
= C(\mu,r,m)\,\sqrt{\dstar\bar d}\,\oneNorm{\Gammavec}\,\rho.
\]
Plugging into~\eqref{app:ws-var-Z-Frob}:
\begin{equation}\label{app:ws-var-final}
\mathrm{Var}(Z_i)
\le C(\mu,r,m)\,\bar d\,\oneNorm{\Gammavec}^2\,\rho^2.
\end{equation}
The $C_A^2\sigma^2$ prefactor from the general case has been absorbed into $C(\mu,r,m)$; the $\bar d$ factor persists from the uncompressed mode in the tangent projector.

\subsubsection{Step 4: Sub-exponential ($\psione$) bound via $\ell_\infty$ norm}

Since $|\tilde s|\le C_{\tilde s}$ (bounded, hence sub-exponential with $\|\tilde s\|_{\psione}\le C_{\tilde s}$) and $|\ip{\hat H_{\rm ws}-H^\star_{\rm ws}}{X}|\le 2\infnorm{\hat H_{\rm ws}-H^\star_{\rm ws}}$:
\begin{equation}\label{app:ws-Zi-psi1}
\|Z_i\|_{\psione}
\le 2C_{\tilde s}\,\infnorm{\hat H_{\rm ws}-H^\star_{\rm ws}}.
\end{equation}

By Lemma~\ref{app:pd-Gamma-l1} (entrywise version) and the $P_{\mathbb{T}}E_\omega$ bound (Lemma~\ref{app:pd-PTX-infty}):
\[
\infnorm{\hat H_{\rm ws}-H^\star_{\rm ws}}
= \dstar\,\infnorm{(\widehat{P}_{\mathbb{T}}-P_{\mathbb{T}})\Gammavec}
\le \dstar\,\oneNorm{\Gammavec}\cdot\max_\omega\infnorm{(\widehat{P}_{\mathbb{T}}-P_{\mathbb{T}})E_\omega}.
\]
Since
\[
\infnorm{(\widehat{P}_{\mathbb{T}}-P_{\mathbb{T}})E_\omega}\le\infnorm{\widehat{P}_{\mathbb{T}}E_\omega}+\infnorm{P_{\mathbb{T}}E_\omega}\le 2C(\mu,r,m)\,\bar d/\dstar
\]
(Lemma~\ref{app:pd-PTX-infty}):
\begin{equation}\label{app:ws-H-diff-infty}
\infnorm{\hat H_{\rm ws}-H^\star_{\rm ws}}
\le \dstar\cdot\oneNorm{\Gammavec}\cdot 2C(\mu,r,m)\cdot\frac{\bar d}{\dstar}
= C(\mu,r,m)\,\bar d\,\oneNorm{\Gammavec}.
\end{equation}
Therefore:
\begin{equation}\label{app:ws-psi1-bound}
\|Z_i\|_{\psione}
\le C(\mu,r,m)\,\bar d\,\oneNorm{\Gammavec}.
\end{equation}
The mechanism is the same as in the general case: $\dstar$ from the direction cancels with $1/\dstar$ from incoherence. The simplification is that the whitened score has $\psione$-norm $O(1)$, so the $C_A\sigma$ prefactor disappears.

\subsubsection{Step 5: Bernstein concentration}

Applying Lemma~\ref{app:pd-bernstein-template} with the variance~\eqref{app:ws-var-final} and tail bound~\eqref{app:ws-psi1-bound}:

\begin{theorem}[Bound on $R_{\mathrm{emp}}^{\tilde H}$]\label{app:ws-RempH-thm}
Under Assumptions~\ref{ass:initial} and~\ref{app:pd-score}, with probability at least $1-\delta$ (conditional on $\mathcal{D}_1$):
\begin{equation}\label{app:ws-RempH-final}
\boxed{
|R_{\mathrm{emp}}^{\tilde H}|
\le
C(\mu,r,m)\,\oneNorm{\Gammavec}\left[
\sqrt{\bar d}\;\rho\;\sqrt{\frac{\log(2/\delta)}{n}}
\;+\;
\bar d\;\frac{\log(2/\delta)}{n}
\right].
}
\end{equation}
\end{theorem}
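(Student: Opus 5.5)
The bound follows by conditioning on the first-stage sample and applying the Bernstein template (Lemma~\ref{app:pd-bernstein-template}) to the i.i.d.\ summands $Z_i=\tilde s(Y_i,\hat\eta_i)\ip{\hat H_{\rm ws}-H^\star_{\rm ws}}{X_i}$. Conditional on $\mathcal{D}_1$, the quantities $\hat T$, $\widehat{P}_{\mathbb{T}}$, and hence $\hat H_{\rm ws}-H^\star_{\rm ws}=\dstar(\widehat{P}_{\mathbb{T}}-P_{\mathbb{T}})\Gammavec$ from \eqref{app:ws-direction-error}, are deterministic. The evaluation-fold summands are therefore i.i.d., and $R_{\mathrm{emp}}^{\tilde H}=n^{-1}\sum_i(Z_i-\E^\star Z_i)$ is a centered average. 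So we need two inputs: a variance proxy $v$ and an almost-sure (or $\psione$) bound $M$ on the centered summands $Z_i-\E^\star Z_i$.

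\emph{Variance.} Equation \eqref{app:ws-var-final} gives $v\le C(\mu,r,m)\bar d\oneNorm{\Gammavec}^2\rho^2$. The argument runs as follows. The whitened score is bounded, $|\tilde s|\le C_{\tilde s}$, by \eqref{app:ws-stilde-bound}. Combining this with the Frobenius reduction (Lemma~\ref{app:pd-Frob-reduction}) gives $\E^\star Z_i^2\lesssim \Fnorm{\hat H_{\rm ws}-H^\star_{\rm ws}}^2/\dstar$. Lemma~\ref{app:pd-Frob-reduction} requires column-sum-zero, and this holds because both $\widehat{P}_{\mathbb{T}}\Gammavec$ and $P_{\mathbb{T}}\Gammavec$ live in constrained tangent spaces. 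Next, $\ell_1$ extraction (Lemma~\ref{app:pd-Gamma-l1}) and the per-basis projector perturbation bound \eqref{app:ws-PT-diff-Eomega-F} give $\Fnorm{\hat H_{\rm ws}-H^\star_{\rm ws}}\lesssim \sqrt{\dstar\bar d}\,\oneNorm{\Gammavec}\rho$. Dividing the square of this by $\dstar$ yields the stated $v$.

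\emph{Tail.} By \eqref{app:ws-psi1-bound}, $M\lesssim \bar d\oneNorm{\Gammavec}$. This uses $|\ip{W}{X}|\le 2\infnorm{W}$ for the two-sparse atom $X$, together with the $\ell_\infty$ version of Lemma~\ref{app:pd-Gamma-l1} and $\infnorm{P_{\mathbb{T}}E_\omega},\infnorm{\widehat{P}_{\mathbb{T}}E_\omega}\lesssim \bar d/\dstar$ (Lemma~\ref{app:pd-PTX-infty}). The $\widehat{P}_{\mathbb{T}}$ case uses the estimated incoherence \eqref{init-estimated-incoherence}. Since $\tilde s$ is bounded, $Z_i$ is in fact almost surely bounded by $M$, and the centered version is bounded by $2M$. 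The bounded-variable form of Bernstein therefore applies directly, and no $\psione$ machinery is needed.

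Plugging $v$ and $M$ into Lemma~\ref{app:pd-bernstein-template} gives exactly
\[
C\Bigl(\sqrt{\bar d}\,\rho\,\oneNorm{\Gammavec}\sqrt{\log(2/\delta)/n}+\bar d\,\oneNorm{\Gammavec}\log(2/\delta)/n\Bigr).
\]
The only delicate point is the Frobenius perturbation bound \eqref{app:pd-hatP-minus-P-Eomega-F}, since that is where the $\sqrt{\bar d}$ (rather than a worse dimension factor) must come from. I would treat it as given, via the telescoping over modes using Assumption~\ref{ass:initial}'s row-wise projector bounds \eqref{eq:init-subspace-row}. Along the way I would check that the crude tail bound, which does not exploit $\rho$, is harmless: it already matches the $\bar d/n$ scale targeted in Theorem~\ref{thm:ws-bound}.
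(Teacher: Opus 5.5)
Your proposal is correct and follows the paper's proof essentially step for step. The variance is bounded via the bounded whitened score, the Frobenius reduction, $\ell_1$ extraction and the per-basis projector perturbation bound; the tail is bounded via $\ell_\infty$ extraction and the $\bar d/\dstar$ entrywise bound on $P_{\mathbb{T}}E_\omega$ and $\widehat{P}_{\mathbb{T}}E_\omega$; Bernstein finishes, and your observation that $Z_i$ is almost surely bounded (so Lemma~\ref{app:pd-bernstein-template} applies directly without $\psi_1$ language) is only a cosmetic tightening of what the paper does.
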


\noindent\textbf{Proof of Theorem~\ref{app:ws-RempH-thm}.}
The variance term gives
\[
\sqrt{\frac{\mathrm{Var}(Z_i)\log(2/\delta)}{n}}
\le C(\mu,r,m)\,\sqrt{\bar d}\,\oneNorm{\Gammavec}\,\rho\,\sqrt{\frac{\log(2/\delta)}{n}}.
\]
The sub-exponential term gives
\[
\frac{\|Z_i\|_{\psione}\log(2/\delta)}{n}
\le C(\mu,r,m)\,\bar d\,\oneNorm{\Gammavec}\,\frac{\log(2/\delta)}{n}.
\]
Combining yields~\eqref{app:ws-RempH-final}.
\hfill$\square$\medskip

\subsection{Bounding $R_{\mathrm{emp}}^{\tilde\eta}$ (score-perturbation empirical process)}\label{app:ws-Remp-eta}

Define
\[
R_{\mathrm{emp}}^{\tilde\eta}
:= (\Pn-\Pstar)\big[(\tilde s(Y,\hat\eta)-\tilde s(Y,\eta^\star))\,\ip{H^\star_{\rm ws}}{X}\big].
\]
Condition on the first-stage output so that $\Delta:=\hat T-T^\star$ and $H^\star_{\rm ws}$ are fixed.

\subsubsection{Taylor expansion of the whitened score}

For each $i$, there exists $\bar\eta_i$ between $\hat\eta_i$ and $\eta^\star_i$ such that
\[
\tilde s(Y_i,\hat\eta_i)-\tilde s(Y_i,\eta^\star_i)
= \partial_\eta\tilde s(Y_i,\eta^\star_i)\ip{\Delta}{X_i}
+ \tfrac12\partial^2_\eta\tilde s(Y_i,\bar\eta_i)\ip{\Delta}{X_i}^2.
\]
Hence $R_{\mathrm{emp}}^{\tilde\eta} = R_{\tilde\eta,1}^{\mathrm{emp}}+R_{\tilde\eta,2}^{\mathrm{emp}}$, where
\[
R_{\tilde\eta,1}^{\mathrm{emp}}:=(\Pn-\Pstar)\big[\partial_\eta\tilde s(Y,\eta^\star)\ip{\Delta}{X}\ip{H^\star_{\rm ws}}{X}\big],
\quad
R_{\tilde\eta,2}^{\mathrm{emp}}:=\tfrac12(\Pn-\Pstar)\big[\partial^2_\eta\tilde s(Y,\bar\eta)\ip{\Delta}{X}^2\ip{H^\star_{\rm ws}}{X}\big].
\]

\subsubsection{Bounds on $\infnorm{H^\star_{\rm ws}}$ and $\Fnorm{H^\star_{\rm ws}}/\sqrt{\dstar}$}

Since $H^\star_{\rm ws}=\dstar\,P_{\mathbb{T}}\Gammavec$, by Lemma~\ref{app:pd-Gamma-l1}:

\noindent\textbf{$\ell_\infty$ bound.}
Using Lemma~\ref{app:pd-PTX-infty} ($\infnorm{P_{\mathbb{T}}E_\omega}\le C(\mu,r,m)\,\bar d/\dstar$):
\begin{equation}\label{app:ws-Hstar-infty}
\infnorm{H^\star_{\rm ws}}
\le \dstar\,\oneNorm{\Gammavec}\cdot\max_\omega\infnorm{P_{\mathbb{T}}E_\omega}
\le C(\mu,r,m)\,\bar d\,\oneNorm{\Gammavec}.
\end{equation}
Compared to~\eqref{app:pd-Hstar-infty-final}, the factor $C_A\sigma^2$ is replaced by~$1$ (since $\dstar$ from the direction cancels with $1/\dstar$ from incoherence directly, without passing through $\|A^{-1}\|_{\infty\to\infty}$).

\noindent\textbf{Frobenius bound.}
Since $\Fnorm{P_{\mathbb{T}}E_\omega}^2=(P_{\mathbb{T}})_{\omega\omega}\le C(\mu,r,m)\,\bar d/\dstar$:
\begin{equation}\label{app:ws-Hstar-Frob}
\frac{\Fnorm{H^\star_{\rm ws}}}{\sqrt{\dstar}}
\le \dstar\,\oneNorm{\Gammavec}\cdot\frac{\max_\omega\Fnorm{P_{\mathbb{T}}E_\omega}}{\sqrt{\dstar}}
\le C(\mu,r,m)\,\sqrt{\bar d}\,\oneNorm{\Gammavec}.
\end{equation}

\subsubsection{Abstract Bernstein bound}

The same argument as Proposition~\ref{app:pd-Remp-eta-prop} applies with the whitened score derivatives replacing the original ones.
Since $|\partial_\eta\tilde s|\le C_{\dot{\tilde s}}$ and $|\partial^2_\eta\tilde s|\le C_{\ddot{\tilde s}}$ (both constants, see~\eqref{app:ws-stilde-deriv}), the factor $1/\sigma^2$ in~\eqref{app:pd-Remp-eta-eq} is replaced by a constant.

\begin{corollary}[Explicit bound for $R_{\mathrm{emp}}^{\tilde\eta}$]\label{app:ws-Remp-eta-final}
Under Assumption~\ref{app:pd-score} and comparison design, with probability at least $1-\delta$,
\begin{equation}\label{app:ws-Remp-eta-final-eq}
\boxed{
|R_{\mathrm{emp}}^{\tilde\eta}|
\;\le\;
C(\mu,r,m)\,\oneNorm{\Gammavec}\big(\infnorm{\Delta}+\infnorm{\Delta}^2\big)
\left[
\sqrt{\bar d}\sqrt{\frac{\log(4/\delta)}{n}}
\;+\;
\bar d\,\frac{\log(4/\delta)}{n}
\right].
}
\end{equation}
\end{corollary}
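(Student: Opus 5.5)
The plan is to mirror the proof of Corollary~\ref{app:pd-Remp-eta-final} in the general case, replacing the original score derivatives by the whitened ones. The whitened-score derivative bounds \eqref{app:ws-stilde-deriv} are dimension-free constants, and the direction norms are \eqref{app:ws-Hstar-infty} and \eqref{app:ws-Hstar-Frob}, which carry no $C_A$. Throughout I condition on the first-stage sample, so that $\Delta$ and $H^\star_{\rm ws}$ are deterministic. The evaluation observations $(X_i,Y_i)$ are then i.i.d. Using the Taylor split already written, I bound $R_{\tilde\eta,1}^{\mathrm{emp}}$ and $R_{\tilde\eta,2}^{\mathrm{emp}}$ separately, each via the Bernstein template Lemma~\ref{app:pd-bernstein-template} at level $\delta/2$, and then take a union bound. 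This union bound is what produces the $\log(4/\delta)$.

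For the first-order term, set $W_i^{(1)}=\partial_\eta\tilde s(Y_i,\eta_i^\star)\ip{\Delta}{X_i}\ip{H^\star_{\rm ws}}{X_i}$ minus its mean.
\begin{itemize}
\item Uniform bound: since $X$ is a two-sparse comparison atom, $|\ip{\Delta}{X}|\le 2\infnorm{\Delta}$ and $|\ip{H^\star_{\rm ws}}{X}|\le 2\infnorm{H^\star_{\rm ws}}$. This gives $|W_i^{(1)}|\le 8C_{\dot{\tilde s}}\infnorm{\Delta}\infnorm{H^\star_{\rm ws}}\le C(\mu,r,m)\,\bar d\,\oneNorm{\Gammavec}\infnorm{\Delta}$.
\item Variance: bound the first two factors by sup norms and keep $\ip{H^\star_{\rm ws}}{X}^2$. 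Since $H^\star_{\rm ws}=\dstar P_{\mathbb{T}}\Gammavec$ lies in the column-sum-zero subspace, Lemma~\ref{app:pd-Frob-reduction} applies, giving $\mathrm{Var}\le C\infnorm{\Delta}^2\Fnorm{H^\star_{\rm ws}}^2/\dstar\le C(\mu,r,m)\,\bar d\,\oneNorm{\Gammavec}^2\infnorm{\Delta}^2$ by \eqref{app:ws-Hstar-Frob}.
\end{itemize}
Bernstein then yields $C\oneNorm{\Gammavec}\infnorm{\Delta}\bigl[\sqrt{\bar d\log(4/\delta)/n}+\bar d\log(4/\delta)/n\bigr]$.

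The second-order term is identical, with one extra factor of $2\infnorm{\Delta}$ and $C_{\ddot{\tilde s}}$ in place of $C_{\dot{\tilde s}}$. The only subtlety is that $\bar\eta_i$ is random and depends on $Y_i$. This is harmless because I only use the uniform bound $|\partial^2_\eta\tilde s|\le C_{\ddot{\tilde s}}$. That bound is valid on the segment between $\hat\eta_i$ and $\eta_i^\star$, since both lie in $[-2B-1,2B+1]$ on the event of Assumption~\ref{ass:initial}, where $I$ is bounded below. Each summand is still a fixed function of $(X_i,Y_i)$, so Bernstein applies, giving the $\infnorm{\Delta}^2$ contribution. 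Summing the two pieces and collecting terms gives \eqref{app:ws-Remp-eta-final-eq}.

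The main point needing care is confirming that $\partial_\eta\tilde s$ and $\partial^2_\eta\tilde s$ are genuinely bounded on the relevant range. For BTL, $\tilde s=(y-\sigma(\eta))/I(\eta)$ is smooth, and $I$ is bounded away from zero on compact intervals, so this reduces to the bounded-signal condition plus $\infnorm{\Delta}\le 1$. Given that, no $C_A$ or $\sigma$ factor enters, which is exactly what distinguishes this bound from the general-case result.
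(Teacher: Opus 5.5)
Your proposal is correct and matches the paper's proof: you Taylor-split the whitened score difference, then apply Bernstein to each order at level $\delta/2$ using the dimension-free bounds on $\partial_\eta\tilde s,\partial^2_\eta\tilde s$ together with $\infnorm{H^\star_{\rm ws}}\lesssim\bar d\,\oneNorm{\Gammavec}$ and $\Fnorm{H^\star_{\rm ws}}/\sqrt{\dstar}\lesssim\sqrt{\bar d}\,\oneNorm{\Gammavec}$, and finish with a union bound. You also tighten two points the paper leaves implicit: the second-order summand is a fixed measurable function of $(X_i,Y_i)$, and the whitened-derivative bounds need $I$ bounded below on the whole segment between $\hat\eta_i$ and $\eta_i^\star$, whereas the paper's assumption states this only at $\eta^\star$.
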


\noindent\textbf{Proof of Corollary~\ref{app:ws-Remp-eta-final}.}
Substituting~\eqref{app:ws-Hstar-infty} and~\eqref{app:ws-Hstar-Frob} into the abstract Bernstein bound (same structure as Proposition~\ref{app:pd-Remp-eta-prop} with constant prefactors):
the first-order variance term gives
\[
C_{\dot{\tilde s}}\cdot\infnorm{\Delta}\cdot\frac{\Fnorm{H^\star_{\rm ws}}}{\sqrt{\dstar}}\le C(\mu,r,m)\,\oneNorm{\Gammavec}\,\infnorm{\Delta}\sqrt{\bar d};
\]
the first-order tail term gives
\[
C_{\dot{\tilde s}}\cdot\infnorm{\Delta}\cdot\infnorm{H^\star_{\rm ws}}\le C(\mu,r,m)\,\oneNorm{\Gammavec}\,\infnorm{\Delta}\,\bar d;
\]
the second-order terms are the same with an extra factor of $\infnorm{\Delta}$.
Factoring $\infnorm{\Delta}+\infnorm{\Delta}^2$ yields~\eqref{app:ws-Remp-eta-final-eq}.
\hfill$\square$\medskip

\subsection{Bounding $R_{\mathrm{proj}}$ (projection leakage)}\label{app:ws-Rproj}

The projection leakage term is
\[
R_{\mathrm{proj}} := \ip{(\Id-P_{\mathbb{T}})\Gammavec}{\Delta}.
\]
This term depends only on the functional gradient $\Gammavec$ and the estimation error $\Delta=\hat T-T^\star$; it is \emph{independent} of the score function or the direction $H$.
Therefore, the analysis of Section~\ref{app:pd-Rproj} applies \emph{verbatim}.
By Proposition~\ref{app:pd-PTcompl-infty} and
Assumption~\ref{ass:initial},
\begin{equation}\label{app:ws-Rproj-final}
\boxed{
|R_{\mathrm{proj}}|
\;\le\;
C(\mu,r,m)\;\oneNorm{\Gammavec}\,\rho^2.
}
\end{equation}
This is identical to~\eqref{app:pd-Rproj-final-eq}.

\subsection{H-direction bias}\label{app:ws-H-bias}

The H-direction bias term is
\[
\Pstar\big[\tilde s(Y,\hat\eta)\ip{\hat H_{\rm ws}-H^\star_{\rm ws}}{X}\big].
\]

\subsubsection{Taylor expansion and leading term}

Expanding $\tilde s(Y,\hat\eta)$ around $\eta^\star$ (as in Section~\ref{app:pd-H-bias}):
\begin{align}
\Pstar\big[\tilde s(Y,\hat\eta)\ip{\hat H_{\rm ws}-H^\star_{\rm ws}}{X}\big]
&= \underbrace{\Pstar\big[\tilde s(Y,\eta^\star)\ip{\hat H_{\rm ws}-H^\star_{\rm ws}}{X}\big]}_{=\,0\text{ (centering)}}
\notag\\
&\quad+ \underbrace{\Pstar\big[\partial_\eta\tilde s(Y,\eta^\star)\ip{\Delta}{X}\ip{\hat H_{\rm ws}-H^\star_{\rm ws}}{X}\big]}_{\text{first-order in }\Delta}
\notag\\
&\quad+ \underbrace{\tfrac12\Pstar\big[\partial^2_\eta\tilde s(Y,\bar\eta)\ip{\Delta}{X}^2\ip{\hat H_{\rm ws}-H^\star_{\rm ws}}{X}\big]}_{\text{second-order in }\Delta}.
\label{app:ws-Hbias-Taylor}
\end{align}

\subsubsection{First-order piece}

By property~(ii) of the whitened score, $\E^\star[\partial_\eta\tilde s(Y,\eta^\star)\mid X]=-1$.
Therefore the first-order piece equals $-\ip{\Delta}{(\hat H_{\rm ws}-H^\star_{\rm ws})/\dstar}$ computed via the isotropic Gram:
\[
\Pstar\big[\partial_\eta\tilde s\,\ip{\Delta}{X}\ip{\hat H_{\rm ws}-H^\star_{\rm ws}}{X}\big]
= -\ip{\Delta}{G_0(\hat H_{\rm ws}-H^\star_{\rm ws})}
= -\frac{1}{\dstar}\ip{\Delta}{\hat H_{\rm ws}-H^\star_{\rm ws}}.
\]
Since $\hat H_{\rm ws}-H^\star_{\rm ws}=\dstar(\widehat{P}_{\mathbb{T}}-P_{\mathbb{T}})\Gammavec$:
\[
-\frac{1}{\dstar}\ip{\Delta}{\dstar(\widehat{P}_{\mathbb{T}}-P_{\mathbb{T}})\Gammavec}
= -\ip{\Delta}{(\widehat{P}_{\mathbb{T}}-P_{\mathbb{T}})\Gammavec}.
\]
Decomposing $\Delta=P_{\mathbb{T}}\Delta+(\Id-P_{\mathbb{T}})\Delta$:
\begin{itemize}
\item By Cauchy--Schwarz,
\[
|\ip{P_{\mathbb{T}}\Delta}{(\widehat{P}_{\mathbb{T}}-P_{\mathbb{T}})\Gammavec}|\le\Fnorm{P_{\mathbb{T}}\Delta}\cdot\Fnorm{(\widehat{P}_{\mathbb{T}}-P_{\mathbb{T}})\Gammavec}.
\]
Using $\Fnorm{(\widehat{P}_{\mathbb{T}}-P_{\mathbb{T}})\Gammavec}\le\oneNorm{\Gammavec}\cdot C(\mu,r,m)\sqrt{\bar d}\,\rho/\sqrt{\dstar}$ (from~\eqref{app:ws-PT-diff-Eomega-F}) and $\Fnorm{P_{\mathbb{T}}\Delta}/\sqrt{\dstar}\le\infnorm{\Delta}$,
this contributes $O(\sqrt{\bar d}\,\infnorm{\Delta}\cdot\rho\cdot\oneNorm{\Gammavec})$, which is an $O(\infnorm{\Delta}\cdot\rho)$ cross term retained in the combined bound.

\item By H\"older's inequality,
\[
|\ip{(\Id-P_{\mathbb{T}})\Delta}{(\widehat{P}_{\mathbb{T}}-P_{\mathbb{T}})\Gammavec}|
\le\infnorm{(\Id-P_{\mathbb{T}})\Delta}\cdot\oneNorm{(\widehat{P}_{\mathbb{T}}-P_{\mathbb{T}})\Gammavec}.
\]
By Lemma~\ref{app:pd-PT-l1}, $\oneNorm{(\widehat{P}_{\mathbb{T}}-P_{\mathbb{T}})\Gammavec}\le\oneNorm{\Gammavec}\cdot C(\mu,r,m)$, and by Proposition~\ref{app:pd-PTcompl-infty}, $\infnorm{(\Id-P_{\mathbb{T}})\Delta}\le C(\mu,r,m)\rho^2$.
This gives $O(\rho^2\cdot\oneNorm{\Gammavec})$.
\end{itemize}

\subsubsection{Second-order piece}

Since $|\partial^2_\eta\tilde s|\le C_{\ddot{\tilde s}}$ (a constant) and $|\ip{\Delta}{X}|\le 2\infnorm{\Delta}$:
\[
\left|\tfrac12\Pstar\big[\partial^2_\eta\tilde s\,\ip{\Delta}{X}^2\ip{\hat H_{\rm ws}-H^\star_{\rm ws}}{X}\big]\right|
\le C_{\ddot{\tilde s}}\,\infnorm{\Delta}^2\,\E^\star[|\ip{\hat H_{\rm ws}-H^\star_{\rm ws}}{X}|].
\]
By the $\ell_1$-average bound (Lemma~\ref{app:pd-l1-average}):
\[
\E^\star[|\ip{\hat H_{\rm ws}-H^\star_{\rm ws}}{X}|]
\le \frac{\oneNorm{\hat H_{\rm ws}-H^\star_{\rm ws}}}{\dstar}.
\]
Since $\hat H_{\rm ws}-H^\star_{\rm ws}=\dstar(\widehat{P}_{\mathbb{T}}-P_{\mathbb{T}})\Gammavec$ and $\oneNorm{(\widehat{P}_{\mathbb{T}}-P_{\mathbb{T}})\Gammavec}\le\oneNorm{\Gammavec}\cdot C(\mu,r,m)$ (by Lemma~\ref{app:pd-PT-l1}):
\[
\frac{\oneNorm{\hat H_{\rm ws}-H^\star_{\rm ws}}}{\dstar}
= \oneNorm{(\widehat{P}_{\mathbb{T}}-P_{\mathbb{T}})\Gammavec}
\le C(\mu,r,m)\,\oneNorm{\Gammavec}.
\]
Therefore the second-order piece satisfies:
\begin{equation}\label{app:ws-H-bias-2nd}
\boxed{
\left|\text{second-order H-bias}\right|
\le C(\mu,r,m)\,\infnorm{\Delta}^2\,\oneNorm{\Gammavec}.
}
\end{equation}
The key cancellation is identical to the general case (Eq.~\eqref{app:pd-H-bias-2nd}): $\dstar$ from the direction cancels with $1/\dstar$ from the $\ell_1$-average bound.
The prefactor $1/\sigma^2$ from $|\ddot s|$ is absent here because $|\partial^2_\eta\tilde s|$ is bounded by a constant.

\subsection{First-order cancellation}\label{app:ws-1st-cancel}

The combined first-order cancellation term from the bias decomposition is (same as Eq.~\eqref{app:pd-1st-cancel-term}):
\begin{equation}\label{app:ws-1st-cancel-term}
\ip{P_{\mathbb{T}}\Gammavec}{\Delta} + \Pstar\big[\tilde s(Y,\hat\eta)\ip{H^\star_{\rm ws}}{X}\big] - \Pstar\big[\tilde s(Y,\eta^\star)\ip{H^\star_{\rm ws}}{X}\big].
\end{equation}

\subsubsection{Taylor expansion of the population score difference}

The second term of~\eqref{app:ws-1st-cancel-term} minus the third (which vanishes by centering) gives, by Taylor expansion around $\eta^\star$:
\begin{equation}\label{app:ws-Taylor-popul}
\Pstar\big[\tilde s(Y,\hat\eta)\ip{H^\star_{\rm ws}}{X}\big]
= -\ip{\Delta}{G_0(H^\star_{\rm ws})} + \tilde R_2(\Delta;H^\star_{\rm ws}),
\end{equation}
where we used $\E^\star[\partial_\eta\tilde s(Y,\eta^\star)\mid X]=-1$ (property~(ii)) to get $G_0(H)=(1/\dstar)H$, and
\[
\tilde R_2(\Delta;H^\star_{\rm ws}) := \tfrac12\Pstar\big[\partial^2_\eta\tilde s(Y,\bar\eta)\ip{\Delta}{X}^2\ip{H^\star_{\rm ws}}{X}\big].
\]

\subsubsection{The cancellation mechanism --- complete cancellation}

Plugging~\eqref{app:ws-Taylor-popul} into~\eqref{app:ws-1st-cancel-term}:
\[
\ip{P_{\mathbb{T}}\Gammavec}{\Delta} - \ip{\Delta}{G_0(H^\star_{\rm ws})} + \tilde R_2(\Delta;H^\star_{\rm ws}).
\]
Now $G_0(H^\star_{\rm ws}) = (1/\dstar)\cdot\dstar P_{\mathbb{T}}\Gammavec = P_{\mathbb{T}}\Gammavec$.
Therefore:
\begin{equation}\label{app:ws-complete-cancel}
\ip{P_{\mathbb{T}}\Gammavec}{\Delta} - \ip{\Delta}{P_{\mathbb{T}}\Gammavec} = 0.
\end{equation}

This is the key structural advantage of score whitening: in the general case (Section~\ref{app:pd-1st-cancel}), the first-order term $\ip{P_{\mathbb{T}}\Gammavec}{\Delta}-\ip{\Delta}{G(H^\star)}$ leaves a surviving normal-component contribution $-\ip{(\Id-P_{\mathbb{T}})\Delta}{G(H^\star)}$ that requires $O(C_A\cdot\infnorm{(\Id-P_{\mathbb{T}})\Delta})$ to control (Proposition~\ref{app:pd-1st-cancel-prop}).
Under score whitening, $G_0(H^\star_{\rm ws})=P_{\mathbb{T}}\Gammavec\in\mathcal{T}$, so
\[
\ip{(\Id-P_{\mathbb{T}})\Delta}{G_0(H^\star_{\rm ws})}
= \ip{(\Id-P_{\mathbb{T}})\Delta}{P_{\mathbb{T}}\Gammavec}
= 0
\]
by orthogonality of $(\Id-P_{\mathbb{T}})\Delta\in\mathcal{T}^\perp$ and $P_{\mathbb{T}}\Gammavec\in\mathcal{T}$.
Only the second-order remainder $\tilde R_2$ survives.

\subsection{Second-order score remainder}\label{app:ws-2nd-order}

The second-order remainder from~\eqref{app:ws-Taylor-popul} is
\[
\tilde R_2(\Delta;H^\star_{\rm ws}) = \tfrac12\Pstar\big[\partial^2_\eta\tilde s(Y,\bar\eta)\ip{\Delta}{X}^2\ip{H^\star_{\rm ws}}{X}\big].
\]

\begin{proposition}[Second-order remainder bound]\label{app:ws-2nd-order-prop}
Under the comparison design, with $|\partial^2_\eta\tilde s|\le C_{\ddot{\tilde s}}$,
\begin{equation}\label{app:ws-2nd-order-bound}
|\tilde R_2(\Delta;H^\star_{\rm ws})|
\le
C_{\ddot{\tilde s}}\,\infnorm{\Delta}^2\,\frac{\oneNorm{H^\star_{\rm ws}}}{\dstar}.
\end{equation}
\end{proposition}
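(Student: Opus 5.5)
The bound follows directly from the proof of Proposition~\ref{app:pd-2nd-order-prop}. The only change is that the whitened second derivative is bounded by the constant $C_{\ddot{\tilde s}}$ from \eqref{app:ws-stilde-deriv}, in place of $c_2/\sigma^2$. I would first move the absolute value inside the population expectation:
\[
|\tilde R_2(\Delta;H^\star_{\rm ws})|
\le \tfrac12\,\E^\star\bigl[|\partial^2_\eta\tilde s(Y,\bar\eta)|\,\ip{\Delta}{X}^2\,|\ip{H^\star_{\rm ws}}{X}|\bigr].
\]
The intermediate point $\bar\eta$ depends on $(X,Y)$. This is harmless, because the bound $|\partial^2_\eta\tilde s(y,\eta)|\le C_{\ddot{\tilde s}}$ holds uniformly over $(y,\eta)$ in the bounded-signal neighborhood. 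That neighborhood contains $\bar\eta$, since $|\bar\eta|\le 2B+2\infnorm{\Delta}$ and $\infnorm{\Delta}\to0$ by Assumption~\ref{ass:initial}.

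Next I would use the comparison-atom structure. Since $X=e_{u,a}-e_{u,b}$, we have $\ip{\Delta}{X}^2\le 4\infnorm{\Delta}^2$ deterministically. This factor comes out of the expectation, leaving $2C_{\ddot{\tilde s}}\infnorm{\Delta}^2\,\E^\star[|\ip{H^\star_{\rm ws}}{X}|]$.

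The last step applies the $\ell_1$-average bound, Lemma~\ref{app:pd-l1-average}, with $v=H^\star_{\rm ws}$, which gives $\E^\star[|\ip{H^\star_{\rm ws}}{X}|]\le 2\oneNorm{H^\star_{\rm ws}}/\dstar$. Combining the two steps yields \eqref{app:ws-2nd-order-bound}, with the absolute factor $4$ absorbed into $C_{\ddot{\tilde s}}$, following the convention used in Lemma~\ref{app:pd-l1-average} and Proposition~\ref{app:pd-2nd-order-prop}.

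No step here is a genuine obstacle. The only point needing care is the uniform validity of the derivative bound at the random intermediate point $\bar\eta$, which I would settle as described above. As a follow-up corollary, I would also record that $\oneNorm{H^\star_{\rm ws}}/\dstar=\oneNorm{P_{\mathbb{T}}\Gammavec}\le C_P\oneNorm{\Gammavec}$ by Lemma~\ref{app:pd-Gamma-l1} and Lemma~\ref{app:pd-PT-l1}. This gives $|\tilde R_2|\le C(\mu,r,m)\infnorm{\Delta}^2\oneNorm{\Gammavec}$, free of $C_A$.
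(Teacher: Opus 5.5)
Your proposal is correct and follows the paper's proof essentially verbatim: bound $|\partial^2_\eta\tilde s|$ uniformly, use $|\ip{\Delta}{X}|\le 2\infnorm{\Delta}$ from the comparison atom, and finish with the $\ell_1$-average bound of Lemma~\ref{app:pd-l1-average}. Your explicit tracking of the absolute factor $4$, which the paper silently drops, and your check that the intermediate point $\bar\eta$ stays in the bounded-signal region where the whitened derivative bound holds are sound refinements, not departures from the paper's argument.
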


\noindent\textbf{Proof of Proposition~\ref{app:ws-2nd-order-prop}.}
Since $|\ip{\Delta}{X}|\le 2\infnorm{\Delta}$ (comparison atom) and $|\partial^2_\eta\tilde s|\le C_{\ddot{\tilde s}}$:
\[
|\tilde R_2|
\le C_{\ddot{\tilde s}}\,\infnorm{\Delta}^2\,\E^\star[|\ip{H^\star_{\rm ws}}{X}|].
\]
By the $\ell_1$-average bound (Lemma~\ref{app:pd-l1-average}),
$\E^\star[|\ip{H^\star_{\rm ws}}{X}|]\le\oneNorm{H^\star_{\rm ws}}/\dstar$.
\hfill$\square$\medskip

\noindent\textbf{Expanding $\oneNorm{H^\star_{\rm ws}}/\dstar$.}
Since $H^\star_{\rm ws}=\dstar P_{\mathbb{T}}\Gammavec$:
\[
\frac{\oneNorm{H^\star_{\rm ws}}}{\dstar}
= \oneNorm{P_{\mathbb{T}}\Gammavec}
\le \oneNorm{\Gammavec}\cdot\max_\omega\oneNorm{P_{\mathbb{T}}E_\omega}
\le C(\mu,r,m)\,\oneNorm{\Gammavec},
\]
where the last step uses Lemma~\ref{app:pd-PT-l1} ($\oneNorm{P_{\mathbb{T}}E_\omega}\le C_P=C(\mu,r,m)$, dimension-free).
Therefore:
\begin{equation}\label{app:ws-2nd-order-Gamma}
\boxed{
|\tilde R_2(\Delta;H^\star_{\rm ws})|
\le
C(\mu,r,m)\,\infnorm{\Delta}^2\,\oneNorm{\Gammavec}.
}
\end{equation}
This matches~\eqref{app:pd-2nd-order-Gamma} in form but without $C_A$: the $\dstar$ from the direction cancels with $1/\dstar$ from the $\ell_1$-average bound, and the whitened score derivative is $O(1)$ rather than $O(1/\sigma^2)$.

\subsection{Combined bound}\label{app:ws-combined}

We now collect all remainder terms.

\subsubsection{Recap of individual bounds}

The total remainder $R_n$ decomposes as:
\begin{equation}\label{app:ws-Rn-decomp}
R_n
= \underbrace{R_{\mathrm{emp}}^{\tilde H}}_{\text{Sec.~\ref{app:ws-RempH}}}
+ \underbrace{R_{\mathrm{emp}}^{\tilde\eta}}_{\text{Sec.~\ref{app:ws-Remp-eta}}}
+ \underbrace{R_{\mathrm{proj}}}_{\text{Sec.~\ref{app:ws-Rproj}}}
+ \underbrace{\text{H-bias}}_{\text{Sec.~\ref{app:ws-H-bias}}}
+ \underbrace{\tilde R_2(\Delta;H^\star_{\rm ws})}_{\text{Sec.~\ref{app:ws-2nd-order}}}.
\end{equation}
The individual bounds are:
\begin{enumerate}[label=(\roman*)]
\item \textbf{Direction-error empirical process} (Theorem~\ref{app:ws-RempH-thm}):
\[
|R_{\mathrm{emp}}^{\tilde H}| \le C(\mu,r,m)\,\oneNorm{\Gammavec}\Big[\sqrt{\bar d}\,\rho\sqrt{\frac{\log(2/\delta)}{n}}+\bar d\,\frac{\log(2/\delta)}{n}\Big].
\]
\item \textbf{Score-perturbation empirical process} (Corollary~\ref{app:ws-Remp-eta-final}):
\[
|R_{\mathrm{emp}}^{\tilde\eta}| \le C(\mu,r,m)\,\oneNorm{\Gammavec}\big(\infnorm{\Delta}+\infnorm{\Delta}^2\big)\Big[\sqrt{\frac{\bar d}{n}}+\frac{\bar d}{n}\Big]
\]
(up to logarithmic factors).
\item \textbf{Projection leakage} (Eq.~\eqref{app:ws-Rproj-final}):
\[
|R_{\mathrm{proj}}| \le C(\mu,r,m)\,\oneNorm{\Gammavec}\,\rho^2.
\]
\item \textbf{H-direction bias} (Section~\ref{app:ws-H-bias}):
\[
\text{first order} \le C(\mu,r,m)\,\oneNorm{\Gammavec}\big(\sqrt{\bar d}\,\infnorm{\Delta}\cdot\rho + \rho^2\big),
\qquad
\text{second order} \le C(\mu,r,m)\,\infnorm{\Delta}^2\,\oneNorm{\Gammavec}.
\]
\item \textbf{First-order cancellation and second-order remainder} (Sections~\ref{app:ws-1st-cancel}--\ref{app:ws-2nd-order}):
first order \emph{cancels completely} (Eq.~\eqref{app:ws-complete-cancel}); the second-order remainder satisfies
\[
|\tilde R_2| \le C(\mu,r,m)\,\infnorm{\Delta}^2\,\oneNorm{\Gammavec}.
\]
\end{enumerate}

\subsubsection{Combined bound}

\begin{theorem}[Combined remainder bound for the score-whitened estimator]\label{app:ws-combined-thm}
Under Assumptions~\ref{app:pd-score} and~\ref{ass:initial}, with probability at least $1-\delta$,
\begin{equation}\label{app:ws-combined-eq}
\begin{aligned}
|R_n|
\;\le\;
C(\mu,r,m)\,\oneNorm{\Gammavec}
\bigg[
&\sqrt{\bar d}\;\rho\sqrt{\frac{\log(2/\delta)}{n}}
\;+\;
\bar d\,\frac{\log(2/\delta)}{n}
\;+\;\\
&\infnorm{\Delta}\sqrt{\frac{\bar d}{n}}
\;+\;
\infnorm{\Delta}\,\rho
\;+\;
\rho^2 + \infnorm{\Delta}^2
\bigg],
\end{aligned}
\end{equation}
where $\rho = \sigma\lambda_{\min}^{-1}\sqrt{\dstar\bar d\log^c\bar d/n}$.
Under the SNR condition, $\rho\asymp\infnorm{\Delta}\asymp\sqrt{\bar d/n}$ (up to logarithmic factors), so all terms are the same order; none is negligible.
\end{theorem}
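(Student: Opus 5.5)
The proof is bookkeeping on the decomposition \eqref{app:ws-Rn-decomp}. I will show that every remainder piece is controlled by one of the bracketed quantities in \eqref{app:ws-combined-eq}, and then take a union bound over the probability events.

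First, I verify that \eqref{app:ws-Rn-decomp} is exhaustive. Start from $\hat\psi_{\rm ws}-\psi(T^\star)=\langle\Gamma,\Delta\rangle+\mathbb{P}_n[\tilde s(Y,\hat\eta)\langle\hat H_{\rm ws},X\rangle]$. Add and subtract $\mathbb{P}_n\tilde\phi_0^\star$ and $\mathbb{P}^\star[\tilde s(Y,\hat\eta)\langle \hat H_{\rm ws},X\rangle]$. Then split exactly as in \eqref{eq:Remp_split}--\eqref{eq:Rbias_split} with $(s,H^\star,\hat H)$ replaced by $(\tilde s,H^\star_{\rm ws},\hat H_{\rm ws})$. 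This yields $(\mathbb{P}_n-\mathbb{P}^\star)\tilde\phi_0^\star$ plus the five groups:
\begin{itemize}
\item $R^{\tilde H}_{\rm emp}$,
\item $R^{\tilde\eta}_{\rm emp}$,
\item $R_{\rm proj}$,
\item the H-bias,
\item the cancellation term.
\end{itemize}
By \eqref{app:ws-complete-cancel} and \eqref{app:ws-Taylor-popul}, the cancellation term equals exactly $\tilde R_2$. Nothing else survives because $\mathbb{P}^\star\tilde\phi_0^\star=0$ by centering.

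Next, I substitute the individual bounds already established:
\begin{itemize}
\item Theorem~\ref{app:ws-RempH-thm} gives the first two bracket terms.
\item Corollary~\ref{app:ws-Remp-eta-final} gives $\|\Delta\|_\infty\sqrt{\bar d\log(4/\delta)/n}$ plus lower-order pieces. The terms with $\|\Delta\|_\infty^2$ or $\bar d/n$ multiplied by $\|\Delta\|_\infty$ are dominated once $\|\Delta\|_\infty\le 1$. The $\|\Delta\|_\infty\bar d\log/n$ piece is absorbed into $\bar d\log(2/\delta)/n$.
\item \eqref{app:ws-Rproj-final} gives $\rho^2$.
\item The H-bias first-order piece gives $\rho^2$ and a cross term.
\item The H-bias second-order piece \eqref{app:ws-H-bias-2nd} and \eqref{app:ws-2nd-order-Gamma} give $\|\Delta\|_\infty^2$.
\end{itemize}
The log factors are absorbed into $C$ after adjusting $\delta$. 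I will allocate failure probability $\delta/2$ to each of the two Bernstein events. The bias terms are deterministic given the first stage, and they hold on the Assumption~\ref{ass:initial} event.

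The main obstacle is the first-order H-bias cross term $\langle P_{\mathbb{T}}\Delta,(\widehat P_{\mathbb{T}}-P_{\mathbb{T}})\Gamma\rangle$. The crude Cauchy--Schwarz route in Section~\ref{app:ws-H-bias} gives $\sqrt{\bar d}\,\|\Delta\|_\infty\rho$, which carries an extra $\sqrt{\bar d}$ relative to the claimed $\|\Delta\|_\infty\rho$. To remove it, I would use Hölder in the $\ell_\infty$--$\ell_1$ pairing instead: bound $|\langle P_{\mathbb{T}}\Delta,(\widehat P_{\mathbb{T}}-P_{\mathbb{T}})\Gamma\rangle|\le\|P_{\mathbb{T}}\Delta\|_\infty\|(\widehat P_{\mathbb{T}}-P_{\mathbb{T}})\Gamma\|_1$. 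Then:
\begin{itemize}
\item $\|P_{\mathbb{T}}\Delta\|_\infty\le\|\Delta\|_\infty+\|(I-P_{\mathbb{T}})\Delta\|_\infty\lesssim\|\Delta\|_\infty+\rho^2$ by Proposition~\ref{app:pd-PTcompl-infty}.
\item $\|(\widehat P_{\mathbb{T}}-P_{\mathbb{T}})\Gamma\|_1\le C\rho\|\Gamma\|_1$ by the $\ell_1$ line of Remark~\ref{rem:norm-scales-hat} together with Lemma~\ref{app:pd-Gamma-l1}.
\end{itemize}
This produces exactly $\|\Delta\|_\infty\rho+\rho^3$, and $\rho^3\le\rho^2$. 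If the $\ell_1$ perturbation bound turned out to be unavailable, I would fall back to the Frobenius route and note that, under the SNR scaling, $\rho\asymp\|\Delta\|_\infty$ with an extra $\sqrt{\bar d}$ loss. Finally, I collect all terms with a common constant $C(\mu,r,m)\|\Gamma\|_1$. The closing remark follows by inserting $\rho\asymp\|\Delta\|_\infty\asymp\sqrt{\bar d\log^c\bar d/n}$.
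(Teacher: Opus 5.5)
Your bookkeeping is the paper's own. Its proof sums the recap bounds (i)--(v) of Section~\ref{app:ws-combined} over the same five-term decomposition, and each bracket term comes from the results you cite. The one real departure is the tangent-part cross term $\langle P_{\mathbb{T}}\Delta,(\widehat P_{\mathbb{T}}-P_{\mathbb{T}})\Gamma\rangle$ of the first-order H-bias. There your route is the one that actually delivers \eqref{app:ws-combined-eq}.

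The paper bounds this term by Cauchy--Schwarz (Section~\ref{app:ws-H-bias}), and its proof records that term (iv) contributes $\sqrt{\bar d}\,\|\Delta\|_\infty\rho$. Yet the statement displays only $\|\Delta\|_\infty\rho$. The extra factor matters: under the SNR scaling $\sqrt{\bar d}\,\|\Delta\|_\infty\rho\asymp\bar d^{3/2}/n$, while every bracket term is of order $\bar d/n$ up to logarithms. So that route would not give the rate of Theorem~\ref{thm:ws-bound}.

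Your $\ell_\infty$--$\ell_1$ H\"older pairing removes the factor. You need not even split $\Delta$: $|\langle\Delta,(\widehat P_{\mathbb{T}}-P_{\mathbb{T}})\Gamma\rangle|\le\|\Delta\|_\infty\|(\widehat P_{\mathbb{T}}-P_{\mathbb{T}})\Gamma\|_1\le C\rho\,\|\Delta\|_\infty\|\Gamma\|_1$ controls the whole first-order H-bias at once. Drop the fallback, since it would just reproduce the paper's gap.

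The $\ell_1$ line of Remark~\ref{rem:norm-scales-hat} is also easy to verify directly. Telescope $(\widehat P_{\mathbb{T}}-P_{\mathbb{T}})E_\omega$ one mode at a time; each elementary term then has exactly one factor $\pm(\widehat P_j-P_j)e_{i_j}$. By \eqref{eq:init-subspace-row}, $\|(\widehat P_j-P_j)e_{i_j}\|_1\le\sqrt{d_j}\,\|\widehat P_j-P_j\|_{2,\infty}\le C\varepsilon_n\sqrt{\mu r_j}$. Every other factor has $\ell_1$ norm at most $2+\sqrt{2\mu r_j}$ by (estimated) incoherence, including the centered mode-one factor since $\|Q_1e_i\|_1<2$. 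Hence $\max_\omega\|(\widehat P_{\mathbb{T}}-P_{\mathbb{T}})E_\omega\|_1\le C(\mu,r,m)\,\varepsilon_n\asymp\rho$.

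Two smaller points. First, the logarithms cannot be ``absorbed into $C$ after adjusting $\delta$''. $C(\mu,r,m)$ is free of $\delta$, and the term $\|\Delta\|_\infty\sqrt{\bar d/n}$ in \eqref{app:ws-combined-eq} carries no log. Instead use $\|\Delta\|_\infty\sqrt{\bar d\log(8/\delta)/n}\le\|\Delta\|_\infty^2+\bar d\log(8/\delta)/n$ together with $\log(8/\delta)\le 3\log(2/\delta)$ for $\delta\in(0,1)$. After your $\delta/2$ allocation, the $R^{\tilde\eta}_{\mathrm{emp}}$ contribution then lands in bracket terms already present.

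Second, when you invoke \eqref{app:ws-complete-cancel}, the scalar in $H^\star_{\rm ws}$ must be exactly the inverse of the Gram constant $2/(D(d_1-1))$ from Lemma~\ref{app:pd-Frob-reduction}. If $d^\star$ is read literally as $\prod_j d_j$, the cancellation leaves a first-order multiple of $\langle P_{\mathbb{T}}\Gamma,\Delta\rangle$. The appendix's remark that the two conventions agree up to absolute constants does not cover this step.
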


\begin{remark}[Simplified bound for pairwise comparisons]\label{app:ws-simplified}
In the pairwise-comparison setting ($\sigma=O(1)$, $\lambda_{\min}\asymp\sqrt{d^\star}$), substituting $\rho\asymp\sqrt{\bar d\log^c\bar d/n}$, $\infnorm{\Delta}\lesssim\sqrt{\bar d\log^c\bar d/n}$, and $\delta=\bar d^{-c}$ into~\eqref{app:ws-combined-eq} gives
\begin{equation}\label{app:ws-simplified-eq}
\boxed{
|R_n|
\;\le\;
C(\mu,r,m)\,\oneNorm{\Gammavec}\,\frac{\bar d\log^c\bar d}{n}
}
\end{equation}
with probability $\ge 1-\bar d^{-c}$. This is the bound stated in Theorem~\ref{thm:ws-bound}. Crucially, \emph{no $C_A$ appears}: the score-whitening construction eliminates the operator inverse entirely.
\end{remark}

\noindent\textbf{Proof of Theorem~\ref{app:ws-combined-thm}.}
Sum the bounds~(i)--(v):
\begin{itemize}
\item Term~(i) contributes $\sqrt{\bar d}\,\rho\sqrt{\log/n}$ (variance) and $\bar d\log/n$ (sub-exponential tail).
\item Term~(ii) contributes $\infnorm{\Delta}\sqrt{\bar d/n}$ (first-order piece) and $\infnorm{\Delta}^2\sqrt{\bar d/n}$ (second-order, lower order).
\item Term~(iii) contributes $\rho^2$.
\item Term~(iv) contributes $\sqrt{\bar d}\,\infnorm{\Delta}\cdot\rho$ (first order) and $\infnorm{\Delta}^2$ (second order).
\item Term~(v): first order cancels completely; second-order remainder contributes $\infnorm{\Delta}^2$.
\end{itemize}
Combining yields~\eqref{app:ws-combined-eq}.
\hfill$\square$\medskip


\section{Proofs for IPW and Nonlinear Extensions}\label{app:norm-score}

This section provides the detailed proofs for the IPW and nonlinear extensions in Section~\ref{sec:score-whitening}: IPW with known weights (\S\ref{app:ipw-known}), IPW with estimated weights (\S\ref{app:ipw-estimated}), and nonlinear functionals (\S\ref{app:nonlinear}).
Throughout, we work under the comparison design and Assumption~\ref{app:pd-score}.

\subsection{IPW with known weights}\label{app:ipw-known}

We verify that the five-term decomposition of Appendix~\ref{app:score-whitening} carries through when the whitened score $\tilde s$ is replaced by the IPW-whitened score $\tilde s^w(y,\eta;x):=w(x)\tilde s(y,\eta)$, where $w(x)=q(x)/p(x)$ and $q$ is the uniform reference distribution.

\subsubsection{Setup}

The key structural fact, established in the main text~\eqref{eq:Gq-isotropic}, is that the effective Gram operator under the IPW score satisfies
\[
G_q(H) = \E_{X\sim p}\!\big[w(X)\ip{H}{X}X\big]
= \E_{X\sim q}\!\big[\ip{H}{X}X\big]
= \frac{1}{\dstar}H
\]
on the column-sum-zero subspace.
This is identical to $G_0=(1/\dstar)\Id$ from Appendix~\ref{app:score-whitening}.
The oracle and estimated directions are therefore the same as~\eqref{eq:ipw-directions}:
\[
H_{q,0} = \dstar\,P_{\mathbb{T}}\Gammavec, \qquad \hat H_{q,0} = \dstar\,\widehat{P}_{\mathbb{T}}\Gammavec,
\]
which coincide with $H^\star_{\rm ws}$ and $\hat H_{\rm ws}$.

\subsubsection{Remainder decomposition}

The total remainder $R_n^{\rm ipw}$ for the IPW estimator~\eqref{eq:ipw-estimator} decomposes exactly as~\eqref{app:ws-Rn-decomp}:
\begin{equation}\label{app:ipw-Rn-decomp}
R_n^{\rm ipw}
= R_{\mathrm{emp}}^{\tilde H,w}
+ R_{\mathrm{emp}}^{\tilde\eta,w}
+ R_{\mathrm{proj}}
+ \text{H-bias}^w
+ \tilde R_2^w(\Delta;H^\star_{\rm ws}),
\end{equation}
where each ``$w$''-superscripted term is the same as the corresponding term in Appendix~\ref{app:score-whitening} with $\tilde s$ replaced by $w(X)\tilde s$.

\subsubsection{Effect of the weight factor}

Each random variable in the decomposition acquires at most an additional multiplicative factor of $|w(X)|$.
Under the overlap condition~\eqref{eq:overlap_bounds}, $\|w\|_\infty \le C_p/c_p =: C_w$, which is a dimension-free constant.
Specifically:

\begin{enumerate}[label=(\roman*)]
\item \textbf{Direction-error empirical process $R_{\mathrm{emp}}^{\tilde H,w}$}:
The summand $Z_i^w = w(X_i)\tilde s(Y_i,\eta^\star_i)\ip{\hat H_{\rm ws}-H^\star_{\rm ws}}{X_i}$ satisfies
\[
\|Z_i^w\|_{\psione} \le C_w\,\|Z_i\|_{\psione},
\qquad
\mathrm{Var}(Z_i^w) \le C_w^2\,\mathrm{Var}(Z_i).
\]
Therefore Theorem~\ref{app:ws-RempH-thm} applies with $C(\mu,r,m)$ replaced by $C(\mu,r,m,C_w)$.

\item \textbf{Score-perturbation $R_{\mathrm{emp}}^{\tilde\eta,w}$}:
The Taylor expansion of $w(X)\tilde s(Y,\hat\eta)$ around $\eta^\star$ produces the same structure as Section~\ref{app:ws-Remp-eta} with each derivative $\partial_\eta^k\tilde s$ multiplied by $w(X)$.
Since $|w(X)\partial_\eta^k\tilde s|\le C_w|\partial_\eta^k\tilde s|$, Corollary~\ref{app:ws-Remp-eta-final} holds with a constant-factor change.

\item \textbf{Projection leakage $R_{\mathrm{proj}}$}:
This term is independent of the score (Eq.~\eqref{app:ws-Rproj-final}); it remains unchanged.

\item \textbf{H-direction bias}:
The population expectation is now $\E_p[w(X)\tilde s(Y,\hat\eta)\ip{\hat H_{\rm ws}-H^\star_{\rm ws}}{X}]
= \E_q[\tilde s(Y,\hat\eta)\ip{\hat H_{\rm ws}-H^\star_{\rm ws}}{X}]$,
which is the uniform-design expectation.
All bounds from Section~\ref{app:ws-H-bias} apply verbatim (the change of measure from $p$ to $q$ via $w$ is exact, not approximate).

\item \textbf{First-order cancellation and second-order remainder}:
The first-order cancellation uses $\E_p[w(X)\partial_\eta\tilde s(Y,\eta^\star)\mid X]=-w(X)$ and the Gram identity $G_q=G_0=(1/\dstar)\Id$, giving the same complete cancellation~\eqref{app:ws-complete-cancel}.
The second-order remainder acquires at most a factor $C_w$:
$|\tilde R_2^w| \le C_w\,C(\mu,r,m)\,\infnorm{\Delta}^2\,\oneNorm{\Gammavec}$.
\end{enumerate}

\begin{corollary}[Combined IPW remainder bound]\label{app:ipw-combined}
Under Assumptions~\ref{app:pd-score} and~\ref{ass:initial}, and the overlap condition~\eqref{eq:overlap_bounds} with known weights, with probability at least $1-\delta$:
\begin{equation}\label{app:ipw-combined-eq}
\begin{aligned}
|R_n^{\rm ipw}|
\;\le\;
C(\mu,r,m,C_w)\,\oneNorm{\Gammavec}
\bigg[
&\sqrt{\bar d}\;\rho\sqrt{\frac{\log(2/\delta)}{n}}
\;+\;
\bar d\,\frac{\log(2/\delta)}{n}
\;+\;\\
&\infnorm{\Delta}\sqrt{\frac{\bar d}{n}}
\;+\;
\infnorm{\Delta}\,\rho
\;+\;
\rho^2 + \infnorm{\Delta}^2
\bigg].
\end{aligned}
\end{equation}
This is identical to Theorem~\ref{app:ws-combined-thm} with $C(\mu,r,m)$ replaced by $C(\mu,r,m,C_w)$, where $C_w=C_p/c_p$ depends only on the overlap constants.
\end{corollary}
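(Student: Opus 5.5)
The proof follows the five-term decomposition \eqref{app:ipw-Rn-decomp}. I would rerun each bound of Appendix~\ref{app:score-whitening} with $\tilde s$ replaced by $w(X)\tilde s$ and $X\sim p$ instead of the uniform law $q$. The point to check each time is that the weight enters in exactly one of two ways.

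\textbf{Population terms.} For the H-direction bias, the first-order cancellation and $\tilde R_2^w$, the weight only performs the change of measure $\E_p[w(X)f(X,Y)]=\E_q[f(X,Y)]$. This identity is exact because the conditional law of $Y$ given $X$ does not depend on the sampling law. Hence these terms coincide with their uniform-design counterparts, and the arguments of Sections~\ref{app:ws-H-bias}--\ref{app:ws-2nd-order} apply verbatim. This uses Lemma~\ref{app:pd-l1-average}, Lemma~\ref{app:pd-PT-l1} and the isotropic Gram identity $G_q=(1/\dstar)\Id$ from \eqref{eq:Gq-isotropic}.
\begin{itemize}
\item The first-order H-bias reduces to $-\ip{\Delta}{(\widehat P_{\mathbb T}-P_{\mathbb T})\Gammavec}$. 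It is bounded by $\sqrt{\bar d}\,\infnorm{\Delta}\rho+\rho^2$ up to $C(\mu,r,m)\oneNorm{\Gammavec}$, using \eqref{app:ws-PT-diff-Eomega-F} and Proposition~\ref{app:pd-PTcompl-infty}.
\item The cancellation \eqref{app:ws-complete-cancel} holds exactly, since $\E[\partial_\eta\tilde s\mid X]=-1$ gives $\E_p[w\,\partial_\eta\tilde s\,\ip{\Delta}{X}\ip{H}{X}]=-\ip{\Delta}{G_q H}$.
\item The second-order pieces give $\infnorm{\Delta}^2\oneNorm{\Gammavec}$.
\end{itemize}
$R_{\rm proj}$ involves no score and is taken directly from \eqref{app:ws-Rproj-final}. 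No $C_w$ factor is even needed for these terms, but I will state them with $C(\mu,r,m,C_w)$ for uniformity.

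\textbf{Empirical terms.} For $R^{\tilde H,w}_{\rm emp}$ and $R^{\tilde\eta,w}_{\rm emp}$, I condition on the first-stage fold, so that $\Delta$, $\hat H_{\rm ws}-H^\star_{\rm ws}$ and $H^\star_{\rm ws}$ are fixed, and apply Lemma~\ref{app:pd-bernstein-template}. Two ingredients are needed.
\begin{itemize}
\item \emph{Sup/$\psi_1$ parameter.} The pointwise bounds $|\ip{V}{X}|\le2\infnorm{V}$ and $|\partial^k_\eta\tilde s|\le C$ do not depend on the design. So the tail parameters \eqref{app:ws-psi1-bound} and those in Corollary~\ref{app:ws-Remp-eta-final} are multiplied by at most $\|w\|_\infty\le C_p/c_p=C_w$, which follows from \eqref{eq:overlap_bounds}.
\item \emph{Variance under $p$.} This is the step I expect to need care, since the Frobenius reduction (Lemma~\ref{app:pd-Frob-reduction}) was proved for the uniform design. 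I would use
\[
\E_p\big[w(X)^2 f(X)\big]=\E_q\big[w(X)f(X)\big]\le C_w\,\E_q[f(X)]
\]
for $f\ge0$. Combined with the bound $|\tilde s|\le C_{\tilde s}$ from \eqref{app:ws-stilde-bound}, this gives $\E_p[(Z_i^w)^2]\le C_wC_{\tilde s}^2\Fnorm{V}^2/\dstar$ for the relevant direction $V$.
\end{itemize}
With these two ingredients, the variance bounds \eqref{app:ws-var-final} and the first- and second-order variance bounds of the score-perturbation term hold with an extra factor $C_w$. Consequently Theorem~\ref{app:ws-RempH-thm} and Corollary~\ref{app:ws-Remp-eta-final} hold with $C(\mu,r,m)$ replaced by $C(\mu,r,m,C_w)$.

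\textbf{Combining.} Summing (i)--(v) exactly as in the proof of Theorem~\ref{app:ws-combined-thm} reproduces every bracketed term of \eqref{app:ipw-combined-eq}:
\begin{itemize}
\item $\sqrt{\bar d}\rho\sqrt{\log/n}$ and $\bar d\log/n$ from the direction-error process;
\item $\infnorm{\Delta}\sqrt{\bar d/n}$ from the score perturbation;
\item $\infnorm{\Delta}\rho$ and $\rho^2$ from the H-bias and the leakage;
\item $\infnorm{\Delta}^2$ from the second-order terms.
\end{itemize}
A union bound over the Bernstein events absorbs the constants in $\log(2/\delta)$ versus $\log(4/\delta)$. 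The resulting constant depends only on $(\mu,r,m)$ and $C_w=C_p/c_p$, which is the stated bound.
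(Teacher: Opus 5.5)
Your overall route is the paper's: the decomposition \eqref{app:ipw-Rn-decomp}, the exact change of measure $\E_p[wf]=\E_q[f]$ for the population terms, and a $C_w$ inflation of the Bernstein parameters for the two empirical processes. Your variance step $\E_p[w^2f]=\E_q[wf]\le C_w\E_q[f]$ is actually cleaner than the paper's $\mathrm{Var}(Z_i^w)\le C_w^2\mathrm{Var}(Z_i)$, which silently compares moments under two different designs. You are also right that the population terms need no $C_w$ at all.

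The combining step, however, does not close as written. You bound the first-order H-bias $-\langle\Delta,(\widehat P_{\mathbb T}-P_{\mathbb T})\Gammavec\rangle$ by $C\oneNorm{\Gammavec}(\sqrt{\bar d}\,\infnorm{\Delta}\rho+\rho^2)$, using the Cauchy--Schwarz step of Section~\ref{app:ws-H-bias}, but in the tally you book it as $\infnorm{\Delta}\rho$.

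For that argument the $\sqrt{\bar d}$ cannot be avoided. $\Fnorm{P_{\mathbb T}\Delta}$ can be of order $\sqrt{\dstar}\,\infnorm{\Delta}$, while $\Fnorm{(\widehat P_{\mathbb T}-P_{\mathbb T})E_\omega}\asymp\rho\sqrt{\bar d/\dstar}$. The resulting term is not controlled by the bracket: under the pairwise scaling it is of order $\bar d^{3/2}\log^c\bar d/n$, against $\bar d\log^c\bar d/n$. The paper has the same mismatch. The proof of Theorem~\ref{app:ws-combined-thm} lists $\sqrt{\bar d}\,\infnorm{\Delta}\rho$ from term (iv), while its statement, and therefore this corollary, keeps only $\infnorm{\Delta}\rho$. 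So you cannot simply cite it.

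The repair is to skip the tangent/normal split and use H\"older with the $\ell_1$ norm:
\[
|\langle\Delta,(\widehat P_{\mathbb T}-P_{\mathbb T})\Gammavec\rangle|\le\infnorm{\Delta}\,\oneNorm{\Gammavec}\max_\omega\oneNorm{(\widehat P_{\mathbb T}-P_{\mathbb T})E_\omega}\le C(\mu,r,m)\,\oneNorm{\Gammavec}\,\infnorm{\Delta}\,\rho.
\]
This uses the $\ell_1$ form of Lemma~\ref{app:pd-Gamma-l1} and the $\ell_1$ perturbation bound of Remark~\ref{rem:norm-scales-hat}. That perturbation bound follows from \eqref{eq:init-subspace-row} by telescoping the block projectors. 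The perturbed mode contributes $\oneNorm{(P_{\hat U_j}-P_{U_j})e_i}\le\sqrt{d_j}\,\twoninfnorm{P_{\hat U_j}-P_{U_j}}\le C_1\varepsilon_n\sqrt{\mu r_j}$, and the remaining modes contribute the dimension-free factors of Lemma~\ref{app:pd-PT-l1}. With this replacement your tally goes through.

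One smaller bookkeeping point: the union bound only handles $\log 2$ versus $\log 4$. The factor $\sqrt{\log(4/\delta)}$ multiplying $\infnorm{\Delta}\sqrt{\bar d/n}$ must still be absorbed into $\sqrt{\bar d}\,\rho\sqrt{\log(2/\delta)/n}$, via $\infnorm{\Delta}\le C_1\varepsilon_n\asymp\rho$.
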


\noindent\textbf{Proof.}
Sum the five individual bounds (i)--(v) above.
Each acquires at most $C_w^2$ (for variance) or $C_w$ (for tail/bias) compared to the corresponding bound in Theorem~\ref{app:ws-combined-thm}.
All structural cancellations (isotropic Gram, complete first-order cancellation) are preserved.
\hfill$\square$\medskip

\subsection{IPW with estimated weights}\label{app:ipw-estimated}

Suppose the estimated sampling distribution $\hat p$ satisfies the entrywise relative error guarantee
\begin{equation}\label{eq:entrywise-relative-error-app}
\delta_w \;:=\; \|\hat w - w\|_\infty \;\le\; C_w\,\epsilon_p, \qquad \epsilon_p \;\ll\; 1,
\end{equation}
where $\hat w(x) := q(x)/\hat p(x)$ and $C_w := C_p/c_p$ is the overlap ratio.

\subsubsection{Decomposition}

The feasible IPW remainder decomposes as
\begin{equation}\label{app:ipw-est-decomp}
R_n^{\rm ipw,\hat w}
\;=\; \underbrace{R_n^{\rm ipw}}_{\text{known-weight}}
\;+\; \underbrace{R_n^{\Delta w}}_{\text{weight-estimation error}},
\end{equation}
where $R_n^{\rm ipw}$ is bounded in Corollary~\ref{app:ipw-combined}. The weight-estimation remainder further splits as
\begin{equation}\label{app:ipw-Dw-split}
R_n^{\Delta w}
\;=\;
\underbrace{B^{\Delta w}}_{\text{population bias}}
\;+\;
\underbrace{R_{\mathrm{emp}}^{\Delta w}}_{\text{empirical process}}.
\end{equation}

\subsubsection{Population bias}

The leading-order weight error vanishes by score centering:
\begin{equation}\label{app:ipw-leading-vanish}
\Pstar\big[(\hat w(X)-w(X))\,\tilde s(Y,\eta^\star)\,\ip{H^\star_{q,0}}{X}\big]
\;=\; 0.
\end{equation}
What remains is driven by the score perturbation $\tilde s(Y,\hat\eta)-\tilde s(Y,\eta^\star)$ and the direction error $\hat H_{q,0}-H^\star_{q,0}$, both of which are second-order.

\noindent\emph{First component.}
By Taylor expansion and the complete first-order cancellation~\eqref{app:ws-complete-cancel}:
\[
\big|\Pstar\big[\tilde s(Y,\hat\eta)\,\ip{H^\star_{q,0}}{X}\big]\big|
\;\le\;
C(\mu,r,m)\,\infnorm{\Delta}^2\,\oneNorm{\Gammavec}.
\]

\noindent\emph{Second component.}
By the $\ell_1$-average bound (Lemma~\ref{app:pd-l1-average}) and the direction-error estimate:
\[
\big|\Pstar\big[\tilde s(Y,\hat\eta)\,\ip{\hat H_{q,0}-H^\star_{q,0}}{X}\big]\big|
\;\le\;
C_{\tilde s}\,\frac{\oneNorm{\hat H_{q,0}-H^\star_{q,0}}}{\dstar}
\;\le\;
C(\mu,r,m)\,\oneNorm{\Gammavec}\,(\infnorm{\Delta}\,\rho + \rho^2).
\]

\noindent\emph{Combined bias bound.}
\begin{equation}\label{app:ipw-bias-final}
|B^{\Delta w}|
\;\le\;
C(\mu,r,m,C_w)\,\delta_w\,\oneNorm{\Gammavec}\,
\big[\infnorm{\Delta}^2 + \infnorm{\Delta}\,\rho + \rho^2\big].
\end{equation}

\subsubsection{Empirical process}

The empirical process term is
\[
R_{\mathrm{emp}}^{\Delta w}
\;=\;
(\Pn-\Pstar)\big[(\hat w(X)-w(X))\,\tilde s(Y,\hat\eta)\,\ip{\hat H_{q,0}}{X}\big].
\]
Factoring out $\delta_w$ and using the Frobenius reduction:

\noindent\emph{Variance bound.}
\begin{equation}\label{app:ipw-emp-var-final}
\mathrm{Var}\big((\hat w-w)\tilde s\,\ip{\hat H_{q,0}}{X}\big)
\;\le\;
C(\mu,r,m,C_w)\,\delta_w^2\,\bar d\,\oneNorm{\Gammavec}^2.
\end{equation}

\noindent\emph{Sub-exponential norm.}
\begin{equation}\label{app:ipw-emp-psi1}
\big\|(\hat w-w)\tilde s\,\ip{\hat H_{q,0}}{X}\big\|_{\psione}
\;\le\;
C(\mu,r,m,C_w)\,\delta_w\,\bar d\,\oneNorm{\Gammavec}.
\end{equation}

\noindent\emph{Bernstein bound.}
By Lemma~\ref{app:pd-bernstein-template}, with probability at least $1-\delta$:
\begin{equation}\label{app:ipw-emp-bound}
|R_{\mathrm{emp}}^{\Delta w}|
\;\le\;
C(\mu,r,m,C_w)\,\delta_w\,\oneNorm{\Gammavec}
\left[\sqrt{\frac{\bar d\log(2/\delta)}{n}}
\;+\;
\bar d\,\frac{\log(2/\delta)}{n}\right].
\end{equation}
Since $\delta_w \le C_w\epsilon_p = O(\sqrt{d_{\mathrm{par}}/n})$ with $d_{\mathrm{par}}\ll\dstar$, this is lower order than the known-weight terms.

\subsubsection{Combined bound}

\begin{theorem}[Feasible IPW remainder bound]\label{app:ipw-est-combined}
Under Assumption~\ref{app:pd-score}, the overlap condition, and the weight estimation guarantee~\eqref{eq:entrywise-relative-error-app} with $\epsilon_p\le 1/2$, with probability at least $1-\delta$:
\begin{equation}\label{app:ipw-est-combined-eq}
\begin{aligned}
|R_n^{\rm ipw,\hat w}|
\;\le\;
C(\mu,r,m,C_w)\,\oneNorm{\Gammavec}
&\bigg[
\sqrt{\bar d}\;\rho\sqrt{\frac{\log(2/\delta)}{n}}
\;+\;
\bar d\,\frac{\log(2/\delta)}{n}
\;+\;\\
&\infnorm{\Delta}\sqrt{\frac{\bar d}{n}}
\;+\;
\infnorm{\Delta}\,\rho
\;+\;
\rho^2 + \infnorm{\Delta}^2
+ \epsilon_p\infnorm{\Delta}
\bigg].
\end{aligned}
\end{equation}
The weight-estimation term $\epsilon_p\infnorm{\Delta}$ is comparable to $\infnorm{\Delta}^2$ when $d_{\mathrm{par}}\ll\dstar$.
\end{theorem}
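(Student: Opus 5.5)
The bound follows from the decomposition~\eqref{app:ipw-est-decomp}: $R_n^{\rm ipw,\hat w}=R_n^{\rm ipw}+R_n^{\Delta w}$. The known-weight part $R_n^{\rm ipw}$ is already controlled by Corollary~\ref{app:ipw-combined}, which supplies exactly the first six bracketed terms of~\eqref{app:ipw-est-combined-eq}. It therefore remains to bound $R_n^{\Delta w}=B^{\Delta w}+R_{\mathrm{emp}}^{\Delta w}$, and to show that each piece is dominated either by terms already present or by the new term $\epsilon_p\infnorm{\Delta}$.

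\textbf{Population bias.} Write
\[
B^{\Delta w}=\Pstar\bigl[(\hat w-w)\,\tilde s(Y,\hat\eta)\,\ip{\hat H_{q,0}}{X}\bigr]
\]
and split $\hat H_{q,0}=H^\star_{q,0}+(\hat H_{q,0}-H^\star_{q,0})$ and $\tilde s(Y,\hat\eta)=\tilde s(Y,\eta^\star)+[\tilde s(Y,\hat\eta)-\tilde s(Y,\eta^\star)]$.

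\emph{Term with $\tilde s(Y,\eta^\star)$ and $H^\star_{q,0}$.} This vanishes by conditional centering, as in~\eqref{app:ipw-leading-vanish}, because $\hat w-w$ is a function of $X$ only and is fixed by cross-fitting.

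\emph{Score-perturbation term with $H^\star_{q,0}$.} Here I would not try to recover an exact Gram cancellation, since the error weight $\hat w-w$ destroys isotropy. Instead I would bound it directly. The mean-value theorem with $|\partial_\eta\tilde s|\le C_{\dot{\tilde s}}$ gives $|\tilde s(Y,\hat\eta)-\tilde s(Y,\eta^\star)|\le 2C_{\dot{\tilde s}}\infnorm{\Delta}$. Together with $|\hat w-w|\le\delta_w$ and Lemma~\ref{app:pd-l1-average}, this yields a bound of $C\,\delta_w\infnorm{\Delta}\,\oneNorm{H^\star_{q,0}}/\dstar$. By Lemma~\ref{app:pd-PT-l1}, $\oneNorm{H^\star_{q,0}}/\dstar=\oneNorm{P_{\mathbb{T}}\Gammavec}\le C(\mu,r,m)\oneNorm{\Gammavec}$, so the term is at most $C\,\epsilon_p\infnorm{\Delta}\oneNorm{\Gammavec}$. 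This is the source of the new term in the bound.

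\emph{Terms with $\hat H_{q,0}-H^\star_{q,0}$.} The same $\ell_1$-average argument applies, now using $\oneNorm{(\widehat P_{\mathbb{T}}-P_{\mathbb{T}})\Gammavec}\le C\rho\oneNorm{\Gammavec}$ from Remark~\ref{rem:norm-scales-hat} and $|\tilde s|\le C_{\tilde s}$. This gives $O(\delta_w\rho\oneNorm{\Gammavec})$. The $\tilde s(Y,\eta^\star)$ part here is also centered, so only the perturbation survives, at order $\delta_w\rho\infnorm{\Delta}$. Since $\rho\asymp\varepsilon_n$ is comparable to $\infnorm{\Delta}$, in either case the contribution is absorbed into $\epsilon_p\infnorm{\Delta}+\rho^2$.

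\textbf{Empirical process.} For $R_{\mathrm{emp}}^{\Delta w}$, conditional on the first-stage data the summands are bounded, so I would apply Bernstein (Lemma~\ref{app:pd-bernstein-template}). The variance is at most
\[
\delta_w^2 C_{\tilde s}^2\,\Fnorm{\hat H_{q,0}}^2/\dstar\le C\,\delta_w^2\,\bar d\,\oneNorm{\Gammavec}^2,
\]
using Lemma~\ref{app:pd-Frob-reduction} and Lemma~\ref{lem:norm-scales} with the bounds for $\widehat P_{\mathbb{T}}$. The sup-norm is at most $2\delta_w C_{\tilde s}\infnorm{\hat H_{q,0}}\le C\delta_w\bar d\oneNorm{\Gammavec}$. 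Bernstein then gives~\eqref{app:ipw-emp-bound}. Finally, $\delta_w\le C_w\epsilon_p$ and $\epsilon_p\le1/2$ make $\delta_w\sqrt{\bar d\log/n}$ bounded by the $\infnorm{\Delta}$-type or $\sqrt{\bar d}\rho\sqrt{\log/n}$ terms already present. I would absorb it using the comparability $\epsilon_p\lesssim\rho$, which is implicit in the theorem's remark that $d_{\rm par}\ll\dstar$, and the $\delta_w\bar d\log/n$ piece is dominated by $\bar d\log/n$.

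\textbf{Main obstacle.} The delicate step is the score-perturbation bias: without isotropy, the first-order piece cannot be cancelled. The plan therefore relies on the product structure $\delta_w\cdot\infnorm{\Delta}$, together with the dimension-free $\ell_1$ bound on $P_{\mathbb{T}}\Gammavec$, to make it second order. The $d^\star$ in the direction must cancel against the $1/d^\star$ from the $\ell_1$ averaging. Collecting all terms then gives~\eqref{app:ipw-est-combined-eq}.
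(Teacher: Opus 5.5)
Your overall architecture matches the paper's. You split off the known-weight remainder via \eqref{app:ipw-est-decomp} and invoke Corollary~\ref{app:ipw-combined}. You then control $R_{\mathrm{emp}}^{\Delta w}$ by Bernstein, with variance $\lesssim\delta_w^2\bar d\oneNorm{\Gammavec}^2$ and sup-norm $\lesssim\delta_w\bar d\oneNorm{\Gammavec}$, exactly as in \eqref{app:ipw-emp-var-final}--\eqref{app:ipw-emp-bound}.

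You genuinely depart from the paper in the population bias. The paper appeals to the complete first-order cancellation \eqref{app:ws-complete-cancel} and records $|B^{\Delta w}|\lesssim\delta_w\oneNorm{\Gammavec}[\infnorm{\Delta}^2+\infnorm{\Delta}\rho+\rho^2]$. You instead observe that, with the error weight $\hat w-w$, the first-order term $-\E_p[(\hat w-w)\ip{\Delta}{X}\ip{H^\star_{q,0}}{X}]$ has nothing to cancel against. The plug-in term $\ip{P_{\mathbb{T}}\Gammavec}{\Delta}$ has already been used inside $R_n^{\rm ipw}$. You therefore bound it directly by $\delta_w\infnorm{\Delta}\oneNorm{P_{\mathbb{T}}\Gammavec}$, using the $\ell_1$-average lemma and the dimension-free $\ell_1$ lemma. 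Your route is the one that actually justifies the $\epsilon_p\infnorm{\Delta}$ term in the statement, consistent with the main text's remark that the surviving contribution is of order $\epsilon_p\infnorm{\Delta}$. The paper's intermediate bound is sharper, but it rests on a cancellation that is not available once the weight is $\hat w-w$ rather than $w$.

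Three smaller points.

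\emph{You do not need to split $\hat H_{q,0}$.} Centering kills $\tilde s(Y,\eta^\star)$ against any $X$-measurable multiplier, so $B^{\Delta w}=\Pstar[(\hat w-w)\{\tilde s(Y,\hat\eta)-\tilde s(Y,\eta^\star)\}\ip{\hat H_{q,0}}{X}]$. Lemma~\ref{app:pd-PT-l1} for $\widehat P_{\mathbb{T}}$ then gives $|B^{\Delta w}|\le C\delta_w\infnorm{\Delta}\oneNorm{\Gammavec}$ at once.

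\emph{Drop the ``crude'' alternative $O(\delta_w\rho\oneNorm{\Gammavec})$.} Since $\infnorm{\Delta}\le C_1\varepsilon_n$ is only an upper bound, $\epsilon_p\rho$ cannot be dominated by $\epsilon_p\infnorm{\Delta}$.

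\emph{Transfer of the uniform-design lemmas.} Lemmas~\ref{app:pd-l1-average} and~\ref{app:pd-Frob-reduction} are stated for the uniform design, so they must be transferred to $\E_p$. This is cleanest through $|\hat w-w|=w\,|p/\hat p-1|\le 2w\epsilon_p$, which is where $\epsilon_p\le 1/2$ is used, together with $\E_p[wf]=\E_q[f]$.

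Finally, absorbing the empirical-process term $\delta_w\sqrt{\bar d\log(2/\delta)/n}$ into the bracket requires a rate such as $\epsilon_p\lesssim\rho$. That rate is not among the stated hypotheses. The paper makes the same move through $\epsilon_p=O(\sqrt{d_{\mathrm{par}}/n})$, so your argument inherits exactly the paper's level of rigor here. Both prove the displayed bound only under that rate. With merely $\epsilon_p\le 1/2$, an extra term $\epsilon_p\sqrt{\bar d\log(2/\delta)/n}$ would have to appear in the bracket.
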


\noindent\textbf{Proof.}
By~\eqref{app:ipw-est-decomp}, $R_n^{\rm ipw,\hat w} = R_n^{\rm ipw} + R_n^{\Delta w}$.
The first term is bounded by Corollary~\ref{app:ipw-combined}.
For the second: the population bias~\eqref{app:ipw-bias-final} contributes $O(\epsilon_p)$ times the existing second-order terms; the empirical process~\eqref{app:ipw-emp-bound} contributes $\delta_w$ times the leading empirical terms.
Both carry an extra factor $\epsilon_p = O(\sqrt{d_{\mathrm{par}}/n})$ and are lower order.
\hfill$\square$\medskip

\subsection{Nonlinear functionals}\label{app:nonlinear}

We prove Theorem~\ref{thm:nonlinear} by reducing the nonlinear problem to the linear score-whitened theorem (Theorem~\ref{app:ws-combined-thm}) plus two explicit second-order corrections.

\subsubsection{Setup and exact decomposition}

Let $\Gammavec_\psi := \nabla\psi(T^\star)$, $\hat\Gammavec_\psi := \nabla\psi(\hat T)$, and $\Delta := \hat T - T^\star$.
Decompose the plug-in direction as
\begin{equation}\label{app:nl-H-decomp}
\hat H_\psi = \underbrace{\dstar\widehat{P}_{\mathbb{T}}\Gammavec_\psi}_{=:\hat H_\psi^{(a)}} + \underbrace{\dstar\widehat{P}_{\mathbb{T}}(\hat\Gammavec_\psi-\Gammavec_\psi)}_{=:\hat H_\psi^{(b)}},
\end{equation}
and define the \emph{linearized estimator with the true gradient}:
\[
\hat\psi_{\mathrm{lin},\psi}
:= \ip{\Gammavec_\psi}{\hat T} + \Pn\big[\tilde s(Y,\hat\eta)\ip{\hat H_\psi^{(a)}}{X}\big].
\]
Adding and subtracting $\hat\psi_{\mathrm{lin},\psi}$ gives the exact identity
\begin{equation}\label{app:nl-decomp}
\boxed{
\hat\psi_{\rm nl} - \psi(T^\star)
\;=\;
\underbrace{\bigl(\hat\psi_{\mathrm{lin},\psi} - \ip{\Gammavec_\psi}{T^\star}\bigr)}_{\text{linearized estimator}}
\;+\;
\underbrace{R_{\mathrm{plug},\psi}}_{\text{plug-in Taylor remainder}}
\;+\;
\underbrace{R_{\nabla,\psi}}_{\text{gradient perturbation}},
}
\end{equation}
where
\[
R_{\mathrm{plug},\psi} := \psi(\hat T) - \psi(T^\star) - \ip{\Gammavec_\psi}{\Delta},
\qquad
R_{\nabla,\psi} := \Pn\big[\tilde s(Y,\hat\eta)\ip{\hat H_\psi^{(b)}}{X}\big].
\]

\subsubsection{Term 1: Linearized estimator (leading term)}

The first term in~\eqref{app:nl-decomp} is exactly the linear score-whitened estimator with the deterministic gradient $\Gammavec_\psi$ replacing $\Gammavec$. By Theorem~\ref{app:ws-combined-thm}:
\begin{equation}\label{app:nl-part-a}
\hat\psi_{\mathrm{lin},\psi} - \ip{\Gammavec_\psi}{T^\star}
\;=\;
(\Pn - \Pstar)\tilde\phi_\psi^\star + R_n^{\mathrm{lin}}(\Gammavec_\psi),
\end{equation}
where $\tilde\phi_\psi^\star(X,Y) := \tilde s(Y,\eta^\star)\ip{\dstar P_{\mathbb{T}}\Gammavec_\psi}{X}$ and
\begin{equation}\label{app:nl-lin-bound}
|R_n^{\mathrm{lin}}(\Gammavec_\psi)|
\;\le\;
C(\mu,r,m)\,\oneNorm{\Gammavec_\psi}\,B_n,
\end{equation}
with
$B_n := \sqrt{\bar d\,\rho\,\log(2/\delta)/n} + \bar d\,\log(2/\delta)/n + \infnorm{\Delta}\sqrt{\bar d/n} + \infnorm{\Delta}\,\rho + \rho^2 + \infnorm{\Delta}^2$.
Since $\Gammavec_\psi$ is supported on $S_\psi$ with $|S_\psi| = s_\psi = O(1)$, we have $\oneNorm{\Gammavec_\psi}\le\sqrt{s_\psi}\,\|\Gammavec_\psi\|_F = O(1)$, so the linearized remainder is of the same order as in the linear problem.

\subsubsection{Term 2: Plug-in Taylor remainder}

Under Assumption~\ref{ass:nonlinear}, $\psi(T) = g(T_{S_\psi})$ for a twice differentiable $g:\R^{s_\psi}\to\R$.
Write $z^\star := T^\star_{S_\psi}$ and $\hat z := \hat T_{S_\psi} = z^\star + \Delta_{S_\psi}$.
By the second-order Taylor formula in $\R^{s_\psi}$, there exists $\tilde z$ on the segment joining $z^\star$ and $\hat z$ such that
\[
R_{\mathrm{plug},\psi}
\;=\;
g(\hat z) - g(z^\star) - \nabla g(z^\star)^\top(\hat z - z^\star)
\;=\;
\tfrac12\,\Delta_{S_\psi}^\top\,\nabla^2 g(\tilde z)\,\Delta_{S_\psi}.
\]
Using $\|\nabla^2 g(\tilde z)\|_{\rm op}\le L_g$ and $\|\Delta_{S_\psi}\|_2^2\le s_\psi\infnorm{\Delta}^2$:
\begin{equation}\label{app:nl-plugin-bound}
\boxed{
|R_{\mathrm{plug},\psi}|
\;\le\;
\tfrac12 L_g\,s_\psi\,\infnorm{\Delta}^2
\;\lesssim\;
\infnorm{\Delta}^2.
}
\end{equation}
This is genuinely second-order because $\psi$ depends on only $s_\psi = O(1)$ entries: the finite-support condition ensures that the Taylor remainder involves only a bounded-dimensional Hessian applied to a vector of length $\infnorm{\Delta}$.

\subsubsection{Term 3: Gradient perturbation}

The gradient perturbation direction is $\hat H_\psi^{(b)} = \dstar\widehat{P}_{\mathbb{T}}(\hat\Gammavec_\psi - \Gammavec_\psi)$.
Since $\hat\Gammavec_\psi - \Gammavec_\psi$ is supported on $S_\psi$ and the mean-value theorem gives $\infnorm{\hat\Gammavec_\psi - \Gammavec_\psi}\le L_g\infnorm{\Delta}$:
\begin{equation}\label{app:nl-l1-perturb}
\oneNorm{\hat\Gammavec_\psi - \Gammavec_\psi}
\;\le\;
s_\psi\,L_g\,\infnorm{\Delta}
\;\lesssim\;
\infnorm{\Delta}.
\end{equation}
The term $R_{\nabla,\psi}$ has the same structure as the linear score-whitened one-step correction with $\hat\Gammavec_\psi - \Gammavec_\psi$ replacing $\Gammavec$.
Applying the linear theorem:
\begin{equation}\label{app:nl-part-b}
|R_{\nabla,\psi}|
\;\le\;
C(\mu,r,m,L_g,s_\psi)\,\infnorm{\Delta}\,B_n.
\end{equation}
This is $O(\infnorm{\Delta})$ times the linearized remainder~\eqref{app:nl-lin-bound} and hence strictly lower order.

\subsubsection{Combined bound}

\begin{theorem}[Remainder bound for nonlinear one-step estimator]\label{app:nl-combined-thm}
Under Assumptions~\ref{app:pd-score}, \ref{ass:initial}, and~\ref{ass:nonlinear}, with probability at least $1-\delta$:
\begin{equation}\label{app:nl-combined-eq}
\boxed{
|R_n^{\rm nl}|
\;\le\;
C(\mu,r,m,L_g,s_\psi)
\Big[
\oneNorm{\Gammavec_\psi}\,B_n
\;+\;
\infnorm{\Delta}\,B_n
\;+\;
\infnorm{\Delta}^2
\Big],
}
\end{equation}
where $B_n$ is as in~\eqref{app:nl-lin-bound}.
\end{theorem}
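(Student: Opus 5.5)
The proof combines the exact three-term identity \eqref{app:nl-decomp} with the bounds already established for each piece. Write $R_n^{\rm nl} := \hat\psi_{\rm nl}-\psi(T^\star)-(\Pn-\Pstar)\tilde\phi_\psi^\star$. Subtracting the leading term $(\Pn-\Pstar)\tilde\phi_\psi^\star$ from both sides of \eqref{app:nl-decomp} and using \eqref{app:nl-part-a} gives
\[
R_n^{\rm nl} = R_n^{\mathrm{lin}}(\Gammavec_\psi) + R_{\mathrm{plug},\psi} + R_{\nabla,\psi}.
\]
The remainder is therefore controlled by the triangle inequality applied to these three terms.

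\textbf{Step 1 (linearized term).} I will invoke Theorem~\ref{app:ws-combined-thm} with the deterministic gradient $\Gammavec_\psi$. This gives \eqref{app:nl-lin-bound}, which contributes the first summand $\oneNorm{\Gammavec_\psi}B_n$.

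\textbf{Step 2 (plug-in remainder).} The second-order Taylor bound \eqref{app:nl-plugin-bound} gives $\tfrac12 L_g s_\psi\infnorm{\Delta}^2$. This produces the $\infnorm{\Delta}^2$ summand, with $L_g$ and $s_\psi$ absorbed into the constant.

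\textbf{Step 3 (gradient perturbation).} This is the main obstacle. The gradient $\hat\Gammavec_\psi-\Gammavec_\psi$ depends on $\hat T$, so it is random. However, it is measurable with respect to the out-of-fold data, just like $\widehat P_{\mathbb{T}}$. I will therefore condition on the first-stage sample. This makes $\hat\Gammavec_\psi-\Gammavec_\psi$ a fixed tensor supported on $S_\psi$, with $\ell_1$ norm at most $s_\psi L_g\infnorm{\Delta}$ by \eqref{app:nl-l1-perturb}.

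Under this conditioning, I will split $R_{\nabla,\psi}$ as
\[
R_{\nabla,\psi} = (\Pn-\Pstar)\bigl[\tilde s(Y,\hat\eta)\ip{\hat H_\psi^{(b)}}{X}\bigr] + \Pstar\bigl[\tilde s(Y,\hat\eta)\ip{\hat H_\psi^{(b)}}{X}\bigr].
\]
For the empirical part, I will reuse Steps~1--5 of Section~\ref{app:ws-RempH}, with the difference $\dstar(\widehat P_{\mathbb{T}}-P_{\mathbb{T}})\Gammavec$ replaced by $\dstar\widehat P_{\mathbb{T}}(\hat\Gammavec_\psi-\Gammavec_\psi)$. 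Lemma~\ref{lem:norm-scales} applied to $\widehat P_{\mathbb{T}}$ bounds the variance by $\bar d\,\oneNorm{\hat\Gammavec_\psi-\Gammavec_\psi}^2$ and the $\psi_1$ norm by $\bar d\,\oneNorm{\hat\Gammavec_\psi-\Gammavec_\psi}$. Bernstein's inequality then gives $\infnorm{\Delta}\bigl[\sqrt{\bar d\log(2/\delta)/n}+\bar d\log(2/\delta)/n\bigr]$.

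For the population part, I will Taylor expand $\tilde s(Y,\hat\eta)$ around $\eta^\star$. The zeroth-order term vanishes by centering. The first-order term equals $-\ip{\Delta}{\widehat P_{\mathbb{T}}(\hat\Gammavec_\psi-\Gammavec_\psi)}$, which is bounded in absolute value by
\[
\infnorm{\Delta}\,\oneNorm{\widehat P_{\mathbb{T}}(\hat\Gammavec_\psi-\Gammavec_\psi)} \lesssim \infnorm{\Delta}^2
\]
using Lemma~\ref{app:pd-PT-l1}. The second-order term is $O(\infnorm{\Delta}^3)$ by the $\ell_1$-average bound of Lemma~\ref{app:pd-l1-average}.

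\textbf{Assembly.} The empirical part of $R_{\nabla,\psi}$ is at most $\infnorm{\Delta}B_n$ up to constants and logarithmic factors, and the population part is absorbed into $\infnorm{\Delta}^2$. Summing Steps~1--3 and taking a union bound over the high-probability events (rescaling $\delta$) yields \eqref{app:nl-combined-eq}.

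The delicate point is that the first-order piece of the population part is only $O(\infnorm{\Delta}^2)$, not $\infnorm{\Delta}B_n$. This causes no problem, because the target bound already contains an $\infnorm{\Delta}^2$ term.
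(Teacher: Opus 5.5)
Your three-term decomposition is the paper's. Steps~1 and~2 and the population half of Step~3 are correct. Your explicit Taylor treatment of $\Pstar\bigl[\tilde s(Y,\hat\eta)\ip{\hat H_\psi^{(b)}}{X}\bigr]$ exposes the first-order piece $-\ip{\Delta}{\widehat P_{\mathbb{T}}(\hat\Gammavec_\psi-\Gammavec_\psi)}=O(\infnorm{\Delta}^2)$, which makes it more careful than the paper's one-line appeal to the linear theorem in \eqref{app:nl-part-b}.

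\textbf{The gap is in the assembly.} You correctly bound the empirical half of $R_{\nabla,\psi}$ by $\infnorm{\Delta}\bigl[\sqrt{\bar d\log(2/\delta)/n}+\bar d\log(2/\delta)/n\bigr]$. But this is \emph{not} $O(\infnorm{\Delta}B_n)$, even up to logarithms. Every summand of $B_n$ is a product of two small quantities ($\rho$, $\infnorm{\Delta}$, $\sqrt{\bar d/n}$), so $B_n=O(\bar d\log^c\bar d/n)$ under the pairwise rates. The factor $\sqrt{\bar d\log(2/\delta)/n}$ exceeds this by a polynomial factor of order $\sqrt{n/\bar d}$. This remains true if you read the first summand of $B_n$ as $\sqrt{\bar d\,\rho\log(2/\delta)/n}$, as typeset in \eqref{app:nl-lin-bound}: that summand is still smaller by a factor $\sqrt{\rho}$.

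Nor is this just slack in Bernstein. The leading piece $(\Pn-\Pstar)\bigl[\tilde s(Y,\eta^\star)\ip{\hat H_\psi^{(b)}}{X}\bigr]$ is a centered average whose standard deviation is generically of exact order $\oneNorm{\hat\Gammavec_\psi-\Gammavec_\psi}\sqrt{\bar d/n}$. It is first order in the sampling noise, just as the population piece you flagged is first order in $\Delta$.

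\textbf{How to close it.} There are two repairs.
\begin{enumerate}
\item Since $B_n$ itself contains $\infnorm{\Delta}\sqrt{\bar d/n}$, the offending term is at most $\oneNorm{\Gammavec_\psi}B_n/\oneNorm{\Gammavec_\psi}$, up to a $\sqrt{\log(2/\delta)}$. It can therefore be routed into the \emph{first} summand of \eqref{app:nl-combined-eq}, provided $\oneNorm{\Gammavec_\psi}\ge c>0$. Assumption~\ref{ass:nonlinear} only bounds $\nabla g$ from above, but the lower bound holds for the motivating targets: for $\psi(T)=\sigma(T_{a,u}-T_{b,u})$ one has $\oneNorm{\Gammavec_\psi}=2I(T^\star_{a,u}-T^\star_{b,u})\ge 2c_B$.
\item Without such a lower bound, add $\infnorm{\Delta}\sqrt{\bar d\log(2/\delta)/n}$ explicitly to the right-hand side. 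Otherwise the displayed inequality does not follow; consider $\Gammavec_\psi=0$ with $\infnorm{\Delta}\ll\sqrt{\bar d/n}$.
\end{enumerate}
Either repair leaves Theorem~\ref{thm:nonlinear} intact, because under the pairwise rates the extra term is $O(\bar d\log^c\bar d/n)$.

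The paper's \eqref{app:nl-part-b} hides the same issue. The linear theorem controls the correction term only after subtracting the plug-in bias $-\ip{\hat\Gammavec_\psi-\Gammavec_\psi}{\Delta}$ and the leading fluctuation. So it does not by itself give $|R_{\nabla,\psi}|\lesssim\infnorm{\Delta}B_n$.
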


\noindent\textbf{Proof.}
Sum the three terms from~\eqref{app:nl-decomp}: the linearized remainder~\eqref{app:nl-lin-bound}, the plug-in Taylor remainder~\eqref{app:nl-plugin-bound}, and the gradient perturbation~\eqref{app:nl-part-b}.
Since $\oneNorm{\Gammavec_\psi} = O(1)$, the linearized term dominates.
The gradient-perturbation term carries an extra factor $\infnorm{\Delta} = o(1)$ and is lower order.
The plug-in Taylor remainder is $O(\infnorm{\Delta}^2) = O(\bar d/n)$, which is also lower order.
\hfill$\square$\medskip

\noindent\textbf{CLT.}
Under the pairwise specialization $\infnorm{\Delta}\lesssim\sqrt{\bar d\log^c\bar d/n}$, all three remainder terms are $O(\bar d\log^c\bar d/n)$, giving $\sqrt{n}R_n^{\rm nl}\to 0$ under $n\gg\bar d\log^c\bar d$. Hence
\[
\sqrt{n}\big(\hat\psi_{\rm nl} - \psi(T^\star)\big)
\;\xrightarrow{d}\;
\mathcal{N}(0,\;V_{\rm ws}(\psi)),
\qquad
V_{\rm ws}(\psi) = \E^\star\!\left[\frac{\ip{\dstar P_{\mathbb{T}}\Gammavec_\psi}{X}^2}{I(\eta^\star)}\right].
\]


\section{Berry--Esseen Bound for the Pairwise-Comparison Estimators}\label{app:clt-upgrade}

This section establishes a quantitative Berry--Esseen bound for the one-step estimators.
The CLT convergence itself follows from a standard conditional Lyapunov CLT and Slutsky's theorem once the remainder conditions of Theorems~\ref{thm:combined} and~\ref{thm:ws-clt} are verified (as sketched in the main text).
Here we go further and derive an explicit rate of convergence for the Kolmogorov distance.
The argument proceeds in two stages: first, we bound the Berry--Esseen error of the leading i.i.d.\ term (Subsection~\ref{app:berry-esseen}); then we quantify the additional error contributed by the one-step remainder~$R_n$ (Subsection~\ref{app:berry-esseen-combined}).

We use the following regularity conditions throughout, all of which hold under the bounded-signal condition $\|T^\star\|_\infty\le B$ of the main text:
\begin{enumerate}
\item[\textbf{(R1)}] \emph{Fisher information bounds:} there exist $0<c_I\le C_I<\infty$ such that $c_I\le I(\eta^\star)\le C_I$ uniformly over all admissible~$X$;
\item[\textbf{(R2)}] \emph{Third-moment bounds:} there exist $M_{3,\rm ws},M_{3,\rm eff}<\infty$ such that $\E^\star[|\tilde s(Y,\eta^\star)|^3\mid X]\le M_{3,\rm ws}$ and $\E^\star[|s(Y,\eta^\star)|^3\mid X]\le M_{3,\rm eff}$ for all admissible~$X$;
\item[\textbf{(R3)}] \emph{Tangent-projection bound:} $\sup_{x\in\mathcal{X}}\|P_{\mathbb{T}}x\|_F\le C_{\rm proj}\sqrt{\bar d/\dstar}$.
\end{enumerate}

\subsection{Berry--Esseen bound for the leading i.i.d.\ term}\label{app:berry-esseen}

\begin{proposition}[Berry--Esseen for the whitened leading term]\label{prop:berry-esseen-ws}
Let $Z_i := \tilde\phi_0^\star(X_i,Y_i) = \tilde s(Y_i,\eta_i^\star)\,\ip{H^\star_{\rm ws}}{X_i}$ with $H^\star_{\rm ws} = \dstar P_{\mathbb{T}}\Gammavec$, and let $v_{\rm ws}^2 := \Var(Z_1)$. Under conditions \textup{(R1)--(R3)} and the isotropic design assumption, there exists a constant $C>0$, depending only on $c_I, C_I, M_{3,\rm ws}$, and $C_{\rm proj}$, such that
\[
\sup_{t\in\R}\left|P\!\left(\frac{1}{\sqrt{n}\,v_{\rm ws}}\sum_{i=1}^n Z_i\le t\right) - \Phi(t)\right|
\;\le\;
C\sqrt{\frac{\bar d}{n}}.
\]
\end{proposition}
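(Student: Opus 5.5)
The plan is to apply the classical Berry--Esseen theorem for i.i.d.\ summands, which gives the Kolmogorov distance bound $C_{\rm BE}\,\E^\star|Z_1|^3/(v_{\rm ws}^3\sqrt n)$. Here $Z_1$ is centered, since $\E^\star[\tilde s(Y,\eta^\star)\mid X]=0$. So it suffices to show that the standardized third moment satisfies
\[
\frac{\E^\star|Z_1|^3}{v_{\rm ws}^3}\;\le\; C\sqrt{\bar d}.
\]
I will bound the numerator from above and the variance from below, both in terms of $\E^\star\langle H^\star_{\rm ws},X\rangle^2$, and then use the uniform sup-bound coming from (R3).

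\emph{Variance lower bound.} Conditioning on $X$, $v_{\rm ws}^2=\E^\star\big[\E^\star[\tilde s^2\mid X]\,\langle H^\star_{\rm ws},X\rangle^2\big]$. For the exponential family, $\E^\star[s^2\mid X]=I(\eta^\star)$, so $\E^\star[\tilde s^2\mid X]=1/I(\eta^\star)\ge 1/C_I$ by (R1). Hence $v_{\rm ws}^2\ge C_I^{-1}\E^\star\langle H^\star_{\rm ws},X\rangle^2$. By the Frobenius reduction (Lemma~\ref{app:pd-Frob-reduction}) applied to the column-sum-zero tensor $H^\star_{\rm ws}$, this is $\asymp \|H^\star_{\rm ws}\|_F^2/\dstar$.

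\emph{Third-moment upper bound.} By (R2),
\[
\E^\star|Z_1|^3\le M_{3,\rm ws}\,\E^\star|\langle H^\star_{\rm ws},X\rangle|^3\le M_{3,\rm ws}\,\sup_x|\langle H^\star_{\rm ws},x\rangle|\cdot \E^\star\langle H^\star_{\rm ws},X\rangle^2 .
\]
Since $H^\star_{\rm ws}\in\mathbb{T}$, self-adjointness of $P_{\mathbb{T}}$ gives $\langle H^\star_{\rm ws},x\rangle=\langle H^\star_{\rm ws},P_{\mathbb{T}}x\rangle$. Cauchy--Schwarz and (R3) then give
\[
\sup_x|\langle H^\star_{\rm ws},x\rangle|\le C_{\rm proj}\sqrt{\bar d/\dstar}\,\|H^\star_{\rm ws}\|_F .
\]

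\emph{Combining.} The ratio becomes
\[
\frac{M_{3,\rm ws}\,C_I^{3/2}\,C_{\rm proj}\sqrt{\bar d/\dstar}\,\|H^\star_{\rm ws}\|_F}{\big(\E^\star\langle H^\star_{\rm ws},X\rangle^2\big)^{1/2}}\;\lesssim\; \frac{\sqrt{\bar d/\dstar}\,\|H^\star_{\rm ws}\|_F}{\|H^\star_{\rm ws}\|_F/\sqrt{\dstar}}=\sqrt{\bar d}.
\]
Dividing by $\sqrt n$ gives the claim $C\sqrt{\bar d/n}$. All the powers of $\dstar$ cancel, and the bound does not depend on $\Gamma$, because the ratio is homogeneous in $H^\star_{\rm ws}$.

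\emph{Main obstacle.} The delicate step is the lower variance bound, which must hold with the exact Frobenius reduction rather than an inequality in the wrong direction. I will use the exact identity $\E^\star\langle H,X\rangle^2=2\|H\|_F^2/(D(d_1-1))$, which needs $H^\star_{\rm ws}=\dstar P_{\mathbb{T}}\Gamma$ to be column-sum-zero. This holds because $\mathbb{T}$ lies in the zero-sum subspace (cf.\ Example~\ref{ex:row-centered}). I will also check that the constant-$\Gamma$ degeneracy $\|P_{\mathbb{T}}\Gamma\|_F=0$ is excluded, which is guaranteed by Assumption~\ref{ass:alignment}; otherwise $v_{\rm ws}=0$ and the statement is vacuous.
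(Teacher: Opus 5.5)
Your proposal is correct and follows the paper's proof essentially step for step. The steps are: centering of $\tilde s$; the variance lower bound $v_{\rm ws}^2=\E^\star[\langle H^\star_{\rm ws},X\rangle^2/I(\eta^\star)]\ge C_I^{-1}\E^\star\langle H^\star_{\rm ws},X\rangle^2$; the bound $\E|W|^3\le(\sup|W|)\,\E W^2$, with $\sup|W|$ controlled through $\langle H^\star_{\rm ws},x\rangle=\langle H^\star_{\rm ws},P_{\mathbb{T}}x\rangle$ and (R3); and the classical Berry--Esseen theorem applied to a standardized third moment of order $\sqrt{\bar d}$. The only cosmetic difference is that you keep $\E^\star\langle H^\star_{\rm ws},X\rangle^2$ symbolic, so you need only the lower half of the Frobenius reduction, whereas the paper sandwiches both sides via the (constant-factor) identity $\E^\star\langle H,X\rangle^2=\|H\|_F^2/\dstar$.
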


\begin{proof}
We divide the argument into four steps.

\medskip\noindent\textbf{Step 1: mean zero.}
Because the whitened score $\tilde s(y,\eta)=s(y,\eta)/I(\eta)$ is conditionally centered,
\[
\E^\star[\tilde s(Y,\eta^\star)\mid X]=0,
\]
we have
\[
\E^\star Z_1
= \E^\star\!\Big(\E^\star[\tilde s(Y,\eta^\star)\mid X]\,\ip{H^\star_{\rm ws}}{X}\Big) = 0.
\]
Hence $Z_1,\dots,Z_n$ are i.i.d.\ mean-zero random variables.

\medskip\noindent\textbf{Step 2: second moment.}
By definition, $\E^\star[s(Y,\eta^\star)^2\mid X]=I(\eta^\star)$, so
\[
\E^\star[\tilde s(Y,\eta^\star)^2\mid X]
= \frac{\E^\star[s(Y,\eta^\star)^2\mid X]}{I(\eta^\star)^2}
= \frac{1}{I(\eta^\star)}.
\]
Therefore, by the tower property,
\[
v_{\rm ws}^2
= \E^\star[Z_1^2]
= \E^\star\!\left[\tilde s(Y,\eta^\star)^2\,\ip{H^\star_{\rm ws}}{X}^2\right]
= \E^\star\!\left[\frac{\ip{H^\star_{\rm ws}}{X}^2}{I(\eta^\star)}\right].
\]
Using condition~(R1) for the sandwich $1/C_I\le 1/I(\eta^\star)\le 1/c_I$, together with the isotropic design identity $\E^\star\ip{H^\star_{\rm ws}}{X}^2 = \|H^\star_{\rm ws}\|_F^2/\dstar$ (valid because $H^\star_{\rm ws}\in\mathbb{T}$):
\begin{equation}\label{eq:BE-second-moment}
\frac{1}{C_I}\frac{\|H^\star_{\rm ws}\|_F^2}{\dstar}
\;\le\;
v_{\rm ws}^2
\;\le\;
\frac{1}{c_I}\frac{\|H^\star_{\rm ws}\|_F^2}{\dstar}.
\end{equation}
In particular, the cube of the standard deviation satisfies
\begin{equation}\label{eq:BE-v-cube}
v_{\rm ws}^3
\;\ge\;
C_I^{-3/2}\,\frac{\|H^\star_{\rm ws}\|_F^3}{(\dstar)^{3/2}}.
\end{equation}

\medskip\noindent\textbf{Step 3: third absolute moment.}
Starting from $\E^\star|Z_1|^3 = \E^\star\!\big[|\tilde s(Y,\eta^\star)|^3\,|\ip{H^\star_{\rm ws}}{X}|^3\big]$, we condition on $X$ and apply~(R2):
\begin{equation}\label{eq:BE-3rd-cond}
\E^\star|Z_1|^3
= \E^\star\!\Big(\E^\star[|\tilde s(Y,\eta^\star)|^3\mid X]\,|\ip{H^\star_{\rm ws}}{X}|^3\Big)
\;\le\; M_{3,\rm ws}\,\E^\star|\ip{H^\star_{\rm ws}}{X}|^3.
\end{equation}
We now bound $\E^\star|\ip{H^\star_{\rm ws}}{X}|^3$ using the elementary inequality
\begin{equation}\label{eq:BE-elem-ineq}
\E|W|^3 \;\le\; \big(\sup|W|\big)\,\E|W|^2,
\end{equation}
applied with $W = \ip{H^\star_{\rm ws}}{X}$.
Because $H^\star_{\rm ws}\in\mathbb{T}$, the inner product satisfies $\ip{H^\star_{\rm ws}}{x} = \ip{H^\star_{\rm ws}}{P_{\mathbb{T}}x}$, so by Cauchy--Schwarz and condition~(R3):
\begin{equation}\label{eq:BE-sup-bound}
\sup_{x\in\mathcal{X}}|\ip{H^\star_{\rm ws}}{x}|
\;\le\;
\|H^\star_{\rm ws}\|_F\,\sup_{x\in\mathcal{X}}\|P_{\mathbb{T}}x\|_F
\;\le\;
C_{\rm proj}\,\|H^\star_{\rm ws}\|_F\,\sqrt{\bar d/\dstar}.
\end{equation}
The isotropic design identity gives $\E^\star\ip{H^\star_{\rm ws}}{X}^2 = \|H^\star_{\rm ws}\|_F^2/\dstar$. Substituting~\eqref{eq:BE-sup-bound} and this identity into~\eqref{eq:BE-elem-ineq}:
\begin{equation}\label{eq:BE-3rd-inner}
\E^\star|\ip{H^\star_{\rm ws}}{X}|^3
\;\le\;
C_{\rm proj}\,\frac{\|H^\star_{\rm ws}\|_F^3\,\sqrt{\bar d}}{(\dstar)^{3/2}}.
\end{equation}
Combining \eqref{eq:BE-3rd-cond} and \eqref{eq:BE-3rd-inner}:
\begin{equation}\label{eq:BE-third-moment}
\E^\star|Z_1|^3
\;\le\;
M_{3,\rm ws}\,C_{\rm proj}\,\frac{\|H^\star_{\rm ws}\|_F^3\,\sqrt{\bar d}}{(\dstar)^{3/2}}.
\end{equation}

\medskip\noindent\textbf{Step 4: standardized third moment and Berry--Esseen.}
Dividing~\eqref{eq:BE-third-moment} by the lower bound~\eqref{eq:BE-v-cube}:
\begin{equation}\label{eq:BE-standardized}
\frac{\E^\star|Z_1|^3}{v_{\rm ws}^3}
\;\le\;
M_{3,\rm ws}\,C_{\rm proj}\,C_I^{3/2}\,\sqrt{\bar d}
\;=:\; C_0\sqrt{\bar d}.
\end{equation}
This is the key bound: the standardized third absolute moment is only $O(\sqrt{\bar d})$.

\medskip
Applying the classical Berry--Esseen theorem to the i.i.d.\ mean-zero variables $Z_1,\dots,Z_n$:
\begin{equation}\label{eq:BE-leading-term}
\sup_{t\in\R}\left|P\!\left(\frac{1}{\sqrt{n}\,v_{\rm ws}}\sum_{i=1}^n Z_i\le t\right) - \Phi(t)\right|
\;\le\;
\frac{C_{\rm BE}}{\sqrt{n}}\,\frac{\E^\star|Z_1|^3}{v_{\rm ws}^3}
\;\le\;
\frac{C_{\rm BE}\,C_0\sqrt{\bar d}}{\sqrt{n}}
= C\sqrt{\frac{\bar d}{n}},
\end{equation}
where $C = C_{\rm BE}\,C_0 = C_{\rm BE}\,M_{3,\rm ws}\,C_{\rm proj}\,C_I^{3/2}$ and $C_{\rm BE}\le 0.4748$ is the universal Berry--Esseen constant.
\end{proof}

\subsection{Tangent-projection bound for pairwise atoms}\label{app:pairwise-proj}

We verify that condition~(R3) follows from the standard basis-tensor projection bound.

\begin{lemma}[Pairwise tangent-projection bound]\label{lem:pairwise-proj}
Suppose that for every canonical basis tensor $E_\omega$,
\[
\|P_{\mathbb{T}} E_\omega\|_F \;\le\; C_0\sqrt{\bar d/\dstar}.
\]
Then for every pairwise-comparison design tensor $X = E_{\omega^+} - E_{\omega^-}$,
\[
\|P_{\mathbb{T}} X\|_F \;\le\; 2C_0\sqrt{\bar d/\dstar}.
\]
Hence condition~\textup{(R3)} holds with $C_{\rm proj} = 2C_0$.
\end{lemma}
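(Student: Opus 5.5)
The plan is to use linearity of $P_{\mathbb{T}}$ together with the triangle inequality in Frobenius norm; no further structure of the tangent space is needed. A pairwise-comparison design tensor has the form $X=e_{u,a}-e_{u,b}$ from the comparison sampling model of Appendix~\ref{app:pd-setup}, so it is a signed difference of two canonical basis tensors, $X=E_{\omega^+}-E_{\omega^-}$ with $\omega^+=(a,u)$ and $\omega^-=(b,u)$ after unflattening the context index.

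The key step is the estimate
\[
\|P_{\mathbb{T}}X\|_F=\|P_{\mathbb{T}}E_{\omega^+}-P_{\mathbb{T}}E_{\omega^-}\|_F\le \|P_{\mathbb{T}}E_{\omega^+}\|_F+\|P_{\mathbb{T}}E_{\omega^-}\|_F\le 2C_0\sqrt{\bar d/\dstar}.
\]
The last inequality applies the hypothesis to each basis tensor. Since the resulting bound does not depend on the particular atom, taking the supremum over all admissible $x\in\mathcal{X}$ gives (R3) with $C_{\rm proj}=2C_0$.

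To see that the hypothesis is not vacuous, I would add that it holds under $\mu$-incoherence by Lemma~\ref{lem:norm-scales}. That lemma gives $\|P_{\mathbb{T}}E_\omega\|_F^2=(P_{\mathbb{T}})_{\omega\omega}\le\|P_{\mathbb{T}}E_\omega\|_\infty\le C(\mu,r,m)\,\bar d/\dstar$, so one may take $C_0=C(\mu,r,m)^{1/2}$. The lemma also covers $\widehat{P}_{\mathbb{T}}$ under \eqref{eq:init-estimated-incoherence}, so the same bound applies to the estimated projector.

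None of the steps is a real obstacle. The only subtlety is that the projector might be taken to include the column-sum-zero centering. In that case I would note that $X$ already lies in the column-sum-zero subspace, so the centering acts as the identity on $X$, consistent with the remark in Appendix~\ref{app:pd-rowsum-remark}, and the argument above goes through unchanged.
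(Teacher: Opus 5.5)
Your proposal is correct and follows the same route as the paper: linearity of $P_{\mathbb{T}}$ together with the triangle inequality, with $C_0=C(\mu,r,m)$ supplied by the incoherence bound. Your centering remark is harmless but unnecessary, since the linearity argument applies to whatever linear projector the hypothesis is stated for.
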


\begin{proof}
By linearity of $P_{\mathbb{T}}$ and the triangle inequality,
\[
\|P_{\mathbb{T}} X\|_F
= \|P_{\mathbb{T}} E_{\omega^+} - P_{\mathbb{T}} E_{\omega^-}\|_F
\;\le\;
\|P_{\mathbb{T}} E_{\omega^+}\|_F + \|P_{\mathbb{T}} E_{\omega^-}\|_F
\;\le\;
2C_0\sqrt{\bar d/\dstar}.
\]
The basis-tensor bound $\|P_{\mathbb{T}} E_\omega\|_F\le C_0\sqrt{\bar d/\dstar}$ is the standard $\mu$-incoherence bound from the tensor-completion literature, and holds with $C_0 = C(\mu,r,m)$. The pairwise design introduces only an absolute constant factor~$2$.
\end{proof}

\subsection{Berry--Esseen bound for the efficient leading term}\label{app:berry-esseen-eff}

The same argument applies, with minimal modification, to the efficient (non-whitened) leading term.

\begin{proposition}[Berry--Esseen for the efficient leading term]\label{prop:berry-esseen-eff}
Let $Z_i^{\rm eff} := \phi^\star(X_i,Y_i) = s(Y_i,\eta_i^\star)\,\ip{H^\star}{X_i}$ with $H^\star = A^{-1}P_{\mathbb{T}}\Gammavec$, and let $v_{\rm eff}^2 := \Var(Z_1^{\rm eff})$. Under conditions \textup{(R1)--(R3)} and the isotropic design assumption,
\[
\sup_{t\in\R}\left|P\!\left(\frac{1}{\sqrt{n}\,v_{\rm eff}}\sum_{i=1}^n Z_i^{\rm eff}\le t\right) - \Phi(t)\right|
\;\le\;
C\sqrt{\frac{\bar d}{n}}.
\]
\end{proposition}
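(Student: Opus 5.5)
The plan is to repeat the four-step argument of Proposition~\ref{prop:berry-esseen-ws} verbatim, with the whitened score $\tilde s$ replaced by the raw score $s$ and the direction $H^\star_{\rm ws}$ replaced by $H^\star=A^{-1}P_{\mathbb{T}}\Gammavec$. The two structural facts used there are that the direction lies in $\mathbb{T}$ and that $\E\ip{H}{X}^2\asymp\|H\|_F^2/\dstar$ for column-sum-zero $H$. Both continue to hold, because $H^\star\in\mathbb{T}$ by construction and $\mathbb{T}$ sits inside the column-sum-zero subspace (Example~\ref{ex:row-centered}), so Lemma~\ref{app:pd-Frob-reduction} applies. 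Notably, the operator $A^{-1}$, and hence $C_A$, should never enter: the argument only uses $\|H^\star\|_F$, which cancels in the standardized ratio.

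\textbf{Step 1 (mean zero).} We have $\E^\star Z_1^{\rm eff}=0$ by conditional centering $\E^\star[s(Y,\eta^\star)\mid X]=0$ and the tower property.

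\textbf{Step 2 (variance lower bound).} Since $\E^\star[s^2\mid X]=I(\eta^\star)$, we get $v_{\rm eff}^2=\E^\star[I(\eta^\star)\ip{H^\star}{X}^2]$. Using (R1) and Lemma~\ref{app:pd-Frob-reduction},
\[
c_I\,c\,\|H^\star\|_F^2/\dstar \le v_{\rm eff}^2 \le C_I\,C\,\|H^\star\|_F^2/\dstar ,
\]
so $v_{\rm eff}^3\ge (c_I c)^{3/2}\|H^\star\|_F^3/(\dstar)^{3/2}$. This step uses $c_I$ rather than $C_I$, which is the only change from the whitened case.

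\textbf{Step 3 (third moment).} Conditioning on $X$ and applying (R2) gives $\E^\star|Z_1^{\rm eff}|^3\le M_{3,\rm eff}\,\E^\star|\ip{H^\star}{X}|^3$. We then use $\E|W|^3\le(\sup|W|)\,\E W^2$. Because $H^\star\in\mathbb{T}$, we have $\ip{H^\star}{x}=\ip{H^\star}{P_{\mathbb{T}}x}$. Cauchy--Schwarz together with (R3), verified via Lemma~\ref{lem:pairwise-proj}, then gives $\sup_x|\ip{H^\star}{x}|\le C_{\rm proj}\|H^\star\|_F\sqrt{\bar d/\dstar}$. Combined with the second-moment identity this yields
\[
\E^\star|Z_1^{\rm eff}|^3\le M_{3,\rm eff}\,C_{\rm proj}\,C\,\|H^\star\|_F^3\sqrt{\bar d}/(\dstar)^{3/2}.
\]

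\textbf{Step 4 (Berry--Esseen).} The standardized ratio is $\E|Z_1^{\rm eff}|^3/v_{\rm eff}^3\le C_0\sqrt{\bar d}$ with $C_0$ depending only on $c_I,M_{3,\rm eff},C_{\rm proj}$. The classical i.i.d. Berry--Esseen theorem then gives the rate $C_{\rm BE}C_0\sqrt{\bar d/n}$.

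The only point needing care, and the one I expect to be the main obstacle, is ensuring the variance lower bound does not degenerate. Specifically, $H^\star\neq0$ requires $P_{\mathbb{T}}\Gammavec\neq0$, which follows from Assumption~\ref{ass:alignment} and the invertibility of $A$ (Assumption~\ref{ass:lb-invertible}). The Frobenius reduction must also be applied to $H^\star$ itself and not to an ambient tensor. Since $A^{-1}$ maps $\mathbb{T}$ into $\mathbb{T}$, membership $H^\star\in\mathbb{T}$ is automatic and the proof goes through.
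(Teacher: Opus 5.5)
Your proposal is correct and follows the paper's proof essentially line by line. It uses the same four steps: conditional centering; the two-sided variance bound from (R1) and the Frobenius reduction, with $c_I$ entering the lower bound; the third-moment bound via $\E|W|^3\le(\sup|W|)\,\E W^2$ and (R3); and the classical i.i.d. Berry--Esseen theorem, with $\|H^\star\|_F$ cancelling in the standardized ratio so that $C_A$ never appears. Your added remark that $v_{\rm eff}>0$ requires $P_{\mathbb{T}}\Gammavec\neq 0$ is a harmless point the paper leaves implicit.
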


\begin{proof}
The proof follows the same four-step structure as Proposition~\ref{prop:berry-esseen-ws}; we record only the changes.

\medskip\noindent\textbf{Step 1.}
Mean zero follows from $\E^\star[s(Y,\eta^\star)\mid X]=0$.

\medskip\noindent\textbf{Step 2.}
Since $\E^\star[s(Y,\eta^\star)^2\mid X] = I(\eta^\star)$, the second moment is
\[
v_{\rm eff}^2
= \E^\star\!\big[I(\eta^\star)\,\ip{H^\star}{X}^2\big].
\]
Using condition~(R1) and the isotropic design:
\begin{equation}\label{eq:BE-eff-second}
\frac{c_I}{\dstar}\|H^\star\|_F^2
\;\le\;
v_{\rm eff}^2
\;\le\;
\frac{C_I}{\dstar}\|H^\star\|_F^2.
\end{equation}

\medskip\noindent\textbf{Step 3.}
Using condition~(R2), $\E^\star[|s|^3\mid X]\le M_{3,\rm eff}$, and the same decomposition $\E|W|^3\le(\sup|W|)\E|W|^2$:
\begin{equation}\label{eq:BE-eff-third}
\E^\star|Z_1^{\rm eff}|^3
\;\le\;
M_{3,\rm eff}\,C_{\rm proj}\,\frac{\|H^\star\|_F^3\,\sqrt{\bar d}}{(\dstar)^{3/2}}.
\end{equation}

\medskip\noindent\textbf{Step 4.}
Dividing by $v_{\rm eff}^3 \ge c_I^{3/2}\|H^\star\|_F^3/(\dstar)^{3/2}$ (from~\eqref{eq:BE-eff-second}):
\[
\frac{\E^\star|Z_1^{\rm eff}|^3}{v_{\rm eff}^3}
\;\le\;
\frac{M_{3,\rm eff}\,C_{\rm proj}}{c_I^{3/2}}\,\sqrt{\bar d},
\]
and the Berry--Esseen theorem gives the stated bound. The $\|H^\star\|_F$ terms cancel exactly as before, so the leading-term Berry--Esseen rate is the same $O(\sqrt{\bar d/n})$ for both the whitened and efficient estimators.
\end{proof}

\subsection{Combined Berry--Esseen bound with remainder}\label{app:berry-esseen-combined}

The one-step decomposition $\hat\psi_n - \psi(T^\star) = \frac{1}{n}\sum_{i=1}^n Z_i + R_n$ contains both the i.i.d.\ leading term and the remainder~$R_n$.
The leading-term Berry--Esseen bounds of Propositions~\ref{prop:berry-esseen-ws} and~\ref{prop:berry-esseen-eff} quantify the Gaussian approximation error of the first part.
We now derive the full Berry--Esseen bound by carefully incorporating the remainder, which is random and dependent on the same evaluation-fold data.

\begin{theorem}[Combined Berry--Esseen bound]\label{thm:berry-esseen-combined}
Let $T_n := \sqrt{n}(\hat\psi_n - \psi(T^\star))/v$ denote the standardized one-step estimator, where $v = v_{\rm ws}$ or $v_{\rm eff}$ as appropriate.
Write $T_n = S_n + \rho_n$ where
\[
S_n := \frac{1}{\sqrt{n}\,v}\sum_{i=1}^n Z_i
\qquad\text{and}\qquad
\rho_n := \frac{\sqrt{n}\,R_n}{v}
\]
are the standardized leading term and standardized remainder, respectively.
Let $r_n>0$ be any deterministic bound and define the event $\mathcal{E}_n := \{|\rho_n|\le r_n\}$.
Then
\begin{equation}\label{eq:BE-combined}
\sup_{t\in\R}\!\left|P\!\left(T_n\le t\right) - \Phi(t)\right|
\;\le\;
\Delta_{\rm BE} + \frac{r_n}{\sqrt{2\pi}} + P(\mathcal{E}_n^c),
\end{equation}
where $\Delta_{\rm BE} = C\sqrt{\bar d/n}$ is the leading-term Berry--Esseen error from Proposition~\ref{prop:berry-esseen-ws} \textup{(}resp.\ \ref{prop:berry-esseen-eff}\textup{)}.
\end{theorem}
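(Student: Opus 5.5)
The plan is a standard sandwich (smoothing) argument that converts the Kolmogorov bound for $S_n$ into one for $T_n=S_n+\rho_n$. It uses only the leading-term bound of Proposition~\ref{prop:berry-esseen-ws} (resp.\ \ref{prop:berry-esseen-eff}) and the Lipschitz continuity of $\Phi$. Throughout, fix $t\in\R$ and split every probability according to whether $\mathcal{E}_n=\{|\rho_n|\le r_n\}$ holds. The fact that $\rho_n$ depends on the same evaluation-fold data as $S_n$ causes no difficulty, because we never use independence between $S_n$ and $\rho_n$, only set inclusions.

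\emph{Upper bound.} On $\mathcal{E}_n$ we have $S_n\ge T_n-r_n$, so $\{T_n\le t\}\cap\mathcal{E}_n\subseteq\{S_n\le t+r_n\}$. Hence
\[
P(T_n\le t)\le P(S_n\le t+r_n)+P(\mathcal{E}_n^c)\le \Phi(t+r_n)+\Delta_{\rm BE}+P(\mathcal{E}_n^c),
\]
where the second inequality applies the leading-term Berry--Esseen bound at the point $t+r_n$; this bound is uniform in its argument. Since $\Phi$ has density bounded by $(2\pi)^{-1/2}$, the mean value theorem gives $\Phi(t+r_n)\le\Phi(t)+r_n/\sqrt{2\pi}$.

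\emph{Lower bound.} On $\mathcal{E}_n$ we have $T_n\le S_n+r_n$, so $\{S_n\le t-r_n\}\cap\mathcal{E}_n\subseteq\{T_n\le t\}$. Therefore
\[
P(T_n\le t)\ge P(S_n\le t-r_n)-P(\mathcal{E}_n^c)\ge\Phi(t-r_n)-\Delta_{\rm BE}-P(\mathcal{E}_n^c)\ge\Phi(t)-\frac{r_n}{\sqrt{2\pi}}-\Delta_{\rm BE}-P(\mathcal{E}_n^c).
\]
Combining the two directions and taking the supremum over $t$ yields \eqref{eq:BE-combined}.

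The only delicate point is to avoid conditioning on $\mathcal{E}_n$. Conditioning would distort the law of $S_n$, since the two are dependent. The intersection/union-bound formulation above sidesteps this completely. Note that we also need $v>0$, which is guaranteed by the lower bounds \eqref{eq:BE-second-moment} and \eqref{eq:BE-eff-second} whenever $H^\star\neq 0$. For later use, the natural choice is $r_n\asymp\sqrt{n}\,C\,\|\Gamma\|_1(\bar d\log^c\bar d/n)/v$, taken from Theorems~\ref{thm:ws-bound} and \ref{thm:combined}, with $P(\mathcal{E}_n^c)\le\bar d^{-c}$. That specialization belongs to a subsequent corollary, however, and is not needed for the displayed inequality itself.
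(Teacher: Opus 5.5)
Your proof is correct and follows the paper's argument step for step: the set inclusions on $\mathcal{E}_n$ without conditioning, the uniform leading-term Berry--Esseen bound at $t\pm r_n$, and the $1/\sqrt{2\pi}$-Lipschitz property of $\Phi$. One small slip: the upper-bound inclusion $\{T_n\le t\}\cap\mathcal{E}_n\subseteq\{S_n\le t+r_n\}$ needs $S_n=T_n-\rho_n\le T_n+r_n$ on $\mathcal{E}_n$, not $S_n\ge T_n-r_n$; the latter is the inequality you correctly use for the lower bound.
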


\begin{proof}
Fix any $t\in\R$.
Since $T_n = S_n + \rho_n$, we decompose the probability according to the event~$\mathcal{E}_n$.

\medskip\noindent\textbf{Upper bound.}
On the event $\mathcal{E}_n$, $|\rho_n|\le r_n$, so $T_n\le t$ implies $S_n = T_n - \rho_n \le t + r_n$. Therefore
\[
P(T_n\le t,\;\mathcal{E}_n) \;\le\; P(S_n\le t + r_n).
\]
Adding the failure probability:
\begin{align}
P(T_n\le t)
&\;=\; P(T_n\le t,\;\mathcal{E}_n) + P(T_n\le t,\;\mathcal{E}_n^c) \notag \\
&\;\le\; P(S_n\le t + r_n) + P(\mathcal{E}_n^c). \label{eq:BE-Tn-upper}
\end{align}
Now apply the Berry--Esseen bound from Proposition~\ref{prop:berry-esseen-ws} (or~\ref{prop:berry-esseen-eff}), which holds uniformly in~$u$:
\[
P(S_n\le t+r_n) \;\le\; \Phi(t+r_n) + \Delta_{\rm BE}.
\]
Since $\Phi$ is Lipschitz with constant $\sup_t\Phi'(t) = \sup_t\phi(t) = 1/\sqrt{2\pi}$,
\[
\Phi(t+r_n) \;\le\; \Phi(t) + \frac{r_n}{\sqrt{2\pi}}.
\]
Substituting into~\eqref{eq:BE-Tn-upper}:
\begin{equation}\label{eq:BE-upper}
P(T_n\le t) \;\le\; \Phi(t) + \Delta_{\rm BE} + \frac{r_n}{\sqrt{2\pi}} + P(\mathcal{E}_n^c).
\end{equation}

\medskip\noindent\textbf{Lower bound.}
On $\mathcal{E}_n$, if $S_n\le t - r_n$ then $T_n = S_n + \rho_n \le (t-r_n) + r_n = t$. Therefore
\[
P(S_n\le t-r_n,\;\mathcal{E}_n) \;\le\; P(T_n\le t).
\]
Since $P(S_n\le t-r_n,\;\mathcal{E}_n) \ge P(S_n\le t-r_n) - P(\mathcal{E}_n^c)$:
\begin{align}
P(T_n\le t)
&\;\ge\; P(S_n\le t - r_n) - P(\mathcal{E}_n^c) \notag \\
&\;\ge\; \Phi(t-r_n) - \Delta_{\rm BE} - P(\mathcal{E}_n^c) \notag \\
&\;\ge\; \Phi(t) - \frac{r_n}{\sqrt{2\pi}} - \Delta_{\rm BE} - P(\mathcal{E}_n^c). \label{eq:BE-lower}
\end{align}

\medskip\noindent\textbf{Conclusion.}
Combining~\eqref{eq:BE-upper} and~\eqref{eq:BE-lower}:
\[
\left|P(T_n\le t) - \Phi(t)\right|
\;\le\;
\Delta_{\rm BE} + \frac{r_n}{\sqrt{2\pi}} + P(\mathcal{E}_n^c).
\]
Taking the supremum over $t\in\R$ gives~\eqref{eq:BE-combined}.
\end{proof}

\noindent We now specialize to each estimator.

\medskip\noindent\textbf{Whitened estimator.}
By Theorem~\ref{thm:ws-clt}, with probability at least $1-\bar d^{-c}$,
\[
|R_n| \;\le\; C\,\|\Gammavec\|_1\,\frac{\bar d\log^c\bar d}{n}.
\]
The second-moment bound~\eqref{eq:BE-second-moment} gives $v_{\rm ws}\ge C_I^{-1/2}\|H^\star_{\rm ws}\|_F/\sqrt{\dstar}\asymp\sqrt{\bar d}\,\|\Gammavec\|_1$.
Therefore the standardized remainder satisfies, on the event $\mathcal{E}_n$,
\[
|\rho_n|
= \frac{\sqrt{n}\,|R_n|}{v_{\rm ws}}
\;\le\;
C'\,\frac{\sqrt{n}\,\|\Gammavec\|_1\,\bar d\log^c\bar d/n}{\sqrt{\bar d}\,\|\Gammavec\|_1}
= C'\sqrt{\frac{\bar d\log^{2c}\bar d}{n}}
\;=:\; r_n,
\]
with $P(\mathcal{E}_n^c)\le\bar d^{-c}$.
Substituting into~\eqref{eq:BE-combined} with $\Delta_{\rm BE} = C\sqrt{\bar d/n}$ from Proposition~\ref{prop:berry-esseen-ws}:
\begin{equation}\label{eq:BE-ws-total}
\begin{aligned}
&\sup_{t\in\R}\!\left|P\!\left(\frac{\sqrt{n}(\hat\psi_n^{\rm ws}-\psi(T^\star))}{v_{\rm ws}}\le t\right)-\Phi(t)\right|\\
&\;\le\;
\underbrace{C\sqrt{\frac{\bar d}{n}}}_{\text{leading-term BE}}
\;+\; \underbrace{\frac{C'}{\sqrt{2\pi}}\sqrt{\frac{\bar d\log^{2c}\bar d}{n}}}_{\text{remainder contribution}}
\;+\; \underbrace{\bar d^{-c}}_{\text{failure prob.}}
\;=\; O\!\left(\sqrt{\frac{\bar d\log^{2c}\bar d}{n}}\right).
\end{aligned}
\end{equation}
This vanishes under the sample-size condition $n\gg\bar d\log^{2c}\bar d$, which is the same condition required for the CLT.
Since $c>0$ can be taken arbitrarily large, the failure probability $\bar d^{-c}$ is negligible compared to the other two terms.

\medskip\noindent\textbf{Efficient estimator.}
By Theorem~\ref{thm:combined}, with probability at least $1-\bar d^{-c}$,
\[
|R_n| \;\le\; C\,C_A\,\|\Gammavec\|_1\,\frac{\bar d\log^c\bar d}{n}.
\]
Since $v_{\rm eff}\asymp\sqrt{\bar d}\,\|\Gammavec\|_1$ by~\eqref{eq:BE-eff-second}, the standardized remainder satisfies
\[
|\rho_n|
\;\le\;
C_A\cdot C'\sqrt{\frac{\bar d\log^{2c}\bar d}{n}}
\;=:\; r_n.
\]
Substituting with $\Delta_{\rm BE} = C\sqrt{\bar d/n}$ from Proposition~\ref{prop:berry-esseen-eff}:
\begin{equation}\label{eq:BE-eff-total}
\sup_{t\in\R}\!\left|P\!\left(\frac{\sqrt{n}(\hat\psi_n^{\rm eff}-\psi(T^\star))}{v_{\rm eff}}\le t\right)-\Phi(t)\right|
\;\le\;
O\!\left(C_A\sqrt{\frac{\bar d\log^{2c}\bar d}{n}}\right),
\end{equation}
which vanishes under the CLT condition $C_A^2\bar d\log^{2c}\bar d/n\to 0$, i.e., \eqref{eq:clt-condition}.

\begin{remark}[Leading term vs.\ remainder]
For both estimators, the leading-term Berry--Esseen error is $O(\sqrt{\bar d/n})$, independent of~$C_A$.
The difference between the whitened and efficient estimators lies entirely in the remainder: the whitened remainder has no~$C_A$ factor, while the efficient remainder carries the factor~$C_A$, reflecting the additional estimation error from inverting the information operator.
\end{remark}


\section{Application to Tensor Completion}\label{sec:applications}

Although this paper focuses on pairwise comparisons and the new challenges they introduce (identification constraints, non-constant Fisher information, and the non-commutativity bottleneck), the semiparametric framework developed here applies to general sampling mechanisms, observation patterns, and inference targets.
To illustrate this point, we show in this section how the framework recovers and extends classical tensor completion results. Below we first outline the key steps.

\emph{Lower bounds and efficiency gaps in existing work.}
The semiparametric efficiency bound $V_{\mathrm{eff}}(\psi) = \langle P_{\mathbb{T}}\Gamma, (P_{\mathbb{T}} G P_{\mathbb{T}})^{-1} P_{\mathbb{T}}\Gamma\rangle$ reveals that existing debiased estimators in the tensor completion literature---under heteroscedastic, non-additive noise, or non-uniform sampling---are not semiparametrically efficient.
The reason is that these estimators are all special cases of the score-whitening method, which replaces the information operator $G$ by a scalar multiple of the identity. This achieves the whitened variance $V_{\mathrm{ws}} \ge V_{\mathrm{eff}}$, with strict inequality whenever the Fisher information $I(\eta^\star)$ varies across entries.

\emph{Whitening methods as natural adaptation.}
For additive sub-exponential noise with heteroscedastic variance, the natural residual score $s(y,\eta) = y - \eta$ automatically satisfies the whitening property with $I(\eta) \equiv 1$, so the score-whitened estimator is in fact fully efficient---and this is precisely the estimator used throughout the existing literature.  For non-uniform sampling, IPW adjustment restores isotropy of the information operator, again reducing $G$ to a scalar multiple of the identity. For 1-bit observations (e.g., logistic link), the score-whitening method of Section~\ref{sec:score-whitening} applies identically to the pairwise case, yielding the first general CLT for debiased inference in 1-bit tensor completion.

\emph{Efficient estimation beyond whitening.}
When the noise variance or sampling distribution is known (or can be estimated consistently), one can instead solve the full information equation $P_{\mathbb{T}} G P_{\mathbb{T}} \hat{H} = P_{\mathbb{T}} \Gamma$ with the true $G$ operator and construct the efficient one-step estimator that achieves $V_{\mathrm{eff}}$, strictly improving upon whitening whenever the information geometry is heterogeneous.
The remaining analysis is identical to the pairwise comparison case, since these observation and sampling patterns affect only the information operator $G$, and the entire asymptotic theory of Sections~\ref{sec:upper-bound}--\ref{sec:score-whitening} is stated in terms of $G$ abstractly.
We note that for additive noise with unknown variance, the information operator $G$ involves the unknown noise variances and is harder to estimate reliably; in this case, the whitening approach is preferred.

For general observation models such as tensor completion with additive noise,  we assume the following  signal strength and noise conditions:

\begin{assumption}[Signal strength and sample size---general case]\label{ass:snr-general}
The Frobenius norm and the sample size satisfy
\begin{equation}\label{eq:signal-strength-general}
\|T^\star\|_F \ge c \sigma \sqrt{d^\star}
\end{equation}
for an absolute constant $c>0$, and
\begin{equation}\label{eq:sample-size-general}
n \ge C_0\, \bar d\log^c \bar d
\end{equation}
for a sufficiently large constant $C_0$ depending on $\mu,\kappa,r,m$.
\end{assumption}

\begin{assumption}[Score regularity---general case]\label{assump:score-general}
The score function $s_\eta(y,\eta):=\partial_\eta \ell(y,\eta)$ satisfies:
(i)~centering: $\mathbb{E}^\star[s_\eta(Y,\eta^\star)\mid X]=0$;
(ii)~Fisher bounds: $c_I\le I(\eta^\star)\le C_I$ a.s.;
(iii)~derivative bounds: $|\dot s_\eta(y,\eta)|\le B_1$, $|\ddot s_\eta(y,\eta)|\le B_2$;
(iv)~tail control: $s_\eta(Y,\hat\eta)$ is conditionally sub-exponential with bounded scale parameter.
\end{assumption}

In the pairwise-comparison (and 1-bit) setting, both assumptions are automatically satisfied under the bounded signal condition $\|T^\star\|_\infty\le B$ (see Section~\ref{sec:upper-bound-assumptions}).

In this section we specialize our general framework to entrywise matrix/tensor completion.  We first derive the diagonal Gram (Fisher) operator and state the semiparametric efficiency bound. We then show that existing debiased estimators in the literature are special cases of the score-whitening method and identify when they do and do not achieve full efficiency. For observation models where the information operator can be estimated (1-bit observations, additive noise with known variance), we construct the semiparametrically efficient one-step estimator that strictly improves upon whitening. Finally, we show how score whitening handles heteroscedastic noise, Bernoulli observations, and non-uniform sampling at the optimal sample complexity.

\subsection{Entrywise observation model and the Gram operator}\label{sec:tc-gram}

We specialize to the matrix case $\mathbb{H}=\mathbb{R}^{d_1\times d_2}$ with the Frobenius inner product and canonical basis $\{E_{jk}\}_{1\le j\le d_1,\,1\le k\le d_2}$, where $[E_{jk}]_{ab}=\mathbf{1}\{(a,b)=(j,k)\}$.
In entrywise tensor completion, each sample $i$ selects an entry $(j_i,k_i)$ from a sampling distribution on $[d_1]\times[d_2]$ with
\[
  p_{jk} := \mathbb{P}\bigl((j_i,k_i)=(j,k)\bigr), \qquad p=d_1 d_2,
\]
and we set the design $X^{(i)}=E_{j_i k_i}$.
Conditional on $X^{(i)}$, the response $Y^{(i)}$ depends on $T^\star$ only through the scalar parameter $\eta^{(i)}=\langle T^\star, X^{(i)}\rangle = T^\star_{j_i k_i}$, with a scalar log-likelihood $\ell(y,\eta)$.
The per-entry score and Fisher information are
\[
  s_\eta(y,\eta) := \frac{\partial}{\partial\eta}\ell(y,\eta),
  \qquad
  I(\eta) := \mathbb{E}\bigl[s_\eta(Y,\eta)^2 \,\bigm|\, \eta\bigr].
\]

\noindent\textbf{Diagonal Gram operator.}
Since $X=E_{jk}$ with probability $p_{jk}$, we have $X_{jk}^2 = X_{jk} \in\{0,1\}$ and the Gram operator acts diagonally:
\begin{equation}\label{eq:tc-gram}
G(H) = \sum_{j,k} \lambda_{jk}\, H_{jk}\, E_{jk},
\qquad
\lambda_{jk} := p_{jk}\, I\bigl(T^\star_{jk}\bigr).
\end{equation}
Each weight $\lambda_{jk}$ is the product of two factors: the \emph{sampling probability} $p_{jk}$ and the \emph{Fisher information} $I(T^\star_{jk})$ at the true entry value.  The Gram norm is $\|H\|_{G}^2 = \sum_{j,k}\lambda_{jk} H_{jk}^2$.

\noindent\textbf{Efficiency bound.}
The full-model information equation $G(H_\Gamma^{\mathrm{full}})=\Gamma$ is solved entrywise: $(H_\Gamma^{\mathrm{full}})_{jk} = \Gamma_{jk}/\lambda_{jk}$.
In the low-rank model, as in the pairwise case (Section~\ref{sec:semiparametric_formulation}), we define the restricted information operator $A := P_{\mathbb{T}}\circ G\circ P_{\mathbb{T}}$, and the semiparametric efficiency bound for $\psi(T)=\langle\Gamma,T\rangle_F$ is
\begin{equation}\label{eq:tc-veff}
V_{\mathrm{eff}}(\psi) = \bigl\langle P_{\mathbb{T}}(\Gamma),\,A^{-1} P_{\mathbb{T}}(\Gamma)\bigr\rangle_F.
\end{equation}

\subsection{When $G$ is close to the identity}\label{sec:tc-near-identity}

When the weights $\lambda_{jk}$ are approximately constant---for instance, under uniform sampling ($p_{jk}=1/p$) with homoscedastic noise ($I(\eta)\equiv c$)---the Gram operator satisfies $G = (c/p)\,\mathrm{Id}$ and trivially commutes with $P_{\mathbb{T}}$.
In this regime, the general one-step estimator of Section~\ref{sec:upper-bound} achieves the semiparametric efficiency bound at the optimal sample complexity $n\gtrsim \bar d\,\mathrm{polylog}(\bar d)$.

More generally, two routes to full efficiency are available (cf.\ Section~\ref{sec:upper-bound}):
\begin{enumerate}
\item[(a)] If the sampling probabilities $\{p_{jk}\}$ and Fisher weights $\{I(T^\star_{jk})\}$ satisfy $\max_{jk}\lambda_{jk}/\min_{jk}\lambda_{jk} \le B_0 (\mu, r, \kappa)$ for a dimension-independent constant $B_0$, then $G$ is a bounded perturbation of the identity. The constant $C_A$ in~\eqref{eq:CA_def} remains $O(1)$, and the one-step estimator achieves the bound~\eqref{eq:tc-veff} at the statistically optimal sample complexity.
\item[(b)] Even with substantial heterogeneity, the sample complexity $n\ge C\bar d^2\,(\log\bar d)$ absorbs the non-commutativity cost $C_A$, so the general estimator again achieves full efficiency.
\end{enumerate}

\begin{remark}[Uniform sampling with Gaussian noise]\label{rem:mc-uniform}
Under uniform sampling $p_{jk}=1/p$ and Gaussian noise with variance $\sigma^2$, we have $s_\eta(y,\eta)=(y-\eta)/\sigma^2$, $I(\eta)=1/\sigma^2$, and $\lambda_{jk}=1/(\sigma^2 p)$.
The Gram operator is $G=(1/\sigma^2 p)\,\mathrm{Id}$, the efficiency bound reduces to $V_{\mathrm{eff}}^{\mathrm{MC}} = \sigma^2 p\,\|P_{\mathbb{T}}(\Gamma)\|_F^2$, and the efficient debiased estimator is
\[
  \widehat{\psi}_{\mathrm{MC}} = \langle\Gamma,\hat T\rangle + \frac{1}{n}\sum_{i=1}^n p\,\langle P_{\mathbb{T}}(\Gamma), X_i\rangle\,(Y_i - \langle\hat T, X_i\rangle),
\]
recovering existing results in the matrix completion literature.
\end{remark}

\subsection{Semiparametric efficiency bound and gaps in existing literature}\label{sec:tc-lb}

We now show that the semiparametric efficiency bound from our general framework reveals a precise efficiency gap in existing tensor completion inference methods under heterogeneous settings.

\subsubsection{The efficiency bound}

The semiparametric efficiency bound for the linear functional $\psi(T)=\langle\Gamma,T\rangle$ is
\[
V_{\mathrm{eff}}(\psi) = \bigl\langle P_{\mathbb{T}}(\Gamma),\,A^{-1} P_{\mathbb{T}}(\Gamma)\bigr\rangle_F,
\qquad
A = P_{\mathbb{T}}\circ G\circ P_{\mathbb{T}}.
\]
Since $G$ is diagonal with weights $\lambda_{jk} = p_{jk} I(T^\star_{jk})$ (cf.\ \eqref{eq:tc-gram}), solving the full-model information equation gives $(H_\Gamma^{\mathrm{full}})_{jk} = \Gamma_{jk}/\lambda_{jk}$.
In the low-rank model, the efficient direction $H^\star = A^{-1} P_{\mathbb{T}}(\Gamma)$ additionally incorporates the tangent-space projection, and the resulting variance $V_{\mathrm{eff}}$ is strictly smaller than any variance achievable by estimators that do not solve this operator equation.

\subsubsection{Why existing debiased estimators are not efficient}

Existing debiased estimators for tensor completion, including those of \cite{ ma2024statistical}, are all based on the residual score $s(y,\eta)=y-\eta$ (for additive noise) or the whitened score $\tilde s = s/I(\eta)$ (for non-additive noise).
In both cases, the effective Gram operator reduces to a scalar multiple of the identity: $G_0 = c\,\mathrm{Id}$ on the tangent space. The oracle direction becomes $H_{\rm ws}^\star = c^{-1} P_{\mathbb{T}}\Gamma$, and the resulting asymptotic variance is
\[
V_{\rm ws} = \mathbb{E}^\star\!\left[\frac{\langle H_{\rm ws}^\star, X\rangle^2}{I(\eta^\star)}\right].
\]
By the Cauchy--Schwarz inequality and the definition of $V_{\mathrm{eff}}$, one always has $V_{\rm ws} \ge V_{\mathrm{eff}}$, with equality if and only if $I(\eta^\star)$ is constant across all observed entries.

In the following cases, the inequality is strict:
\begin{itemize}
\item \emph{Heteroscedastic additive noise:} When the noise variance $\sigma^2_{jk}$ varies across entries, the whitened score $s = y-\eta$ has $I(\eta)\equiv 1$, so $V_{\rm ws} = V_{\mathrm{eff}}$---the whitening approach is automatically efficient.
\item \emph{1-bit observations (e.g., logistic link):} The Fisher information $I(\eta) = f'(\eta)^2/[f(\eta)(1-f(\eta))]$ varies with $\eta = T^\star_{jk}$. Here $V_{\rm ws} > V_{\mathrm{eff}}$ whenever the entries of $T^\star$ are not all identical.
\item \emph{Non-uniform sampling with additive noise:} If one applies IPW whitening, the effective variance depends on both the sampling probabilities and the noise variances. The whitened estimator adapts to sampling heterogeneity but may not achieve the true efficiency bound.
\end{itemize}

\subsubsection{Constructing the efficient estimator}\label{sec:tc-efficient}

When the information operator $G$ can be estimated consistently, one can construct the semiparametrically efficient one-step estimator that achieves $V_{\mathrm{eff}}$.

\subsubsection{Additive noise with known (or estimable) variance}\label{sec:tc-eff-additive}

Under the additive model $Y = T^\star_{jk} + \varepsilon$ with $\mathrm{Var}(\varepsilon \mid X = E_{jk}) = \sigma^2_{jk}$, the score is $s(y,\eta) = y - \eta$ and the Fisher information at entry $(j,k)$ is $I_{jk} = 1/\sigma^2_{jk}$.
With sampling probability $p_{jk}$, the Gram operator acts diagonally:
\begin{equation}\label{eq:tc-gram-additive}
G(H)_{jk} = \frac{p_{jk}}{\sigma^2_{jk}}\,H_{jk}.
\end{equation}
The weights $\lambda_{jk} = p_{jk}/\sigma^2_{jk}$ combine both the sampling intensity and the signal-to-noise ratio at each entry. The Gram norm is $\|H\|_{G}^2 = \sum_{j,k} (p_{jk}/\sigma^2_{jk})\,H_{jk}^2$.

\paragraph{Efficiency bound.}
The restricted information operator is $A = P_{\mathbb{T}} \circ G \circ P_{\mathbb{T}}$, and the semiparametric efficiency bound is
\[
V_{\mathrm{eff}}(\psi) = \bigl\langle P_{\mathbb{T}}(\Gamma),\, A^{-1} P_{\mathbb{T}}(\Gamma)\bigr\rangle_F.
\]
Under uniform sampling ($p_{jk} = 1/p$) and homoscedastic noise ($\sigma^2_{jk} \equiv \sigma^2$), this simplifies to $V_{\mathrm{eff}} = \sigma^2 p\,\|P_{\mathbb{T}}(\Gamma)\|_F^2$, recovering the standard result (Remark~\ref{rem:mc-uniform}).

\paragraph{The efficient one-step estimator.}
When the variances $\sigma^2_{jk}$ are known, we construct $\hat{G}(H)_{jk} = (p_{jk}/\sigma^2_{jk})\,H_{jk}$ and compute
\begin{equation}\label{eq:tc-eff-additive-est}
\hat\psi_{\rm eff} = \langle\Gamma,\hat T\rangle + \frac{1}{n}\sum_{i=1}^n (Y_i - \langle\hat T, X_i\rangle)\,\langle\hat H_{\rm eff}, X_i\rangle,
\qquad
\hat H_{\rm eff} = (\hat P_{\mathbb{T}} \hat{G} \hat P_{\mathbb{T}})^{-1} \hat P_{\mathbb{T}} \Gamma.
\end{equation}
The analysis follows the proof of Theorem~\ref{thm:eff-clt} exactly: the score is $s = Y - \hat\eta$, the score perturbation is $s(Y,\hat\eta) - s(Y,\eta^\star) = -\langle\Delta, X\rangle$ with no Taylor remainder (since $s$ is affine in $\eta$), and the H-direction bias is $-\langle\Delta, G(\hat H_{\rm eff} - H_{\rm eff}^\star)\rangle$.
The resolvent perturbation argument (Appendix~\ref{app:pd-hatG}) handles the error from $\hat P_{\mathbb{T}} \ne P_{\mathbb{T}}$ in the operator inversion.
The asymptotic variance is $V_{\mathrm{eff}} = \langle P_{\mathbb{T}} \Gamma, A^{-1} P_{\mathbb{T}} \Gamma\rangle$.

Under heteroscedastic noise, $V_{\mathrm{eff}} < V_{\rm ws}$ strictly. To see this, note that the whitened estimator uses $H_{\rm ws}^\star = p\,P_{\mathbb{T}}\Gamma$ (under uniform sampling) with variance $V_{\rm ws} = p \sum_{jk} \sigma^2_{jk}\,(P_{\mathbb{T}}\Gamma)_{jk}^2$, while the efficient estimator adapts its direction entry by entry via $G$, placing more weight on entries with smaller noise.

\paragraph{When noise variances are unknown.}
When $\sigma^2_{jk}$ is unknown, estimating $G$ requires estimating each $\sigma^2_{jk}$, which is difficult without repeated observations at the same entry. In this case, the whitening approach (Section~\ref{sec:tc-additive}) is preferred: the residual score $s = Y - \eta$ automatically has effective Fisher information $I \equiv 1$, making the whitened estimator fully efficient \emph{without} knowing the noise distribution. This is why the residual-based debiased estimators used throughout the existing literature are in fact optimal for additive noise.

\subsubsection{1-bit observations with known link}\label{sec:tc-eff-1bit}

Under the 1-bit model $Y \in \{0,1\}$ with $\mathbb{P}(Y=1\mid\eta)=f(\eta)$ for a known link $f$ (e.g., logistic $f(\eta)=(1+e^{-\eta})^{-1}$) and uniform sampling $p_{jk}=1/p$, the score and Fisher information are
\[
s(y,\eta) = \frac{(y - f(\eta))\,f'(\eta)}{f(\eta)(1-f(\eta))},
\qquad
I(\eta) = \frac{f'(\eta)^2}{f(\eta)(1-f(\eta))}.
\]
The Gram operator is diagonal with entry-dependent weights:
\begin{equation}\label{eq:tc-gram-1bit}
G(H)_{jk} = \frac{1}{p}\,I(T^\star_{jk})\,H_{jk}.
\end{equation}
Unlike the additive noise case, the Fisher information $I(T^\star_{jk})$ varies with the latent entry value, making the Gram operator genuinely non-isotropic.

\paragraph{The efficient one-step estimator.}
Since the link function $f$ is known and $T^\star$ can be estimated via $\hat T$, one can construct a consistent plug-in estimator of $G$:
\[
\hat{G}(H)_{jk} = \frac{1}{p}\,I(\hat T_{jk})\,H_{jk}.
\]
The entrywise error is controlled by the entrywise accuracy of $\hat T$: under the bounded-signal assumption $\|T^\star\|_\infty \le B$ and Lipschitz continuity of $I(\cdot)$,
\[
|\hat\lambda_{jk} - \lambda_{jk}| = \frac{1}{p}|I(\hat T_{jk}) - I(T^\star_{jk})| \le \frac{L_I}{p}\,|\hat T_{jk} - T^\star_{jk}|,
\]
where $L_I$ is the Lipschitz constant of $I$ on $[-B-1, B+1]$.
Thus the entrywise guarantee $\|\hat T - T^\star\|_\infty = o(1)$ (already required for the one-step estimator) ensures consistency of $\hat{G}$.

The efficient one-step estimator for 1-bit completion is
\begin{equation}\label{eq:tc-eff-1bit}
\hat\psi_{\rm eff} = \langle\Gamma,\hat T\rangle + \frac{1}{n}\sum_{i=1}^n s(Y_i, \langle\hat T, X_i\rangle)\,\langle\hat H_{\rm eff}, X_i\rangle,
\qquad
\hat H_{\rm eff} = (\hat P_{\mathbb{T}} \hat{G} \hat P_{\mathbb{T}})^{-1} \hat P_{\mathbb{T}} \Gamma.
\end{equation}
The remaining analysis follows the proof of Theorem~\ref{thm:eff-clt} verbatim: the additional error from $\hat{G} \ne G$ is handled by the resolvent perturbation argument (Appendix~\ref{app:pd-hatG}), and the CLT holds under the same sample-complexity conditions.

\paragraph{Asymptotic variance.}
The efficient variance is
\[
V_{\mathrm{eff}} = \bigl\langle P_{\mathbb{T}}(\Gamma),\,A^{-1} P_{\mathbb{T}}(\Gamma)\bigr\rangle_F
= p\sum_{j,k} \frac{[P_{\mathbb{T}}(\Gamma)]_{jk}^2}{I(T^\star_{jk})}\cdot\frac{[A^{-1} P_{\mathbb{T}}(\Gamma)]_{jk}}{[P_{\mathbb{T}}(\Gamma)]_{jk}/I(T^\star_{jk})},
\]
which is strictly smaller than the whitened variance $V_{\rm ws} = p \sum_{jk} [P_{\mathbb{T}}(\Gamma)]_{jk}^2 / I(T^\star_{jk})$ whenever the Fisher information $I(T^\star_{jk})$ varies across entries.
Intuitively, the efficient estimator optimally reweights contributions from each entry according to both the tangent-space geometry and the local Fisher information, whereas the whitening approach treats all entries equally after normalization.

\paragraph{Comparison with whitening.}
The whitened estimator for 1-bit observations (Section~\ref{sec:tc-bernoulli}) uses $\tilde s(y,\eta) = s(y,\eta)/I(\eta)$ and the simplified direction $\hat H_{\rm ws} = p\,\hat P_{\mathbb{T}}\Gamma$.
It achieves $V_{\rm ws} = p\sum_{jk} [P_{\mathbb{T}}(\Gamma)]_{jk}^2/I(T^\star_{jk}) \ge V_{\mathrm{eff}}$.
The gap $V_{\rm ws} - V_{\mathrm{eff}}$ grows with the heterogeneity of $I(T^\star_{jk})$ across entries.
However, the whitening approach has the practical advantage of not requiring operator inversion ($\hat P_{\mathbb{T}} \hat{G} \hat P_{\mathbb{T}}$), making it more robust in small-sample regimes.
This is the first CLT for debiased inference in 1-bit tensor completion under general link functions.

\subsection{Score whitening for tensor completion}\label{sec:tc-whitening}

When $G$ departs substantially from the identity and the sample size is not large enough for route~(b) above, the score-whitening method of Section~\ref{sec:score-whitening} provides valid inference at the optimal sample complexity, albeit with a potentially larger variance.
We treat two important observation models.

\subsubsection{Additive noise}\label{sec:tc-additive}

Suppose the observation model is $Y = T^\star_{jk} + \varepsilon$, where $\varepsilon$ has mean zero and possibly entry-dependent variance $\sigma^2_{jk}$; the noise distribution may be sub-Gaussian, sub-exponential, or otherwise---crucially, we do \emph{not} require knowledge of $\sigma^2_{jk}$ or even of the noise family.

The key observation is that the score for additive noise is simply
\[
s(y,\eta) = y - \eta,
\]
regardless of the noise distribution.
The whitened score of Section~\ref{sec:score-whitening} is $\tilde s(y,\eta) = s(y,\eta)/I(\eta)$.
For Gaussian noise with variance $\sigma^2_{jk}$, the ``true'' efficient score is $(Y-\eta)/\sigma^2_{jk}$ and the Fisher information is $I(\eta)=1/\sigma^2_{jk}$; dividing the two gives $\tilde s = Y - \eta$, i.e., the unknown variance cancels.
More generally, for \emph{any} additive noise model one can use the score $s=Y-\eta$ directly.
Since $\mathbb{E}[s^2\mid\eta] = \sigma^2_{jk}$ and $\mathbb{E}[\partial_\eta s\mid\eta] = -1$, the whitened score satisfies properties~(i)--(ii) of Section~\ref{sec:ws-estimator} with $I(\eta)\equiv 1$ in the ``natural'' parameterization.

The resulting Gram operator under score whitening and uniform sampling is
\[
G_0(H) = \frac{1}{p}\,H,
\qquad
A_0 = \frac{1}{p}\,P_{\mathbb{T}},
\qquad
A_0^{-1} = p\,P_{\mathbb{T}}.
\]
The score-whitened one-step estimator takes the form
\begin{equation}\label{eq:tc-ws-additive}
\hat\psi_{\rm ws} = \langle\Gamma,\hat T\rangle + \frac{1}{n}\sum_{i=1}^n (Y_i - \langle\hat T, X_i\rangle)\,\langle\hat H_{\rm ws}, X_i\rangle,
\qquad
\hat H_{\rm ws} = p\,\hat P_{\mathbb{T}}\Gamma.
\end{equation}
Since $I(\eta)\equiv 1$ under the natural score, the whitened estimator is \emph{fully semiparametrically efficient}: $V_{\rm ws} = V_{\rm eff}$, regardless of heteroscedasticity (Theorem~\ref{thm:ws-clt}).
Under homoscedastic noise ($\sigma^2_{jk}\equiv\sigma^2$), the variance simplifies to $V_{\rm ws} = \sigma^2 p\,\|P_{\mathbb{T}}(\Gamma)\|_F^2$.

\begin{remark}[Why the noise distribution is irrelevant]
The estimator~\eqref{eq:tc-ws-additive} uses only the residual $Y_i - \langle\hat T, X_i\rangle$ as the score, requiring no knowledge of the noise variance, distribution family, or heteroscedastic structure.
This is a distinctive advantage of the score-whitening approach for additive noise models: full semiparametric efficiency is achieved with a single, distribution-free estimator.
\end{remark}

\subsubsection{Bernoulli / 1-bit observations}\label{sec:tc-bernoulli}

In 1-bit matrix completion, $Y\in\{0,1\}$ with $\mathbb{P}(Y=1\mid\eta) = f(\eta)$ for some link function $f$ (e.g., logistic $f(\eta)=(1+e^{-\eta})^{-1}$).  The score is $s(y,\eta) = (y-f(\eta))\,f'(\eta)/\bigl(f(\eta)(1-f(\eta))\bigr)$ and the Fisher information is
\[
I(\eta) = \frac{f'(\eta)^2}{f(\eta)(1-f(\eta))}.
\]
The Fisher information varies with $\eta=T^\star_{jk}$, so the Gram operator~\eqref{eq:tc-gram} is genuinely non-isotropic.

The whitened score $\tilde s(y,\eta) = s(y,\eta)/I(\eta)$ depends on $\eta$ only, which we estimate via $\hat T$.
Since $I(\eta)$ is a known function of the estimable parameter $\eta$, the score-whitening method of Section~\ref{sec:score-whitening} applies with the same analysis.
The score-whitened one-step estimator is
\begin{equation}\label{eq:tc-ws-bernoulli}
\hat\psi_{\rm ws} = \langle\Gamma,\hat T\rangle + \frac{1}{n}\sum_{i=1}^n \tilde s(Y_i, \langle\hat T, X_i\rangle)\,\langle\hat H_{\rm ws}, X_i\rangle,
\qquad
\hat H_{\rm ws} = p\,\hat P_{\mathbb{T}}\Gamma.
\end{equation}
The asymptotic variance is
\[
V_{\rm ws} = \mathbb{E}^\star\!\left[\frac{\langle p\,P_{\mathbb{T}}\Gamma, X\rangle^2}{I(\eta^\star)}\right]
= p\sum_{j,k} \frac{p_{jk}\,\bigl[P_{\mathbb{T}}(\Gamma)\bigr]_{jk}^2}{I(T^\star_{jk})}.
\]
In general $V_{\rm ws} \ge V_{\rm eff}$, with equality when $I(\eta^\star)$ is constant across all entries (i.e., when $f$ is nearly linear or all entries of $T^\star$ are close).
The CLT and remainder bounds follow from Theorem~\ref{thm:ws-clt}.

\subsection{Non-uniform sampling}\label{sec:tc-nonuniform}

When the sampling probabilities $\{p_{jk}\}$ are non-uniform, the inverse-probability weighting (IPW) framework of Section~\ref{sec:unknown-sampling} restores isotropy.
Specifically, choosing the uniform reference distribution $q_{jk}=1/p$ and defining weights $w_{jk}=q_{jk}/p_{jk}$, the IPW-whitened estimator from~\eqref{eq:ipw-estimator} applies directly.
The effective Gram operator under the IPW-whitened score becomes $(1/p)\,\mathrm{Id}$, exactly as in the uniform case.

The factorization $\lambda_{jk} = p_{jk}\cdot I(T^\star_{jk})$ in~\eqref{eq:tc-gram} makes the decomposition transparent: \emph{score whitening} removes the Fisher information heterogeneity (Section~\ref{sec:tc-whitening}), while \emph{IPW} removes the sampling heterogeneity (Section~\ref{sec:unknown-sampling}).  The two adjustments compose naturally.

When $\{p_{jk}\}$ are unknown, a consistent estimator $\hat p_{jk}$ is needed.
As discussed in Section~\ref{sec:unknown-sampling}, if the sampling mechanism admits a low-rank or structured form, the weights can be estimated at the parametric rate with only second-order impact on the inference; see the analysis following~\eqref{eq:ipw-feasible}.

%% file: aos_LowrankLLMEvaluation.bbl
\begin{thebibliography}{29}

\bibitem[\protect\citeauthoryear{Bose et~al.}{2025}]{bose2025lore}
\begin{barticle}[author]
\bauthor{\bsnm{Bose},~\bfnm{Avinandan}\binits{A.}},
  \bauthor{\bsnm{Xiong},~\bfnm{Zhihan}\binits{Z.}},
  \bauthor{\bsnm{Chi},~\bfnm{Yuejie}\binits{Y.}},
  \bauthor{\bsnm{Du},~\bfnm{Simon~Shaolei}\binits{S.~S.}},
  \bauthor{\bsnm{Xiao},~\bfnm{Lin}\binits{L.}} \AND
  \bauthor{\bsnm{Fazel},~\bfnm{Maryam}\binits{M.}}
(\byear{2025}).
\btitle{LoRe: Personalizing LLMs via Low-Rank Reward Modeling}.
\bjournal{arXiv preprint arXiv:2504.14439}.
\end{barticle}
\endbibitem

\bibitem[\protect\citeauthoryear{Bradley and Terry}{1952}]{bradley1952rank}
\begin{barticle}[author]
\bauthor{\bsnm{Bradley},~\bfnm{Ralph~Allan}\binits{R.~A.}} \AND
  \bauthor{\bsnm{Terry},~\bfnm{Milton~E.}\binits{M.~E.}}
(\byear{1952}).
\btitle{Rank analysis of incomplete block designs: I. The method of paired
  comparisons}.
\bjournal{Biometrika}
\bvolume{39}
\bpages{324--345}.
\end{barticle}
\endbibitem

\bibitem[\protect\citeauthoryear{Cai et~al.}{2022}]{cai2022nonconvex}
\begin{barticle}[author]
\bauthor{\bsnm{Cai},~\bfnm{Changxiao}\binits{C.}},
  \bauthor{\bsnm{Li},~\bfnm{Gen}\binits{G.}},
  \bauthor{\bsnm{Poor},~\bfnm{H.~Vincent}\binits{H.~V.}} \AND
  \bauthor{\bsnm{Chen},~\bfnm{Yuxin}\binits{Y.}}
(\byear{2022}).
\btitle{Nonconvex low-rank tensor completion from noisy data}.
\bjournal{Operations Research}
\bvolume{70}
\bpages{1219--1237}.
\end{barticle}
\endbibitem

\bibitem[\protect\citeauthoryear{Cand{\`e}s and Recht}{2009}]{candes2009exact}
\begin{barticle}[author]
\bauthor{\bsnm{Cand{\`e}s},~\bfnm{Emmanuel~J.}\binits{E.~J.}} \AND
  \bauthor{\bsnm{Recht},~\bfnm{Benjamin}\binits{B.}}
(\byear{2009}).
\btitle{Exact matrix completion via convex optimization}.
\bjournal{Foundations of Computational Mathematics}
\bvolume{9}
\bpages{717--772}.
\end{barticle}
\endbibitem

\bibitem[\protect\citeauthoryear{Chao, Huang and
  Needell}{2021}]{chao2021weighted}
\begin{barticle}[author]
\bauthor{\bsnm{Chao},~\bfnm{Zehan}\binits{Z.}},
  \bauthor{\bsnm{Huang},~\bfnm{Longxiu}\binits{L.}} \AND
  \bauthor{\bsnm{Needell},~\bfnm{Deanna}\binits{D.}}
(\byear{2021}).
\btitle{HOSVD-based algorithm for weighted tensor completion}.
\bjournal{Journal of Imaging}
\bvolume{7}
\bpages{110}.
\end{barticle}
\endbibitem

\bibitem[\protect\citeauthoryear{Chiang et~al.}{2024}]{chiang2024chatbot}
\begin{barticle}[author]
\bauthor{\bsnm{Chiang},~\bfnm{Wei-Lin}\binits{W.-L.}},
  \bauthor{\bsnm{Zheng},~\bfnm{Lianmin}\binits{L.}},
  \bauthor{\bsnm{Sheng},~\bfnm{Ying}\binits{Y.}},
  \bauthor{\bsnm{Angelopoulos},~\bfnm{Anastasios~Nikolas}\binits{A.~N.}},
  \bauthor{\bsnm{Li},~\bfnm{Tianle}\binits{T.}},
  \bauthor{\bsnm{Li},~\bfnm{Dacheng}\binits{D.}},
  \bauthor{\bsnm{Zhang},~\bfnm{Hao}\binits{H.}},
  \bauthor{\bsnm{Zhu},~\bfnm{Banghua}\binits{B.}},
  \bauthor{\bsnm{Jordan},~\bfnm{Michael}\binits{M.}},
  \bauthor{\bsnm{Gonzalez},~\bfnm{Joseph~E.}\binits{J.~E.}} \AND
  \bauthor{\bsnm{Stoica},~\bfnm{Ion}\binits{I.}}
(\byear{2024}).
\btitle{Chatbot Arena: An Open Platform for Evaluating {LLMs} by Human
  Preference}.
\bjournal{arXiv preprint arXiv:2403.04132}.
\end{barticle}
\endbibitem

\bibitem[\protect\citeauthoryear{Dong et~al.}{2026}]{dong2026evaluating}
\begin{barticle}[author]
\bauthor{\bsnm{Dong},~\bfnm{Zihan}\binits{Z.}},
  \bauthor{\bsnm{Zhang},~\bfnm{Zhixian}\binits{Z.}},
  \bauthor{\bsnm{Zhou},~\bfnm{Yang}\binits{Y.}},
  \bauthor{\bsnm{Jin},~\bfnm{Can}\binits{C.}},
  \bauthor{\bsnm{Wu},~\bfnm{Ruijia}\binits{R.}} \AND
  \bauthor{\bsnm{Zhang},~\bfnm{Linjun}\binits{L.}}
(\byear{2026}).
\btitle{Evaluating LLMs When They Do Not Know the Answer: Statistical
  Evaluation of Mathematical Reasoning via Comparative Signals}.
\bjournal{arXiv preprint arXiv:2602.03061}.
\end{barticle}
\endbibitem

\bibitem[\protect\citeauthoryear{Duan et~al.}{2025}]{duan2025statistical}
\begin{barticle}[author]
\bauthor{\bsnm{Duan},~\bfnm{Congyuan}\binits{C.}},
  \bauthor{\bsnm{Ma},~\bfnm{Wanteng}\binits{W.}},
  \bauthor{\bsnm{Xia},~\bfnm{Dong}\binits{D.}} \AND
  \bauthor{\bsnm{Xu},~\bfnm{Kan}\binits{K.}}
(\byear{2025}).
\btitle{Statistical Inference for Matching Decisions via Matrix Completion
  under Dependent Missingness}.
\bjournal{arXiv preprint arXiv:2510.26478}.
\end{barticle}
\endbibitem

\bibitem[\protect\citeauthoryear{Fan, Hou and Yu}{2024}]{fan2024care}
\begin{barticle}[author]
\bauthor{\bsnm{Fan},~\bfnm{Jianqing}\binits{J.}},
  \bauthor{\bsnm{Hou},~\bfnm{Jikai}\binits{J.}} \AND
  \bauthor{\bsnm{Yu},~\bfnm{Mengxin}\binits{M.}}
(\byear{2024}).
\btitle{Uncertainty quantification of {MLE} for entity ranking with
  covariates}.
\bjournal{Journal of Machine Learning Research}
\bvolume{25}
\bpages{1--83}.
\end{barticle}
\endbibitem

\bibitem[\protect\citeauthoryear{Fan, Kwon and Zhu}{2025}]{fan2025uncertainty}
\begin{barticle}[author]
\bauthor{\bsnm{Fan},~\bfnm{Jianqing}\binits{J.}},
  \bauthor{\bsnm{Kwon},~\bfnm{Hyukjun}\binits{H.}} \AND
  \bauthor{\bsnm{Zhu},~\bfnm{Xiaonan}\binits{X.}}
(\byear{2025}).
\btitle{Uncertainty Quantification for Ranking with Heterogeneous Preferences}.
\bjournal{arXiv preprint arXiv:2509.01847}.
\end{barticle}
\endbibitem

\bibitem[\protect\citeauthoryear{Fan et~al.}{2026}]{fan2026spectral}
\begin{barticle}[author]
\bauthor{\bsnm{Fan},~\bfnm{Jianqing}\binits{J.}},
  \bauthor{\bsnm{Lou},~\bfnm{Zhipeng}\binits{Z.}},
  \bauthor{\bsnm{Wang},~\bfnm{Weichen}\binits{W.}} \AND
  \bauthor{\bsnm{Yu},~\bfnm{Mengxin}\binits{M.}}
(\byear{2026}).
\btitle{Spectral ranking inferences based on general multiway comparisons}.
\bjournal{Operations Research}
\bvolume{74}
\bpages{161--180}.
\end{barticle}
\endbibitem

\bibitem[\protect\citeauthoryear{Gao, Shen and Zhang}{2023}]{gao2023btluq}
\begin{barticle}[author]
\bauthor{\bsnm{Gao},~\bfnm{Chao}\binits{C.}},
  \bauthor{\bsnm{Shen},~\bfnm{Yandi}\binits{Y.}} \AND
  \bauthor{\bsnm{Zhang},~\bfnm{Anderson~Y.}\binits{A.~Y.}}
(\byear{2023}).
\btitle{Uncertainty quantification in the Bradley--Terry--Luce model}.
\bjournal{Information and Inference: A Journal of the IMA}
\bvolume{12}
\bpages{1073--1140}.
\end{barticle}
\endbibitem

\bibitem[\protect\citeauthoryear{Keshavan, Montanari and
  Oh}{2010}]{keshavan2010matrix}
\begin{barticle}[author]
\bauthor{\bsnm{Keshavan},~\bfnm{Raghunandan~H.}\binits{R.~H.}},
  \bauthor{\bsnm{Montanari},~\bfnm{Andrea}\binits{A.}} \AND
  \bauthor{\bsnm{Oh},~\bfnm{Sewoong}\binits{S.}}
(\byear{2010}).
\btitle{Matrix completion from a few entries}.
\bjournal{IEEE Transactions on Information Theory}
\bvolume{56}
\bpages{2980--2998}.
\end{barticle}
\endbibitem

\bibitem[\protect\citeauthoryear{Kolda and Bader}{2009}]{kolda2009tensor}
\begin{barticle}[author]
\bauthor{\bsnm{Kolda},~\bfnm{Tamara~G.}\binits{T.~G.}} \AND
  \bauthor{\bsnm{Bader},~\bfnm{Brett~W.}\binits{B.~W.}}
(\byear{2009}).
\btitle{Tensor decompositions and applications}.
\bjournal{SIAM Review}
\bvolume{51}
\bpages{455--500}.
\end{barticle}
\endbibitem

\bibitem[\protect\citeauthoryear{Koltchinskii, Lounici and
  Tsybakov}{2011}]{koltchinskii2011nuclear}
\begin{barticle}[author]
\bauthor{\bsnm{Koltchinskii},~\bfnm{Vladimir}\binits{V.}},
  \bauthor{\bsnm{Lounici},~\bfnm{Karim}\binits{K.}} \AND
  \bauthor{\bsnm{Tsybakov},~\bfnm{Alexandre~B.}\binits{A.~B.}}
(\byear{2011}).
\btitle{Nuclear-norm penalization and optimal rates for noisy low-rank matrix
  completion}.
\bjournal{The Annals of Statistics}
\bvolume{39}
\bpages{2302--2329}.
\end{barticle}
\endbibitem

\bibitem[\protect\citeauthoryear{Li et~al.}{2024}]{li2024benchbuilder}
\begin{bmisc}[author]
\bauthor{\bsnm{Li},~\bfnm{Tianle}\binits{T.}},
  \bauthor{\bsnm{Chiang},~\bfnm{Wei-Lin}\binits{W.-L.}},
  \bauthor{\bsnm{Frick},~\bfnm{Evan}\binits{E.}},
  \bauthor{\bsnm{Dunlap},~\bfnm{Lisa}\binits{L.}},
  \bauthor{\bsnm{Zhu},~\bfnm{Banghua}\binits{B.}},
  \bauthor{\bsnm{Gonzalez},~\bfnm{Joseph~E.}\binits{J.~E.}} \AND
  \bauthor{\bsnm{Stoica},~\bfnm{Ion}\binits{I.}}
(\byear{2024}).
\btitle{The Arena-Hard Pipeline}.
\bhowpublished{Arena blog}.
\bnote{Published April 19, 2024}.
\end{bmisc}
\endbibitem

\bibitem[\protect\citeauthoryear{LMSYS}{2025}]{lmarena_human_preference_140k}
\begin{bmisc}[author]
\bauthor{\bsnm{LMSYS}}
(\byear{2025}).
\btitle{arena-human-preference-140k}.
\end{bmisc}
\endbibitem

\bibitem[\protect\citeauthoryear{Luce}{1959}]{luce1959individual}
\begin{bbook}[author]
\bauthor{\bsnm{Luce},~\bfnm{R.~Duncan}\binits{R.~D.}}
(\byear{1959}).
\btitle{Individual Choice Behavior: A Theoretical Analysis}.
\bpublisher{Wiley}.
\end{bbook}
\endbibitem

\bibitem[\protect\citeauthoryear{Ma and Xia}{2024}]{ma2024statistical}
\begin{barticle}[author]
\bauthor{\bsnm{Ma},~\bfnm{Wanteng}\binits{W.}} \AND
  \bauthor{\bsnm{Xia},~\bfnm{Dong}\binits{D.}}
(\byear{2024}).
\btitle{Statistical inference in tensor completion: Optimal uncertainty
  quantification and statistical-to-computational gaps}.
\bjournal{arXiv preprint arXiv:2410.11225}.
\end{barticle}
\endbibitem

\bibitem[\protect\citeauthoryear{Mao, Chen and Wong}{2019}]{ma2019covariate}
\begin{barticle}[author]
\bauthor{\bsnm{Mao},~\bfnm{Xiaojun}\binits{X.}},
  \bauthor{\bsnm{Chen},~\bfnm{Song~Xi}\binits{S.~X.}} \AND
  \bauthor{\bsnm{Wong},~\bfnm{Raymond K.~W.}\binits{R.~K.~W.}}
(\byear{2019}).
\btitle{Matrix completion with covariate information}.
\bjournal{Journal of the American Statistical Association}
\bvolume{114}
\bpages{198--210}.
\end{barticle}
\endbibitem

\bibitem[\protect\citeauthoryear{Mao, Wang and Yang}{2023}]{yang2021matrix}
\begin{barticle}[author]
\bauthor{\bsnm{Mao},~\bfnm{Xiaojun}\binits{X.}},
  \bauthor{\bsnm{Wang},~\bfnm{Zhonglei}\binits{Z.}} \AND
  \bauthor{\bsnm{Yang},~\bfnm{Shu}\binits{S.}}
(\byear{2023}).
\btitle{Matrix completion under complex survey sampling}.
\bjournal{Annals of the Institute of Statistical Mathematics}
\bvolume{75}
\bpages{463--492}.
\end{barticle}
\endbibitem

\bibitem[\protect\citeauthoryear{Negahban and
  Wainwright}{2012}]{negahban2012restricted}
\begin{barticle}[author]
\bauthor{\bsnm{Negahban},~\bfnm{Sahand}\binits{S.}} \AND
  \bauthor{\bsnm{Wainwright},~\bfnm{Martin~J.}\binits{M.~J.}}
(\byear{2012}).
\btitle{Restricted strong convexity and weighted matrix completion: Optimal
  bounds with noise}.
\bjournal{Journal of Machine Learning Research}
\bvolume{13}
\bpages{1665--1697}.
\end{barticle}
\endbibitem

\bibitem[\protect\citeauthoryear{Ouyang et~al.}{2022}]{ouyang2022training}
\begin{barticle}[author]
\bauthor{\bsnm{Ouyang},~\bfnm{Long}\binits{L.}},
  \bauthor{\bsnm{Wu},~\bfnm{Jeffrey}\binits{J.}},
  \bauthor{\bsnm{Jiang},~\bfnm{Xu}\binits{X.}},
  \bauthor{\bsnm{Almeida},~\bfnm{Diogo}\binits{D.}},
  \bauthor{\bsnm{Wainwright},~\bfnm{Carroll}\binits{C.}},
  \bauthor{\bsnm{Mishkin},~\bfnm{Pamela}\binits{P.}},
  \bauthor{\bsnm{Zhang},~\bfnm{Chong}\binits{C.}},
  \bauthor{\bsnm{Agarwal},~\bfnm{Sandhini}\binits{S.}},
  \bauthor{\bsnm{Slama},~\bfnm{Katarina}\binits{K.}},
  \bauthor{\bsnm{Ray},~\bfnm{Alex}\binits{A.}} \betal{et~al.}
(\byear{2022}).
\btitle{Training language models to follow instructions with human feedback}.
\bjournal{Advances in neural information processing systems}
\bvolume{35}
\bpages{27730--27744}.
\end{barticle}
\endbibitem

\bibitem[\protect\citeauthoryear{Petrova, Gordon and
  Blindow}{2026}]{petrova2026unpacking}
\begin{barticle}[author]
\bauthor{\bsnm{Petrova},~\bfnm{Nora}\binits{N.}},
  \bauthor{\bsnm{Gordon},~\bfnm{Andrew}\binits{A.}} \AND
  \bauthor{\bsnm{Blindow},~\bfnm{Enzo}\binits{E.}}
(\byear{2026}).
\btitle{Unpacking Human Preference for LLMs: Demographically Aware Evaluation
  with the HUMAINE Framework}.
\bjournal{arXiv preprint arXiv:2603.04409}.
\end{barticle}
\endbibitem

\bibitem[\protect\citeauthoryear{Singh et~al.}{2025}]{singh2025leaderboard}
\begin{barticle}[author]
\bauthor{\bsnm{Singh},~\bfnm{Shivalika}\binits{S.}},
  \bauthor{\bsnm{Nan},~\bfnm{Yiyang}\binits{Y.}},
  \bauthor{\bsnm{Wang},~\bfnm{Alex}\binits{A.}},
  \bauthor{\bsnm{D'souza},~\bfnm{Daniel}\binits{D.}},
  \bauthor{\bsnm{Kapoor},~\bfnm{Sayash}\binits{S.}},
  \bauthor{\bsnm{{\"U}st{\"u}n},~\bfnm{Ahmet}\binits{A.}},
  \bauthor{\bsnm{Koyejo},~\bfnm{Sanmi}\binits{S.}},
  \bauthor{\bsnm{Deng},~\bfnm{Yuntian}\binits{Y.}},
  \bauthor{\bsnm{Longpre},~\bfnm{Shayne}\binits{S.}},
  \bauthor{\bsnm{Smith},~\bfnm{Noah~A}\binits{N.~A.}} \betal{et~al.}
(\byear{2025}).
\btitle{The leaderboard illusion}.
\bjournal{arXiv preprint arXiv:2504.20879}.
\end{barticle}
\endbibitem

\bibitem[\protect\citeauthoryear{Su}{2026}]{su2026large}
\begin{barticle}[author]
\bauthor{\bsnm{Su},~\bfnm{Weijie}\binits{W.}}
(\byear{2026}).
\btitle{Do large language models (really) need statistical foundations?}
\bjournal{The Annals of Applied Statistics}
\bvolume{20}
\bpages{724--743}.
\end{barticle}
\endbibitem

\bibitem[\protect\citeauthoryear{{Arena Team}}{2025}]{arena2025rank}
\begin{bmisc}[author]
\bauthor{\bsnm{{Arena Team}}}
(\byear{2025}).
\btitle{Arena-Rank: Open Sourcing the Leaderboard Methodology}.
\bhowpublished{Arena blog}.
\bnote{Published December 18, 2025}.
\end{bmisc}
\endbibitem

\bibitem[\protect\citeauthoryear{Xu et~al.}{2025}]{xu2025doubly}
\begin{barticle}[author]
\bauthor{\bsnm{Xu},~\bfnm{Erhan}\binits{E.}},
  \bauthor{\bsnm{Ye},~\bfnm{Kai}\binits{K.}},
  \bauthor{\bsnm{Zhou},~\bfnm{Hongyi}\binits{H.}},
  \bauthor{\bsnm{Zhu},~\bfnm{Luhan}\binits{L.}},
  \bauthor{\bsnm{Quinzan},~\bfnm{Francesco}\binits{F.}} \AND
  \bauthor{\bsnm{Shi},~\bfnm{Chengchun}\binits{C.}}
(\byear{2025}).
\btitle{Doubly robust alignment for large language models}.
\bjournal{arXiv preprint arXiv:2506.01183}.
\end{barticle}
\endbibitem

\bibitem[\protect\citeauthoryear{Zhang et~al.}{2025}]{zhang2025generalized}
\begin{barticle}[author]
\bauthor{\bsnm{Zhang},~\bfnm{Maoyu}\binits{M.}},
  \bauthor{\bsnm{Cai},~\bfnm{Biao}\binits{B.}},
  \bauthor{\bsnm{Sun},~\bfnm{Will~Wei}\binits{W.~W.}} \AND
  \bauthor{\bsnm{Zhang},~\bfnm{Jingfei}\binits{J.}}
(\byear{2025}).
\btitle{Generalized tensor completion with non-random missingness}.
\bjournal{arXiv preprint arXiv:2509.06225}.
\end{barticle}
\endbibitem

\end{thebibliography}
